\documentclass[envcountsame,envcountsect,runningheads]{llncs}

\pdfoutput=1

\usepackage{amssymb}
\usepackage{amsmath}
\usepackage{algorithmic}
\usepackage{enumerate}
\usepackage{graphicx}
\usepackage{floatflt}
\usepackage{wrapfig}
\usepackage{nicefrac}

\newcounter{prooffactcounter}
\newtheorem{prooffact}[prooffactcounter]{Fact}

\newcommand{\mathtext}[1]{\ensuremath{\mathrm{\text{#1}}}}
\newcommand{\nmodels}{\ensuremath{\not\models}}
\newcommand{\prblemname}[1]{\ensuremath{\mathsf{#1}}}
\newcommand{\card}[1]{\left| #1 \right|}
\newcommand{\set}[1]{\ensuremath\left\{#1\right\}}
\newcommand{\var}{\mathtext{VAR}}
\newcommand{\subf}[1]{\ensuremath{\mathsf{sf}\left(#1\right)}}
\newcommand{\md}[1]{\mathsf{md}\left(#1\right)}
\newcommand{\sat}[1]{#1$-$\textrm{\sf{SAT}}}

\newcommand{\decisionproblem}[3]{
\medskip
\vspace*{1mm}
\begin{tabular}{ll}
\textit{Problem:} & #1 \\
\textit{Input:} & #2 \\
\textit{Question:} & #3
\end{tabular}
\smallskip
\vspace*{1mm}
}

\newcommand{\complexityclassname}[1]{\ensuremath{\mathrm{#1}}}

\newcommand{\PSPACE}{\complexityclassname{PSPACE}}
\newcommand{\NPSPACE}{\complexityclassname{NPSPACE}}
\newcommand{\NP}{\complexityclassname{NP}}


\newcommand{\hvarphi}{\hat\varphi}
\newcommand{\hpsi}{\hat\psi}

\newcommand{\logicname}[1]{\ensuremath{\mathsf{#1}}}
\newcommand{\KL}{\logicname{KL}}
\newcommand{\K}[1]{\ensuremath{\logicname{K}(#1)}}
\newcommand{\genvarphi}[5]{\ensuremath{\hvarphi}^{#1\rightarrow #2}_{w\ge #3,x\ge #4, y\ge #5}}
\newcommand{\path}[3]{\ensuremath{#1\overset{#2}{\longrightarrow}#3}}
\newcommand{\gpath}[3]{\ensuremath{G\models\path{#1}{#2}{#3}}}
\newcommand{\mpath}[3]{\ensuremath{M\models\path{#1}{#2}{#3}}}
\newcommand{\mxdepth}[2]{\ensuremath{\mathtext{\textup{maxdepth}}^{#1}\left(#2\right)}} 
\newcommand{\mxheight}[2]{\ensuremath{\mathtext{\textup{maxheight}}^{#1}\left(#2\right)}} 
\newcommand{\preq}[1]{\ensuremath{\mathsf{\textup{prereq}}\left(#1\right)}}
\newcommand{\conc}[1]{\ensuremath{\mathsf{\textup{conc}}\left(#1\right)}}
\newcommand{\edges}[1]{\ensuremath{\mathsf{edges}\left(#1\right)}}
\newcommand{\vertices}[1]{\ensuremath{\mathsf{vertices}\left(#1\right)}}
\newcommand{\tl}{\ensuremath{\textit{types-list}}}
\newcommand{\tlhom}{\ensuremath{\textit{types-list}}\textrm{-}T^{\mathrm{hom}}_{\hvarphi}}
\newcommand{\algname}{\ensuremath{\text{\sc Horn-Classification}}}
\newcommand{\symm}{\textup{\texttt{symm}}}
\newcommand{\refl}{\textup{\texttt{refl}}}
\newcommand{\trans}{\textup{\texttt{trans}}}
\newcommand{\annot}[1]{\ensuremath{\mathsf{annot}\left(#1\right)}}
\newcommand{\verifyhorn}[1]{\ensuremath{\text{\sc Verify-Horn}\left(#1\right)}}
\newcommand{\verifycons}[1]{\ensuremath{\text{\sc Verify-Consistency}\left(#1\right)}}
\newcommand{\algnametwo}{\ensuremath{\text{\sc Satisfiability}}}

\bibliographystyle{alpha}

\setlength{\oddsidemargin}{1cm}
\setlength{\evensidemargin}{1cm}
\setlength{\topmargin}{-1cm}
\setlength{\textheight}{21cm}
\setlength{\textwidth}{14cm}

\sloppy

\title{On the Complexity of Elementary Modal Logics\thanks{Supported in part by NSF grants CCR-0311021 and IIS-0713061, a Friedrich Wilhelm Bessel
Research Award, and the DAAD postdoc program.}}

\author{Edith Hemaspaandra \and Henning Schnoor}

\institute{
  Department of Computer Science,
  Rochester Institute of Technology,
  Rochester, NY 14623, U.S.A.
  \email{eh@cs.rit.edu, hs@cs.rit.edu}}

\begin{document}

\maketitle

\begin{abstract}
Modal logics are widely used in computer science. The complexity of modal satisfiability problems has been investigated since the 1970s, usually proving results on a case-by-case basis. We prove a very general classification for a wide class of relevant logics: Many important subclasses of modal logics can be obtained by restricting the allowed models with first-order Horn formulas. We show that the satisfiability problem for each of these logics is either \NP-complete or \PSPACE-hard, and exhibit a simple classification criterion. Further, we prove matching \PSPACE\ upper bounds for many of the \PSPACE-hard logics.
\end{abstract}

\section{Introduction}

Modal logics have proven to be a valuable tool in mathematics and computer science. The traditional uni-modal logic enriches the propositional language with the operator $\Diamond,$ where $\Diamond\varphi$ is interpreted as \emph{$\varphi$ possibly holds}. The usual semantics interpret modal formulas over graphs, where $\Diamond\varphi$ means ``there is a successor world where $\varphi$ is true.'' In addition to their mathematical interest, modal logics are widely used in practical applications: In artificial intelligence, modal logic is used to model the knowledge and beliefs of an agent, see e.g.~\cite{bazh05}. Modal logics also can be applied in cryptographic and other protocols \cite{frhuje02,codofl03,hamotu88,lare86}. For many specific applications, there exist tailor-made variants of modal logics \cite{bega04}.

Due to the vast number of applications, complexity issues for modal logics are very relevant, and have been examined since Ladner's seminal work~\cite{lad77}. Depending on the application, modal logics with different properties are studied. For example, one might want the formula $\varphi\implies\Diamond\varphi$ to be an axiom---if something is true, then it should be considered possible. Or $\Diamond\Diamond\varphi\implies\Diamond\varphi$---if it is possible that $\varphi$ is possible, then $\varphi$ itself should be possible. Classical results~\cite{sahl73} show that there is a close correspondence between modal logics defined by axioms and logics obtained by restricting the class of considered graphs. Requiring the axioms mentioned above corresponds to restricting the classes of graphs to those which are reflexive or transitive, respectively. Determining the complexity of a given modal logic, defined either by the class of considered graphs or via a modal axiom system, has been an active line of research since Ladner's results. In particular, the complexity classes \NP\ and \PSPACE\ have been at the center of attention.

Most complexity results have been on a case-by-case basis, proving results for individual logics both for standard modal logics and variations like temporal or hybrid logics~\cite{hamo92,ngu05,sicl85}. Examples of more general results include Halpern and R{\^e}go's proof that logics including the \emph{negative introspection} axiom, which corresponds to the Euclidean graph property, have an \NP-complete satisfiability problem~\cite{hare07}. In~\cite{schpa06}, Schr\"oder and Pattinson show a way to prove \PSPACE\ upper bounds for modal logics defined by modal axioms of modal depth 1. In~\cite{lad77}, Ladner proved \PSPACE-hardness for all logics for which reflexive and transitive graphs are admissible models. In~\cite{spaan93}, Hemaspaandra showed that all normal logics extending S4.3 have an \NP-complete satisfiability problem, and work on the \emph{Guarded Fragment} has shown that some classes of modal logics can be seen as a decidable fragment of first-order logic~\cite{anbene98}.

While these results give hardness or upper bounds for classes of logics, they do not provide a full case distinction identifying \emph{all} ``easy'' or ``hard'' cases in the considered class. We achieve such a result: For a large class of modal logics containing many important representatives, we identify \emph{all} cases which have an \NP-complete satisfiability problem, and show that the satisfiability problem for \emph{all} other non-trivial logics in that class is \PSPACE-hard. Hence these problems avoid the infinitely many complexity classes between \NP\ and \PSPACE, many of which have natural complete problems arising from logical questions.  To our knowledge, such a general result has not been achieved before.

To describe the considered class of modal logics, note that many relevant properties of modal models can be expressed by first-order formulas: A graph is transitive if its edge-relation $R$ satisfies the clause $\forall xyz \left(x R y\wedge y R z\implies x R z\right)$ and symmetric if it satisfies $\forall xy \left(x R y\implies y R x\right)$. Many other graph properties can be defined using similar formulas, where the presence of a certain pattern of edges in the graph forces the existence of another. Analogously to propositional logic, we call conjunctions of such clauses \emph{universal Horn formulas}. Many relevant logics can be defined in this way: All examples form~\cite{lad77} fall into this category, as well as logics over Euclidean graphs. 

We study the following problem: Given a universal Horn formula $\hpsi,$ what is the complexity of the modal satisfiability problem over the class of graphs defined by $\hpsi$? 

The main results of this paper are the following: First, we identify all cases which give a satisfiability problem solvable in \NP\ (which then for every nontrivial logic is \NP-complete), and show that all other cases are \PSPACE-hard. Second, we prove a generalization of a ``tree-like model property,'' and use it to obtain \PSPACE\ upper bounds for a large class of logics. As a corollary, we prove that Ladner's classic hardness result is ``optimal'' in the class of logics defined by universal Horn formulas. A further corollary is that in the universal Horn class, all logics whose satisfiability problem is not \PSPACE-hard already have the ``polynomial-size model property,'' which is only one of several known ways to prove \NP\ upper bounds for modal logics. 

Various work was done on restricting the syntax of the modal formulas by restricting the propositional operators~\cite{bhss05b}, the nesting degree and number of variables~\cite{hal95} or considering modal formulas in Horn form~\cite{cheli94}. While these results are about restricting the \emph{syntax} of the modal formulas, the current work studies different \emph{semantics} of modal logics, where the semantics are specified by Horn formulas.

The organization of the paper is as follows: In Section~\ref{sect:preliminaries}, we introduce terminology and generalize classic complexity results. Section~\ref{sect:modal models} then establishes techniques to restrict the size of models for modal formulas, which are important tools for the \NP-membership later. Section~\ref{sect:universal horn clauses} contains the main results of the paper about universal Horn formulas. After introducing them in Section~\ref{sect:horn definitions} and proving their relationship to homomorphisms in Section~\ref{sect:horn homomorphism}, we prove \NP-results for special cases of Horn formulas in Sections~\ref{section:varphi k to l section} and~\ref{section:np results}. Using these results, Section~\ref{sect:main dichotomy result} then proves our main dichotomy result, which is Corollary~\ref{corollary:horn cunjunction without algorithm}. The remainder of the paper establishes \PSPACE\ upper bounds for many of the \PSPACE-hard logics. An important tool for these proofs is introduced in Section~\ref{sect:trees for horn}, where we show a tree-like model property for all \PSPACE-hard logics defined by universal Horn formulas. Section~\ref{sect:pspace for non-transitive} contains our \PSPACE-algorithm, which generalizes many previously known algorithms for modal logics. Finally, Section~\ref{sect:applications} obtains a series of corollaries, proving the above-mentioned optimality result for Ladner's hardness condition, and exhibiting a number of cases to which our \PSPACE-algorithm can be applied.  The paper closes with a summary and open questions in Section~\ref{sect:conclusion}.

\section{Preliminaries}\label{sect:preliminaries}

\subsection{Basic concepts and notation}

Modal logic is an extension of propositional logic. A modal formula is a propositional formula using variables, the usual logical symbols $\wedge,\vee,\neg,$ and a unary operator $\Diamond$. (A dual operator $\Box$ is often considered as well, this can be regarded as abbreviation for $\neg\Diamond\neg$.) A model for a modal formula is a set of connected ``worlds'' with individual propositional assignments. To be precise, a \emph{frame} is a directed graph $G=(W,R),$ where the vertices in $W$ are called ``worlds,'' and an edge $(u,v)\in R$ is interpreted as $v$ is ``considered possible'' from $u$. A \emph{model} $M=(G,X,\pi)$ consists of a frame $G=(W,R),$ a set $X$ of propositional variables and a function $\pi$ assigning each variable $x\in X$ a subset of $W,$ the set of worlds in which $x$ is true. We say the model $M$ is \emph{based} on the frame $(W,R)$. If $\mathcal F$ is a class of frames, then a model is an $\mathcal F$-model if it is based on a frame in $\mathcal F$. With $\card M$ we denote the number of worlds in the model $M$.

For a world $w\in W,$ we define when a formula $\phi$ is \emph{satisfied} at $w$ in $M$ (written $M,w\models\phi$.) If $\phi$ is a variable $x,$ then $M,w\models\phi$ if and only if $w\in\pi(x)$. As usual, $M,w\models\phi_1\wedge\phi_2$ if and only if
$M,w\models\phi_1$ and $M,w\models\phi_2$, and $M,w\models\neg\phi$ iff $M,w\nmodels \phi$. For the modal operator, $M,w\models\Diamond\phi$ if and only if there is a world $w'\in W$ such that $(w,w')\in R$ and $M,w'\models\phi$. 

\begin{table}
\begin{center}
\begin{tabular}{lclll}
\hline
\textbf{logic name} & \hspace*{5mm} &  \textbf{graph property} &  \textbf{formula definition} \\
\hline
\logicname{K} & & All graphs & $\K{\hvarphi_{\mathrm{taut}}}$ \\
\logicname{T} & & reflexive & $\K{\hvarphi_{\mathrm{refl}}}$ \\
\logicname{B} & & symmetric & $\K{\hvarphi_{\mathrm{symm}}}$ \\
\logicname{K4} & & transitive graphs & $\K{\hvarphi_{\mathrm{trans}}}$ \\
\logicname{S4} & & transitive and reflexive & $\K{\hvarphi_{\mathrm{trans}}\wedge\hvarphi_{\mathrm{refl}}}$ \\
\logicname{S5} & & equivalence relations & $\K{\hvarphi_{\mathrm{trans}}\wedge\hvarphi_{\mathrm{refl}}\wedge\hvarphi_{\mathrm{symm}}}$ \\
\hline
\end{tabular}
\end{center}
\caption{Common modal logics}
\label{table:common modal logics}
\end{table}

We now describe a way to define classes $\mathcal F$ of frames by propositional formulas. The \emph{frame language} is the first-order language containing (in addition to the propositional operators $\wedge,\vee,$ and $\neg$) the binary relation $R$. The relation $R$ is interpreted as the edge relation in a graph. Semantics are defined in the obvious way, for example, a graph satisfies the formula $\hvarphi_{\mathrm{trans}}:=\forall x,y,z (x Ry)\wedge(y Rz)\implies (x R z)$ if and only if it is transitive. In order to separate modal formulas from first-order formulas, we use $\hat .$ to denote the latter, i.e., $\hat\varphi$ is a first-order formula, while $\phi$ is a modal formula.

A modal logic usually is defined as the set of the formulas provable in it. Since a formula is satisfiable iff its negation is not provable, we can define a logic by the set of formulas satisfiable in it. For a first-order formula $\hvarphi$ over the frame language, we define the logic $\K\hvarphi$ as the logic in which a modal formula $\phi$ is satisfiable if and only if there is a model $M$ and a world $w\in M$ such that the frame which $M$ is based on satisfies the first-order formula $\hvarphi$ (we simply write $M\models\hvarphi$ for this), and $M,w\models\phi$. Such a logic is called \emph{elementary}. In the case that $\hvarphi$ is a universal formula (i.e., every variable in $\hvarphi$ is universally quantified at the beginning of the formula), we call these logics \emph{universal elementary}. In this way, many of the classic examples of modal logics can be expressed: In addition to the formula $\hvarphi_{\mathrm{trans}}$ defined above, let $\hvarphi_{\mathrm{refl}}:=\forall w (w R w),$ and let $\hvarphi_{\mathrm{symm}}:=\forall x,y (x R y)\implies (y R x)$. Finally, let $\hvarphi_{\mathrm{taut}}$ be some tautology over the frame language, for example let $\hvarphi_{\mathrm{taut}}:=\forall x (x R x)\implies (x R x)$. Table~\ref{table:common modal logics} introduces some common modal logics and how they can be expressed in our framework. For a formula $\hvarphi$ over the frame language, we consider the following problem:

\decisionproblem{$\sat{\K\hvarphi}$}{A modal formula $\phi$}{Is $\phi$ satisfiable in a model based on a frame satisfying $\hvarphi$?}

As an example, the problem $\K{\hvarphi_{\mathrm{trans}}}$\prblemname{-SAT} is the problem to decide if a given modal formula can be satisfied in a transitive frame, and therefore is the same as the satisfiability problem for the logic $\logicname{K4}$. It is important to note that in the problem $\sat{\K\hvarphi},$ regard the formula $\hvarphi$ is fixed. It is also interesting to study the \emph{uniform} version of the problem, where we are given a first-order formula $\hvarphi$ over the frame language and a modal formula $\psi,$ and the goal is to determine whether there exists a graph satisfying both. This problem obviously is \PSPACE-hard (this easily follows from Ladner's Theorem~\ref{theorem:ladner}, in fact, the problem is undecidable). In this paper, we study the complexity behavior of \emph{fixed} modal logics.

When interested in complexity results for modal logic, the property of having ``small models'' is often crucial, as these lead to a satisfiability problem in \NP, as long as the class of frames considered is reasonably well-behaved.

\begin{definition}
A modal logic $\KL$ has the \emph{polynomial-size model property}, if there is a polynomial $p,$ such that for every $\KL$-satisfiable formula $\phi,$ there is a \KL-model $M$ and a world $w\in M$ such that $M,w\models\phi,$ and $\card{M}\leq p(\card{\phi})$.
\end{definition}

The following standard observation is the basis of our \NP-containment proofs:

\begin{proposition}\label{proposition:elementary plus polysize model property is in np}
Let $\hvarphi$ be a first-order formula over the frame language, such that $\K\hvarphi$ has the polynomial-size model property. Then $\sat{\K\hvarphi}\in\NP$.
\end{proposition}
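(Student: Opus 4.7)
The plan is a standard guess-and-check argument exploiting the fact that $\hvarphi$ is \emph{fixed} while only the modal formula $\phi$ is part of the input. Given the input $\phi$, the \NP-machine guesses a frame $G=(W,R)$ with $\card{W}\leq p(\card{\phi})$ worlds (represented by the adjacency matrix) together with an assignment $\pi$ of the variables occurring in $\phi$ to subsets of $W$, and a distinguished world $w\in W$. Since $p$ is a polynomial, this guess has size polynomial in $\card{\phi}$.

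Next I would verify in deterministic polynomial time that (i)~the frame satisfies $\hvarphi$, and (ii)~$M,w\models\phi$. For (ii), a straightforward bottom-up labeling over the subformulas of $\phi$ evaluates, for each subformula $\psi$ and each world $v\in W$, whether $M,v\models\psi$; this runs in time polynomial in $\card{M}\cdot\card{\phi}$ and hence in $\card{\phi}$. For (i), the key observation is that $\hvarphi$ is a constant (it is not part of the input), so its length and quantifier prefix are fixed. Evaluating a fixed first-order sentence on a finite structure of size $n$ can be done in time $O(n^k)$ where $k$ depends only on the number of variables in $\hvarphi$: enumerate all assignments of the (constantly many) quantified variables to elements of $W$, and for each assignment evaluate the quantifier-free part, which only requires lookups in the edge relation $R$. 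Since $\card{W}$ is polynomial in $\card{\phi}$ and $k$ is a constant, this check runs in polynomial time.

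Correctness is immediate from the polynomial-size model property: if $\phi$ is $\K\hvarphi$-satisfiable, then by assumption there exists a $\K\hvarphi$-model of size at most $p(\card{\phi})$ satisfying $\phi$ at some world, so some nondeterministic branch will succeed; conversely, if the machine accepts, the verified model witnesses $\K\hvarphi$-satisfiability of $\phi$. The only conceptual point worth emphasizing—and the one that would otherwise be an obstacle—is precisely that $\hvarphi$ must be treated as fixed; otherwise the first-order model-checking step would not be polynomial (first-order model checking is \PSPACE-complete in combined complexity). Because $\hvarphi$ is built into the problem $\sat{\K\hvarphi}$ rather than part of the input, this issue does not arise, and we obtain $\sat{\K\hvarphi}\in\NP$.
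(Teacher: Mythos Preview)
Your proposal is correct and follows exactly the same guess-and-verify approach as the paper, which also relies on the two facts that (i) checking a fixed first-order sentence on a given finite graph is polynomial-time, and (ii) modal model-checking is polynomial-time. You merely spell out in more detail what the paper summarizes in one sentence as ``the obvious guess-and-verify approach.''
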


\begin{proof}
This easily follows from the literature, since for a given graph and a fixed first-order sentence $\hvarphi,$ it can be checked in polynomial time if the graph satisfies $\hvarphi$. Also, it can be verified in polynomial time if a model satisfies a modal formula. Hence, the obvious guess-and-verify approach works for \NP-containment.
\end{proof}

Since modal logic is an extension of propositional logic, the satisfiability problem for every non-trivial modal logic is \NP-hard. Therefore, proving the polynomial-size model property yields an optimal upper complexity bound for the satisfiability problem for modal logics.

\subsection{Ladner's Theorem and Applications}\label{subsection:ladner}

In the seminal paper~\cite{lad77}, Ladner showed \PSPACE-containment and \PSPACE-hardness for a variety of modal logics. In particular, he proved that the satisfiability problem for any logic between \logicname{K}\ and \logicname{S4}\ is \PSPACE-hard. In order to state Ladner's result, we introduce the concept of extensions of a logic, and how it relates to modal logics defined by first-order formulas.

For a modal logic $\KL,$ an \emph{extension} of $\KL$ is a modal logic $\logicname{KL'}$ such that every formula which is valid (a tautology) in $\KL$ is also valid in $\logicname{KL'},$ or equivalently such that every formula that is $\logicname{KL'}$-satisfiable is also $\KL$-satisfiable. As an example, every logic that we consider is an extension of $\logicname{K},$ and $\logicname{S4}$ is an extension of $\logicname{K4}$. In the case of elementary logics, this is related to an implication of the corresponding first-order-formulas. In the following, when we say that a formula $\hvarphi$ over the frame language implies a formula $\hpsi$ over the frame language, then we mean that every graph which is a model of $\hvarphi$ also satisfies $\hpsi$.

\begin{proposition}\label{proposition:extensions of elementary logics and first-order implication}
 Let $\hvarphi$ and $\hpsi$ be first-order formulas over the frame language, and let $\K\hvarphi$ and $\K\hpsi$ be the corresponding elementary modal logics. If $\hvarphi$ implies $\hpsi,$ then $\K\hvarphi$ is an extension of $\K\hpsi$.
\end{proposition}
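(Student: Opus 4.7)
The plan is to unfold the three relevant definitions and observe that the implication then becomes essentially immediate. Specifically, I would use the formulation of ``extension'' that says $\K\hvarphi$ extends $\K\hpsi$ iff every $\K\hvarphi$-satisfiable modal formula is also $\K\hpsi$-satisfiable, and then chase the definition of satisfiability for elementary logics.

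First, I would let $\phi$ be an arbitrary $\K\hvarphi$-satisfiable modal formula. By the definition of satisfiability in an elementary logic given earlier in the excerpt, this means there is a model $M$ based on some frame $G$ such that $G\models\hvarphi$ and $M,w\models\phi$ for some world $w$ in $M$. Next, I would invoke the hypothesis that $\hvarphi$ implies $\hpsi$ in the sense defined just before the proposition, namely that every graph satisfying $\hvarphi$ also satisfies $\hpsi$. Applied to $G$, this yields $G\models\hpsi$. Thus the same model $M$ is now based on a frame satisfying $\hpsi$, and still has $M,w\models\phi$, so $\phi$ is $\K\hpsi$-satisfiable. Since $\phi$ was arbitrary, every $\K\hvarphi$-satisfiable formula is $\K\hpsi$-satisfiable, i.e., $\K\hvarphi$ is an extension of $\K\hpsi$.

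There is essentially no obstacle here: the statement is a straightforward consequence of how elementary logics were defined. The only thing to be careful about is the direction of the implication, since ``extension'' reverses the direction of the set of satisfiable formulas with respect to the set of valid formulas, and one must match this correctly with the direction of the first-order implication. The point is that strengthening the frame condition from $\hpsi$ to $\hvarphi$ restricts the class of admissible frames, which can only shrink the set of satisfiable modal formulas, which is exactly what it means for $\K\hvarphi$ to be an extension of $\K\hpsi$.
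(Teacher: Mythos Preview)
Your proposal is correct and essentially the same argument as the paper's: both use that any frame satisfying $\hvarphi$ also satisfies $\hpsi$, so the same model witnesses satisfiability in both logics. The only difference is cosmetic: the paper works with the validity formulation of ``extension'' and phrases the argument by contradiction (assuming $\neg\phi$ is $\K\hvarphi$-satisfiable and deriving that it is $\K\hpsi$-satisfiable), whereas you use the equivalent satisfiability formulation directly, which is slightly cleaner.
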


\begin{proof}
 Let $\phi$ be a modal formula which is valid in $\K\hpsi$. Then $\neg\phi$ is not $\K\hpsi$-satisfiable. Now assume that $\neg\phi$ is $\K\hvarphi$-satisfiable. Then there exists a model $M$ and a world $w\in M$ such that $M,w\models\neg\phi$ and $M$ is a $\K\hvarphi$-model, i.e., $M\models\hvarphi$. Since $\hvarphi$ implies $\hpsi,$ we know that $M\models\hpsi,$ and therefore $M$ is a $\K\hpsi$-model. Therefore, $\neg\phi$ is $\K\hpsi$-satisfiable, a contradiction. Therefore we know that $\neg\phi$ is not $\K\hvarphi$-satisfiable, and hence $\phi$ is valid in $\K\hvarphi$. Therefore, it follows that $\K\hvarphi$ is an extension of $\K\hpsi$.
\end{proof}

Note that the converse of Proposition~\ref{proposition:extensions of elementary logics and first-order implication} does not hold. For example, consider the formulas $\hvarphi_1=\exists x \overline{(x R x)},$ and $\hvarphi_2=\forall x (x R x)\vee\overline{(x R x)}$. Then $\hvarphi_2$ is a tautology and $\hvarphi_1$ is not, in particular, we know that $\hvarphi_2$ does not imply $\hvarphi_1$. But the logics $\K{\hvarphi_1}$ and $\K{\hvarphi_2}$ are easily seen to be identical (and both identical to $\logicname{K}$). In particular, they are extensions of each other.

Ladner's main result can be stated as follows:

\begin{theorem}[\cite{lad77}]\label{theorem:ladner}
 \begin{enumerate}
  \item The satisfiability problems for the logics $\logicname{K},\logicname{K4},$ and $\logicname{S4}$ are \PSPACE-complete, and $\sat{\logicname{S5}}$ is \NP-complete.
  \item Let $\KL$ be a modal logic such that \logicname{S4} is an extension of \KL. Then $\sat{\KL}$ is \PSPACE-hard.
 \end{enumerate}
\end{theorem}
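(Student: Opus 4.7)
\begin{proofidea}
The plan is to establish the four complexity results in part~(1) by the standard tableau and reduction arguments, and then derive part~(2) from the \PSPACE-hardness of $\logicname{K}$ together with the monotonicity of satisfiability under logic extension.

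For the \PSPACE\ upper bounds on $\logicname{K}$, $\logicname{K4}$, and $\logicname{S4}$, I would implement a depth-first tableau procedure that constructs a witness model one root-to-leaf branch at a time. At each world the algorithm maintains a polynomial-size label consisting of the subformulas required to hold there, branches nondeterministically over disjunctions, and for each $\Diamond\psi$ it commits to, recurses on a fresh successor world whose label contains $\psi$ together with all $\Box$-subformulas propagated from the current world. For $\logicname{K}$ the recursion depth is bounded by the modal depth of the input; for $\logicname{K4}$ and $\logicname{S4}$ one additionally propagates $\Box$-formulas to all descendants and caps the recursion by detecting repeated label sets along a branch. Only a single branch resides in memory at any time and each label is polynomial in $\card{\phi}$, so the procedure runs in \PSPACE.

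For the \PSPACE\ lower bounds I would reduce from \prblemname{QBF}. Given a prefix $Q_1 x_1 \cdots Q_n x_n$ in front of a quantifier-free body $\psi$, I would build a modal formula whose tree-like models encode the evaluation trees of the instance: the modal depth matches the quantifier prefix, each level enforces a two-way branching for universal quantifiers and a single chosen branch for existential ones, and $\Box$-based gadgets propagate the variable assignments down to the leaves, where $\psi$ is evaluated. The main obstacle is engineering this propagation so the same formula witnesses hardness uniformly for $\logicname{K}$, $\logicname{K4}$, and $\logicname{S4}$, since transitivity and reflexivity interact with naive $\Box$-labelling; the standard remedy is to stamp each level with auxiliary propositional markers so that each modal step is forced to pick the intended successor. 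For $\logicname{S5}$ I would instead prove the polynomial-size model property and invoke Proposition~\ref{proposition:elementary plus polysize model property is in np}: an $\logicname{S5}$-model is an $R$-clique, so keeping the evaluating world together with one witness world per $\Diamond$-subformula of $\phi$ already yields a model of size $O(\card{\phi})$.

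Part~(2) then follows by a straightforward monotonicity argument. Adopting the standard convention that $\logicname{KL}$ is a normal modal logic and hence extends $\logicname{K}$, every $\logicname{KL}$-satisfiable formula is $\logicname{K}$-satisfiable; and since $\logicname{S4}$ is an extension of $\logicname{KL}$, every $\logicname{S4}$-satisfiable formula is $\logicname{KL}$-satisfiable. Applied to the \prblemname{QBF}-reduction formula $\phi$ from the previous paragraph this gives: if \prblemname{QBF} is true then $\phi$ is $\logicname{S4}$-satisfiable and hence $\logicname{KL}$-satisfiable; conversely, if $\phi$ is $\logicname{KL}$-satisfiable then $\phi$ is $\logicname{K}$-satisfiable, so by correctness of the reduction \prblemname{QBF} is true. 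Therefore $\sat{\logicname{KL}}$ is \PSPACE-hard.
\end{proofidea}
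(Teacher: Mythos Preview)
The paper does not prove this theorem; it is cited from Ladner~\cite{lad77} and used as a black box. The only place the paper unpacks Ladner's argument is in the proof of Corollary~\ref{corollary:ufos satisfied in symmetric tree are pspace hard} (and the subsequent Theorem~\ref{theorem:ladner hardness cases for k-transitivity}), where it reproduces the \prblemname{QBF} reduction following~\cite{blrive01}. Your proof idea matches that construction in all essentials: level markers $q_i$, branching macros $B_i$ for universal quantifiers, $\Box$-propagation of variable values down the tree, and evaluation of the matrix at the leaves. The \PSPACE\ upper bounds via a branch-at-a-time tableau and the \NP\ result for \logicname{S5} via the polynomial-size model property are likewise the standard arguments and are correct.

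Your treatment of part~(2) is also the right one, and it isolates the point that actually does the work: the \emph{same} reduction formula $\phi$ simultaneously satisfies ``$\chi$ true $\Rightarrow$ $\phi$ is \logicname{S4}-satisfiable'' and ``$\phi$ is \logicname{K}-satisfiable $\Rightarrow$ $\chi$ true.'' The paper records exactly this uniformity in the proof of Theorem~\ref{theorem:ladner hardness cases for k-transitivity} (the strict-tree model survives adding reflexive and transitive edges, while unsatisfiability of $\phi$ in \logicname{K} witnesses falsity of $\chi$). One small caveat: you invoke ``the standard convention that \KL\ is a normal modal logic and hence extends \logicname{K}.'' That assumption is indeed needed for the $\KL$-satisfiable $\Rightarrow$ \logicname{K}-satisfiable step, and it is implicit in Ladner's original statement (logics ``between \logicname{K} and \logicname{S4}''), but the theorem as phrased here only says that \logicname{S4} extends \KL. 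You are right to flag the convention explicitly; without it the statement would be false.
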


Ladner's proof for Theorem~\ref{theorem:ladner} shows some additional results, which we will give as a series of corollaries. 

The construction from Ladner's proof can be modified to prove the following result as well:

\begin{corollary}\label{corollary:ufos satisfied in symmetric tree are pspace hard}
 Let $\hvarphi$ be a formula over the frame language which is satisfied in every symmetric tree or in every reflexive and symmetric tree. Then $\sat{\K{\hvarphi}}$ is \PSPACE-hard.
\end{corollary}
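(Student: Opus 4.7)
The plan is to revisit Ladner's PSPACE-hardness reduction used in Theorem~\ref{theorem:ladner} and modify it so that the models witnessing satisfiability can be taken to be symmetric trees in the first case, and reflexive and symmetric trees in the second. Since, by hypothesis, $\hvarphi$ is satisfied on every such tree, any modal formula satisfiable in such a tree is automatically $\K{\hvarphi}$-satisfiable, so the reduction lands in $\sat{\K{\hvarphi}}$ and establishes \PSPACE-hardness.

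Concretely, I would start from Ladner's QBF-to-modal-satisfiability reduction, whose canonical satisfying models are rooted trees of polynomial depth $d$. The naive attempt to symmetrize the tree (adding an edge $v\to u$ for every existing edge $u\to v$, and, in the second case, self-loops for reflexivity) breaks the encoding, because Ladner's formulas use $\Diamond$ to step strictly downward, whereas the new edges let $\Diamond$ also reach parents and, in the reflexive case, the current world itself. The standard workaround is to introduce fresh propositional variables $p_0,p_1,\ldots,p_d$ acting as \emph{level markers}, enforced so that $p_i$ holds exactly at worlds of depth $i$, and to relativize each occurrence of $\Diamond\psi$ in the Ladner formula that is used at depth $i$ by rewriting it as $\Diamond(p_{i+1}\wedge\psi)$. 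This guarantees that modal steps follow only the original downward tree edges, which makes the encoding insensitive to the additional symmetric and reflexive edges.

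With this modification, the reduction sends a QBF instance $\Phi$ to a modal formula $\phi_\Phi$ with the following property: if $\Phi$ is true, then $\phi_\Phi$ has a (reflexive and) symmetric tree model, whose underlying frame satisfies $\hvarphi$ by hypothesis, so $\phi_\Phi\in\sat{\K{\hvarphi}}$; conversely, if $\phi_\Phi$ is satisfied in any $\K{\hvarphi}$-model, then Ladner's original analysis, rerun through the level markers, forces the quantifier-alternation structure of $\Phi$ to be realized, and thus $\Phi$ is true.

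The hard step will be the converse direction: showing that in an arbitrary $\K{\hvarphi}$-model of $\phi_\Phi$, which may be far from a tree and may contain unforeseen edges coming from $\hvarphi$, the level markers genuinely suffice to reconstruct the QBF alternation. This amounts to verifying that the relativized $\Diamond(p_{i+1}\wedge\cdot)$ operator simulates, inside any such model, the behaviour that Ladner's unrelativized $\Diamond$ had in trees, and it is here that the interaction between the level-marker machinery and the extra symmetric or reflexive edges must be checked carefully. Apart from this verification, the construction closely mirrors the classical adaptation of Ladner's proof used to show \PSPACE-hardness of $\logicname{B}$ and its reflexive extension.
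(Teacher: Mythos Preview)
Your proposal is correct and takes essentially the same approach as the paper: both adapt Ladner's QBF reduction by using level-marker variables so that the modal operators effectively only traverse downward tree edges, making the construction insensitive to added symmetric and reflexive edges. The paper follows the presentation in \cite{blrive01} and observes that the level markers $q_0,\dots,q_m$ are already present in the standard construction, so only the propagation macro $S_i$ needs to be guarded by $(q_{i+1}\vee\dots\vee q_m)$; your description of relativizing each $\Diamond\psi$ to $\Diamond(p_{i+1}\wedge\psi)$ is a slightly more uniform phrasing of the same idea.
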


Even though the proof is just a minor variation of Ladner's proof, we give the entire construction for completeness. We mention where the adjustments for the symmetric case are.

\begin{proof}
 The result follows from a slight modification of Ladner's proof for the hardness result in Theorem~\ref{theorem:ladner}. We follow the presentation of~\cite{blrive01}, where Ladner's theorem can be found as Theorem~6.50.

The proof shows a reduction from the evaluation problem for quantified Boolean formulas, \prblemname{QBF}. The main strategy is to create from a quantified formula $\chi$ a modal formula $\phi$ such that in any satisfying model for $\phi,$ a complete ``quantifier tree'' for the quantifier block of $\chi$ can be found, i.e., a tree where for each existentially quantified variable, one value is chosen, and for universally quantified variables, both alternatives true and false are evaluated.

Let $\chi=Q_1 p_1\dots Q_m p_m\theta(p_1,\dots,p_m)$ be a quantified Boolean formula, where $Q_i\in\set{\forall,\exists},$ and $\theta$ is a propositional formula. From this we construct a modal formula $\phi,$ in which variables $p_1,\dots,p_m$ and $q_1,\dots,q_m$ appear. The $p_i$ correspond directly to the variables of the propositional formula, and the $q_i$ mark the level of the node in the modal model: As mentioned, a model $M$ for the formula $\phi$ is essentially a quantifier tree, and hence each node has a unique level in $M$. The construction will ensure that (up to the depth of the model that we care about) $q_i$ is true in a state if and only if the state is on level $i$ in the model. The variables $p_i$ express the values of the variables in the quantifier tree, where the value of $p_i$ is only regarded as ``defined'' from level $i$ onwards (obviously, these variables also have values in other levels, but the values in these levels are not of interest to us).

 For constructing the formula which forces the quantifier tree, we first introduce two macros. The macro $B_i$ requires the model to ``split'' at level $i:$

$$B_i:=q_i\rightarrow\Diamond(q_{i+1}\wedge p_{i+1})\wedge\Diamond(q_{i+1}\wedge \overline{p_{i+1}}).$$

The effect of the formula $B_i$ is that if it is required to be true in level $i,$ then each node in level $i$ must have two different successors in the next level, setting the variable $p_{i+1}$ to true in one of them, and to false in the other. Therefore we can use this macro to force the ``branching'' of the quantifier tree in the levels corresponding to universal variables: if $p_{i+1}$ is a universally quantified variable, then the macro $B_i$ ensures that both possible truth values for $p_{i+1}$ are evaluated.

When forcing the quantifier tree, we also need to ensure that truth values for the variables $p_i$ are properly propagated to ``lower levels'' in the tree. For this, we use the following macro (and this is the only point in which our construction differs from the proof given in~\cite{blrive01}):

$$S_i:= ((p_i\rightarrow\Box((q_{i+1}\vee\dots\vee q_m)\rightarrow p_i))\wedge (\overline{p_i}\rightarrow\Box((q_{i+1}\vee\dots\vee q_m)\rightarrow \overline{p_i}))).$$

This formula forces the truth value of $p_i$ to be propagated ``down'' the tree from those levels on where we actually regard the value of $p_i$ as set, i.e., from the level $i$ on. Note that a node $v$ can only have a successor in which $(q_{i+1}\vee\dots\vee q_m)$ holds if its own level is at least $i,$ and hence we restrict the values of $p_i$ in exactly those parts of the tree where it is regarded as defined.

We now give the construction of the formula $\varphi,$ which as mentioned is identical to the one used in Ladner's proof, with the exception that our macro $S_i$ is different. The formula $\varphi$ is the conjunction of the following formulas (here, $\Box^m\psi$ is an abbreviation for a $\underbrace{\Box\Box\dots\Box}_m\psi,$ and $\Box^{(m)}\psi$ is a shorthand for $\psi\wedge\Box^1\psi\wedge\Box^2\psi\wedge\dots\wedge\Box^m\psi$):

\medskip

\begin{tabular}{llllllllll}
 $(i)$ & $q_0$ \\
 $(ii)$ & $\Box^{(m)}(q_i\rightarrow\bigwedge_{i\neq j}\overline{q_j})$ \hspace*{5mm}  $(0\leq i\leq m)$ \\
 $(iiia)$ & $\Box^{(m)}(q_i\rightarrow\Diamond q_{i+1})$ \hspace*{5mm} $(0\leq i<m)$ \\
 $(iiib)$ & $\bigwedge_{\set{i\ \vert\ Q_i=\forall}}\Box^i B_i$ \\
$(iv)$ & $\Box^{(m)}S_i$ \hspace*{5mm} $(0,\leq i<m)$ \\
$(v)$ & $\Box^m(q_m\rightarrow\theta )$
\end{tabular}

\medskip

The construction works as follows: With formula $(i),$ we give the start of the model and define the world $w$ in which $\varphi$ is satisfied to have the level $0$. Formula $(ii)$ requires each node which is reachable in at most $m$ steps from $w$ to have a well-defined level (or none). Formulas $(iiia)$ and $(iiib)$ require each level $i$ to be followed by level $i+1,$ and in the case that the $i$th quantifier is $\forall,$ formula $(iiib)$ requires the corresponding branching of the quantifier tree. Formula $(iv)$ then forces the truth values of the $p_i$ to be ``sent down'' the tree, as described earlier. Formula $(v)$ finally requires that the propositional formula $\theta$ is true for all possible truth assignments to $p_1,\dots,p_m$ generated by the quantifier tree. 
\end{proof}

Ladner's construction for the \PSPACE\ upper bound for the logic $\logicname{K}$ also reveals the following. The main idea behind this corollary is that any model for some modal formula can be transformed into a strict tree by ``unrolling.'' A much more general version of this result will be proven later as Theorem~\ref{theorem:tree-like models for horn logics}.

\begin{corollary}\label{corollary:every k-satisfiable formula satisfiable in strict tree}
 A modal formula is \logicname{K}-satisfiable if and only if it can be satisfied in a strict tree.
\end{corollary}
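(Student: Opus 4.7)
The ``if'' direction is immediate: a strict tree is a graph, so any model based on a strict tree is a \logicname{K}-model, and hence any formula satisfiable in a strict tree is \logicname{K}-satisfiable.

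For the ``only if'' direction, the plan is the standard \emph{unraveling} construction. Suppose $\phi$ is \logicname{K}-satisfiable, so there exist $M = ((W,R),X,\pi)$ and $w_0 \in W$ with $M,w_0 \models \phi$. I would define a new model $M^* = ((W^*,R^*),X,\pi^*)$ whose worlds are the finite $R$-paths in $M$ beginning at $w_0$, i.e., finite sequences $(w_0,w_1,\ldots,w_n)$ with $(w_i,w_{i+1}) \in R$ for $0 \le i < n$. The edge relation $R^*$ connects a sequence to each of its one-step extensions, and the valuation is inherited from the last element of the sequence: $(w_0,\ldots,w_n) \in \pi^*(x)$ iff $w_n \in \pi(x)$. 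By construction $(W^*,R^*)$ is a strict tree rooted at the singleton sequence $(w_0)$, because each non-root vertex has a unique predecessor (its sequence with the last element removed) and there are no cycles.

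The core of the argument is a straightforward induction on the structure of a modal formula $\psi$, showing that for every path $s = (w_0,\ldots,w_n) \in W^*$, $M^*, s \models \psi$ iff $M, w_n \models \psi$. The propositional cases follow directly from the definition of $\pi^*$ and commute with the Boolean connectives. The only interesting case is $\psi = \Diamond \chi$: if $M, w_n \models \Diamond \chi$, then there is $w_{n+1}$ with $(w_n,w_{n+1}) \in R$ and $M, w_{n+1} \models \chi$; then $s' := (w_0,\ldots,w_n,w_{n+1})$ is in $W^*$, $(s,s') \in R^*$, and by induction $M^*, s' \models \chi$, so $M^*, s \models \Diamond \chi$. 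Conversely, any $R^*$-successor of $s$ has the form $s' = (w_0,\ldots,w_n,w_{n+1})$ with $(w_n,w_{n+1}) \in R$, and applying the induction hypothesis transfers the witness back to $M$.

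Applied at $s = (w_0)$ this yields $M^*, (w_0) \models \phi$, so $\phi$ is satisfied in the strict tree $(W^*,R^*)$. I do not foresee a real obstacle; the only subtlety is that $W^*$ may be infinite if $M$ contains cycles, but the statement only asks for a strict tree. If a finite witness is desired, one can truncate the unraveling at depth $\md{\phi}$, since the truth value of $\phi$ at $(w_0)$ depends only on worlds reachable within that many steps; this truncation remains a strict tree and the induction still goes through for all subformulas of $\phi$ evaluated at paths of appropriate length.
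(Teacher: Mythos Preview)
Your proof is correct and is exactly the unraveling (``unrolling'') argument the paper has in mind; the paper does not spell out the proof but merely remarks that the result follows from the standard unrolling construction underlying Ladner's \PSPACE\ algorithm for \logicname{K}, and points to the later Theorem~\ref{theorem:tree-like models for horn logics} for a general version.
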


Corollary~\ref{corollary:every k-satisfiable formula satisfiable in strict tree} shows that we only need to apply our formulas to trees---as long as our first order formula does not ``say anything'' about trees, the generated logic is the same as \logicname{K}, although these logics do not necessarily have the same set of models.

\begin{corollary}\label{corollary:formulas satisfiable in trees give pspace complete logics}
 Let $\hvarphi$ be a first-order formula over the frame language such that $\hvarphi$ is satisfied in every strict tree. Then the satisfiability problem for $\K\hvarphi$ is \PSPACE-complete.
\end{corollary}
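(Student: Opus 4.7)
The plan is to observe that, under the hypothesis of the corollary, the logics $\logicname{K}$ and $\K\hvarphi$ have \emph{the same} satisfiability problem, and then invoke Theorem~\ref{theorem:ladner}. So the strategy is a two-line reduction in both directions, with no real construction needed; Corollary~\ref{corollary:every k-satisfiable formula satisfiable in strict tree} does essentially all the work.

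First I would argue the easy inclusion: every $\K\hvarphi$-model is in particular a $\logicname{K}$-model (since $\logicname{K}$ imposes no frame conditions), so any $\K\hvarphi$-satisfiable modal formula is $\logicname{K}$-satisfiable. This is really just an instance of Proposition~\ref{proposition:extensions of elementary logics and first-order implication}, since $\hvarphi$ implies $\hvarphi_{\mathrm{taut}}$.

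For the other direction, let $\phi$ be a $\logicname{K}$-satisfiable formula. By Corollary~\ref{corollary:every k-satisfiable formula satisfiable in strict tree}, $\phi$ is satisfied in some model $M$ whose underlying frame is a strict tree. By hypothesis $\hvarphi$ holds in every strict tree, so $M\models\hvarphi$, which means $M$ is a $\K\hvarphi$-model and hence $\phi$ is $\K\hvarphi$-satisfiable. Combining both directions, the satisfiability problems for $\logicname{K}$ and $\K\hvarphi$ coincide as sets of modal formulas.

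The main ``obstacle,'' insofar as there is one, is simply making sure the appeal to Corollary~\ref{corollary:every k-satisfiable formula satisfiable in strict tree} is legitimate, i.e., that the notion of ``strict tree'' on which it is stated is exactly the notion on which the hypothesis of this corollary is framed; once that matching is done, the PSPACE-completeness conclusion is immediate from Theorem~\ref{theorem:ladner}, which gives both the upper and the lower bound for $\sat{\logicname{K}}$.
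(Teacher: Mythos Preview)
Your proposal is correct and follows essentially the same approach as the paper's proof: the paper simply states that the result follows immediately from Corollary~\ref{corollary:every k-satisfiable formula satisfiable in strict tree} and Theorem~\ref{theorem:ladner}, since every modal formula is $\logicname{K}$-satisfiable if and only if it is $\K\hvarphi$-satisfiable. You have merely spelled out the two directions of this equivalence in a bit more detail, which is fine.
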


\begin{proof}
This follows immediately from Corollary~\ref{corollary:every k-satisfiable formula satisfiable in strict tree} and Theorem~\ref{theorem:ladner}, since every modal formula is $\logicname{K}$-satisfiable if and only it is $\K\hvarphi$-satisfiable. 
\end{proof}

For our complete classification in Theorem~\ref{theorem:horn conjunction classification}, we need a hardness result which is a slight variation of Ladner's hardness result for all logics between $\logicname{K}$ and $\logicname{S4}$. However, the proof is merely a closer inspection of Ladner's construction. 

In addition to the already mentioned graph properties, we define a generalization of transitivity. For a natural number $k,$ we say that a graph $G$ is $k$-transitive, if every pair of vertices $(u,v)$ in $G$ such that there is a $k$-step path from $u$ to $v$ in $G$ is connected with an edge. It is easy to see that a graph is transitive if and only if it is $2$-transitive, and a transitive graph is also $k$-transitive for every $k\in\mathbb{N}$. For a set $S\subseteq\mathbb{N},$ we say that a graph is $S$-transitive if it is $k$-transitive for every $k\in S$.

\begin{theorem}\label{theorem:ladner hardness cases for k-transitivity}
 Let $\hpsi$ be a first-order formula over the frame language such that one of the following cases applies:
\begin{itemize}
 \item $\hpsi$ is satisfied in every strict tree,
 \item $\hpsi$ is satisfied in every reflexive tree,
 \item there is a set $S\subseteq\mathbb{N}$ such that $\hpsi$ is satisfied in every $S$-transitive tree,
 \item $\hpsi$ is satisfied in every symmetric tree,
 \item $\hpsi$ is satisfied in every tree which is both reflexive and symmetric,
 \item there is a set $S\subseteq\mathbb{N}$ such that $\hpsi$ is satisfied in every tree which is both reflexive and $S$-transitive.
\end{itemize}

Then $\sat{\K\hpsi}$ is $\PSPACE$-hard.
\end{theorem}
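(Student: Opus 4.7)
The plan is to extend the QBF-reduction construction already described in the proof of Corollary~\ref{corollary:ufos satisfied in symmetric tree are pspace hard} so that the same modal formula $\varphi$ produced from a quantified Boolean formula $\chi$ serves as a hardness witness for all six cases simultaneously. Since $\varphi$ was carefully designed with the $S_i$ macro restricted by $(q_{i+1}\vee\dots\vee q_m),$ its satisfaction is insensitive to extra edges that lead either to nodes of the same level or to nodes of strictly smaller level. My plan is to verify, for each of the six bullet-listed classes of trees, that the standard quantifier-tree model for $\varphi$ can be re-equipped with the required extra edges and still satisfy $\varphi$ at its root; this gives a reduction from \prblemname{QBF} in each case.

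First I would handle the three cases involving strict, reflexive, and $S$-transitive trees (without symmetry). The strict-tree case is exactly Ladner's original argument. For the reflexive case, I take the strict quantifier tree and add a self-loop at each world; because formula $(ii)$ forces each world to carry at most one of the level markers $q_i,$ the premise $(q_{i+1}\vee\dots\vee q_m)$ in $S_i$ fails at the self-loop target, so the $S_i$ clauses remain true, while the other clauses ($(i)$--$(iiia)$, $(iiib)$, $(v)$) only assert existence or universally quantified implications that are preserved under adding edges that do not raise the reachable level. For the $S$-transitive case I close the strict tree under the $k$-transitivity requirement for each $k\in S,$ which only adds edges from an ancestor at level $i$ to a descendant at some level $j\ge i+k;$ again the $\Box$-clauses in $(ii)$, $(iv)$, $(v)$ must be re-checked, and in each case the conclusion already holds by the strict-tree argument, since the new target is a descendant for which the propagated value of $p_i$ and of $\theta$ were already correct.

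Next I would handle the three symmetric cases. For a plain symmetric tree, I symmetrize the strict quantifier tree. A world of level $i$ now also has a predecessor of level $i-1$ as a successor; the premise $(q_{i+1}\vee\dots\vee q_m)$ in $S_j$ is false at every such predecessor for every $j\ge i-1,$ and false trivially for smaller $j,$ so $S_j$ is preserved; formulas $(ii)$, $(iiia)$, $(iiib)$, and $(v)$ are already handled in the proof of Corollary~\ref{corollary:ufos satisfied in symmetric tree are pspace hard}. The reflexive-and-symmetric case is the superposition of the two modifications, and the reflexive-and-$S$-transitive case combines self-loops with $k$-transitive closures; in every combination, each newly added edge out of a level-$i$ world either leads to a world whose level is $\le i$ (self-loop, symmetric back-edge, or symmetric descendant) or to a proper descendant already treated by the strict case, so the $S_i$ and $\Box$-clauses of $\varphi$ remain satisfied.

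The main obstacle will be bookkeeping: I need to make sure that after each structural modification, the mutual exclusion of level markers from formula $(ii)$ and the depth-bounded $\Box^{(m)}$ scopes still do what is claimed, and that the $\Diamond$-witnesses required by $(iiia)$ and $(iiib)$ are not destroyed (they are not, since we only add edges). Once this case analysis is complete, satisfaction of $\varphi$ at the root of the modified tree is equivalent to the quantifier tree correctly evaluating $\chi,$ giving a polynomial-time reduction from \prblemname{QBF} to $\sat{\K\hpsi}$ in each case, and hence \PSPACE-hardness.
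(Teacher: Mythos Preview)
Your unified plan breaks down in the two $S$-transitive cases. The macro $S_i$ of Corollary~\ref{corollary:ufos satisfied in symmetric tree are pspace hard} guards the \emph{target} of the $\Box$-step by $(q_{i+1}\vee\dots\vee q_m)$, and this is precisely what makes it robust against symmetric back-edges: at a world $w$ of level $j<i$, every successor in a strict or symmetric tree has level $\le j+1\le i$, so the guard is false and $S_i$ is vacuous there. But a $k$-transitive shortcut from $w$ to a descendant $v$ at level $j+k\ge i+1$ makes the guard true at $v$, and then $S_i$ forces $p_i(w)=p_i(v)$. If some universal quantifier $Q_\ell$ with $j+1\le\ell\le j+k-1$ causes the tree to branch on $p_\ell$ strictly below $w$, then $w$ has $k$-transitive successors at level $j+k$ carrying \emph{both} truth values of $p_\ell$, and no assignment of $p_\ell$ at $w$ can satisfy $S_\ell$. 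Concretely, take $k=2$, $Q_2=\forall$, and look at a level-$1$ world: its $2$-transitive successors at level $3$ include worlds with $p_2$ true and worlds with $p_2$ false, so $S_2$ fails. Your assertion that ``the propagated value of $p_i$ \dots\ were already correct'' is therefore false for indices $i$ above the current level; propagation of $p_i$ only starts at level $i$, and transitive edges can jump over that level.

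The paper sidesteps this by \emph{not} unifying the two constructions. For the four non-symmetric bullets it invokes Ladner's original reduction, quoting the known fact that the model for $\phi$ survives the addition of arbitrary reflexive and transitive edges; only for the two symmetric bullets does it use the target-guarded $S_i$ of Corollary~\ref{corollary:ufos satisfied in symmetric tree are pspace hard}. Your treatment of the strict, reflexive, symmetric, and reflexive-symmetric cases is fine, but for the $S$-transitive and reflexive-$S$-transitive cases you must either fall back on Ladner's original propagation clause or redesign $S_i$ with an additional source-side guard---and then re-verify the symmetric case. Either repair effectively reproduces the paper's two-pronged argument rather than the single uniform one you proposed.
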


Note that for the cases not including symmetry, the result ``almost'' follows directly from Ladner's Theorem~\ref{theorem:ladner}. It does not follow directly, since we only require that our first-order formulas are satisfied in $S$-transitive \emph{trees}, but nothing about arbitrary $S$-transitive graphs. Nevertheless, the result for this case follows directly from Ladner's proof.

\begin{proof}
First consider the cases in which symmetry does not occur. Now the result follows directly from Ladner's proof: Ladner proves \PSPACE-hardness with a reduction from \prblemname{QBF}. From a quantified Boolean formula $\chi,$ he constructs a modal formula $\phi$ such that the following holds:

\begin{enumerate}
 \item If $\chi$ is true, then there is a modal model $M$ and a world $w\in M$ such that $M$ is a strict tree, and $M,w\models\phi$. Moreover, adding any number of reflexive or transitive edges in $M$ preserves the fact that $M,w\models\phi$.
 \item If $\chi$ is false, then $\phi$ is not $\logicname{K}$-satisfiable.
\end{enumerate}

It is immediate that this reduction also proves the desired hardness result for $\K\hpsi:$ First assume that $\chi$ is true. Then, by the above, the model $M,w$ satisfies $\phi,$ and is a tree. If we close this model under the reflexive and/or $S$-transitive closures, then by the above this still is a model for $\phi$. Since $\hpsi$ is satisfied in every reflexive and/or $S$-transitive tree, this is a $\K\hpsi$-model for $\phi,$ and hence $\phi$ is $\K\hpsi$-satisfiable. On the other hand, if $\chi$ is false, then $\phi$ is not $\logicname{K}$-satisfiable, and therefore not $\K\hpsi$-satisfiable.

Now the cases involving symmetry follow from Corollary~\ref{corollary:ufos satisfied in symmetric tree are pspace hard}.
\end{proof}

As this section indicated, trees are an important subclass of modal models. And in fact, a good intuition to read this paper is to always think of the graphs we deal with as ``near-trees,'' i.e., trees with additional edges.

\section{About Modal Models}\label{sect:modal models}

\NP-results in this paper are shown with the explicit construction of small models for a given formula. We therefore introduce some notation on graphs. For a graph $G,$ the set of vertices of $G$ is denoted with $\vertices G,$ and $\edges G$ is the set of its edges. As usual, a \emph{homomorphism} from a graph $G_1$ to a graph $G_2$ is a function preserving the edge relation. A \emph{strict tree} is a tree in the usual sense, i.e., a directed, acyclic, connected graph which has a root $w$ from which all other vertices can be reached. We now define notation to describe paths in graphs. Note that we often identify modal models and their frames, when the propositional assignments are clear from the context or not important for our arguments.

\begin{definition}
Let $G$ be a graph, $w,v\in G$ vertices, and $i\in\mathbb{N}$. We write $G\models\path{w}{i}{v}$ if in $G,$ there is a path of length $i$ from $w$ to $v$. Additionally, $\gpath w0w$ for all $w\in G$. We also say that $w$ is a $i$-step predecessor of $v,$ and $v$ is a $i$-step successor of $w$ in $G,$ if $\gpath{w}{i}{v}$. The \emph{maximal depth} of $w\in G$ is defined as $\mxdepth{G}{w}:=\max\set{i\ \vert\ \exists w'\in G,\gpath {w'}{i}{w}}$. Similarly, the \emph{maximal height} of $w\in G$ is $\mxheight{G}{w}:=\max\set{i\ \vert\ \exists w'\in G,\gpath {w}{i}{w'}}$.
\end{definition}

Note that the maximal depth and maximal height of nodes can be (countably) infinite, even in finite graphs. The next definition is a restriction on graphs which is very natural for modal logics: for deciding whether $M,w\models\phi$ holds for some modal $M,$ a world $w\in M,$ and a modal formula $\phi,$ it is obvious that only the worlds which are reachable from $w$ are important.

\begin{definition}
Let $G$ be a graph, and let $w\in G$. The graph $G_w$ is obtained from $G$ by restricting $G$ to the worlds which can be reached from $w$.
\end{definition}

\subsection{Invariants}

Proving the polynomial-size model property for some logic is usually done starting with an arbitrary model for a given modal formula and building a smaller model out of it, which still satisfies the modal formula. However, we also need to ensure that the new model still satisfies the conditions of the logic under consideration. Therefore, we need results that allow us to perform modifications on our models and leave the modal and the first-order properties invariant. The first result in this way concerns the first-order aspect of the frames for universal formulas: 

\begin{theorem}\label{theorem:ufo invariants}
Let $G$ and $G'$ be graphs, and let $n$ be a natural number. The following conditions are equivalent:
\begin{enumerate}
\item Every universal first-order formula (with at most $n$ variables) satisfied by $G'$ is also satisfied by $G,$
\item Every existential first-order formula (with at most $n$ variables) satisfied by $G$ is also satisfied by $G',$
\item For each finite $V\subseteq G$ (such that $\card{V}\leq n$),  there exists a functions $f\colon V\rightarrow G'$ such that for $u,v\in V,$ it holds that $u R v$ iff $f(u) R' f(v)$.
\end{enumerate}
\end{theorem}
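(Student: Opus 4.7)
The plan is to establish the three equivalences by proving $(1)\Leftrightarrow(2)$ syntactically, then $(3)\Rightarrow(2)$ by showing that partial isomorphisms preserve quantifier-free formulas, and finally $(2)\Rightarrow(3)$ via the familiar atomic-diagram trick.

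For $(1)\Leftrightarrow(2)$, I will put a universal formula $\hat\varphi$ into prenex form $\forall x_1\dots x_k\,\theta$ and observe that $\neg\hat\varphi$ is logically equivalent to the existential formula $\exists x_1\dots x_k\,\neg\theta$ with the same variable count. The statement ``every universal formula with at most $n$ variables satisfied by $G'$ is satisfied by $G$'' is then literally the contrapositive of ``every existential formula with at most $n$ variables satisfied by $G$ is satisfied by $G'$,'' and vice versa, so (1) and (2) are equivalent.

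For $(3)\Rightarrow(2)$, let $\hat\psi=\exists x_1\dots x_k\,\theta$ with $k\le n$ and $\theta$ quantifier-free, and assume $G\models\hat\psi$ via witnesses $a_1,\dots,a_k$. Put $V=\{a_1,\dots,a_k\}$, so $|V|\le n$, and let $f\colon V\to G'$ be the function supplied by (3). Every atom of $\theta$ has the form $x_iRx_j$, and $f$ preserves and reflects $R$ on $V$, so the assignment $x_i\mapsto f(a_i)$ gives the atoms of $\theta$ the same truth values in $G'$ as the assignment $x_i\mapsto a_i$ does in $G$. Hence $\theta(f(a_1),\dots,f(a_k))$ holds in $G'$, witnessing $G'\models\hat\psi$.

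For $(2)\Rightarrow(3)$, given $V=\{a_1,\dots,a_k\}\subseteq G$ with pairwise distinct elements and $k\le n$, I will write down the atomic diagram
\[ D(x_1,\dots,x_k) \;:=\; \bigwedge_{i,j:\,a_iRa_j}(x_iRx_j)\;\wedge\;\bigwedge_{i,j:\,\neg a_iRa_j}\neg(x_iRx_j), \]
and consider $\hat\psi:=\exists x_1\dots x_k\,D(x_1,\dots,x_k)$. This is existential with at most $n$ variables and is satisfied in $G$ by the tuple $(a_1,\dots,a_k)$, so by (2) there exist $b_1,\dots,b_k\in G'$ with $G'\models D(b_1,\dots,b_k)$. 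Setting $f(a_i):=b_i$ is well-defined since the $a_i$ are distinct, and the conjuncts of $D$ were chosen precisely so that $a_iRa_j\Leftrightarrow b_iR'b_j$ for every pair $i,j$, yielding the function required by (3). The only mildly delicate point---and not really an obstacle---is that (3) allows $f$ to be non-injective, which is consistent with the frame language being equality-free (as in $\hat\varphi_{\mathrm{trans}}$, $\hat\varphi_{\mathrm{refl}}$, $\hat\varphi_{\mathrm{symm}}$); if equality were admitted, one would insert $\bigwedge_{i\ne j}x_i\ne x_j$ into $D$ and correspondingly require $f$ injective in (3).
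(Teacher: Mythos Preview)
Your proof is correct and follows essentially the same route as the paper: the $(1)\Leftrightarrow(2)$ equivalence via negation of prenex formulas, $(2)\Rightarrow(3)$ via the atomic-diagram sentence, and $(3)\Rightarrow(2)$ by transporting witnesses along the partial isomorphism. Your added remark about the equality-free language explaining why $f$ need not be injective is a nice clarification that the paper only hints at afterward.
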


\begin{proof}
\begin{description}
\item[$1\leftrightarrow 2$]{Let $\exists x_1\dots\exists x_n\hvarphi(x_1,\dots,x_n)$ be an existential first-order formula over the frame language which holds in $G,$ and assume that it does not hold in $G'$. In this case, the negation of the formula holds in $G',$ i.e., $G'\models\overline{\exists x_1\dots\exists x_n\hvarphi(x_1,\dots,x_n)}$. This is equivalent to $G'\models\forall x_1\dots\forall x_n\overline{\hvarphi(x_1,\dots,x_n)},$ which is a universal first-order formula over the frame language. Hence by the prerequisites, we know that $G\models\forall x_1\dots\forall x_n\overline{\hvarphi(x_1,\dots,x_n)},$ which is a contradiction.
}
\item[$2\rightarrow 1$]{Analogously to the above.}
\item[$2\rightarrow 3$]{We construct a function $f\colon V\rightarrow G'$ with the desired properties. Let $\card{V}=n,$ and let $V=\set{x_1,\dots,x_n}$. We construct an existential first-order formula

$$\displaystyle \hpsi=\exists x_1,\dots,\exists x_n\bigwedge_{(x_i,x_j)\in R} x_i R x_j \wedge \bigwedge_{(x_i,x_j)\notin R}\overline{x_i R x_j}.$$

Obviously, $\hpsi$ has $n$ variables, and $G$ obviously is a model for $\hpsi,$ it follows that $G'\models\hpsi$. Therefore, there are $x_1',\dots,x_n'$ such that $(x_i,x_j)\in R$ if and only if $(x_i',x_j')\in R'$. Define $f(x_i):=x_i'$. This function obviously meets the criteria: Let $(x_i,x_j)\in R$. Then $x_i R x_j$ is a clause in $\hpsi$. Therefore for the values $x_i'$ and $x_j'$ chosen by the existential quantifiers, $(x_i',x_j')\in R$ must hold. Since $f(x_i)=x_i'$ and $f(x_j)=x_j',$ the claim follows. For the condition $\overline{R(x_i,x_j)},$ the proof is the same.
}
\item[$3\rightarrow 2$]{Let such a function $f$ exist for every $V\subseteq G$ with $\card{V}\leq n$. Let $\hpsi:=\exists x_1,\dots,\exists x_n\hvarphi$ be a first-order existential formula, let $n$ be the number of variables in $\hpsi$, and let $G\models\hpsi$. We show that $G'\models\hpsi$. Since $G\models\hpsi,$ there are $x_1,\dots,x_n\in G$ such that $\hvarphi(x_1,\dots,x_n)$ holds in $G$. This implies that $\hvarphi(f(x_1),\dots,f(x_n))$ holds in $G'$. Therefore, $G'\models\hpsi$.}
\end{description}
\end{proof}

Note that while the function $f$ required to exist in the conditions of the above theorem shares some properties with an isomorphism, it is not required to be injective. The following is an important special case, which immediately follows from this observation and Theorem~\ref{theorem:ufo invariants}:

\begin{proposition}\label{prop:trivial subgraph property}
Let $G$ be a graph, $\hvarphi$ a universal first-order formula over the frame language such that $G\models\hvarphi$. Then for every subgraph $G'$ of $G,$ it holds that $G'\models\hvarphi$.
\end{proposition}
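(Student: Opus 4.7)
The plan is to apply Theorem~\ref{theorem:ufo invariants} with the roles of the two graphs exchanged, using the inclusion map as the required function. Specifically, I want to conclude condition~1 of the theorem in the form ``every universal first-order formula satisfied by $G$ is also satisfied by $G'$,'' so I invoke Theorem~\ref{theorem:ufo invariants} with $G$ playing the role of ``$G'$'' and $G'$ playing the role of ``$G$'' (and take $n$ large enough, or simply invoke the theorem once for each $n$).

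With this renaming, condition~3 becomes: for every finite $V \subseteq G'$, there exists a function $f \colon V \to G$ such that for all $u,v \in V$, $u R' v$ iff $f(u) R f(v)$. Since $G'$ is a subgraph of $G$ (understood, as is natural here, as an induced subgraph so that the edge relation of $G'$ is exactly the restriction of $R$ to the vertices of $G'$), I would simply let $f$ be the inclusion map $V \hookrightarrow G$. Then $f(u) R f(v)$ reduces to $u R v$, and by the definition of induced subgraph this holds iff $u R' v$, so condition~3 is immediate. The remark preceding the proposition---that $f$ need not be injective---is not needed here (inclusion is injective), but the point is that $f$ need not be a graph isomorphism onto all of $G$, only an edge-preserving/reflecting embedding on $V$, which is exactly what inclusion provides.

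Having established condition~3 (in the swapped direction), Theorem~\ref{theorem:ufo invariants} gives condition~1 in the swapped direction, i.e., every universal first-order formula satisfied by $G$ is also satisfied by $G'$. Applying this to $\hvarphi$ itself, which is universal and satisfied by $G$ by hypothesis, yields $G' \models \hvarphi$, as desired.

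There is no real obstacle: the only subtlety worth flagging in the write-up is the convention for ``subgraph.'' If one allows non-induced subgraphs (i.e., ones that can delete edges without deleting their endpoints), the statement is false already for $\hvarphi_{\mathrm{symm}}$, so the intended reading must be induced subgraph; this is exactly what makes the inclusion work as an edge-reflecting map and hence what makes the reduction to Theorem~\ref{theorem:ufo invariants} go through in one line.
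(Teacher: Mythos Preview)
Your proposal is correct and is essentially the same approach as the paper's: the paper simply states that the proposition ``immediately follows from this observation and Theorem~\ref{theorem:ufo invariants},'' and your write-up spells out exactly how (swap the roles of $G$ and $G'$ in Theorem~\ref{theorem:ufo invariants} and take $f$ to be the inclusion). Your remark that ``subgraph'' must be read as induced subgraph is a useful clarification that the paper leaves implicit.
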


For a modal model, a restriction of the model is a restriction of the graph, where the propositional assignment for the remaining worlds is unchanged. We now consider restrictions which are ``compatible'' with the modal properties of the formulas in question. The following lemma describes a standard way to reduce the number ``relevant'' of successors to worlds in models. This is an application of the more general idea of bounded morphisms, which we will encounter in Section~\ref{sect:trees for horn}. For a modal formula $\phi,$ $\subf{\phi}$ denotes the set of its subformulas. With $\md\phi,$ we denote the modal depth of a formula $\phi,$ i.e., the maximal nesting degree of the modal operator $\Diamond$ in $\phi$.

\begin{lemma}\label{lemma:modal restriction invariance}
Let $\phi$ be a modal formula, and let $M,w\models\phi$. Let $M'$ be a restriction of $M$ such that the following holds:

\begin{enumerate}
\item $w\in M',$
\item for all $u\in M',$ and all $\psi\in\subf{\phi}$ such that $M,u\models\Diamond\psi,$ there is some $v\in M'$ such that $(u,v)$ is an edge in $M,$ and $M,v\models\psi$.
\end{enumerate}

Then $M',w\models\phi$.
\end{lemma}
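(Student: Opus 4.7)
The plan is to prove, by structural induction on the subformulas of $\phi$, the stronger statement: for every $\psi\in\subf{\phi}$ and every $u\in M',$ we have $M,u\models\psi$ if and only if $M',u\models\psi$. The conclusion of the lemma is then the special case $\psi=\phi$ and $u=w$.

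The base case is when $\psi$ is a propositional variable $x$; since $M'$ is obtained from $M$ by restricting the underlying graph without altering the propositional assignment on the remaining worlds, we have $M,u\models x$ iff $u\in\pi(x)$ iff $M',u\models x$. The Boolean cases $\psi=\psi_1\wedge\psi_2$ and $\psi=\neg\psi_1$ follow directly from the induction hypothesis applied at the same world $u$, noting that subformulas of $\psi$ belong to $\subf{\phi}$ whenever $\psi$ does.

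The only nontrivial case is $\psi=\Diamond\psi_1$. For the forward direction, assume $M,u\models\Diamond\psi_1$. Since $u\in M'$ and $\Diamond\psi_1\in\subf{\phi},$ condition~(2) of the lemma supplies a world $v\in M'$ such that $(u,v)$ is an edge of $M$ and $M,v\models\psi_1$. Because $M'$ is a restriction of $M$ (and hence preserves edges between retained worlds), $(u,v)$ is also an edge of $M',$ and the induction hypothesis applied to $\psi_1$ at $v$ gives $M',v\models\psi_1,$ whence $M',u\models\Diamond\psi_1$. For the backward direction, assume $M',u\models\Diamond\psi_1$; then there is $v\in M'$ with $(u,v)$ an edge of $M'$ and $M',v\models\psi_1$. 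Any edge of $M'$ is in particular an edge of $M,$ and the induction hypothesis again converts $M',v\models\psi_1$ into $M,v\models\psi_1,$ yielding $M,u\models\Diamond\psi_1$.

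I do not expect any real obstacle: the argument is a textbook structural induction, and the two conditions on $M'$ are exactly tailored to make the modal step go through. The only conceptual point worth flagging is that condition~(2) is required only for subformulas of $\phi$ (not arbitrary formulas), which is precisely what the inductive hypothesis consumes, and that the notion of restriction implicitly preserves edges between retained worlds, so that both directions of the modal case stay inside $M'$ when needed and lift back to $M$ when needed.
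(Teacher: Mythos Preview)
Your proof is correct and follows essentially the same structural induction as the paper. The only difference is organizational: the paper states Lemma~\ref{lemma:modal restriction invariance} as an immediate corollary of the slightly more general Lemma~\ref{lemma:modal restriction invariance with levels} (which additionally tracks the distance $i$ from $w$ and the modal depth of $\psi$), and proves that by the same induction you carry out here; your direct argument simply specializes that induction to the case where the level constraint is vacuous.
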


Lemma~\ref{lemma:modal restriction invariance} immediately follows from the following Lemma. The version stated in Lemma~\ref{lemma:modal restriction invariance} is the one we almost exclusively use, hence we stated this simpler version explicitly. We now prove a slightly more general result, which also takes into account that for a modal formula, worlds which are not reachable on a path with at most the length of the modal depth of the formula, are irrelevant.

\begin{lemma}\label{lemma:modal restriction invariance with levels}
Let $\phi$ be a modal formula, and let $M,w\models\phi$. Let $M'$ be a restriction of $M$ such that the following holds:

\begin{enumerate}
\item $w\in M',$
\item for all $u\in M',$ and all $\psi\in\subf\phi,$ such that $M,u\models\Diamond\psi$ and there exists an $i\in\mathbb{N}$ such that $\mpath wiu$ and $1+i+\md\psi\leq\md\phi,$ there is some $v\in M'$ such that $(u,v)$ is an edge in $M,$ and $M,v\models\psi$.
\end{enumerate}

Then $M',w\models\phi$.
\end{lemma}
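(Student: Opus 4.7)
The plan is to prove a strengthened claim by structural induction on the subformulas $\psi$ of $\phi$: for every $\psi\in\subf{\phi}$ and every $u\in M'$ for which there exists an $i\in\mathbb{N}$ with $\mpath{w}{i}{u}$ and $i+\md{\psi}\leq\md{\phi}$, we have $M,u\models\psi$ iff $M',u\models\psi$. Applying this with $u=w$, $i=0$, and $\psi=\phi$ immediately gives $M',w\models\phi$, since $w\in M'$, $\mpath{w}{0}{w}$, and $0+\md{\phi}\leq\md{\phi}$.

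For the base case, when $\psi$ is a propositional variable, the equivalence is automatic: because $M'$ is a restriction of $M$, the propositional assignment on worlds of $M'$ is literally inherited from $M$, and the depth condition is not even needed. The Boolean cases (negation, conjunction, disjunction) follow directly from the induction hypothesis, because these connectives do not increase modal depth, so the bound $i+\md{\psi}\leq\md{\phi}$ is inherited verbatim by the immediate subformulas.

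The only case with any content is $\psi=\Diamond\psi'$. For the direction $M,u\models\Diamond\psi'\Rightarrow M',u\models\Diamond\psi'$, note that $1+i+\md{\psi'}=i+\md{\psi}\leq\md{\phi}$, so condition~2 of the lemma applies and yields some $v\in M'$ with $(u,v)\in\edges{M}$ and $M,v\models\psi'$. Because $M'$ is a restriction of the graph and both $u,v\in M'$, the pair $(u,v)$ is also an edge of $M'$. Moreover $\mpath{w}{i+1}{v}$ in $M$ and $(i+1)+\md{\psi'}=i+\md{\psi}\leq\md{\phi}$, so the induction hypothesis applied to $\psi'$ at $v$ gives $M',v\models\psi'$, hence $M',u\models\Diamond\psi'$. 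Conversely, any witness $v\in M'$ with $(u,v)\in\edges{M'}$ and $M',v\models\psi'$ is also a witness in $M$, because edges of $M'$ are edges of $M$, and the same depth bound lets the induction hypothesis carry $M',v\models\psi'$ back to $M,v\models\psi'$.

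The main obstacle is not any substantive trick but the bookkeeping: one has to keep the auxiliary parameter $i$ in step with the modal operator being peeled off, verify that the subformula handed to condition~2 still lies in $\subf{\phi}$, and use the fact that restriction of a frame induces the expected restriction of both the edge relation and the propositional valuation. Everything else is routine structural induction.
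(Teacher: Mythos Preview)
Your proof is correct and follows essentially the same approach as the paper's own proof: both prove the identical strengthened claim by structural induction on subformulas, handle the propositional and Boolean cases trivially, and use condition~2 together with the depth arithmetic $1+i+\md{\psi'}=i+\md{\Diamond\psi'}$ to push the $\Diamond$-case through in both directions.
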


\begin{proof}
We show the following claim: Let $\chi\in\subf\phi,i\in\mathbb{N},$ and $u\in M'$ such that $\mpath wiu,$ and $i+\md\chi\leq\md\phi,$ then $M,u\models\chi$ if and only if $M',u\models\chi$. For $\chi=\phi,$ $u=w,$ and $i=0,$ this implies the Lemma, since $w$ is an element of $M'$ by definition.

We show the claim by induction on $\chi$. If $\chi$ is a variable, then this holds trivially, since $M'$ is a restriction of $M$ and therefore, propositional assignments are not changed. The induction step for propositional operators is trivial. Therefore, assume that $\chi=\Diamond\psi$ for some $\psi\in\subf\phi,$ such that the claim holds for $\psi$. Now let $u,i$ meet the prerequisites of the claim, i.e., let $i+\md\chi\leq\md\phi,$ and let $\mpath wiu$.

First assume that $M,u\models\chi$. Since $\chi=\Diamond\psi,$ it follows that $\md\chi=\md\psi+1,$ and hence $i+1+\md\psi\leq\md\phi$. Since $M,u\models\Diamond\psi,$ the prerequisites of the Lemma therefore imply that there is a world $v\in M'$ such that $(u,v)$ is an edge in $M,$ and $M,v\models\psi$. Since $\mpath wiu,$ it follows that $\mpath w{i+1}v$. By the induction hypothesis, we know that $M',v\models\psi$. Since $M'$ is a restriction of $M,$ $(u,v)$ is an edge in $M'$ as well, and therefore we conclude that $M',u\models\Diamond\psi,$ i.e., $M',u\models\chi$.

For the other direction, assume that $M',u\models\chi$. Therefore, there is a node $v\in M'$ such that $M',v\models\psi,$ and $(u,v)$ is an edge in $M'$. Since $\mpath wiu,$ we know that $\mpath w{i+1}v$ holds as well, and since $\md\psi=\md\chi-1,$ from the induction hypothesis we conclude that $M,v\models\psi$. Since $(u,v)$ is an edge in $M$ as well, it therefore follows that $M,u\models\Diamond\psi,$ i.e., $M,u\models\chi,$ concluding the proof.

\end{proof}

The following easy proposition shows how Lemma~\ref{lemma:modal restriction invariance with levels} can be applied:

\begin{proposition}\label{prop:rooted models exist}
Let $\KL$ be a universal elementary logic, let $\phi$ be a modal formula, and let $M,w\models\phi,$ where $M$ is a \KL-model. Then there is a \KL-model $M'$ which is a restriction of $M,$ which is rooted at $w,$ and where every world can be reached from $w$ in at most $\md{\phi}$ steps, and $M',w\models\phi$.
\end{proposition}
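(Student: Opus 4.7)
The plan is to take $M'$ to be the restriction of $M$ to the worlds reachable from $w$ by a path of length at most $\md{\phi}$. Two things then need verifying: that $M'$ is still a $\KL$-model, and that $M',w\models\phi$.

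For the first point, since $\KL$ is universal elementary, there is a universal first-order formula $\hpsi$ over the frame language with $\K\hpsi=\KL$, and $M\models\hpsi$. Because $M'$ is (by construction) a subgraph of $M$, Proposition~\ref{prop:trivial subgraph property} immediately gives $M'\models\hpsi$, so $M'$ is a $\KL$-model. This is the ``easy'' invariance step and requires no work beyond citing the previous proposition.

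For the second point, I would invoke Lemma~\ref{lemma:modal restriction invariance with levels}. Clearly $w\in M'$ (reachable in $0$ steps). For the successor condition, suppose $u\in M'$ and $\psi\in\subf{\phi}$ are such that $M,u\models\Diamond\psi$ and there is an $i$ with $\mpath{w}{i}{u}$ and $1+i+\md{\psi}\leq\md{\phi}$. Then $M,u\models\Diamond\psi$ yields a world $v\in M$ with $(u,v)\in R$ and $M,v\models\psi$; concatenating the $i$-step path from $w$ to $u$ with the edge $(u,v)$ gives $\mpath{w}{i+1}{v}$. Since $\md{\psi}\ge 0$, we have $i+1\le\md{\phi}$, so $v$ lies within $\md{\phi}$ steps of $w$ and therefore $v\in M'$. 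Hypothesis~2 of Lemma~\ref{lemma:modal restriction invariance with levels} is thus met and we conclude $M',w\models\phi$.

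Finally, $M'$ is by definition rooted at $w$, every world is reachable from $w$ within $\md{\phi}$ steps, and $M'$ is a restriction of $M$, so all stated properties hold. The only mildly delicate point is lining up the bounds $1+i+\md{\psi}\le\md{\phi}$ with the reachability cutoff $\md{\phi}$, but this is immediate from $\md{\psi}\ge 0$; there is no genuine obstacle in the argument.
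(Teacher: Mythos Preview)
Your proof is correct and follows essentially the same approach as the paper: define $M'$ as the restriction to worlds reachable from $w$ in at most $\md{\phi}$ steps, use Proposition~\ref{prop:trivial subgraph property} for the $\KL$-model condition, and verify the hypotheses of Lemma~\ref{lemma:modal restriction invariance with levels} via the bound $i+1\le 1+i+\md{\psi}\le\md{\phi}$. Your justification of $i+1\le\md{\phi}$ via $\md{\psi}\ge 0$ is in fact slightly cleaner than the paper's phrasing.
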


\begin{proof}
The model $M'$ is obtained by simply removing all worlds from $M$ which cannot be reached from $w$ in at most $\md{\phi}$ steps. Due to Proposition~\ref{prop:trivial subgraph property}, $M'$ is still a \KL-model. We now show that $M',w\models\phi$ holds, by proving that $M'$ satisfies the conditions of Lemma~\ref{lemma:modal restriction invariance with levels}.

By definition, since $w$ can be reached from $w$ in $0$ steps, we know that $w\in M'$. Hence let $u\in M',$ and let $\psi\in\subf{\phi},$ and let $M,u\models\Diamond\psi,$ such that $M\models\path wiu$ for some $i$ such that $1+i+\md\psi\leq\md\phi$. Since $M,u\models\Diamond\psi,$ we know that there is a world $v\in M$ such that $M,v\models\psi,$ and $(u,v)$ is an edge in $M$. It follows that $M\models\path w{i+1}u$. Since $1+i+\md\psi\leq\md\phi,$ and $\md\psi\leq\md\phi,$ we know that $i+1\leq\md\phi,$ and hence we know that $v$ is an element of $M'$.

Therefore, $M'$ satisfies the conditions of Lemma~\ref{lemma:modal restriction invariance with levels}, and therefore the lemma implies that $M',w\models\phi,$ as claimed.
\end{proof}

\subsection{Restrictions}

As mentioned, our \NP-containment results are obtained by proving the polysize model property and applying Proposition~\ref{proposition:elementary plus polysize model property is in np}. The polynomial models are obtained by restricting arbitrary models to polynomial size. We will now show some restrictions which we can make in any model, showing that we can assume certain parts of the model to be only polynomial in size.

\begin{lemma}\label{lemma:few nodes with few predecessors}
 Let $c\in\mathbb{N}$. Then for any modal formula $\phi$ and any $M,w\models\phi,$ there is a submodel $M'$ of $M$ such that $M',w\models\phi$ and the following holds:
$$\card{\set{v\in M'\ \vert\ \mxdepth{M'}{v}<c}}\leq (c+1)\cdot \card{\phi}^c.$$
\end{lemma}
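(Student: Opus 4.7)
My plan is to construct $M'$ via an iterative witness-selection process starting from $w$. I would set $S_{-1} := \emptyset$, $S_0 := \{w\}$, and for each $i \ge 0$ define
\[
S_{i+1} := S_i \cup \bigl\{ v_{u,\psi} : u \in S_i \setminus S_{i-1},\ \Diamond\psi \in \subf{\phi},\ M, u \models \Diamond\psi \bigr\},
\]
where $v_{u,\psi} \in M$ is any chosen successor of $u$ in $M$ with $M, v_{u,\psi} \models \psi$. Finally $S := \bigcup_{i \ge 0} S_i$ and $M' := M|_S$. The crucial design choice is to process only the \emph{freshly added} nodes at each step, so that the witness branching does not compound over multiple rounds.

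First I would verify that $M', w \models \phi$ by invoking Lemma~\ref{lemma:modal restriction invariance}: we have $w \in S_0$ by construction, and for every $u \in M'$ the node $u$ first entered $S$ at some step $d$, at which moment witnesses for \emph{all} $\Diamond\psi$-subformulas satisfied at $u$ were placed into $S_{d+1} \subseteq S$.

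Next I would prove the size bound by induction, showing $\card{S_i \setminus S_{i-1}} \le \card{\phi}^i$. The base case $\card{S_0 \setminus S_{-1}} = 1 = \card{\phi}^0$ is immediate, and the inductive step holds because each freshly added node contributes witnesses for at most $\card{\subf{\phi}} \le \card{\phi}$ many $\Diamond$-subformulas. This yields $\card{S_{c-1}} \le \sum_{i=0}^{c-1} \card{\phi}^i$; after handling the trivial cases $\card{\phi} \le 1$ separately, this quantity is at most $(c+1)\,\card{\phi}^c$.

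The hard part---and the observation that links the construction to the quantity we need to bound---is the claim that any $v$ first entering $S$ at step $d$ satisfies $\mxdepth{M'}{v} \ge d$. To establish this I would trace back the chain of witness-choices: there exist nodes $w = v_0, v_1, \dots, v_d = v$ with each $v_i \in S_i \setminus S_{i-1}$ and each $(v_{i-1}, v_i)$ an edge of $M$. Since all $v_i$ lie in $S$, the edges of this chain survive in $M'$, exhibiting a path of length $d$ in $M'$ ending at $v$. Consequently $\{v \in M' : \mxdepth{M'}{v} < c\} \subseteq S_{c-1}$, and combining with the size bound yields the required inequality.
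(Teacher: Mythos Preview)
Your proof is correct and follows essentially the same approach as the paper: an iterative witness-selection construction starting from $w$, verified via Lemma~\ref{lemma:modal restriction invariance}, together with the observation that a node first reached at stage $d$ has an incoming path of length $d$ in $M'$. The only cosmetic differences are that the paper uses non-cumulative layers $M_{i+1}:=\bigcup_{v\in M_i}W_v$ (where your $S_i\setminus S_{i-1}$ plays the role of $M_i$) and bounds the set of shallow nodes by $\bigcup_{i=0}^{c}M_i$ rather than your slightly tighter $S_{c-1}$, but the argument and the resulting estimate are the same.
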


\begin{proof}
For each world $u$ in $M,$ let $F_u:=\set{\psi\in\subf\phi\ \vert\ M,u\models\Diamond\psi}$. For each $u,$ let $W_u$ be a subset of the $1$-step successors of $u$ in $M$ such that for every $\psi\in F_u,$ there is a world $v\in W_u$ such that $(u,v)$ is an edge in $M,$ and $M,v\models\psi,$ and $\card{W_u}\leq\card{F_u}$. Now define $M_0:=\set{w},$ and for each $i\in\mathbb{N},$ let $M_{i+1}:=\bigcup_{v\in M_i}W_v$. Finally, define $M'$ to be the restriction of $M$ to $\bigcup_{i\in\mathbb{N}}M_i$.

To show that $M',w\models\phi,$ we prove that $M'$ satisfies the conditions of Lemma~\ref{lemma:modal restriction invariance}. Obviously, $M'$ is a restriction of $M$ and $w\in M_0\subseteq M$. Therefore, let $u$ be a world from $M',$ and let $\psi$ be a subformula of $\phi$ such that $M,u\models\Diamond\psi$. Since $u\in M',$ there is some $i$ such that $u\in M_i$. Since $M,u\models\Diamond\psi,$ we know that $\psi\in F_u,$ and hence there is a world $v\in W_u$ such that $M,v\models\psi,$ and $(u,v)$ is an edge in $M$. It follows that $v\in W_u\subseteq M_{i+1}\subseteq M',$ and hence $M'$ satisfies the conditions of Lemma~\ref{lemma:modal restriction invariance} as claimed.

Now let $$A:={\set{v\in M'\ \vert\ \mxdepth{M'}{v}<c}}.$$

It remains to show the cardinality bound for $A$. It is obvious that $A\subseteq\cup_{i=0}^c M_i,$ since for every $i,$ every vertex in $M_{i+1}$ has a predecessor in $M_i,$ and hence inductively, every vertex in $M_i$ has an $i$-step predecessor in $M'$.  

Obviously, $\card{M_0}=1,$ and $\card{M_{i+1}}\leq \card{M_i}\cdot\card{\subf{\phi}}$. Therefore, $\card{M_i}\leq\card{\subf{\phi}}^i$ for all $i\in\mathbb{N}$. Now, due to the above, $\card{A}\leq\card{\cup_{i=0}^c M_i}\leq (c+1)\cdot\card{\subf{\phi}^c}$. Since $\card{\subf{\phi}}\leq\card{\phi},$ the claim follows.
\end{proof}

By construction, the model $M'$ given in the proof of the above lemma is countable. Hence we obtain the following corollary:

\begin{corollary}\label{corrollary:universal elementary logics have countable model property}
Let $\KL$ be a universal elementary logic. Then every \KL-satisfiable formula $\phi$ is satisfiable in a countable \KL-model. Moreover, every \KL-model satisfying $\phi$ at a node $w$ contains a countable \KL-submodel satisfying $\phi$ at the node $w$.
\end{corollary}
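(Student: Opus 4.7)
The plan is to read off both statements directly from the construction used in the proof of Lemma~\ref{lemma:few nodes with few predecessors}, combined with Proposition~\ref{prop:trivial subgraph property}. Given a \KL-model $M$ with $M,w\models\phi$, I would apply that construction verbatim: define $M_0:=\{w\}$ and inductively $M_{i+1}:=\bigcup_{v\in M_i} W_v$, where $W_v$ is the finite choice of at most $|F_v|\le|\subf{\phi}|$ successors witnessing each $\Diamond\psi\in F_v$, and then take $M'$ to be the restriction of $M$ to $\bigcup_{i\in\mathbb{N}}M_i$.

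The first key step is a cardinality count: a trivial induction on $i$ shows $|M_i|\le|\subf{\phi}|^i$, so each $M_i$ is finite and $M'=\bigcup_{i\in\mathbb{N}}M_i$ is a countable union of finite sets and hence countable. The second step is to note that $M'$ is still a \KL-model: because $\KL$ is universal elementary, its defining first-order formula $\hvarphi$ is universal, so Proposition~\ref{prop:trivial subgraph property} applied to the subgraph $M'\subseteq M$ gives $M'\models\hvarphi$. The third step, namely $M',w\models\phi$, is already established inside the proof of Lemma~\ref{lemma:few nodes with few predecessors} via Lemma~\ref{lemma:modal restriction invariance}, and no extra work is needed.

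This simultaneously yields both claims in the corollary: for the ``moreover'' part, by construction $M'$ is a submodel of the originally given $M$; for the first part, one simply starts from any \KL-model witnessing satisfiability. I do not foresee a serious obstacle: the construction is essentially a bounded-branching unfolding with one successor chosen per modal subformula at each world, so the finiteness of each level is immediate from the finiteness of $\subf{\phi}$, and preservation of the frame condition is bought cheaply by universality of $\hvarphi$. The only mild subtlety is noting that Lemma~\ref{lemma:few nodes with few predecessors} is phrased so as to bound the \emph{low-depth} portion of $M'$, but its underlying construction produces a submodel of $M$ of cardinality at most $\sum_{i\ge 0}|\subf{\phi}|^i\le\aleph_0$, which is exactly what we need here.
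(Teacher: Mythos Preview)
Your proposal is correct and is exactly the paper's approach: the corollary is stated immediately after Lemma~\ref{lemma:few nodes with few predecessors} with the one-line justification that the model $M'$ built there is countable by construction. You have simply spelled out the details (finiteness of each $M_i$, Proposition~\ref{prop:trivial subgraph property} for preservation of the universal frame condition, and the already-established $M',w\models\phi$) that the paper leaves implicit.
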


The following lemma shows that it is sufficient to restrict the number of those vertices in the model which have a minimal height in the graph. In combination with Lemma~\ref{lemma:few nodes with few predecessors}, this shows that we only need to be concerned about vertices which have both a certain number of predecessors, and a certain number of successors. We already saw in Proposition~\ref{prop:rooted models exist} that we are only interested in rooted graphs. In such graphs, Lemmas~\ref{lemma:few nodes with few predecessors} and~\ref{lemma:few nodes with few successors} can be seen as limiting the number of vertices near the ``top'' or the ``bottom'' of the model. This is useful, because in graphs satisfying some universal formula, often special cases can occur in these regions of the graph. These lemmas show that we do not need to look too closely at these exceptions.

\begin{lemma}\label{lemma:few nodes with few successors}
Let $c\in\mathbb{N}$ be a constant. Then for any modal formula $\phi$ and any $M,w\models\phi,$ there is a submodel $M'$ of $M$ such that $M',w\models\phi,$ and for which the following holds:

$$\card{M'}\leq f\left(\card{\set{v\in M'\ \vert\ \mxheight{M'}{v}\ge c}}\right),$$

where $f(n)=(n+1)\cdot(1+\card{\phi})^c$.
\end{lemma}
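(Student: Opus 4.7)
The plan is to adapt the witness-selection construction from the proof of Lemma~\ref{lemma:few nodes with few predecessors}. For each $u\in M$, I would pick $W_u$ as a subset of successors of $u$ witnessing every $\Diamond\psi\in\subf{\phi}$ satisfied at $u$, with $\card{W_u}\le\card{\subf{\phi}}$. Setting $M_0:=\set{w}$, $M_{i+1}:=\bigcup_{u\in M_i}W_u$, and letting $M'$ be the restriction of $M$ to $\bigcup_i M_i$, the same application of Lemma~\ref{lemma:modal restriction invariance} used there gives $M',w\models\phi$.

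The central invariant to establish next is the following: setting $H:=\set{v\in M'\ \vert\ \mxheight{M'}{v}\ge c}$ and $H':=H\cup\set{w}$, every $v\in M'$ is reachable from some $u\in H'$ along a \emph{witness path} $u=u_0,u_1,\dots,u_\ell=v$ (i.e.\ with $u_{j+1}\in W_{u_j}$) of length $\ell\le c$. To prove it, I would fix a construction path $w=u_0,u_1,\dots,u_k=v$ (which exists since $v$ lands in some $M_i$) and let $j^*$ be the \emph{last} index with $u_{j^*}\in H'$; this exists because $u_0=w\in H'$. By maximality of $j^*$, every $u_{j^*+1},\dots,u_k$ lies outside $H'$, so in particular $\mxheight{M'}{u_{j^*+1}}\le c-1$. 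Since $u_{j^*+1},\dots,u_k$ is a path of length $k-j^*-1$ in $M'$ starting at $u_{j^*+1}$, this length is bounded by $\mxheight{M'}{u_{j^*+1}}\le c-1$, yielding $k-j^*\le c$.

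Once the claim is in place, the counting step is routine: for any single $u\in M'$, the set of nodes witness-reachable from $u$ within $c$ steps has cardinality at most $\sum_{i=0}^{c}\card{\subf{\phi}}^i\le(1+\card{\subf{\phi}})^c\le(1+\card{\phi})^c$, using $\binom{c}{i}\ge 1$ for $0\le i\le c$. Combining with the claim, $\card{M'}\le\card{H'}\cdot(1+\card{\phi})^c\le(n+1)(1+\card{\phi})^c=f(n)$, with $n=\card{H}$, exactly as required.

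The main obstacle I expect is the last-high-ancestor argument in the claim. In particular, one must be careful that the counting runs along \emph{witness} paths rather than arbitrary paths in $M'$: cross edges that may arise in $M'$ from restricting $M$ to a subset could inflate the actual out-degree beyond $\card{\subf{\phi}}$, but they are irrelevant because witness-reachability at each node branches at most $\card{W_u}\le\card{\subf{\phi}}$ ways, and the claim covers all of $M'$ using witness paths only.
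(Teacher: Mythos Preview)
Your argument is correct and takes a genuinely different route from the paper's. The paper does \emph{not} reuse the construction of Lemma~\ref{lemma:few nodes with few predecessors}; instead it seeds from the high-height set $A:=\set{v\in M\ \vert\ \mxheight{M}{v}\ge c}$, setting $M_0:=\emptyset$, $M_1:=A\cup\set{w}$, and for $i\ge 1$ adding one witness successor per $\Diamond$-subformula for each node in $M_i\setminus M_{i-1}$, then taking $M':=M_{c+1}$. The bulk of the paper's work is a fresh verification of Lemma~\ref{lemma:modal restriction invariance} for this $M'$, via a contradiction showing that any node first appearing in layer $c+1$ cannot satisfy a $\Diamond$-formula (else its layer-$2$ ancestor would have height $\ge c$ in $M$ and hence already lie in $A\subseteq M_1$). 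The size bound then drops out as $\card{M_{c+1}}\le(\card{A}+1)(1+\card{\phi})^c$.

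What your approach buys: modal preservation is a one-line citation to the proof of Lemma~\ref{lemma:few nodes with few predecessors}, and your last-high-ancestor argument bounds $\card{M'}$ directly in terms of $H=\set{v\in M'\ \vert\ \mxheight{M'}{v}\ge c}$, which is exactly the set in the lemma's statement. The paper's final bound is literally in terms of $A$ (heights measured in the original $M$); since restriction can only decrease heights one has $H\subseteq A$ but not obviously the reverse, so your version matches the stated conclusion more transparently. What the paper's approach buys: its $M'$ is built in $c$ explicit layers from the high-height core and is thus closer in spirit to how the result is consumed in Corollary~\ref{corollary:suffices to restrict middle vertices for np}.
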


\begin{proof}
Let $A:=\set{v\in M\ \vert\ \mxheight{M}{v}\ge c},$ i.e., the set of nodes in $M$ which have a $c$-step successor. We now define a sequence of submodels of $M:$ Let $M_0:=\emptyset,M_1:=A\cup\set{w},$ and for $i\ge 1,$ let $M_{i+1}$ be defined as follows:

\begin{itemize}
 \item $M_i\subseteq M_{i+1},$
 \item For every $u\in M_{i}\setminus M_{i-1},$ and each $\psi\in\subf{\phi}$ such that $M,u\models\Diamond\psi,$ add one world $v$ from $M$ to $M_{i+1}$ such that $(u,v)$ is an edge in $M$ and $M,v\models\psi$.
\end{itemize}

Now let $M'$ be the restriction of $M$ to the worlds in $M_{c+1}$.  We show that $M',w\models\phi,$ by showing that it satisfies the conditions of Lemma~\ref{lemma:modal restriction invariance}. By definition, $w\in M_1\subseteq M'$. Hence let $u\in M',$ $\psi\in\subf{\phi},$ and let $M,u\models\Diamond\psi$. Since $u\in M',$ there exists a minimal $i$ such that $u\in M_i$. First assume that $i=c+1$. By construction, for every relevant $j,$ every node in $M_{j+1}\setminus M_j$ has a $1$-step predecessor in $M_j\setminus M_{j-1},$ and hence, inductively, the node $u\in M_{c+1}\setminus M_c$ is a $c-1$-step successor of a node $x$ in $M_2\setminus M_1$. Since $M,u\models\Diamond\psi,$ we know that $u$ has a successor in $M$. This implies that $x$ has a $c$-step successor in $M,$ and hence $x\in A,$ which is a contradiction, since $x\in M_2\setminus M_1,$ and $A\subseteq M_1$.

Therefore, we know that $i\leq c$. By construction, there is a world $v$ in $M_{i+1}\subseteq M'$ such that $(u,v)$ is an edge in $M,$ and $M.v\models\psi$. Therefore, the model $M'$ satisfies the conditions of Lemma~\ref{lemma:modal restriction invariance}, and therefore we conclude that $M',w\models\phi$.

By definition, it holds that $\card{M_1}\leq\card{A}+1,$ and for $i\ge 1,$ $\card{M_{i+1}}\leq\card{M_i}(1+\card{\subf{\phi}})$. Since $\card{M'}=\card{M_{c+1}},$ this implies that $\card{M'}\leq (\card{A}+1)\cdot(1+\card{\subf{\phi}})^c\leq(\card{A}+1)\cdot(1+\card{\phi})^c,$ as claimed.
\end{proof}

The main purpose of Lemmas~\ref{lemma:few nodes with few predecessors} and~\ref{lemma:few nodes with few successors} is the following: If for a modal logic $\KL,$ there is a constant $c,$ such that every \KL-satisfiable formula has a model in which we can restrict the number of nodes which have both a $c$-step predecessor and a $c$-step successor, then Lemmas~\ref{lemma:few nodes with few predecessors} and~\ref{lemma:few nodes with few successors} can be used to show the polynomial model property for \KL. This idea plays a crucial role in the proof of our main \NP-containment result, Theorem~\ref{theorem:above k to k+1 leads to np}, and is formalized in the following corollary:

\begin{corollary}\label{corollary:suffices to restrict middle vertices for np}
 Let $\KL$ be a universal elementary modal logic such that there exists a constant $c\in\mathbb{N}$ such that there is a polynomial $p$ such that for all \KL-satisfiable formulas $\phi,$ there is a \KL-model $M$ and a world $w\in M$ such that $M,w\models\phi,$ and

$$\card{\set{u\in M\ \vert\ \mxdepth{M}{w}\ge c\mathtext{ and }\mxheight{M}{w}\ge c}}\leq p(\card{\phi}).$$

Then $\KL$ has the polynomial-size model property, and $\sat{\KL}\in\NP$.
\end{corollary}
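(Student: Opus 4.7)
The plan is to combine Lemma~\ref{lemma:few nodes with few predecessors} and Lemma~\ref{lemma:few nodes with few successors} in sequence, and then bound the size of the resulting model by grouping its vertices according to depth and height. Given a $\KL$-satisfiable formula $\phi,$ I would start from a $\KL$-model $M$ with $M,w\models\phi$ satisfying the hypothesis, namely that the set of ``middle'' vertices (those with both $\mxdepth{M}{v}\ge c$ and $\mxheight{M}{v}\ge c$) has size at most $p(\card\phi)$. Note that since $\KL$ is universal elementary, Proposition~\ref{prop:trivial subgraph property} guarantees that every subgraph we construct is again a $\KL$-model, so we may freely restrict.

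First I would apply Lemma~\ref{lemma:few nodes with few predecessors} to $M$ (with the same constant $c$) to obtain a submodel $M_1$ with $M_1,w\models\phi$ in which the set $\set{v\in M_1\ \vert\ \mxdepth{M_1}{v}<c}$ has size at most $(c+1)\card\phi^c$. Then I would apply Lemma~\ref{lemma:few nodes with few successors} to $M_1$ to obtain a submodel $M_2$ with $M_2,w\models\phi$ and
\[
\card{M_2}\ \le\ \bigl(\card{A}+1\bigr)\cdot\bigl(1+\card\phi\bigr)^c,
\]
where $A=\set{v\in M_2\ \vert\ \mxheight{M_2}{v}\ge c}$. The main task is then to bound $\card A$ polynomially in $\card\phi$.

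To bound $\card A,$ I would split $A$ into two parts according to the depth a vertex has in $M_1$. Vertices $v\in A$ with $\mxdepth{M_1}{v}<c$ are counted by the bound from Lemma~\ref{lemma:few nodes with few predecessors}, giving at most $(c+1)\card\phi^c$ such vertices. For vertices $v\in A$ with $\mxdepth{M_1}{v}\ge c,$ the key observation is that since $M_2\subseteq M_1\subseteq M,$ any path witnessing $\mxheight{M_2}{v}\ge c$ or $\mxdepth{M_1}{v}\ge c$ is also present in $M,$ so such a vertex satisfies $\mxdepth{M}{v}\ge c$ and $\mxheight{M}{v}\ge c$; hence it is a middle vertex of $M$ and by hypothesis the number of such vertices is at most $p(\card\phi)$. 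Therefore $\card A\le (c+1)\card\phi^c+p(\card\phi)$ and consequently $\card{M_2}$ is polynomial in $\card\phi$.

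This establishes the polynomial-size model property for $\KL,$ and the $\NP$-containment then follows directly from Proposition~\ref{proposition:elementary plus polysize model property is in np}. I do not expect any real obstacles: the only subtlety is keeping track of the fact that $\mxdepth$ and $\mxheight$ can only shrink under submodel formation, which is exactly what allows middle vertices of the restrictions to be identified as middle vertices of the original model, so that the polynomial hypothesis can be reused without any loss.
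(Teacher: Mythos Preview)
Your proposal is correct and follows essentially the same approach as the paper: apply Lemma~\ref{lemma:few nodes with few predecessors} and then Lemma~\ref{lemma:few nodes with few successors}, bound the set $A$ by splitting it into vertices of small depth (controlled by Lemma~\ref{lemma:few nodes with few predecessors}) and genuine ``middle'' vertices (controlled by the hypothesis), using that $\mxdepth{}{}$ and $\mxheight{}{}$ can only decrease under restriction. The paper's version differs only cosmetically---it absorbs the first restriction via a ``without loss of generality'' and names the three pieces $T,C,B$ explicitly---but the logic and the key monotonicity observation are identical.
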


Note that the function $p$ is only required to be polynomial in its argument $\card{\phi},$ and not in the value $c,$ which is a constant depending only on the logic, and not on the formula.

\begin{proof}
 Let $\phi$ be a \KL-satisfiable formula, and let $M$ be a \KL-model and $w\in M$ meeting the prerequisites of the corollary. By Lemma~\ref{lemma:few nodes with few predecessors}, there is a submodel $M'$ of $M$ such that $M',w\models\phi,$ and 

$$\card{\set{v\in M'\ \vert\ \mxdepth{M'}{v}<c}}\leq(c+1)\cdot\card{\phi}^c.$$

Therefore, since the conditions required in the prerequisites of the corollary are invariant under further restrictions of the model, assume without loss of generality that $M$ already satisfies this condition. Now let $M=T\cup C\cup B,$ where

\begin{eqnarray*}
 T & := & \set{v\in M\ \vert\ \mxdepth{M}{v}<c}, \\
 C & := & \set{v\in M\ \vert\ \mxdepth{M}v\ge c\mathtext{ and }\mxheight{M}{v}\ge c}, \\
 B & := & \set{w\in M\ \vert\ \mxheight{M}{c}<c}.
\end{eqnarray*}

Note that $T$ and $B$ are not necessarily disjoint ($T$ contains the nodes with only small depth at the ``\textbf{t}op'' of the model, $C$ represents the ``\textbf{c}enter'' of the model, and $B$ is the ``\textbf{b}ottom.''). By the prerequisites of the corollary, we can assume that $\card{C}\leq p\left(\card\phi\right),$ and by the above we know that $\card{T}\leq(c+1)\cdot\card{\phi}^c$. Now let $A:=\set{v\in M\ \vert\ \mxheight{M}{v}\ge c}$. It follows that $A\subseteq T\cup C,$ and hence $\card{A}\leq\card{T}+\card{C}\leq(c+1)\cdot\card{\phi}^c+p\left(\card{\phi}\right).$

By Lemma~\ref{lemma:few nodes with few successors}, there is a submodel $M'$ of $M$ such that $M',w\models\phi$ and $\card{M'}\leq(\card{A}+1)\cdot(1+\card{\phi})^c,$ and hence by the above we have that $\card{M'}\leq
((c+1)\cdot\card{\phi}^c+p(\card{\phi})+1)\cdot(1+\card{\phi})^c$ (note that the cardinality of the size $A$ defined with respect to the submodel $M'$ is bounded by the cardinality of the original set $A$). Since $p$ is a polynomial and $c$ is a constant only depending on the logic \KL, this is a polynomial size bound in $\card{\phi},$ and therefore we have proven the polynomial-size model property. By Proposition~\ref{prop:trivial subgraph property}, $M'$ is a \KL-model. The complexity result now follows from Proposition~\ref{proposition:elementary plus polysize model property is in np}.
\end{proof}

\section{Universal Horn Formulas}\label{sect:universal horn clauses}

We now consider a syntactically restricted case of universal first order formulas, namely Horn formulas. Many well-known logics can be expressed in this way.

\subsection{Definitions}\label{sect:horn definitions}

Usually, a Horn clause is defined as a disjunction of literals of which at most one is positive. If a positive literal occurs, then the clause can be written as an implication, since $\overline{x_1}\vee\dots\vee\overline{x_n}\vee y$ is equivalent to $x_1\wedge\dots\wedge x_n\implies y$. If no positive literal occurs, then the clause $(\overline{x_1}\vee\dots\vee\overline{x_n})$ can be written as $x_1\wedge\dots\wedge\ x_n\implies\mathsf{false}$. Since in the context of the frame language, an atomic proposition is of the form $(x R y),$ the following is the natural version of Horn clauses for our purposes:

\begin{definition}
A \emph{universal Horn clause} is a formula of the form 
$(x_1 R x_2)\wedge\dots\wedge(x_{k-1} R x_k)\implies (x_i R x_j),$ or of the form 
$(x_1 R x_2)\wedge\dots\wedge(x_{k-1} R x_k)\implies \mathsf{false},$
where all (not necessarily distinct) variables are implicitly universally quantified.
\end{definition}

A \emph{universal Horn formula} is a conjunction of universal Horn clauses. With universal Horn formulas, many of the usually considered graph properties can be expressed, like transitivity, symmetry, euclidicity, etc. In the following definition, we show how universal Horn clauses can be represented as graphs.

\begin{definition}
Let $\hvarphi$ be a universal Horn clause.
\begin{itemize}
\item The \emph{prerequisite graph of $\hvarphi$}, denoted with \preq{\hvarphi}, consists of the variables appearing on the left-hand side of the implication $\varphi,$ where $(x_1,x_2)$ is an edge if the clause $(x_1 R x_2)$ appears.
\item If $\hvarphi$ is a universal Horn clause where the right-hand side of the implication in $\hvarphi$ is $R(x,y),$ then the \emph{conclusion edge of $\hvarphi$}, denoted with \conc{\hvarphi}, is the edge $(x,y)$. If the right-hand side of the implication is $\mathsf{false},$ then $\conc\hvarphi$ is the empty set.
\end{itemize}
\end{definition}

\subsection{Universal Horn Clauses and Homomorphisms}\label{sect:horn homomorphism}

The definition of the prerequisite graph and the conclusion edge of a universal horn formula establishes a one-to-one correspondence between universal Horn clauses and their representation as graphs. These definitions allow us to relate truth of a Horn clause to homomorphic images of the involved graphs:

\begin{proposition}\label{prop:homomorphism and horn clauses}
\begin{enumerate}
 \item Let $\hvarphi$ be a universal Horn clause with $\conc\hvarphi=(x,y)$. A graph $G$ satisfies $\hvarphi$ if and only the following holds: For every homomorphism $\alpha\colon\preq{\hvarphi}\cup\set{x,y}\rightarrow G$,  $(\alpha(x),\alpha(y))$ is an edge in $G.$
\item Let $\hvarphi$ be a universal Horn clause such that $\conc{\hvarphi}=\emptyset$. Then a graph $G$ satisfies $\hvarphi$ is and only if there is no homomorphism $\alpha\colon\preq\hvarphi\rightarrow G.$
\end{enumerate}
\end{proposition}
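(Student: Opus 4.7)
The plan is to unfold the semantics of a universal Horn clause and match it directly with the graph-theoretic notion of a homomorphism. The core observation, common to both parts, is this: an assignment $\alpha$ of the variables $x_1,\dots,x_k$ to vertices of $G$ that makes the left-hand conjunction $(x_1 R x_2)\wedge\dots\wedge(x_{k-1} R x_k)$ true under $G$ is exactly the same data as a homomorphism from $\preq{\hvarphi}$ to $G$, since each conjunct $(x_i R x_j)$ being true under $\alpha$ means mapping the corresponding edge of $\preq{\hvarphi}$ to an edge of $G$, which is precisely the edge-preservation condition for a homomorphism.

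For part~(1), I would argue as follows. By definition, $G\models\hvarphi$ iff for every assignment $\alpha$ of the variables of $\hvarphi$ to vertices of $G$, whenever all prerequisite atoms hold under $\alpha$, the conclusion atom $(xRy)$ holds under $\alpha$, i.e., $(\alpha(x),\alpha(y))$ is an edge of $G$. By the correspondence above, assignments satisfying the prerequisites are in bijection with homomorphisms from $\preq{\hvarphi}$ into $G$; to account for $x$ and $y$ possibly not occurring on the left-hand side, I extend the domain to $\preq{\hvarphi}\cup\set{x,y}$, which imposes no extra constraint on the images of $x$ or $y$ when they are not already vertices of $\preq{\hvarphi}$. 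Under this bijection, the semantic statement $G\models\hvarphi$ translates verbatim into: for every such homomorphism $\alpha$, $(\alpha(x),\alpha(y))$ is an edge of $G$.

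For part~(2), the same strategy applies. Since the right-hand side is $\mathsf{false}$, the implication $L\implies\mathsf{false}$ is just $\neg L$, so $G\models\hvarphi$ iff no assignment satisfies the left-hand conjunction. By the correspondence of the first paragraph, this is exactly the statement that there is no homomorphism from $\preq{\hvarphi}$ to $G$.

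The only subtlety worth flagging as the ``main obstacle,'' and really it is only bookkeeping, is the handling of $x$ and $y$ in part~(1): if $x$ or $y$ already appears on the left-hand side, then $\preq{\hvarphi}\cup\set{x,y}=\preq{\hvarphi}$ and the image of $x,y$ is constrained by the edges of $\preq{\hvarphi}$ incident to them; if not, they are added as isolated vertices, imposing no edge-preservation constraint on their images, which matches the freedom of the universal quantifier in assigning $x$ and $y$. Either way, the definition of homomorphism (a function preserving the edge relation) lines up exactly with the variable assignments that satisfy the prerequisites, and no further work is needed.
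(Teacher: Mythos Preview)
Your proposal is correct and follows essentially the same approach as the paper: both arguments identify variable assignments satisfying the prerequisite conjunction with homomorphisms from $\preq{\hvarphi}$ into $G$, and then read off the equivalence directly from the semantics of the implication. The paper spells out the two directions explicitly rather than invoking a bijection, but the content is the same, and your remark on handling $x,y$ as possibly isolated vertices is a useful clarification the paper leaves implicit.
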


\begin{proof}
\begin{enumerate}
\item Let $\preq\hvarphi=(x_1 R x_2)\wedge\dots\wedge (x_{n-1} R x_n),$ where all variables are implicitly universally quantified. First assume that $G\models\hvarphi,$ and let $\alpha\colon\preq{\hvarphi}\cup\set{x,y}\rightarrow G$ be a homomorphism. Due to the definition of $\preq\hvarphi,$ there are edges $(x_1,x_2),\dots,(x_{n-1},x_n)$ in $\preq{\hvarphi}$. Since $\alpha$ is a homomorphism, this implies that $(\alpha(x_1),\alpha(x_2)),\dots,(\alpha(x_{n-1}),\alpha(x_n))$ are edges in $G$. Hence the nodes $\alpha(x_1),\dots,\alpha(x_n),\alpha(x),\alpha(y)$ satisfy the formula $R(\alpha(x_1),\alpha(x_2))\wedge\dots\wedge R(\alpha(x_{n-1}),\alpha(x_{n}))$. Therefore, the nodes $\set{\alpha(v)\ \vert\ v\in\preq\hvarphi\cup\set{x,y}}$ satisfy the prerequisites of the clause $\hvarphi$. Since $G\models\hvarphi,$ this implies that $(\alpha(x),\alpha(y))$ is an edge in $G.$

Now for the other direction, assume that $G$ fulfills the homomorphism property, and let $\var{\hvarphi}=\set{x_1,\dots,x_n}$. Let $a_1,\dots,a_n$ be nodes in $G$ satisfying the prerequisite clause of $\hvarphi,$ i.e., if $(x_{i_1},x_{i_2})$ is a clause in $\preq\hvarphi,$ then $(a_{i_1},a_{i_2})$ is an edge in $G$. Then obviously the function $\alpha$ mapping the variable $x_i$ to the node $a_i,$ is a homomorphism from $\preq\hvarphi$ to $G$. By the prerequisites, we know that $(\alpha(x),\alpha(y))$ is an edge in $G$. Hence, $G$ satisfies the formula $\hvarphi$. 
\item Analogous.
\end{enumerate}
\end{proof}

There is a natural correspondence between implications of these formulas and graph homomorphisms.

\begin{proposition}\label{prop:homomorphism and horn clause implication}
\begin{enumerate}
\item Let $\hvarphi_1$ and $\hvarphi_2$ be universal Horn clauses such that there exists a homomorphism $\alpha\colon\preq{\hvarphi_1}\rightarrow\preq{\hvarphi_2},$ which maps the conclusion edge of $\hvarphi_1$ to the conclusion edge of $\hvarphi_2$. Then $\hvarphi_1$ implies $\hvarphi_2.$
\item Let $\hvarphi_1$ and $\hvarphi_2$ be universal Horn clauses such that $\conc{\hvarphi_1}=\conc{\hvarphi_2}=\emptyset,$ and let $\alpha\colon\preq{\hvarphi_1}\rightarrow\preq{\hvarphi_2}$ be a homomorphism. Then $\hvarphi_1$ implies $\hvarphi_2.$
\end{enumerate}
\end{proposition}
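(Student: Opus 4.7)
The plan is to reduce both parts to Proposition~\ref{prop:homomorphism and horn clauses}, exploiting the fact that the composition of two graph homomorphisms is again a graph homomorphism.

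For part~1, I would fix an arbitrary graph $G$ with $G\models\hvarphi_1$ and show $G\models\hvarphi_2$. Write $\conc{\hvarphi_1}=(x_1,y_1)$ and $\conc{\hvarphi_2}=(x_2,y_2)$; by assumption, $\alpha$ extends to a homomorphism $\preq{\hvarphi_1}\cup\set{x_1,y_1}\to\preq{\hvarphi_2}\cup\set{x_2,y_2}$ with $\alpha(x_1)=x_2$ and $\alpha(y_1)=y_2$. By Proposition~\ref{prop:homomorphism and horn clauses}(1), to conclude $G\models\hvarphi_2$ it suffices to take an arbitrary homomorphism $\beta\colon\preq{\hvarphi_2}\cup\set{x_2,y_2}\to G$ and verify that $(\beta(x_2),\beta(y_2))$ is an edge of $G$. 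The composition $\beta\circ\alpha$ is a homomorphism $\preq{\hvarphi_1}\cup\set{x_1,y_1}\to G$, so applying Proposition~\ref{prop:homomorphism and horn clauses}(1) in the other direction to the assumption $G\models\hvarphi_1$ yields that $(\beta\circ\alpha(x_1),\beta\circ\alpha(y_1))$ is an edge of $G$. Since $\alpha$ was chosen to map the conclusion edge of $\hvarphi_1$ to that of $\hvarphi_2$, this edge is exactly $(\beta(x_2),\beta(y_2))$, as required.

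For part~2, the argument is a one-line contrapositive. Assume $G\models\hvarphi_1$ and suppose for contradiction that $G\not\models\hvarphi_2$. Then by Proposition~\ref{prop:homomorphism and horn clauses}(2) applied to $\hvarphi_2$, there is a homomorphism $\beta\colon\preq{\hvarphi_2}\to G$. Composing with the given $\alpha$ yields a homomorphism $\beta\circ\alpha\colon\preq{\hvarphi_1}\to G$, contradicting $G\models\hvarphi_1$ via Proposition~\ref{prop:homomorphism and horn clauses}(2) applied to $\hvarphi_1$.

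There is no serious obstacle here; the statement is essentially a diagram chase, and the only subtlety is bookkeeping the conclusion edges in part~1 so that the composition $\beta\circ\alpha$ hits the right pair of vertices of $G$. I would briefly note at the start that compositions of graph homomorphisms preserve edges (immediate from the definition), so that $\beta\circ\alpha$ is indeed a homomorphism in both cases, and then the two parts follow directly from the homomorphism characterization established in Proposition~\ref{prop:homomorphism and horn clauses}.
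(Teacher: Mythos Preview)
Your proposal is correct and follows essentially the same approach as the paper: both parts reduce to Proposition~\ref{prop:homomorphism and horn clauses} via the observation that the composition $\beta\circ\alpha$ of the given homomorphism $\alpha$ with an arbitrary homomorphism $\beta$ into $G$ is again a homomorphism, which forces the required edge (part~1) or yields the required contradiction (part~2). The only difference is notational---the paper writes $\conc{\hvarphi_2}=(\alpha(x),\alpha(y))$ directly rather than introducing separate names $(x_2,y_2)$---but the argument is identical.
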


\begin{proof}
\begin{enumerate}
 \item
 Let $\conc{\hvarphi_1}=(x,y),$ then by the prerequisites it follows that $\conc{\hvarphi_2}=(\alpha(x),\alpha(y))$. Now let $G$ be a graph such that $G\models\hvarphi_1,$ and let $\beta\colon(\preq{\hvarphi_2}\cup\set{\alpha(x),\alpha(y)})\rightarrow G$ be a homomorphism. By Proposition~\ref{prop:homomorphism and horn clauses}, it suffices to show that $(\beta(\alpha(x)),\beta(\alpha(y)))$ is an edge in $G$. Since $\alpha$ and $\beta$ are homomorphisms, $\beta\circ\alpha\colon\preq{\hvarphi_1}\rightarrow G$ is a homomorphism as well. Since $G\models\hvarphi_1,$ Proposition~\ref{prop:homomorphism and horn clauses} implies that $(\beta(\alpha(x)),\beta(\alpha(y)))$ is an edge in $G,$ as claimed.

\item Let $G$ be a graph such that $G\models{\hvarphi_1}$. Due to Proposition~\ref{prop:homomorphism and horn clauses}, to show that $G\models\hvarphi_2,$ it suffices that there is no homomorphism $\beta\colon\preq{\hvarphi_2}\rightarrow G$. Hence assume that such a homomorphism exists. Then $\beta\circ\alpha$ is a homomorphism from $\preq{\hvarphi_1}$ into $G,$ which is a contradiction to Proposition~\ref{prop:homomorphism and horn clauses}, since $G\models\hvarphi_1.$
\end{enumerate}
\end{proof}

\subsection{Important Special Cases}\label{section:varphi k to l section}

We now consider special cases of Horn clauses, which will be central for the logics having satisfiability problems in \NP. The following definition captures the case where the variables in the conclusion edge have a common predecessor in the prerequisite graph, but there is not necessarily a direct path between them. Using results about graphs satisfying generalizations of formulas of this type, we will be able to show all of the \NP-containment results that the proof of the later classification theorem, Theorem~\ref{theorem:horn conjunction classification}, depends on.

\begin{definition}
Let $\hvarphi^{k\rightarrow l}$ be the formula $$(w R x_1) \wedge (x_1 R x_2)\wedge\dots \wedge (x_{k-1} R x_k) \wedge (w R y_1) \wedge (y_1 R y_2)\wedge\dots\wedge (y_{l-1} R y_l)\implies (x_k R y_l),$$
where all variables are universally quantified (and in the case that $x_0$ or $y_0$ appear in the formula, we replace them with $w$).
\end{definition}

\begin{wrapfigure}[14]{l}{3.3cm}
\includegraphics{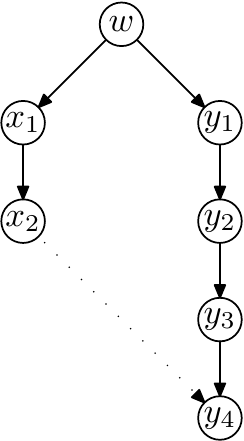}
\caption{Example clause $\hvarphi^{2\rightarrow 4}$}
\label{fig:varphi 2 to 4 example}
\end{wrapfigure}
In Figure~\ref{fig:varphi 2 to 4 example}, we present the graph representation of the formula $\hvarphi^{2\rightarrow 4}$. The graph property described by these formulas is easy to see:

\begin{proposition}\label{prop:varphi k to l property}
Let $G$ be a graph, and let $k,l\in\mathbb{N}$. Then $G\models\hvarphi^{k\rightarrow l}$ if and only if for any nodes $w,x_k,y_l\in G,$ if $G\models\path{w}{k}{x_k}$ and $G\models\path{w}{l}{y_l},$ then $(x_k,y_l)$ is an edge in $G.$
\end{proposition}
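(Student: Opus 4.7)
The plan is to derive the proposition essentially by unwinding the definition of $\hvarphi^{k\rightarrow l}$, with Proposition~\ref{prop:homomorphism and horn clauses} available as a conceptually clean shortcut. I would first note that the prerequisite graph $\preq{\hvarphi^{k\rightarrow l}}$ is precisely a vertex $w$ together with two directed paths of lengths $k$ and $l$ emanating from $w$, ending in the vertices $x_k$ and $y_l$ respectively, and that $\conc{\hvarphi^{k\rightarrow l}}=(x_k,y_l)$. The point is then to translate the homomorphism characterization of satisfaction into a statement about paths: a homomorphism $\alpha\colon\preq{\hvarphi^{k\rightarrow l}}\cup\{x_k,y_l\}\rightarrow G$ is exactly the choice, in $G$, of a vertex $\alpha(w)$ together with a $k$-step walk $\alpha(w),\alpha(x_1),\dots,\alpha(x_k)$ and an $l$-step walk $\alpha(w),\alpha(y_1),\dots,\alpha(y_l)$.

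For the forward direction I would assume $G\models\hvarphi^{k\rightarrow l}$ and pick $w,x_k,y_l\in G$ such that $\gpath{w}{k}{x_k}$ and $\gpath{w}{l}{y_l}$. Choosing witness vertices $x_1,\dots,x_{k-1}$ and $y_1,\dots,y_{l-1}$ along these two paths yields an assignment for the universally quantified variables of $\hvarphi^{k\rightarrow l}$ whose premises are all satisfied in $G$, so the conclusion $(x_k R y_l)$ is forced, giving an edge $(x_k,y_l)$ in $G$. Equivalently, this assignment is a homomorphism from $\preq{\hvarphi^{k\rightarrow l}}\cup\{x_k,y_l\}$ into $G$, and Proposition~\ref{prop:homomorphism and horn clauses} does the rest.

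For the backward direction I would assume the path condition and verify the first-order formula by considering an arbitrary substitution of vertices $w,x_1,\dots,x_k,y_1,\dots,y_l\in G$ making all prerequisite atoms $(w R x_1),(x_1 R x_2),\dots,(x_{k-1} R x_k),(w R y_1),\dots,(y_{l-1} R y_l)$ true in $G$. Such a substitution yields directly a $k$-step path from $w$ to $x_k$ and an $l$-step path from $w$ to $y_l$, so by hypothesis $(x_k,y_l)$ is an edge in $G$, as required.

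I do not expect a real obstacle here: the statement is essentially a translation between the syntactic form of the Horn clause and the graph-theoretic description of its prerequisite graph. The only minor care required is that the ``paths'' in the formula may have repeated vertices (since the universally quantified variables need not take distinct values), but this is precisely what $G\models\path{w}{k}{x_k}$ allows, and it is also why we get walks rather than simple paths out of a homomorphism from $\preq{\hvarphi^{k\rightarrow l}}$.
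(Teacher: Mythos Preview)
Your proposal is correct and matches the paper's treatment: the paper states this proposition without proof, presenting it as immediate from the definition (``The graph property described by these formulas is easy to see''). Your argument simply unwinds that definition, which is exactly what is intended.
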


This definition generalizes several well-known examples---in particular, a graph is reflexive if and only if it satisfies $\hvarphi^{0\rightarrow 0},$ symmetry is expressed with $\hvarphi^{1\rightarrow 0},$ and transitivity with $\hvarphi^{0\rightarrow 2}$. Finally, a graph is Euclidean iff it satisfies $\hvarphi^{1\rightarrow 1}$. Therefore, this notation allows us to capture many interesting graph properties, and it is not surprising that generalizations of this idea are the main ingredients for our polynomial size model proofs. We start with looking at some properties and implications of formulas of the form $\hvarphi^{k\rightarrow l}.$

\begin{lemma}\label{lemma:varphi k to l implications}
Let $1\leq k,l\in\mathbb{N},$ and let $G$ be a graph such that $G\models\hvarphi^{k\rightarrow l}.$
\begin{enumerate}
\item $G\models\hvarphi^{l+k-1\rightarrow l+k}.$
\item If $l=k+1,$ then for any $i\ge k$, $G\models\hvarphi^{i\rightarrow i+1}.$
\item There is some $k'\ge 1$ such that $G\models\hvarphi^{i\rightarrow i+1}$ for all $i\ge k'.$
\end{enumerate}
\end{lemma}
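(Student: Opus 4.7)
The plan is to prove the three parts in order, with Part 3 being an immediate consequence of Parts 1 and 2.

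For Part 1, my approach is to apply $\hvarphi^{k\to l}$ twice, with a crucial ``role reversal'' in the first application. Write the given paths as $w=u_0\to u_1\to\cdots\to u_{l+k-1}=x$ (call this $P_1$) and $w=v_0\to v_1\to\cdots\to v_{l+k}=y$ (call this $P_2$). I would first apply $\hvarphi^{k\to l}$ at $w$ with the two branches crossed: the $k$-step prerequisite path is $w\to v_1\to\cdots\to v_k$ along $P_2$, and the $l$-step prerequisite path is $w\to u_1\to\cdots\to u_l$ along $P_1$ (both are legitimate subpaths since $k\leq l+k$ and $l\leq l+k-1$). This produces the crossover edge $(v_k,u_l)\in R$. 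Then I would apply $\hvarphi^{k\to l}$ at $v_k$: the crossover edge followed by the tail of $P_1$ furnishes the required $k$-step path $v_k\to u_l\to u_{l+1}\to\cdots\to u_{l+k-1}=x$, and $P_2$ itself furnishes the $l$-step path $v_k\to v_{k+1}\to\cdots\to v_{l+k}=y$. The clause then forces $(x,y)\in R$.

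For Part 2, I would induct on $i\geq k$ with essentially the same two-step template, simplified by the fact that $l=k+1$. The base case $i=k$ is the hypothesis $\hvarphi^{k\to k+1}$ itself. For the inductive step, given paths $w\to u_1\to\cdots\to u_{i+1}=x$ and $w\to v_1\to\cdots\to v_{i+2}=y$, apply $\hvarphi^{i\to i+1}$ at $w$ in the natural (non-crossed) way to obtain $(u_i,v_{i+1})\in R$; then apply $\hvarphi^{i\to i+1}$ at $u_1$, using the $i$-step path $u_1\to u_2\to\cdots\to u_{i+1}=x$ along $P_1$ and the $(i+1)$-step path $u_1\to u_2\to\cdots\to u_i\to v_{i+1}\to v_{i+2}=y$ that routes through the freshly derived edge. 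This yields $(x,y)\in R$. Part 3 is then immediate: Part 1 gives $\hvarphi^{(l+k-1)\to(l+k)}$, which is of the form $\hvarphi^{m\to m+1}$ with $m=l+k-1\geq 1$; applying Part 2 with this $m$ playing the role of $k$ yields $\hvarphi^{i\to i+1}$ for every $i\geq m$, so the desired witness is $k':=l+k-1$.

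The main obstacle I expect is finding the right application pattern in Part 1. The naive attempts (apply the clause at $u_{l-1}$, or use the $k$-branch from $P_1$ together with the $l$-branch from $P_2$ in the first step) only yield paths of the correct lengths when $l=k+1$. The insight is that swapping which of $P_1,P_2$ supplies the $k$-branch and which supplies the $l$-branch produces an edge whose endpoints ``point the right way'' for the second application: the resulting $k$-step path from $v_k$ to $x$ consists of the single crossover edge plus the $k-1$ trailing edges of $P_1$, and the $l$-step path from $v_k$ to $y$ consists of $l$ consecutive trailing edges of $P_2$. Once this crossover trick is identified, the remaining index bookkeeping and the inductions for Parts 2 and 3 are routine.
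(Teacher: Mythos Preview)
Your proposal is correct and essentially matches the paper's proof. Part~1 is identical up to notation (the paper also applies $\hvarphi^{k\to l}$ first with the branches crossed to obtain the edge $(y_k,x_l)$ and then again at $y_k$), Part~3 is derived in the same way with $k'=l+k-1$, and your Part~2 is a minor variant: the paper crosses the branches in the \emph{first} application (obtaining $(y_k,x_{k+1})$) and then works from $y_1$, whereas you apply the clause non-crossed first and route the derived edge through the $(i{+}1)$-branch in the second application---both are valid instantiations of the same two-step template.
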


\textit{Proof}
\begin{enumerate}
\item Let $w$ be some node in $G,$ such that $\gpath{w}{l+k-1}{x_{l+k-1}},$ and $\gpath{w}{l+k}{y_{l+k}},$ and let the (not necessarily distinct) intermediate vertices be denoted with $x_i,y_i$. Since $\hvarphi^{k\rightarrow l}$ holds in $G$, this implies that there is an edge $(y_k,x_l)$. By choice of nodes, 
$\gpath{x_l}{k-1}{x_{l+k-1}}$. Combining these, we obtain a path of length $k$ from $y_k$ to $x_{l+k-1}$. On the other hand, $\gpath{y_k}{l}{y_{k+l}}$. Since $\hvarphi^{k\rightarrow l}$ holds in $G$, it follows that there is an edge from $x_{l+k-1}$ to $y_{l+k},$ proving that $\hvarphi^{l+k-1\rightarrow l+k}$ holds in $G$.

\item \indent Clearly it suffices to show $G\models\hvarphi^{k+1\rightarrow k+2},$ the claim for arbitrary $i$ follows inductively. Let $w,x_j,y_j$ be chosen such that $w=x_0=y_0,$ and there are edges $(x_j,x_{j+1})$ and $(y_j,y_{j+1})$. We need to show that there is an edge $(x_{k+1},y_{k+2}).$

\indent Since $G\models\hvarphi^{k\rightarrow k+1},$ it follows that $(y_k,x_{k+1})$ is an edge in $G$. Since there obviously is a path of length $k-1$ from $y_1$ to $y_k$, it follows that there is a path of length $k$ from $y_1$ to $x_{k+1}$. Since there also is a path of length $k+1$ from $y_1$ to $y_{k+2},$ it follows that there is an edge $(x_{k+1},y_{k+2})$ in $G$, which concludes the proof.

\item \indent This follows immediately from the above: from points $1$ and $2,$ it follows that the claim holds for $k':=l+k-1.$
\end{enumerate}
\hfill$\Box$\bigskip

\begin{wrapfigure}[21]{l}{4.5cm}
 \includegraphics{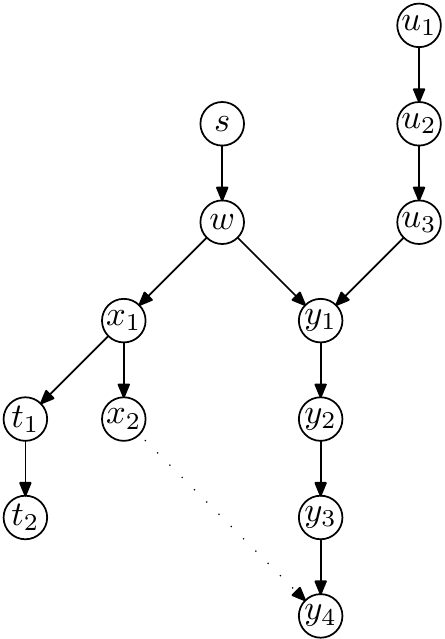}
\caption{More general formula}
\label{fig:transformation original formula}
\end{wrapfigure}
The formula $\hvarphi^{k\rightarrow l}$ is supposed to capture the case where the variables in the conclusion edge of a universal Horn clause have a common predecessor. But not all of these cases are covered with this formula. The above Figure~\ref{fig:varphi 2 to 4 example} is a graphical representation of what the implication $\hvarphi^{2\rightarrow 4}$ does. But what if this is only a subgraph of the prerequisite graph? In a more general case, the node $w$ and the nodes $x_k,y_l$ will have more predecessors and successors. Figure~\ref{fig:transformation original formula} gives an example of a more general formula. We will now see that this formula can be ``simplified.'' This simplification is not an equivalent transformation of the formula, but we construct a new formula which is implied by the original one. The one-sided implication suffices to show many of the results we need. The simpler formula is presented in Figure~\ref{fig:transformation resulting formula}.

It is easy to see that every graph which satisfies the formula displayed in Figure~\ref{fig:transformation original formula} also satisfies the formula from Figure~\ref{fig:transformation resulting formula}. This follows directly from Proposition~\ref{prop:homomorphism and horn clause implication}, since the prerequisite graph from Figure~\ref{fig:transformation original formula} can obviously be mapped homomorphically to the prerequisite graph from Figure~\ref{fig:transformation resulting formula} (the homomorphism $\alpha$ is defined as $\alpha(t_1):=x_2,\alpha(s):=u_2,\alpha(u_3):=w,$ and maps the other nodes to the ones with the same labels). Hence, if we can show \NP-containment for all universal elementary modal logics extending the one defined by the later formula, this puts the logic defined by the original formula into \NP\ as well.

\begin{wrapfigure}[19]{l}{2.4cm}
 \includegraphics{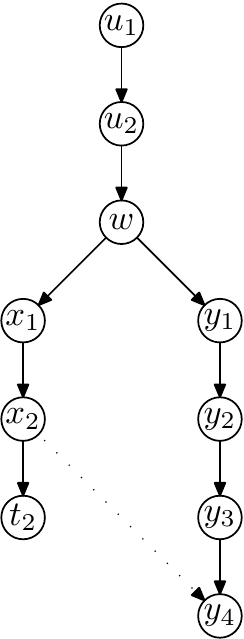}
\caption{Simplified formula}
\label{fig:transformation resulting formula}
\end{wrapfigure}
Any universal Horn clause which can be mapped onto a tree can be embedded in a graph with certain properties, namely the properties of the formula we now define. Due to Corollary~\ref{corollary:formulas satisfiable in trees give pspace complete logics}, it is natural that tree-like homomorphic images of our universal Horn formulas are of interest to us. These formulas capture the generalizations of $\hvarphi^{k\rightarrow l}$ mentioned above, where we demand that the nodes $w,x_k,y_l$ have a sufficient number of predecessors or successors. We again use the representation of Horn clauses as graphs.

\begin{wrapfigure}[20]{l}{3.3cm}
 \includegraphics[scale=0.5]{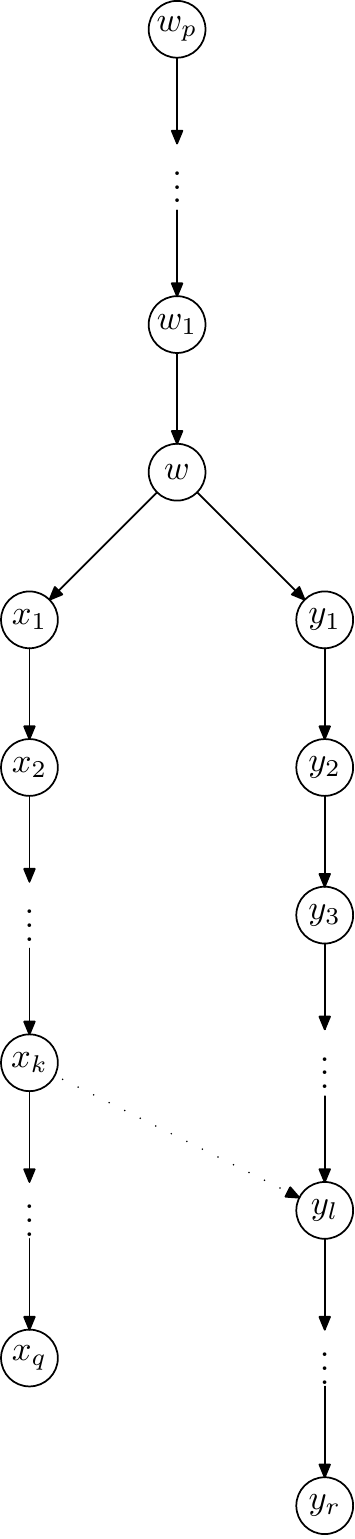}
\caption{The formula $\genvarphi{k}{l}{p}{q}{r}$}
\label{fig:general xyz stu formula}
\end{wrapfigure}

\begin{definition}
 For $k,l,p,q,r\in\mathbb{N},$ the formula $\genvarphi{k}{l}{p}{q}{r}$ is defined as the universal Horn clause displayed in Figure~\ref{fig:general xyz stu formula}.
\end{definition}

It should be noted that the notation $\genvarphi klpqr$ suggests that $w,y,x$ can be compared to natural numbers, but what is meant in that notation is simply that the number of predecessors (successors, resp.) of $w,x_k,y_l$ can be compared to $p,q,$ and $r,$ respectively. Hence, if we use ``natural names'' for the vertices, i.e. we have vertices $w=x_0,x_1,\dots,$ and $w=y_0,y_1,\dots,$ then this ensures that the vertices up to $x_q,$ $y_r,$ and $w_p$ exist. When proving that this formula holds in a graph, we will usually rely on the notation provided in Figure~\ref{fig:general xyz stu formula}, and assume that there are nodes $w_{p},\dots,w_0=w=x_0=y_0,x_1,\dots,x_q,x_1,\dots,y_r$ with edges as seen in Figure~\ref{fig:general xyz stu formula}, i.e., most of the time we do not mention the homomorphism explicitly.

The formula $\genvarphi klpqr$ can be seen to be only a slight generalization of the formulas $\hvarphi^{k\rightarrow l}$ we already considered, as exhibited by the following proposition:

\begin{proposition}\label{prop:genvarphi property}
Let $G$ be a graph, and let $k,l,p,q,r\in\mathbb{N}$. Then $G\models\genvarphi klpqr$ if and only if the following condition holds: For any nodes $w,x_k,y_l\in G,$ such that $w$ has a $p$-step predecessor, $x$ has a $q-k$-step successor and $y$ has an $r-l$-step successor, $G\models\path{w}{k}{x_k}$ and $G\models\path{w}{l}{y_l},$ it follows that $(x_k,y_l)$ is an edge in $G.$
\end{proposition}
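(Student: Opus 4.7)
The plan is to derive this characterization directly from Proposition~\ref{prop:homomorphism and horn clauses}(1), which reduces satisfaction of a universal Horn clause to the existence/non-existence of homomorphisms from its prerequisite graph. Since $\genvarphi klpqr$ has the conclusion edge $(x_k,y_l)$, the clause holds in $G$ iff every homomorphism $\alpha$ from $\preq{\genvarphi klpqr}\cup\{x_k,y_l\}$ into $G$ maps the conclusion to an actual edge of $G$. Both directions of the proposition then amount to showing that homomorphisms from the template graph are in one-to-one correspondence with the configurations described on the right-hand side.

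For the forward direction, I assume $G\models\genvarphi klpqr$ and take any nodes $w,x_k,y_l\in G$ satisfying the stated conditions: $w$ has a $p$-step predecessor, $x_k$ has a $(q-k)$-step successor, $y_l$ has an $(r-l)$-step successor, and there are paths $\path{w}{k}{x_k}$ and $\path{w}{l}{y_l}$ in $G$. I then build a homomorphism $\alpha$ from the prerequisite graph (pictured in Figure~\ref{fig:general xyz stu formula}) by mapping the template chain $w_p\to\cdots\to w_0=w$ along the witnessing $p$-step predecessor chain of $w$, the template chain $w=x_0\to x_1\to\cdots\to x_k\to\cdots\to x_q$ along the $k$-path into $x_k$ followed by the $(q-k)$-step successor chain, and similarly for the $y$-side. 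This $\alpha$ respects all edges of the prerequisite graph, so Proposition~\ref{prop:homomorphism and horn clauses}(1) forces $(\alpha(x_k),\alpha(y_l))=(x_k,y_l)$ to be an edge of $G$.

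For the converse, I assume the condition on the right and take any homomorphism $\alpha$ from the prerequisite graph into $G$. The image of the $w_p\to\cdots\to w_0$ chain exhibits $\alpha(w)$ as having a $p$-step predecessor in $G$; the images of the $x$- and $y$-chains exhibit the required successor lengths for $\alpha(x_k)$ and $\alpha(y_l)$, and also produce paths of length $k$ from $\alpha(w)$ to $\alpha(x_k)$ and of length $l$ from $\alpha(w)$ to $\alpha(y_l)$. The right-hand property then yields that $(\alpha(x_k),\alpha(y_l))$ is an edge in $G$, and Proposition~\ref{prop:homomorphism and horn clauses}(1) gives $G\models\genvarphi klpqr$.

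The proof is essentially a bookkeeping exercise, since the graph picture of the clause is tailor-made to match the stated property; the only mildly delicate point is to check that each ``path or successor'' requirement on the right-hand side corresponds to exactly one of the chains in the template, so I would make sure that the homomorphism on the $w$-chain is oriented toward $w$ (predecessors) while the $x$- and $y$-chains are oriented away from $w$ (successors), matching the arrow directions in Figure~\ref{fig:general xyz stu formula}.
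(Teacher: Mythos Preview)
Your proposal is correct. The paper does not give a proof of this proposition at all: it is stated as an immediate consequence of the definition of $\genvarphi klpqr$ via Figure~\ref{fig:general xyz stu formula}, and your argument via Proposition~\ref{prop:homomorphism and horn clauses}(1) is exactly the natural way to unpack why it is immediate.
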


The above proposition immediately implies the following:

\begin{proposition}\label{prop:cut to many pred many suc part gives non-gen varphi}
Let $G\models\genvarphi klpqr,$ and let $G'$ be the restriction of $G$ to the set $$C:=\set{w\in G\ \vert\ \mxdepth Gw\ge p,\mxheight Gw\ge\max(q-k,r-l)},$$
then $G'\models\varphi^{k\rightarrow l}.$
\end{proposition}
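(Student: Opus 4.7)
The plan is to verify the characterization of $\hvarphi^{k\rightarrow l}$-satisfaction provided by Proposition~\ref{prop:varphi k to l property} directly for $G'$. So I fix vertices $w, x_k, y_l$ in $G'$ together with paths $G' \models \path{w}{k}{x_k}$ and $G' \models \path{w}{l}{y_l}$, and aim to produce the edge $(x_k, y_l)$ in $G'$. The strategy is to lift the situation to the ambient graph $G$, apply the hypothesis $G \models \genvarphi{k}{l}{p}{q}{r}$ in the form of Proposition~\ref{prop:genvarphi property}, and then argue that the resulting edge remains in $G'$.

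The key bookkeeping step is to check that the three ``side arm'' conditions required by Proposition~\ref{prop:genvarphi property} are available in $G$. Since $w, x_k, y_l$ all lie in $C$, their depth and height parameters relative to $G$ satisfy $\mxdepth{G}{w} \ge p$, $\mxheight{G}{x_k} \ge \max(q-k, r-l)$, and $\mxheight{G}{y_l} \ge \max(q-k, r-l)$. In particular $w$ has a $p$-step predecessor in $G$, $x_k$ has a $(q-k)$-step successor in $G$, and $y_l$ has an $(r-l)$-step successor in $G$ — exactly the conditions needed to fire $\genvarphi{k}{l}{p}{q}{r}$. Note that the single quantity $\max(q-k, r-l)$ in the definition of $C$ is precisely chosen so that one height bound simultaneously supplies the needed successor lengths for both $x_k$ and $y_l$.

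Once these are in hand, the paths from $w$ to $x_k$ and from $w$ to $y_l$ that we started with are paths in $G'$ by assumption, and hence paths in $G$ as well since $G'$ is an induced subgraph of $G$. Applying Proposition~\ref{prop:genvarphi property} to $G$ yields that $(x_k, y_l)$ is an edge in $G$, and since both endpoints belong to $C$, this edge is inherited by the induced subgraph $G'$. This establishes the condition of Proposition~\ref{prop:varphi k to l property} for $G'$, so $G' \models \hvarphi^{k\rightarrow l}$. There is no substantive obstacle in this argument — it is essentially a repackaging of the hypothesis — but one should be careful that the ``induced subgraph'' aspect of the definition of $G'$ is used twice: once to transfer paths from $G'$ into $G$ (so that we may invoke the hypothesis), and once to transfer the conclusion edge from $G$ back into $G'$ (using that both endpoints lie in $C$).
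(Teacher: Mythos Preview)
Your proof is correct and is precisely the intended argument. The paper does not give an explicit proof, merely stating that the result ``immediately'' follows from Proposition~\ref{prop:genvarphi property}; you have spelled out exactly that immediate verification, including the two uses of the induced-subgraph property that make the lift to $G$ and the descent back to $G'$ go through.
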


This Proposition is one of the reasons why Corollary~\ref{corollary:suffices to restrict middle vertices for np} is important: We can use it together with Proposition~\ref{prop:cut to many pred many suc part gives non-gen varphi} to restrict our attention to the vertices in the ``middle'' of the graph, and then talk about the formula $\hvarphi^{k\rightarrow l}$ instead of $\genvarphi klpqr$. This is the general approach for our \NP-containment proofs, although some technical difficulties remain, as we will see in the proof of Theorem~\ref{theorem:above k to k+1 leads to np}.

There obviously is a close relationship between formulas of the form $\hvarphi^{k\rightarrow l}$ and formulas of the form $\genvarphi klpqr$. In particular, this relationship allows the ``lifting'' of implications, as is shown in the following easy lemma.

\begin{lemma}\label{lemma:varphi k to l implication gives genvarphi implications}
 Let $p,q,r,k,l,k',l'\in\mathbb{N},$ and let $\hvarphi^{k\rightarrow l}$ imply $\hvarphi^{k'\rightarrow l'}$. Then $\genvarphi klpqr$ implies $\genvarphi{k'}{l'}{p'}{q'}{r'},$ where
 $$
 \begin{array}{rl}
    p' & := p, \\ 
    q' & := k'+\max(q-k,r-l), \\
    r' & := l'+\max(q-k,r-l).
 \end{array}
 $$
\end{lemma}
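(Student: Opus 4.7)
The plan is to exploit Propositions~\ref{prop:genvarphi property} and~\ref{prop:cut to many pred many suc part gives non-gen varphi} to reduce a statement about the ``fat'' formulas $\genvarphi{\cdot}{\cdot}{\cdot}{\cdot}{\cdot}$ to one about the ``thin'' formulas $\hvarphi^{\cdot\rightarrow\cdot}$, where the implication hypothesis applies directly. So I fix $G\models\genvarphi klpqr$ and let $G'$ be the restriction of $G$ to
\[
C:=\set{v\in G : \mxdepth{G}{v}\ge p \text{ and } \mxheight{G}{v}\ge\max(q-k,r-l)}.
\]
Proposition~\ref{prop:cut to many pred many suc part gives non-gen varphi} gives $G'\models\hvarphi^{k\rightarrow l}$, and the assumed implication $\hvarphi^{k\rightarrow l}\Rightarrow\hvarphi^{k'\rightarrow l'}$ then yields $G'\models\hvarphi^{k'\rightarrow l'}$.

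To conclude $G\models\genvarphi{k'}{l'}{p'}{q'}{r'}$ via Proposition~\ref{prop:genvarphi property}, I would fix nodes $w, x_{k'}, y_{l'}\in G$ together with paths $\gpath{w}{k'}{x_{k'}}$ and $\gpath{w}{l'}{y_{l'}}$, a $p'$-step predecessor of $w$, a $(q'-k')$-step successor of $x_{k'}$, and a $(r'-l')$-step successor of $y_{l'}$. The goal is to show that every node $u$ appearing along either of the two paths lies in $C$; once this holds, both paths already live inside $G'$, so $G'\models\hvarphi^{k'\rightarrow l'}$ combined with Proposition~\ref{prop:varphi k to l property} produces an edge $(x_{k'},y_{l'})$ in $G'$, which is also an edge in $G$, as required.

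For a node $u$ lying $i$ steps past $w$ on the $x$-path ($0\le i\le k'$) or the $y$-path ($0\le i\le l'$), the depth check is immediate: concatenating the $p'=p$-step predecessor chain of $w$ with the initial segment from $w$ to $u$ gives $\mxdepth{G}{u}\ge p+i\ge p$. For the height check, observe that by definition $q'-k'=r'-l'=\max(q-k,r-l)$; concatenating the remaining portion of the path from $u$ to its endpoint (of length $k'-i$ or $l'-i$) with the guaranteed successor chain at that endpoint (of length $\max(q-k,r-l)$) gives $\mxheight{G}{u}\ge (k'-i)+\max(q-k,r-l)\ge\max(q-k,r-l)$, and symmetrically on the $y$-path. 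The only real obstacle in the proof is this arithmetic bookkeeping, namely verifying that the particular choices $p'=p$, $q'=k'+\max(q-k,r-l)$, and $r'=l'+\max(q-k,r-l)$ are exactly what guarantees both defining conditions of $C$ for every node encountered; no deeper structural argument is needed beyond the two auxiliary propositions.
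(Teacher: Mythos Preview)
Your proposal is correct and follows essentially the same approach as the paper: restrict $G$ to the set $C$ of vertices with enough depth and height, use Proposition~\ref{prop:cut to many pred many suc part gives non-gen varphi} to obtain $\hvarphi^{k\rightarrow l}$ on the restriction, apply the hypothesis to get $\hvarphi^{k'\rightarrow l'}$ there, and then check that the relevant nodes of the $\genvarphi{k'}{l'}{p'}{q'}{r'}$ configuration lie in $C$. Your version is in fact more careful than the paper's, since you explicitly verify that \emph{every intermediate node} on the two paths lies in $C$ (so the paths exist in $G'$), whereas the paper only gestures at this with ``by choice of nodes.''
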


\begin{proof}
Let $G$ be a graph satisfying $\genvarphi klpqr$. Let $w,x_i,y_i$ be the nodes in the graph connected as the nodes in the prerequisite graph of $\genvarphi{k'}{l'}{p'}{q'}{r'}$. Let $G'$ be the graph $G$ restricted to the set of vertices which have a $p$-step predecessor, and $\max(q-k,r-l)$-step successor in $G$. Then, by choice of nodes, and Propositions~\ref{prop:genvarphi property} and~\ref{prop:varphi k to l property}, it follows that $G'\models\hvarphi^{k\rightarrow l}$. Hence, due to the prerequisites, we know that $G\models\hvarphi^{k'\rightarrow l'}$. In particular, since the nodes $w,x_i,y_i$ satisfy the prerequisite graph of $\genvarphi{k'}{l'}{p'}{q'}{r'},$ we know that $w,x_k,y_l\in G'$. Hence, it follows that $(x_k,y_l)$ is an edge in $G'$, and therefore it is an edge in $G,$ as claimed. 
\end{proof}

We need three more implications between formulas of this kind for our later \NP-results:

\begin{lemma}\label{lemma:genvarphi k to 0 implications}\label{lemma:(prop):varphi 0 to k gives varphi 0 to k^2}
 Let $p,q,r,k\in\mathbb{N},k\ge 2,$ and let $G$ be a graph.
 \begin{enumerate}
  \item If $G\models\hvarphi^{k\rightarrow 0},$ then $G\models\hvarphi^{0\rightarrow k^2}.$
  \item If $G\models\genvarphi k0pqr,$ then $G\models\genvarphi{k-1}{k^3}{p'}{q'}{r'},$ where
  $$ 
  \begin{array}{rl}
    p' & := p, \\
    q' & := k-1+\max(q-k,r), \\
    r' & := k^3+\max(q-k,r). \\
  \end{array}
   $$
  \item Let $k\in\mathbb{N}$. Then $\hvarphi^{0\rightarrow k}$ implies $\hvarphi^{0\rightarrow k^2}.$
 \end{enumerate}
\end{lemma}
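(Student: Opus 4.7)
My plan is to prove the three parts in dependency order: parts 1 and 3 are independent path-decomposition arguments, and part 2 combines part 1 with a restriction-to-middle-vertices technique.

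For part 1, I would start with an arbitrary $k^2$-step path $w = v_0 \to v_1 \to \cdots \to v_{k^2}$ in $G$ and partition it into $k$ consecutive sub-paths of length $k$. Each sub-path $v_{jk} \to \cdots \to v_{(j+1)k}$ triggers $\hvarphi^{k\rightarrow 0}$ and produces a back edge $(v_{(j+1)k}, v_{jk})$. Reading these back edges in reverse yields a $k$-step path $v_{k^2} \to v_{(k-1)k} \to \cdots \to v_k \to v_0$; a further application of $\hvarphi^{k\rightarrow 0}$, now starting at $v_{k^2}$, produces the edge $(v_0, v_{k^2})$ demanded by $\hvarphi^{0\rightarrow k^2}$. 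Part 3 is the mirror image: apply $\hvarphi^{0\rightarrow k}$ to each length-$k$ sub-path to obtain forward shortcut edges $(v_{jk}, v_{(j+1)k})$, which assemble into a $k$-step path from $w$ to $v_{k^2}$, and close with one final application of $\hvarphi^{0\rightarrow k}$.

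For part 2, I would first restrict $G$ to the induced subgraph $G'$ consisting of every vertex having both a $p$-step predecessor and a $\max(q-k, r)$-step successor in $G$. By Proposition~\ref{prop:genvarphi property}, any $k$-step path that lies entirely inside $G'$ automatically satisfies the side conditions of $\genvarphi k0pqr$, so $G' \models \hvarphi^{k\rightarrow 0}$, and part 1 then upgrades this to $G' \models \hvarphi^{0\rightarrow k^2}$. Next I would verify that in the setup of $\genvarphi{k-1}{k^3}{p'}{q'}{r'}$, every intermediate vertex on the $(k-1)$-path from $w$ to $x_{k-1}$ and on the $k^3$-path from $w$ to $y_{k^3}$ inherits enough depth and height to lie in $G'$: the choice $p' = p$ handles the predecessor side (each intermediate composes with the $p$-chain below $w$), and $q' - (k-1) = r' - k^3 = \max(q-k, r)$ handles the successor side (each intermediate composes with the $\max(q-k, r)$-chain past $x_{k-1}$ or $y_{k^3}$). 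Once everything lies inside $G'$, the proof finishes in four steps: (i) apply $\hvarphi^{0\rightarrow k^2}$ along the $k^3$-path to shorten it to a $k$-step path $w \to v_{k^2} \to v_{2k^2} \to \cdots \to v_{k^3}$; (ii) apply $\hvarphi^{k\rightarrow 0}$ to this $k$-step path to obtain a back edge $(y_{k^3}, w)$; (iii) concatenate with the given $(k-1)$-path $w \to u_1 \to \cdots \to u_{k-1} = x_{k-1}$ to form a fresh $k$-step path $y_{k^3} \to w \to u_1 \to \cdots \to x_{k-1}$; and (iv) apply $\hvarphi^{k\rightarrow 0}$ one last time to obtain the conclusion edge $(x_{k-1}, y_{k^3})$.

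The main obstacle will be the bookkeeping in part 2, namely checking that \emph{every} vertex used along the way---including the derived shortcut vertices $v_{jk^2}$---stays inside $G'$ so that the restricted formulas $\hvarphi^{k\rightarrow 0}$ and $\hvarphi^{0\rightarrow k^2}$ may be invoked. The choice of exponents is essentially forced by the structure of the argument: $k^2$ is the natural reach of one layer of $\hvarphi^{k\rightarrow 0}$-shortcutting, and $k^3 = k \cdot k^2$ is what a further layer of $\hvarphi^{0\rightarrow k^2}$-shortcutting affords before the available path length is exhausted. Once these parameter checks are in place, the combinatorial content of the argument mirrors part 1 one level up.
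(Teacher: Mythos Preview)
Your proposal is correct and follows essentially the same approach as the paper. The only cosmetic differences are that in part~2 the paper restricts to the explicit finite vertex set $\{w,x_1,\dots,x_{k-1},y_1,\dots,y_{k^3}\}$ rather than your larger set of all vertices with sufficient depth and height (both work), and in part~3 the paper forward-references Lemma~\ref{lemma:test case 1} instead of spelling out your direct shortcut argument---your version is self-contained and arguably cleaner.
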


\begin{proof}
 Let $w_p,\dots,w=x_0,\dots,w=y_0,\dots$ be nodes in the graph fulfilling the prerequisite graph of the respective formulas.
 \begin{enumerate}
  \item By the $\hvarphi^{k\rightarrow 0}$-property, it is clear that $(y_k,y_0),(y_{2k},y_k),\dots,(y_{k^2},y_{(k-1)\cdot k})$ are edges in $G$. Hence,  $\gpath{y_{k^2}}{k}{w}$. Again due to the property, it follows that $(w,y_{k^2})$ is an edge in $G,$ as required.
  \item Consider the subgraph $G':=G\cap\set{w,x_1,\dots,x_{k-1},y_1,\dots,y_{k^3}}$. By choice of $p',q',r',$ every node in $G'$ has a $p$-step predecessor, a $q-k$-step successor, and an $r$-step successor in $G$. Hence, $G'\models\hvarphi^{k\rightarrow 0}$. Due to part $1,$ this implies that $G'\models\hvarphi^{0\rightarrow k^2}$. Hence, in $G'$ there are edges $(w,y_{k^2}),(y_{k^2},y_{2k^2}),\dots,(y_{(k-1)\cdot k^2},y_{k^3})$. Therefore, due to the $\hvarphi^{k\rightarrow 0}$-property, it follows that $(y_{k^3},w)$ is an edge in $G'$. Hence, it follows that $\gpath{y_{k^3}}{k}{x_{k-1}}$. The $\hvarphi^{k\rightarrow 0}$-property implies that $(x_{k-1},y_{k^3})$ is an edge, as required.
  \item This follows from Lemma~\ref{lemma:test case 1}, since $k$-transitivity implies $k+k\cdot(k-1)=k^2$-transitivity.
 \end{enumerate}
\end{proof}

\subsection{\NP\ upper complexity bounds}\label{section:np results}

In this section, we prove \NP-containment results for logics defined by universal Horn clauses. We show that ``most'' of the logics of the form $\K{\genvarphi klpqr}$ give rise to a satisfiability problem in \NP.

It is comparably easy to show that a logic of the form $\K{\hvarphi^{k\rightarrow k+1}}$ leads to a satisfiability problem in \NP, by carefully ``copying'' vertices and adding the correct neighbors. However, while this process leads to a model which still satisfies $\hvarphi^{k\rightarrow k+1},$ it does not give the desired result that the problem can be solved in $\NP$ for all universal Horn logics which are extensions of $\K{\hvarphi^{k\rightarrow k+1}}$. In order to prove this, our model-manipulations must be consistent with the conditions of Theorem~\ref{theorem:ufo invariants}. In the proof of the following theorem, we construct a small model using only restriction. 

Note that the proof for this theorem is the only occasion where we actually prove the polynomial-size model property by explicitly constructing the model. All further \NP-results make use of Theorem~\ref{theorem:above k to k+1 leads to np}, by showing that the logic in question is an extension of one of the logics that this theorem deals with. For example, the logic $\logicname{K4B}$ satisfies the conditions of this Theorem. It is therefore a central theorem for our complexity classification. 

The proof relies on one later result, namely the second case of Theorem~\ref{theorem:tree-like models for horn logics}. However, the proof for this case of Theorem~\ref{theorem:tree-like models for horn logics} does not rely on any facts about the classification algorithm studied later. In the following, a graph $G$ with root $w$ is $w$-canonical if for every node $x\in G,$ if $\gpath wix$ and $\gpath wjx,$ then $i=j$. Note that this does not mean that every node $x$ is reachable on only one path from the root, but that all paths from the root $w$ to $x$ have the same length. We say that a graph $G$ is \emph{canonical} if it is $w$-canonical for a root $w$ of $G$. It is easy to see that a graph is canonical if and only if it can be homomorphically mapped onto a strict line.

\begin{theorem}\label{theorem:above k to k+1 leads to np}
Let $\hvarphi$ be a universal Horn formula implying $\genvarphi{k}{k+1}pqr$ for some $k,p,q,r\in\mathbb{N},k\ge 1$. Then $\K\hvarphi$ has the polynomial-size model property, and $\sat{\K\hvarphi}\in\NP.$
\end{theorem}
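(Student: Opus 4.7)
The plan is to invoke Corollary~\ref{corollary:suffices to restrict middle vertices for np}: it suffices to exhibit a constant $c$ and a polynomial $P$ such that every $\K\hvarphi$-satisfiable $\phi$ has a $\K\hvarphi$-model whose middle (vertices of both $c$-step depth and $c$-step height) has at most $P(\card{\phi})$ elements.

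First, starting from any $\K\hvarphi$-model $M_0$ with $M_0,w\models\phi$, apply Proposition~\ref{prop:rooted models exist} to reduce to a rooted submodel of depth at most $\md{\phi}$, and then invoke the second case of Theorem~\ref{theorem:tree-like models for horn logics} (whose proof is independent of the later classification material) to unfold into a canonical $\K\hvarphi$-model $M$ with $M,w\models\phi$, in which every vertex has a unique level. Set $c:=\max(p,q-k,r-k-1)$. Since $\hvarphi$ implies $\genvarphi{k}{k+1}{p}{q}{r}$, Proposition~\ref{prop:cut to many pred many suc part gives non-gen varphi} shows that the restriction of $M$ to its middle satisfies $\hvarphi^{k\rightarrow k+1}$, and Lemma~\ref{lemma:varphi k to l implications}(2) upgrades this to $\hvarphi^{i\rightarrow i+1}$ for every $i\geq k$. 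In the canonical setting, this means: whenever two middle vertices $u$ and $v$ lie on adjacent levels and share a common ancestor at least $k$ levels above $u$, the edge $(u,v)$ is forced.

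The core construction is a submodel $M'$ of $M$ built level by level. At level $0$ only $w$ is kept; given $M'$ up to level $\ell$, for each vertex $u$ already chosen and each subformula $\Diamond\psi\in\subf{\phi}$ with $M,u\models\Diamond\psi$, some witness at level $\ell+1$ must be included so that Lemma~\ref{lemma:modal restriction invariance} can be applied at the end. The crucial gain from the forced edges above is that in the middle, a single witness $v'$ for a subformula $\psi$, picked as a $(k+1)$-step descendant of an ancestor at level $\ell-k$, simultaneously witnesses $\Diamond\psi$ at \emph{every} $k$-step descendant of that ancestor sitting at level $\ell$, not merely at the vertex it was chosen for. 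Thus witnesses may be selected per ancestor group rather than per vertex, and a given vertex may serve as witness for several subformulas at once when its type permits.

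The main obstacle is turning this sharing into a genuinely polynomial per-level bound: a naive recurrence of the form $N_{\ell+1}\leq N_{\ell-k}\cdot\card{\subf{\phi}}$ still grows super-polynomially over $\md{\phi}$ levels. The resolution exploits that by Lemma~\ref{lemma:varphi k to l implications}(2) the forced edge pattern holds for \emph{all} $i\geq k$, allowing the collapse of witnesses that would otherwise be introduced independently at successive levels, together with further identification of vertices whose types coincide, so that the number of distinct witness roles at each middle level is bounded by a polynomial in $\card{\phi}$. Once this construction is carried out, Proposition~\ref{prop:trivial subgraph property} ensures $M'$ remains a $\K\hvarphi$-model, Lemma~\ref{lemma:modal restriction invariance} secures $M',w\models\phi$, and Corollary~\ref{corollary:suffices to restrict middle vertices for np} together with Proposition~\ref{proposition:elementary plus polysize model property is in np} concludes $\sat{\K\hvarphi}\in\NP$.
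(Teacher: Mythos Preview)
Your proposal has a genuine gap at the very first step: you invoke the second case of Theorem~\ref{theorem:tree-like models for horn logics} to obtain a canonical model, but that case requires $\hvarphi$ to be satisfied on every strict line, and this is \emph{not} implied by the hypothesis that $\hvarphi$ implies $\genvarphi{k}{k+1}{p}{q}{r}$. For instance, $\hvarphi$ could be the conjunction of $\hvarphi^{1\rightarrow 2}$ with a clause forbidding paths of length $10$; this implies $\genvarphi{1}{2}{0}{1}{2}$ yet fails on long strict lines, so no canonical model is available. The paper handles precisely this by introducing the constant $\mathit{maxline}$ and treating the two regimes separately: when $\hvarphi$ holds on all strict lines the canonical reduction is used; otherwise the proof decomposes the middle into subgraphs $M^C_g$ rooted at a polynomial ``gateway'' set $G$, splits these into $\md\phi$-canonical and non-$\md\phi$-canonical pieces, bounds the canonical pieces by $\mathit{maxline}$, and controls the non-canonical pieces via a cycle argument (Facts~\ref{prooffact:forward and back edges} and~\ref{prooffact:non canonical models are union over restricted levels}). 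Your sketch contains no analogue of this machinery.

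Even restricting to the canonical case, your resolution of the super-polynomial recurrence is not an argument. You correctly observe that sharing witnesses by $k$-step ancestors yields $N_{\ell+1}\leq N_{\ell-k}\cdot\card{\subf\phi}$, which is too weak, and you gesture at ``collapse of witnesses'' via $\hvarphi^{i\rightarrow i+1}$ for all $i\geq k$, but never say what collapses to what. The mechanism the paper uses is concrete: fix a single gateway level (the first level of the middle), bound its size polynomially via Lemma~\ref{lemma:few nodes with few predecessors}, and then select one witness per triple (gateway node, level, subformula); Fact~\ref{prooffact:x in L(g)(i) y in L(g)(i+1) gives edge (x,y)} guarantees that such a witness serves \emph{all} descendants of its gateway node at the preceding level, not merely those sharing a $k$-step ancestor. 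Without making this step explicit, and without any treatment of the case where the canonical model is unavailable, the proposal does not constitute a proof.
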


\begin{proof}
Due to Proposition~\ref{proposition:elementary plus polysize model property is in np}, it suffices to prove the polynomial-size model property for $\K\hvarphi.$ Hence, let $\phi$ be a $\K\hvarphi$-satisfiable modal formula, and, following Proposition~\ref{prop:rooted models exist}, let $M$ be a $\K\hvarphi$-model with root $w$ such that $M,w\models\phi$. Define $\mathtext{\textit{maxline}}$ to be the maximal length of a strict line on which $\hvarphi$ is satisfied, and $\mathtext{\textit{maxline}}$ to be zero in the case that $\hvarphi$ is satisfied on every strict line (note that due to Proposition~\ref{prop:trivial subgraph property}, if $\hvarphi$ is satisfied on the strict line of length $i,$ then it is also satisfied on the strict line of length $i-1$). This number is obviously a constant depending only on $\hvarphi$. If $\hvarphi$ is satisfied on every strict line, we can, due to Theorem~\ref{theorem:tree-like models for horn logics}, assume that $M$ is $w$-canonical. Note that we will work with restrictions of the model during the course of the proof---since any restriction of a canonical graph having the same root is still canonical, the submodels we consider later are canonical as well. Now, define

\begin{eqnarray*}
A & :=&\set{v\in M\ \vert\ \mxdepth{M}{v}\leq p+k+2}\cup\set{w}, \\
S & :=&\set{v\in M\ \vert\ \mxheight{M}{v}\ge\max(q-k,r-k-1)}, \\
C & :=&(M\setminus A)\cap S.
\end{eqnarray*}

The set $C$ contains all the nodes in $M$ which have ``enough'' predecessors and successors to ensure that the formula $\genvarphi{k}{k+1}pqr$ gives us all the necessary edges that we are interested in. To be more precise, we show the following fact:

\begin{prooffact}\label{prooffact:C models varphi i to i+1}
For all $i\ge 0,$ it holds that $C\models\hvarphi^{i\rightarrow i+1}.$
\end{prooffact}

\begin{proof}
It suffices to show the claim for $i=1$. The result for arbitrary $i$ then follows from Lemma~\ref{lemma:varphi k to l implications}, and the observation that $\hvarphi^{0\rightarrow 1}$ is true in any graph. Hence, we show $C\models\hvarphi^{1\rightarrow 2}:$ Let $w',x_1,y_1,y_2$ be vertices from $C,$ such that they satisfy the prerequisites of $\hvarphi^{1\rightarrow 2},$ i.e., let $(w',x_1),$ $(w',y_1),$ and $(y_1,y_2)$ be edges in $M$. Since $w'\in C,$ we know that $w'$ has a $k-1$-step predecessor $w'',$ which in turn has a $p$-step predecessor. Due to the edges mentioned above, it obviously holds that $\mpath{w''}{k}{x_1},$ and $\mpath{w''}{k+1}{y_2}$. Since both $x_1$ and $y_2$ are elements of $C,$ they have the required number of successors, and therefore, since $M\models\genvarphi{k}{k+1}pqr,$ Proposition~\ref{prop:genvarphi property} implies that $(x_1,y_2)$ is an edge in $M,$ as claimed.
\end{proof}

Due to Lemma~\ref{lemma:few nodes with few predecessors}, we can assume, without loss of generality, that $\card A$ is polynomially bounded (in the length of $\phi$). We can additionally assume, again due to Lemma~\ref{lemma:few nodes with few predecessors}, that the number of nodes $v$ with $\mxdepth{M}{v}\leq p+k+3$ is also restricted by a polynomial. Due to Proposition~\ref{prop:rooted models exist}, we can also assume that every world in $M$ can be reached from $w$ in $\md\phi$ steps. Therefore, we can choose a set $G\subseteq M$ with the following properties:
\begin{itemize}
\item $G$ is polynomially bounded,
\item $G\subseteq C,$
\item for every node $u\in A$ and every $v\in C$ such that $(u,v)$ is an edge in $M,$ and every subformula $\psi$ of $\phi$ such that $M,v\models\psi,$ there is a node $v'\in G$ such that $(u,v')$ is an edge in $M,$ and $M,v'\models\psi,$
\item every node in $C$ can be reached on a path from some node in $G$. Additionally, if $\varphi$ is satisfied on every strict line, every node in $C$ can be reached from some node in $G$ on a path of length at most $\md\phi.$
\end{itemize}

Such a set can be chosen with an application of the technique used to prove Lemma~\ref{lemma:few nodes with few predecessors}:

\begin{prooffact}
 A set $G$ can be chosen with the properties above.
\end{prooffact}

\begin{proof}
First consider the case that $\hvarphi$ is satisfied on every strict line. In this case, since $M$ is $w$-canonical, we know that $A$ is exactly the set of vertices on the first $p+k+2$ levels of $M$ (the $i$-th level is the set of vertices $v$ in $M$ such that $\mpath wiv$. Note that a canonical graph can only have one root). Since $M$ is $w$-canonical, we can simply choose $G$ to be the set of those nodes at level $p+k+3$ in the model $M$ which are also elements of $C$. Since in a canonical model $M,$ for each node $v$ $\mxdepth{M}{v}$ is exactly the level of $v$ in $M,$ this is a set of polynomial size, and every path from $w$ to a node in $C$ passes through $G$. Further, every successor $v\in C$ of a node $u\in A$ is trivially a member of $G$. Since every node in $M$ can be reached from the root $w$ in at most $\md\phi$ steps, the claim follows.

Now consider the case where $\hvarphi$ is not satisfied on every strict line. For each world $u\in M,$ let $F_u:=\set{\psi\in\subf{\phi}\ \vert\ M,u\models\Diamond\psi},$ and let $W_u$ be a subset of successors of $u$ in $M$ such that for every $\psi\in F_u,$ there is a world $v\in W_u$ such that $(u,v)$ is an edge in $M,$ $M,v\models\psi,$ and $\card{W_u}\leq\card{\subf\phi}$. If it is possible to choose $v\in C,$ then do so.

Define $M_0:=\set{w},$ and for each $i\in\mathbb N,$ let $M_{i+1}:=\cup_{u\in M_i}W_u\setminus\cup_{j=0}^{i} M_j$ Now define $M'$ to be the restriction of $M$ to the worlds in $\cup_{i\in\mathbb N}M_i$. We show that $M',w\models\phi$ by proving that $M'$ satisfies the conditions of Lemma~\ref{lemma:modal restriction invariance}. By construction, we know that $w\in M'$. Now let $u\in M',$ and let $\psi$ be a subformula of $\phi$ with $M,u\models\Diamond\phi$. Then $\phi\in F_u,$ and therefore there is a world $v\in W_u$ with $M,u\models\phi$. Since $u\in M',$ there is some $i$ with $u\in M_i$. It follows that $W_u\subseteq\cup_{j=0}^{i+1}M_j\subseteq M',$ and hence $v\in M'$ as claimed. Since $M'$ is a restriction of $M,$ we know that $M'$ also is a $\K\hvarphi$-model, and inherits all size restrictions on submodels that we have already established. Therefore, we can, without loss of generality, assume that $M'=M.$

Now define $G:=\cup_{i=0}^{p+k+3}M_i\cap C$. By the proof of Lemma~\ref{lemma:few nodes with few predecessors}, since $p+k+3$ is a constant, we conclude that $G$ is polynomial in $\card\phi$. By construction, $G\subseteq C$. We prove that every node $u\in C$ can be reached on a path from a node in $C$. For every $u\in M,$ there exists a sequence of nodes $w=u_0,u_1,\dots,u_n=u$ such that for all relevant $i,$ $u_{i+1}\in W_{u_i},$ $(u_i,u_{i+1})$ is an edge in $M,$ and $u_i\in M_i\setminus\cup_{j=0}^{i-1}M_i$. Let $t$ be minimal such that $u_t\in C$. Since $w\notin C,$ it follows that $t\ge 1,$ and we know that such a $t$ exists, since $u_n=u\in C$. It suffices to prove that $u_t\in G$. Since $u_t\in C,$ it remains to prove that $u_t\in\cup_{i=0}^{p+k+3}M_i$. Assume that this is not the case, from the above it then follows that $t\ge p+k+4$. By choice of nodes, we know that $u_{p+k+3}$ has a $p+k+3$-step predecessor, and since $u_{p+k+3}$ is a predecessor of $u,$ $u_{p+k+3}$ also has a $\max((q-k),(r-k-1))$-step successor. Since $u_{p+k+3}\in M_{p+k+3}\setminus M_0,$ we know that $u_{p+k+3}\neq w,$ therefore it follows that $u_{p+k+3}\in C,$ a contradiction to the minimality of $t.$

Now, let $u\in A,$ and let there be some $v\in C$ such that $(u,v)$ is an edge in $M,$ and $M,v\models\psi$ for some $\psi\in\subf\phi$. Since $u\in A,$ we know that $u=w$ or $u$ does not have a $p+k+3$-step predecessor, and hence $u\in M_i$ for some $i\leq p+k+2$. By construction of $M_{i+1},$ there is some world $v'\in\cup_{j=0}^{i+1} M_j$ which satisfies the requirements, since the successors are chosen to be from $C$ if possible. Since $v'\in \cup_{j=0}^{i+1} M_j\cap C\subseteq G,$ this proves that $G$ indeed satisfies the conditions.
\end{proof}

For every $g\in C,$ we define $M^C_g$ to be the set $M_g\cap C$. Obviously, $M^C_g$ is a graph with root $g,$ and is a restriction of $C$ (recall that $M_g$ is the restriction of $M$ to all vertices reachable from $g$ on a directed path). Hence, Proposition~\ref{prop:trivial subgraph property} and Fact~\ref{prooffact:C models varphi i to i+1} immediately imply the following:

\begin{prooffact}\label{prooffact:M_C_g models varphi i to i+1}
Let $g\in C$. Then $M^C_g$ is a graph with root $g$ such that $M^C_g\models\hvarphi^{i\rightarrow i+1}$ for all $i\ge 0.$
\end{prooffact}

For any $g\in C$ and $i\in\mathbb{N},$ we define $$L^g_i:=\set{v\in M^C_g\ \vert\ \mpath giv}.$$

We observe the following fact:

\begin{prooffact}\label{prooffact:x in L(g)(i) y in L(g)(i+1) gives edge (x,y)}
Let $g\in C,$ $i\in\mathbb{N},$ $x\in L^g_i,$ and $y\in L^g_{i+1}$. Then $(x,y)$ is an edge in $M.$
\end{prooffact}

\begin{proof}
This immediately follows from Fact~\ref{prooffact:M_C_g models varphi i to i+1} and the definition of $L^g_i:$ By definition, it follows that $C\models\path gix,$ and $C\models\path g{i+1}y$. From Fact~\ref{prooffact:C models varphi i to i+1}, we know that $C\models\hvarphi^{i\rightarrow i+1},$ and hence there is an edge $(x,y)$ in $C$ as claimed.
\end{proof}

Graphs fulfilling the formula $\hvarphi^{1\rightarrow 2}$ are ``layered:''  Whenever there are nodes $x$ and $y$ such that there is a path of length $i$ from the root to $i,$ and of length $i+1$ to the node $y,$ then there is an edge between $x$ and $y$. From the definition, it is obvious that for any subgraph $X$ of $C$ with root $g,$ that $X$ is canonical, if for every pair of natural numbers $i\neq j,$ it follows that $L^g_i\cap L^g_j\cap X=\emptyset$. We make a distinction between those elements in $G$ which lead to a canonical graph, and those which do not. In light of Lemma~\ref{lemma:modal restriction invariance with levels}, it is natural that we do not need to look at the entire graph, but can ignore nodes which do not have ``short'' paths from the root of the graph leading to it. For a natural number $b,$ we say that $X$ is \emph{$b$-canonical}, if for all $0\leq i<j\leq b$ it holds that $L^g_i\cap L^g_j\cap X=\emptyset.$

We define the following:

\begin{eqnarray*}
G_{\mathrm{can}}&:=&\set{g\in G\ \vert\ M^C_g\mathtext{ is }\md\phi\mathtext{-canonical}}, \\
G_{\mathrm{non-can}}&:=&\set{g\in G\ \vert\ M^C_g\mathtext{ is not }\md\phi\mathtext{-canonical}}.
\end{eqnarray*}

It is obvious that $G=G_{\mathrm{can}}+G_{\mathrm{non-can}}$. For each $g\in G_{\mathrm{non-can}},$ let $i(g),j(g)$ denote natural numbers, and let $n(g)$ denote some node such that $n(g)\in L^g_{i(g)}\cap L^g_{j(g)},$ with $0\leq i(g)<j(g)\leq\md\phi,$ and $i(g)$ is minimal with these properties. In particular, observe that $i(g)$ and $j(g)$ are polynomial in $\card\phi.$

We now show that for $g\in G_{\mathrm{non-can}},$ the graph $M^C_g\setminus M^C_{n(g)}$ still has root $g,$ unless it is empty. It particular, this means that it makes sense to ask if these graphs are $\md\phi$-canonical.

\begin{prooffact}\label{prooffact:reduces models still rooted}
Let $g\in G_{\mathrm{non-can}},$ such that $M^C_g\setminus M^C_{n(g)}$ is not empty. Then $M^C_g\setminus M^C_{n(g)}$ has root $g.$
\end{prooffact}

\begin{proof}
Assume that this is not the case, i.e., that there is some $v\in M^C_g\setminus M^C_{n(g)},$ and in this graph, there is no path from $g$ to $v$. Since $v$ is an element of $M^C_g,$ a path from $g$ to $v$ exists in the original model $M$. Let $g\rightarrow v_1\rightarrow\dots\rightarrow v_i\rightarrow v$ be this path. Since the path does not exist in the graph $M^C_g\setminus M^C_{n(g)},$ it follows that one of the $v_j$ must be an element of $M^C_{n(g)}$. Since there is a path from this $v_j$ to $v,$ it follows that $v$ is an element of $M^C_{n(g)}$ as well, which is a contradiction.
\end{proof}

The nodes $n(g)$ can be seen as ``minimal non-canonical points'' in $M^C_g$. The following makes this more precise:

\begin{prooffact}
Let $g\in G_{\mathrm{non-can}}$. Then $M^C_g\setminus M^C_{n(g)}$ is $\md\phi$-canonical.
\end{prooffact}

\begin{proof}
Due to Fact~\ref{prooffact:reduces models still rooted}, we know that $M^C_g\setminus M^C_{n(g)}$ has root $g,$ and by definition this graph is a subset of $C$. Assume that it is not $\md\phi$-canonical. Then there exist natural numbers $i_1<i_2\leq\md\phi,$ and some node $v\in \left(L^g_{i_1}\cap L^g_{i_2}\right)\setminus M^C_{n(g)}$. Due to the minimality of $i(g),$ it follows that $i(g)\leq i_1,$ and hence $i(g)<i_2$. In particular, it follows by induction on Fact~\ref{prooffact:x in L(g)(i) y in L(g)(i+1) gives edge (x,y)} that there is a path from $n(g)\in L^g_{i(g)}$ to $v\in L^g_{i_2},$ i.e., $v\in M^C_{n(g)}$. This is a contradiction.
\end{proof}

As our next connectivity result, we show the following:

\begin{prooffact}\label{prooffact:forward and back edges}
 Let $g\in G_{\mathrm{non-can}},$ and let $x\in L^{n(g)}_{i_1},$ $y\in L^{n(g)}_{i_2},$ where $i_1$ and $i_2$ are natural numbers such that $i_2\equiv i_1+1\mathtext{ mod }(j(g)-i(g))$. Then $(x,y)$ is an edge in $M.$
\end{prooffact}

\begin{proof}
 Since $n(g)\in L^g_{i(g)}\cap L^g_{j(g)}$ and $j(g)>i(g),$ by induction on Fact~\ref{prooffact:x in L(g)(i) y in L(g)(i+1) gives edge (x,y)} we know that $\mpath{n(g)}{j(g)-i(g)}{n(g)},$ and obviously this implies that for every multiple of $j(g)-i(g),$ a path of that length exists in $M$ from $n(g)$ to $n(g)$ itself. Now assume that $i_2=i_1+1+j(j(g)-i(g))$ for some integer $j$. We make a case distinction:

\begin{description}
 \item[Case 1: $j\ge 0$]{The above implies that $\mpath{n(g)}{i_1+j(j(g)-i(g))}{x},$ and by choice of $j,$ we also know that $\mpath{n(g)}{i_1+j(j(g)-i(g))+1}{y}$. Since all of the involved nodes are elements of $C,$ and from Fact~\ref{prooffact:C models varphi i to i+1} we know that $C\models\hvarphi^{l\rightarrow l+1}$ for all $l\ge 0,$ this implies that there is an edge from $x$ to $y$ in $M,$ as claimed.}
\item[Case 2: $j<0$]{By choice of $i_1,$ we know that $\mpath{n(g)}{i_1}{x}$. By choice of $j,$ the above and since $-j$ is positive, we also know that $\mpath{n(g)}{i_1-j(j(g)-i(g))+j(j(g)-i(g))+1}{y}$. Hence the existence of the edge $(x,y)$ again follows from Fact~\ref{prooffact:C models varphi i to i+1}.}
\end{description}
\end{proof}

We therefore have the following structure of the model $M:$ By definition, $M=A\cup C\cup M\setminus S,$ and by choice of $G,$ every node in $C$ can be reached on a path from a node in $G$. Therefore, $C=\cup_{g\in G}M^C_g,$ and hence $M$ can be written as the union of sub-graphs as follows:

$$\displaystyle M=A
\cup\bigcup_{g\in G_{\mathtext{can}}} \underbrace{M^C_g}_{\md\phi-\mathrm{canonical}}
\cup\bigcup_{g\in G_{\mathtext{non-can}}} \underbrace{(M^C_g\setminus M^C_{n(g)})}_{\md\phi-\mathrm{canonical}}
\cup\bigcup_{g\in G_{\mathtext{non-can}}}M^C_{n(g)}
\cup (M\setminus S).$$

Recall that the set $G$ is polynomial in $\card\phi$. Due to Corollary~\ref{corollary:suffices to restrict middle vertices for np}, it suffices to restrict the ``middle part'' of this equation, i.e., the components except $A$ and $M\setminus S,$ to polynomial size in order to obtain the desired polynomial model. In order to do this, we will now prove further connectivity results for the sub-models $M^C_{n(g)}$. It is important to note that due to Proposition~\ref{prop:trivial subgraph property}, all of these submodels inherit all of the properties of their respective super-models which can be expressed by a universal first order formula.

The idea behind the construction is the following: For each node in the original model, we add enough successors in our new model to ensure that the new model satisfies the conditions of Lemma~\ref{lemma:modal restriction invariance with levels}. For the $\md\phi$-canonical submodels, we know that we can stop adding nodes at depth $\md\phi,$ since other nodes cannot be reached on a shorter path from the root. For the non-$\md\phi$-canonical submodels, we cannot do this, but we also do not need to: Due to Fact~\ref{prooffact:forward and back edges}, we have ``circular'' edges, hence we know that nodes in a ``low'' level also are in a ``high'' level.

For the construction, let $G_{\mathrm{can}}=\set{g_1,\dots,g_n},$ and let $G_{\mathrm{non-can}}=\set{g_{n+1},\dots,g_{n+m}},$ For $1\leq i\leq m,$ define $g_{n+m+i}:=n(g_{n+i})$ (note that in this case, by definition $g_{n+i}$ is a member of $G_{\mathrm{non-can}}$). For $1\leq i\leq n+2m,$ define

$$N_i:=
\begin{cases}
M^C_{g_i}, &\mathtext{ if }1\leq i\leq n, \\
M^C_{g_i}\setminus M^C_{n(g_i)}, &\mathtext{ if }n+1\leq i\leq n+m, \\
M^C_{g_i}, &\mathtext{ if }n+m+1\leq i\leq n+2m.
\end{cases}$$

Note that by definition and Fact~\ref{prooffact:reduces models still rooted} it follows that $N_i$ is a graph which is either empty or has root $g_i$. It also follows from the above that for $1\leq i\leq n+m,$ the graph $N_i$ is $\md\phi$-canonical. Since the union over all $N_i$ is the same as the union over all $M^C_g$ (for all $g\in G$), it follows that

$$\displaystyle M=A\cup\left(\bigcup_{1\leq i\leq n+2m}N_i\right)\cup(M\setminus S).$$

Note that $n+2m$ is polynomial in $\card\phi,$ since $G$ is. We need one helpful fact about the non-$\md\phi$-canonical $N_i$-models:

\begin{prooffact}\label{prooffact:non canonical models are union over restricted levels}
 Let $i\in\set{n+m+1,\dots,n+2m},$ i.e., let $N_i$ be not $\md\phi$-canonical. Then $N_i=\cup_{j=0}^{j(g_{i-m})-i(g_{i-m})}L^{g_i}_j.$
\end{prooffact}

\begin{proof}
The inclusion $\supseteq$ is obvious, since $N_i=M^C_{g_i}$. For the inclusion $\subseteq,$ let $u\in N_i$. Since, by definition, $g_i=n(g_{i-m}),$ and hence $N_i=M^C_{n(g_{i-m})},$ there is a minimal $j\in\mathbb{N}$ such that $u\in L^{n(g_{i-m})}_j$. If $j\leq j(g_i)-i(g_i),$ then the claim holds. Therefore, assume that $j>j(g_{i-m})-i(g_{i-m}),$ and thus $j':=j-(j(g_{i-m})-i(g_{i-m}))>0$. Since $L^{n(g_{i-m})}_{j}\neq\emptyset,$ we also know that $L^{n(g_{i-m})}_{j'-1}\neq\emptyset,$ therefore let $v\in L^{n(g_{i-m})}_{j'-1}$. By definition, it holds that $j=(j'-1)+1+(j(g_{i-m})-i(g_{i-m})),$ and in particular, $j\equiv(j'-1)+1\mathtext{ mod }(j(g_{i-m})-i(g_{i-m}))$. By Fact~\ref{prooffact:forward and back edges}, we know that there is an edge from $v\in L^{n(g_{i-m})}_{j'-1}$ to $u\in L^{n(g_{i-m})}_j$. This implies, by definition, that $u\in L^{n(g_{i-m})}_{j'},$ and by minimality of $j,$ we know that $j\leq j',$ and hence $j(g_{i-m})-i(g_{i-m})\leq 0,$ i.e., $j(g_{i-m})\leq i(g_{i-m}),$ a contradiction.
\end{proof}

We now define a series of models $M_i$ for $0\leq i\leq n+2m,$ which, step by step, integrate ``enough'' vertices of the original model $M$ to ensure that the formula $\phi$ still holds, but restrict the size to a polynomial. As the induction start, we define $M_0:=A$. For $i\ge 1,$ the construction is as follows:

\begin{itemize}
\item Add every world from $M_{i-1}$ to $M_i,$
\item add $g_i$ to $M_i,$
\item If $1\leq i\leq m+n,$ i.e., if $N_i$ is $\md\phi$-canonical, then for each $0\leq j\leq\md\phi+\mathtext{\textit{maxline}}+1,$ and each formula $\psi\in\subf\phi:$ if there is a world $v\in L^{g_i}_j$ such that $M,v\models\psi,$ then add \emph{one} of these worlds into $M_i.$
\item If $n+m+1\leq i\leq n+2m,$ i.e., if $N_i$ is not $\md\phi$-canonical, then for each $0\leq j\leq j(g)-i(g)+1,$ and each formula $\psi\in\subf\phi,$ perform the following: 
\begin{itemize}
\item If there is a world $v\in L^{g_i}_j$ such that $M,v\models\psi,$ then add \emph{one} of these worlds $v$ into $M_i$. 
\item If there is a world $u$ which has been added into one of the $\md\phi$-canonical submodels in the step above, and there is a successor $v\in L^{g_i}_j$ of $u$ such that $M,v\models\phi,$ then add \emph{one} of these worlds $v$ into $M_i.$
\end{itemize}
\end{itemize}

By construction, since $j(g)$ and $i(g)$ are polynomial in $\card\phi$ for each $g\in G_{\mathrm{non-can}},$ and $\card{\subf\phi}\leq\card\phi,$ there are only polynomially many worlds in $M_{n+2m}$. We now define $M'$ to be the model $M_{n+2m}\cup(M\setminus S)$. Then the set of worlds in $M'$ which have $\max(q-k,r-k-1)$-step successor and a $p+k+3$-step predecessor is polynomially bounded, since this is a subset of $M_{n+2m}$. Hence, due to Corollary~\ref{corollary:suffices to restrict middle vertices for np}, it suffices to show that $M',w\models\phi$ in order to exhibit a model of $\phi$ which is polynomial in size. In order to prove this, we show that $M'$ satisfies the conditions of Lemma~\ref{lemma:modal restriction invariance with levels}. Since $w\in A$ by definition and $M_0=A,$ $w$ is an element of $M'$ by definition. Therefore, let $u$ be an element of $M',$ and let $\psi$ be a subformula of $\phi,$ such that $M,u\models\Diamond\psi,$ and let $a$ be a natural number such that $\mpath wau,$ and $1+a+\md\psi\leq\md\phi$. It suffices to show that there is a world $v\in M'$ such that $M,v\models\psi,$ and $(u,v)$ is an edge in $M.$

Since $M,u\models\Diamond\psi,$ there is a world $v'\in M,$ such that $(u,v')$ is an edge in $M,$ and $M,v'\models\psi$. 

There are several cases to consider. If $v'\in A,$ then by the construction of $M',$ $v'$ is an element of $M'$ as well. Hence we can choose $v$ to be $v'$. If $u\in A,$ and $v'\notin A,$ then it holds that either $v'\in C$ or $v'\notin S$. If $v'\notin S,$ then we know that $v'\in M',$ and we can choose $v=v'$. Hence assume that $v'\in C$. In this case, by choice of $G,$ there is a world $v''\in G,$ such that $M,v''\models\psi,$ and $(u,v'')$ is an edge in $M$. Since $G\subseteq M',$ we can choose $v=v''$. If $u\in M\setminus S,$ then $v'\in M\setminus S$ holds as well.

Therefore, it remains to consider the case $u,v'\in C$. First, assume that there is no $i\in\set{1,\dots,n+2m}$ such that $u,v'\in N_i$. Since $u\in C,$ it follows that $u\in N_i$ for some $i$. Since $v'\notin N_i,$ $N_i$ must be $\md\phi$-canonical, since all other $N_i$ contain all successors from $C$ to nodes in $N_i$. It follows that $u\in M^C_{g_i},$ and therefore $v\in M^C_{g_i}$ holds as well. Since $v'\notin N_i,$ there is a non-$\md\phi$-canonical submodel $N_j$ such that $v'\in N_j=M^C_{n(g_{j-m})},$ and hence, due to Fact~\ref{prooffact:non canonical models are union over restricted levels}, there is some $j'\in\set{0,\dots,j(n(g_{j-m}))-i(n(g_{j-m}))}$ such that $v'\in L^{n(g_{j-m})}_{j'}$. In this case, a node $v$ fulfilling the requirements has been added to the model $M'$ due to the last condition in the construction. Now assume that there is some $i\in\set{1,\dots,n+2m},$ such that $u,v'\in N_i$. If it is possible to choose this $i$ in such a way that $N_i$ is not $\md\phi$-canonical, then we do so. 

In particular, since $(u,v')$ is an edge in $M,$ there is some natural number $j$ such that $u\in L^{g_i}_j,$ and $v'\in L^{g_i}_{j+1}$. If it is only possible to choose an $i$ leading to a $\md\phi$-canonical model $N_i,$ then choose $i$ and $j$ in such a way that $j$ is minimal with this property. We make a case distinction.

\begin{description}
\item[Case 1: $i\in\set{n+m+1,\dots,n+2m},$ i.e., $N_i$ is not canonical.]{By Fact~\ref{prooffact:non canonical models are union over restricted levels}, due to the minimality of $j,$ we know that $j\leq j(g_{i-m})-i(g_{i-m})$. Now, since $u\in L^{g_{i}}_{j}$ and $v'\in L^{g_{i}}_{j+1},$ it follows that $j+1\leq j(g_{i-m})-i(g_{i-m})+1$. Hence, by construction there is a $v\in L^{g_{i}}_{j+1}\cap M',$ such that $M,v\models\psi,$ and due to Fact~\ref{prooffact:x in L(g)(i) y in L(g)(i+1) gives edge (x,y)}, there is an edge $(u,v)$ in $M.$}
\item[Case 2: $i\in\set{1,\dots,n+m},$ i.e., $N_i$ is $\md\phi$-canonical.] {If $j+1\leq\md\phi+\mathtext{\textit{maxline}}+1,$ then, due to the construction of $M',$ a world $v$ from $L^{g_i}_{j+1}$ satisfying $\psi$ was added in the construction, and due to Fact~\ref{prooffact:x in L(g)(i) y in L(g)(i+1) gives edge (x,y)}, $(u,v)$ is an edge in $M.$

Therefore, assume that $j+1>\md\phi+\mathtext{\textit{maxline}}+1,$ i.e., $j\ge\md\phi+\mathtext{\textit{maxline}}+1$. Since $N_i$ is $\md\phi$-canonical, we know that $u$ cannot be reached from $g_i$ with a path shorter than $\md\phi$ steps. Due to the choice of $i$ and $j,$ we also know that $u$ does not appear in a non-$\md\phi$-canonical submodel (otherwise, since the non-$\md\phi$-canonical submodels are successor-closed there would be a non-$\md\phi$-canonical submodel $N_i$ containing both $u$ and $v',$ and we would have chosen this), and that in each $\md\phi$-canonical submodel where $u$ appears, it has depth of at least $\md\phi+1$. 

If $\hvarphi$ is satisfied on every strict line, we know that every node from $C$ can be reached from an element in $G$ with at most $\md\phi$ steps, which is a contradiction.

Now consider the case that $\hvarphi$ is not satisfied on every strict line. Since in the $\md\phi$-canonical submodels, strict lines having the length of the depth of the submodel appear, and $\hvarphi$ is satisfied in $M,$ we know that these submodels cannot have depth of more than $\mathtext{\textit{maxline}}$. Therefore, we know that $j+1\leq\mathtext{\textit{maxline}},$ a contradiction.
}
\end{description}

Hence, we know that $M',w\models\phi,$ concluding the proof of Theorem~\ref{theorem:above k to k+1 leads to np}.
\end{proof}

We will now show that Theorem~\ref{theorem:above k to k+1 leads to np} implies \NP-results for a number of related logics. The following theorem shows that modal logics for classes of frames which fulfill a natural generalization of the Euclidean property have the polynomial-size model property, and hence can be solved in \NP. Note that the case $k=l=1$ of Corollary~\ref{corollary:main np result corollary} follows from the main result of~\cite{hare07}. Our results and theirs are incomparable: They achieve the \NP-result for \emph{all} normal modal logics extending what in our notation is $\K{\hvarphi^{1\rightarrow 1}},$ where our results only hold for logics defined by universal Horn clauses (although the proof of Theorem~\ref{theorem:above k to k+1 leads to np} in most cases gives the \NP-result for all extensions of the logic which are defined by universal formulas over the frame language---the only exception is the case where the graph formula is satisfied on every strict line, note that this case needed special treatment in the proof of Theorem~\ref{theorem:above k to k+1 leads to np}, and relies on the tree-like property for these logics proven later in Theorem~\ref{theorem:tree-like models for horn logics}). We achieve \NP-results for many logics which are not extensions of $\K{\hvarphi^{1\rightarrow 1}},$ but do not prove these results for all extensions of these logics.

With the preceding theorem and the results on the implication for formulas of the form $\genvarphi klpqr,$ we obtain the following corollary:

\begin{corollary}\label{corollary:main np result corollary}
Let $\hvarphi$ be a universal Horn formula implying $\genvarphi klpqr$ for some $k,l,p,q,r\in\mathbb N,$ such that one of the following conditions holds:

\begin{itemize}
 \item $1\leq k,l.$
 \item $l=0$ and $k\ge 2.$
\end{itemize}

 Then $\K\hvarphi$ has the polynomial-size model property, and $\sat{\K\hvarphi}\in\NP.$
\end{corollary}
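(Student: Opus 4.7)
The plan is to reduce both cases of the corollary to Theorem~\ref{theorem:above k to k+1 leads to np}, which already yields the polynomial-size model property and $\NP$-membership whenever the formula at hand implies some $\genvarphi{k'}{k'+1}{p'}{q'}{r'}$ with $k' \ge 1$. Since implication of first-order formulas is transitive, it suffices to show that $\genvarphi klpqr$ itself implies some formula of that shape in each of the two cases.

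In the first case, where $1 \le k, l$, I would apply Lemma~\ref{lemma:varphi k to l implications}.1 to obtain that $\hvarphi^{k \rightarrow l}$ implies $\hvarphi^{l+k-1 \rightarrow l+k}$, noting that $l + k - 1 \ge 1$. Lifting this to the $\genvarphi{\cdot}{\cdot}{\cdot}{\cdot}{\cdot}$-level via Lemma~\ref{lemma:varphi k to l implication gives genvarphi implications} then yields that $\genvarphi klpqr$ implies $\genvarphi{l+k-1}{l+k}{p''}{q''}{r''}$ for appropriate $p'', q'', r''$, which has exactly the shape required by Theorem~\ref{theorem:above k to k+1 leads to np}.

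In the second case, where $l = 0$ and $k \ge 2$, I would first apply Lemma~\ref{lemma:genvarphi k to 0 implications}.2 to deduce that $\genvarphi k 0 p q r$ implies $\genvarphi{k-1}{k^3}{p'}{q'}{r'}$ for appropriate parameters. Since $k \ge 2$, we have $k - 1 \ge 1$ and $k^3 \ge 1$, placing this new formula in the situation of the first case. Chaining the two reductions then produces an implication of the desired form $\genvarphi{k''}{k''+1}{p'''}{q'''}{r'''}$ with $k'' \ge 1$, and Theorem~\ref{theorem:above k to k+1 leads to np} completes the argument.

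No substantive obstacle is expected here; the proof is essentially bookkeeping over parameters, with all of the mathematical content supplied by the preceding lemmas on Horn-clause implications and by Theorem~\ref{theorem:above k to k+1 leads to np}. The only mild subtlety is keeping track of the predecessor/successor bounds $p'', q'', r''$ under the chain of implications, which is handled uniformly by Lemma~\ref{lemma:varphi k to l implication gives genvarphi implications}.
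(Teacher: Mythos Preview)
Your proposal is correct and follows essentially the same approach as the paper: reduce both cases to Theorem~\ref{theorem:above k to k+1 leads to np} via Lemma~\ref{lemma:varphi k to l implications} and Lemma~\ref{lemma:varphi k to l implication gives genvarphi implications} in the first case, and via Lemma~\ref{lemma:genvarphi k to 0 implications}.2 followed by the first case in the second. The only cosmetic difference is that you cite part~1 of Lemma~\ref{lemma:varphi k to l implications} explicitly (yielding $k' = l+k-1$), whereas the paper simply asserts the existence of some $k' \ge 1$; both are equivalent here.
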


\begin{proof}
 First assume that $1\leq k,l$. Lemma~\ref{lemma:varphi k to l implications} shows that $\hvarphi^{k\rightarrow l}$ implies $\hvarphi^{k'\rightarrow k'+1}$ for some $k'\ge 1$. From Lemma~\ref{lemma:varphi k to l implication gives genvarphi implications}, we conclude that $\genvarphi klpqr$ implies $\genvarphi{k'}{k'+1}{p'}{q'}{r'}$ for some natural numbers $p',q',$ and $r'$. Therefore, $\hvarphi$ also implies $\genvarphi{k'}{k'+1}{p'}{q'}{r'},$ and due to Theorem~\ref{theorem:above k to k+1 leads to np}, $\K\hvarphi$ has the polynomial-size model property and $\sat{\K\hvarphi}\in\NP.$

For the case $l=0$ and $k\ge 2,$ observe that due to Lemma~\ref{lemma:genvarphi k to 0 implications}, $\K\hvarphi$ is also an extension of $\K{\genvarphi{k-1}{k^3}{p'}{q'}{r'}}$ for some constants $p',q',r'$. Due to the prerequisites, it holds that $1\leq k-1\leq k^3-2$. Hence, the result follows from the case above.
\end{proof}

Corollary~\ref{corollary:main np result corollary} covers all cases obtained from formulas of the form $\genvarphi klpqr$ where we possibly could expect the polynomial size property to hold: The requirement demanding that $k\ge 2$ is crucial, since the formula $\hvarphi^{1\rightarrow 0}$ is satisfied in any symmetric graph, and hence satisfiability problems for the corresponding logics are \PSPACE-hard due to Corollary~\ref{corollary:ufos satisfied in symmetric tree are pspace hard}. Additionally, logics defined by formulas of the form $\hvarphi^{0\rightarrow k}$ are \PSPACE-hard as well: This formula is satisfied in every reflexive, transitive graph, and therefore the complexity result follows from Theorem~\ref{theorem:ladner hardness cases for k-transitivity}. We therefore have proven that for Horn formulas defined by these single clauses, as soon as they are not satisfied in various combinations of strict, transitive, reflexive, and symmetric trees, they already imply the polynomial-size model property.

Up to now, we only considered the effect of a single universal Horn clause. However, there are many cases where the logics defined by individual formulas have a \PSPACE-hard satisfiability problem, but the complexity of the problem for logic defined by their conjunction drops to \NP. A well-known example for such a case is the modal logic $\logicname{K4B},$ which is the logic over graphs which are both symmetric and transitive. On their own, both of these properties lead to \PSPACE-hard logics, but their conjunction gives a logic which is in \NP. In our notation, it can easily be seen that $\logicname{K4B}=\K{\hvarphi^{1\rightarrow 0}\wedge\hvarphi^{0\rightarrow 2}}$. For this concrete example and many generalizations, the following Theorem gives this result.

\begin{theorem}\label{theorem:transitive and symmetric gives np generalization}
 Let $k\ge 2\in\mathbb{N},$ and let $p_1,q_1,r_1,p_2,q_2,r_2\in\mathbb{N}$. Let $\hvarphi$ be a universal Horn formula such that $\hvarphi$ implies $\genvarphi{1}{0}{p_1}{q_1}{r_1}\wedge\genvarphi{0}{k}{p_2}{q_2}{r_2}$. Then $\K\hvarphi$ has the polynomial size model property, and $\sat{\K\hvarphi}\in\NP.$
\end{theorem}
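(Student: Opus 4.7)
The plan is to reduce to Theorem~\ref{theorem:above k to k+1 leads to np} by establishing that $\hvarphi$ implies $\genvarphi{k-1}{k}{p}{q}{r}$ for some constants $p, q, r\in\mathbb{N}$; since $k\geq 2$ gives $k-1\geq 1$, this will immediately supply the polynomial-size model property and the \NP\ upper bound. The approach is to work inside a ``middle'' part of any graph $G\models\hvarphi$, where both $\hvarphi^{1\to 0}$ and $\hvarphi^{0\to k}$ hold as unrestricted universal formulas, and to derive the stronger formula $\hvarphi^{k-1\to k}$ there.

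Concretely, I would start from a graph $G\models\hvarphi$ and apply Proposition~\ref{prop:cut to many pred many suc part gives non-gen varphi} separately to the two conjuncts, obtaining sets $C_1,C_2\subseteq G$ with $G|_{C_1}\models\hvarphi^{1\to 0}$ and $G|_{C_2}\models\hvarphi^{0\to k}$. Setting $C:=C_1\cap C_2$ and invoking Proposition~\ref{prop:trivial subgraph property}, the subgraph $G|_C$ satisfies both universal-Horn formulas simultaneously, so inside $G|_C$ I may use symmetry and $k$-transitivity freely.

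The combinatorial heart of the argument is to show $G|_C\models\hvarphi^{k-1\to k}$. Given any prerequisite pattern $w,x_1,\ldots,x_{k-1},y_1,\ldots,y_k\in C$, I would use symmetry to reverse the $x$-branch and concatenate to obtain a directed path
\[x_{k-1}\to x_{k-2}\to\cdots\to x_1\to w\to y_1\to\cdots\to y_k\]
of length $2k-1$ inside $G|_C$. Because $k\geq 2$, this length decomposes as $k+(k-1)$: applying $\hvarphi^{0\to k}$ to the initial $k$ edges (those leading from $x_{k-1}$ through $w$ to $y_1$) produces the shortcut edge $(x_{k-1},y_1)$; the resulting path $x_{k-1}\to y_1\to\cdots\to y_k$ then has length exactly $k$, and one further application of $k$-transitivity yields the desired edge $(x_{k-1},y_k)$.

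To lift this back to the required implication $\hvarphi\Longrightarrow\genvarphi{k-1}{k}{p}{q}{r}$, I would set $p:=\max(p_1,p_2)$ and, writing $M:=\max(q_1-1,r_1,q_2,r_2-k)$, take $q:=M+(k-1)$ and $r:=M+k$; the successor/predecessor conditions in $\genvarphi{k-1}{k}{p}{q}{r}$ then force every vertex of the prerequisite pattern to have sufficient depth and height to lie in $C$, so the combinatorial claim above supplies the conclusion edge in $G$. This bookkeeping is entirely analogous to Lemma~\ref{lemma:varphi k to l implication gives genvarphi implications}. The main obstacle is the decomposition in the third paragraph, and the assumption $k\geq 2$ is precisely what makes $2k-1\geq k$ and permits the clean $k+(k-1)$ split; for $k=1$ no such reduction can succeed, consistent with the fact that symmetry alone already yields \PSPACE-hardness by Corollary~\ref{corollary:ufos satisfied in symmetric tree are pspace hard}.
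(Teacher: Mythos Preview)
Your argument is correct and is in fact cleaner than the paper's own proof. Both proofs proceed by passing to the ``middle'' subgraph where $\hvarphi^{1\to 0}$ and $\hvarphi^{0\to k}$ hold unrestricted, deriving a formula of the shape $\hvarphi^{a\to b}$ there, and then lifting. The difference lies in which $\hvarphi^{a\to b}$ is derived. The paper splits on the parity of $k$: for odd $k$ it obtains $\hvarphi^{1\to (k-1)k}$ by repeatedly applying $k$-transitivity along the $y$-branch, and for even $k$ it first uses symmetry to produce the edge $(x_2,w)$ and then, via $\hvarphi^{0\to k^2}$ (Lemma~\ref{lemma:(prop):varphi 0 to k gives varphi 0 to k^2}), obtains $\hvarphi^{2\to (k-1)k^2}$; in either case it then invokes Corollary~\ref{corollary:main np result corollary}, which internally passes through Lemma~\ref{lemma:varphi k to l implications} to reach a $\hvarphi^{k'\to k'+1}$ and finally Theorem~\ref{theorem:above k to k+1 leads to np}. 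Your route bypasses both the parity case split and Corollary~\ref{corollary:main np result corollary}: reversing the $x$-branch by symmetry and applying $k$-transitivity twice (once to $x_{k-1}\to\cdots\to w\to y_1$, once to $x_{k-1}\to y_1\to\cdots\to y_k$) gives $\hvarphi^{k-1\to k}$ directly, which already has the shape required by Theorem~\ref{theorem:above k to k+1 leads to np}. Your choice of constants $p=\max(p_1,p_2)$, $q=M+(k-1)$, $r=M+k$ with $M=\max(q_1-1,r_1,q_2,r_2-k)$ is exactly what is needed to force every vertex of the prerequisite pattern into $C_1\cap C_2$. The net effect is a shorter argument with smaller constants and no appeal to the auxiliary Lemmas~\ref{lemma:varphi k to l implications} or~\ref{lemma:(prop):varphi 0 to k gives varphi 0 to k^2}.
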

 
\begin{proof}
We first prove that $\hvarphi^{1\rightarrow 0}\wedge\hvarphi^{0\rightarrow k}$ implies $\hvarphi^{1\rightarrow (k-1)k}$ if $k$ is odd, and $\hvarphi^{2\rightarrow (k-1)k^2}$ if $k$ is even.
 Let $G$ be a graph satisfying $\hvarphi^{1\rightarrow 0}\wedge\hvarphi^{0\rightarrow k},$ and let $w=x_0,x_1$ ($,x_2$ if $k$ is even) and $w=y_0,\dots,y_{(k-1)k}$ ($,\dots,y_{(k-1)k^2}$ if $k$ is even) be nodes in $G$ such that $(x_i,x_{i+1})$ and $(y_i,y_{i+1})$ are edges for all relevant $i$. First, let $k$ be odd. Since $G\models\hvarphi^{1\rightarrow 0},$ it follows that $G$ is symmetric. 
By applying the $\hvarphi^{0\rightarrow k}$-property $k-1$ times, we get a path of length $k-1$ from $w$ to $y_{(k-1)k}$. Hence, since $(x_1,w)$ is an edge, it follows that $\gpath{x_1}{k}{y_{(k-1)k}},$ and due to the $\hvarphi^{0\rightarrow k}$-property, this implies that $(x_1,y_{(k-1)k})$ is an edge as required.
 
 Now let $k$ be even. Since $k\equiv 0\mod 2,$ and $k\ge 2,$ the symmetry of $G$ ensures that there is a path of length $k$ from $w$ to $x_2$. Hence, there is an edge $(w,x_2),$ and due to the symmetry of $G,$ an edge $(x_2,w)$. By $(k-1)$ applications of the $\hvarphi^{0\rightarrow k^2}$-property, we know that $\gpath{w}{k-1}{y_{(k-1)k^2}}$. Hence, we conclude that $\gpath{x_2}{k}{y_{(k-1)k^2}},$ and another application of the $\hvarphi^{0\rightarrow k}$-property gives the edge $(x_2,y_{(k-1)k^2}),$ as required.
 
 For the \NP-result, observe that $\genvarphi{1}{0}{p_1}{q_1}{r_1}\wedge\genvarphi{0}{k}{p_2}{q_2}{r_2}$ implies $\genvarphi{1}{0}{p}{q}{r}\wedge\genvarphi{0}{k}{p}{q}{r},$ where $p=\max(p_1,p_2),$ $q=\max(q_1,q_2),$ and $r=\max(r_1,r_2)$. Due to the above and Lemma~\ref{lemma:varphi k to l implication gives genvarphi implications}, we know that this formula implies $\genvarphi {1}{(k-1)k}{p'}{q'}{r'}$ for some $p',q',r'$ if $k$ is odd, and it implies $\genvarphi{2}{(k-1)k^2}{p''}{q''}{r''}$ for some $p'',q'',r''$ if $k$ is even. Further, if $k$ is odd, then it follows that $k\ge 3$. Hence, $(k-1)k\ge 6$. If $k$ is even, then, since $k\ge 2,$ we know that $(k-1)k^2\ge 4,$ and in both cases the \NP\ result follows from Corollary~\ref{corollary:main np result corollary}.
 \end{proof}

Theorem~\ref{theorem:transitive and symmetric gives np generalization} concludes our results about logics defined by specific Horn formulas. We now have collected all tools required to prove the main result of the paper, the complexity classification of satisfiability problems for logics defined by universal Horn formulas.

\subsection{The Main Result: A Dichotomy for Horn Formulas}\label{sect:main dichotomy result}

In this section, we show a dichotomy theorem, which classifies the complexity of the satisfiability problem for logics of the form $\K\hpsi,$ where $\hpsi$ is a conjunction of universal Horn clauses, into solvable in \NP\ and \PSPACE-hard. The classification is given in the form of the algorithm \algname\ presented in Figure~\ref{figure:classification algorithm}. In order to explain the algorithm, we need some more definitions.

For a set $\tl\subseteq\set{\texttt{refl},\texttt{symm},\texttt{trans}^k\ \vert\ k\in\mathbb{N}},$ we say that a graph $G$ satisfies the conditions of $\tl$ if it has the corresponding properties, i.e., if $\texttt{refl}\in\tl,$ then $G$ is required to be reflexive, if $\texttt{symm}\in\tl,$ then $G$ is required to be symmetric, and if $\texttt{trans}^k\in\tl,$ then $G$ is required to be $k$-transitive. A $\tl$-tree is a graph which can be obtained from a strict tree $T$ by adding exactly those edges required to make it satisfy the conditions of $\tl$ (note that this is a natural closure operator). Similarly, a $\tl$-line is the $\tl$-closure of a strict line. Finally, for a universal Horn clause $\varphi,$ let $\tl-T^{\mathrm{hom}}_{\hvarphi}$ denote the pairs $(\alpha,T)$ such that $T$ is a $\tl$-tree, and $\alpha\colon\preq\hvarphi\rightarrow T$ is a homomorphism. Intuitively, due to Proposition~\ref{prop:homomorphism and horn clauses}, this is the set of $\tl$-trees about which the clause $\hvarphi$ makes a statement, along with the corresponding homomorphisms. We first define the cases of universal Horn clauses leading to \NP-containment of the satisfiability problem:

\begin{definition}
 Let $\hvarphi$ be a universal Horn clause, and $\tl\subseteq\set{\refl,\symm,\trans^k\ \vert\ k\in\mathbb N}$. We say that $(\hvarphi,\tl)$ satisfies the \emph{\NP-case}, if one of the following occurs:
\begin{enumerate}
           \item $\conc\hvarphi=(x,y)$ for $x\neq y\in\preq\hvarphi,$ and there is $(\alpha,T)\in\tlhom$ such that there is no directed path connecting $\alpha(x)$ and $\alpha(y)$ in $T,$
           \item $\tlhom\neq\emptyset,$ and $\conc\hvarphi=\emptyset$ or the vertices from $\conc\hvarphi$ are different and not connected with an undirected path in $\preq\hvarphi$ (this also applies if $x$ or $y$ do not appear in $\preq\hvarphi$)
           \item $\conc\hvarphi=(x,y)$ for $x,y\in\preq\hvarphi,$ and there is a homomorphism $\alpha\colon\preq\hvarphi\rightarrow L,$ where $L$ is the $\tl$-line $(x_1,\dots,x_n),$ and there are $i,j$ such that $1\leq i\leq j-2$ such that $\alpha(x)=x_j$ and $\alpha(y)=x_i.$
           \item $\conc\hvarphi=(x,y)$ for $x,y\in\preq\hvarphi,$ and there exist \tl-lines $L_1=(x_0,\dots,x_{n_1})$ and $L_2=(y_0,\dots,y_{n_2})$ and homomorphisms $\alpha_1\colon\preq\hvarphi\rightarrow L_1,$ $\alpha_2\colon\preq\hvarphi\rightarrow L_2,$ such that $\alpha_1(x)=x_{i_1},$ and $\alpha_1(y)=x_{i_1-1},$ $\alpha_2(x)=y_{i_2},$ and $\alpha_2(y)=y_{i_2+k},$ where $k\ge 2.$
          \end{enumerate}
\end{definition}

We now define the properties of Horn clauses which do not lead to \NP-containment of the satisfiability problems on their own. Recalling Section~\ref{subsection:ladner}, it is natural that clauses which are satisfied in every reflexive, transitive, or symmetric tree are among these. This is captured by the following definitions: If $\hvarphi$ is a universal Horn clause with $\conc\hvarphi=(x,y)$ for $x=y$ or $x,y\in\preq\hvarphi,$ we say that $(\hvarphi,\tl)$ satisfies the \emph{reflexive case}, if $x=y$ or for every $(\alpha,T)\in\tl-T^{\textrm{hom}}_{\varphi}$ such that $(\alpha(x),\alpha(y))$ is not an edge in $T,$ it holds that $\alpha(x)=\alpha(y)$. $(\hvarphi,\tl)$ satisfies the \emph{transitive case for $k\in\mathbb{N}$}, if it does not satisfy the reflexive case, and for every $(\alpha,T)\in\tl-T^{\textrm{hom}}_{\hvarphi}$ such that $(\alpha(x),\alpha(y))$ is not an edge in $T,$ there is a path from $\alpha(x)$ to $\alpha(y)$ in $T,$ and there is some $(\alpha,T)\in\tlhom$ such that there is no edge $(\alpha(x),\alpha(y)),$ and $\alpha(y)$ is exactly $k$ levels below $\alpha(x)$ in $T$. Finally, $(\hvarphi,\tl)$ satisfies the \emph{symmetric case} if it does not satisfy the reflexive of the transitive case, and for every $(\alpha,T)\in\tlhom$ such that $(\alpha(x),\alpha(y))$ is not an edge in $T,$ there is an edge $(\alpha(y),\alpha(x))$ in $T$.

We can now state the classification theorem---the proof will follow from the individual results in this section. Note that the algorithm as stated can not be implemented directly, since it uses tests of the form if a given first-order formula is satisfied in certain infinite classes of graphs, and checks if certain elements are present in the infinite set $\tl-T^{\mathrm{hom}}_{\hvarphi}$. However, we believe that size-restrictions for the structures actually required to look at can be proven, and hence the algorithm hopefully can be implemented to give a deterministic decision procedure. However, the main usage of the algorithm is to show more general classification theorems, as we will see in Corollary~\ref{corollary:horn cunjunction without algorithm}.

\begin{theorem}\label{theorem:horn conjunction classification}
 Let $\psi$ be a conjunction of universal Horn clauses. Then the complexity of $\sat{\K\psi}$ is correctly determined by \algname.
\end{theorem}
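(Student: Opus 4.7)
\medskip
\noindent\textbf{Proof plan.}
The plan is to argue separately about the two possible outputs of \algname{}, showing soundness of each branch. The algorithm conceptually does two things in parallel: for each Horn clause $\hvarphi$ in $\psi$ it either (a) witnesses, via some pair $(\alpha,T)\in\tlhom$ and the shape of $\conc{\hvarphi}$, one of the four \NP-cases, or (b) classifies $(\hvarphi,\tl)$ into the reflexive, transitive-for-$k$, or symmetric case, in which cases the effect of $\hvarphi$ on models is already accounted for by the closure operations recorded in $\tl$. The $\tl$-list is built up so that adding a clause of a tree-case to $\tl$ keeps the invariant that $\psi$ is satisfied on every $\tl$-tree.

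First I would dispatch the \PSPACE-hardness branch. Here the algorithm terminates because every clause of $\psi$ fell into the reflexive/transitive/symmetric case relative to the final $\tl$, which by construction of the three cases exactly says that $\psi$ holds on every $\tl$-tree. Since $\tl$ is a subset of $\{\refl,\symm,\trans^k\mid k\in\mathbb{N}\}$, the resulting class of trees is of one of the six shapes listed in Theorem~\ref{theorem:ladner hardness cases for k-transitivity} (strict, reflexive, $S$-transitive, symmetric, reflexive+symmetric, reflexive+$S$-transitive), so that theorem together with Corollary~\ref{corollary:ufos satisfied in symmetric tree are pspace hard} yields \PSPACE-hardness of $\sat{\K\psi}$.

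For the \NP\ branch I would argue that some clause $\hvarphi$ of $\psi$ witnesses an \NP-case, and I would translate that witness into a $\genvarphi{k}{l}{p}{q}{r}$ implied by $\hvarphi$ so that Corollary~\ref{corollary:main np result corollary} or Theorem~\ref{theorem:transitive and symmetric gives np generalization} applies to $\psi$ itself (note $\psi$ implies $\hvarphi$, hence any implication of $\hvarphi$ is an implication of $\psi$). Concretely: in Case~1, the homomorphism $\alpha$ from $\preq\hvarphi$ into a $\tl$-tree with $\alpha(x),\alpha(y)$ on two distinct branches gives an embedding witnessing that $\hvarphi$ implies $\genvarphi{k}{l}{p}{q}{r}$ for $k,l\ge 1$ equal to the depths of $\alpha(x),\alpha(y)$ below their join, which is the first bullet of Corollary~\ref{corollary:main np result corollary}. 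Case~2 is handled analogously, producing an empty-conclusion clause that is already in \NP\ via the same reduction pattern. Cases~3 and~4 provide line-homomorphisms placing $\alpha(x),\alpha(y)$ at suitable level differences on a $\tl$-line: Case~3 gives a $\hvarphi^{k\to 0}$-type formula with $k\ge 2$, the second bullet of Corollary~\ref{corollary:main np result corollary}; Case~4 combines a symmetric-looking witness (level difference $-1$) with a forward witness at distance $\ge 2$, producing precisely a conjunction of the form $\genvarphi{1}{0}{\cdot}{\cdot}{\cdot}\wedge\genvarphi{0}{k}{\cdot}{\cdot}{\cdot}$ with $k\ge 2$, where Theorem~\ref{theorem:transitive and symmetric gives np generalization} applies.

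The main obstacle, and where I would spend most of the technical work, is this translation from ``existence of a certain homomorphism into a $\tl$-tree or $\tl$-line'' to ``$\hvarphi$ implies $\genvarphi{k}{l}{p}{q}{r}$ for explicit parameters.'' The key tool is Proposition~\ref{prop:homomorphism and horn clause implication}: a homomorphism $\alpha\colon\preq{\hvarphi_1}\to\preq{\hvarphi_2}$ respecting conclusion edges yields implication, so I would exhibit, for each NP-case witness, a target Horn clause of the form $\genvarphi{k}{l}{p}{q}{r}$ whose prerequisite graph is the $\tl$-tree/line fragment around $\alpha(x),\alpha(y)$ and use $\alpha$ itself as the homomorphism. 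A subtle point is that when $\tl$ contains $\refl$, $\symm$, or $\trans^k$, the $\tl$-tree contains extra edges beyond the strict tree, and the choice of $\genvarphi{k}{l}{p}{q}{r}$ must use an underlying strict subtree inside the $\tl$-tree; by Lemma~\ref{lemma:varphi k to l implication gives genvarphi implications} and Lemma~\ref{lemma:genvarphi k to 0 implications} the parameters can be adjusted freely, which is what makes the argument go through.
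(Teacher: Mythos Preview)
Your outline captures the two-branch structure correctly and the \PSPACE\ branch is essentially the paper's Lemma~\ref{lemma:horn conjunction algorithm correct on PSPACE}. But the \NP\ branch has a genuine gap, and you have also skipped two structural obligations that the paper handles separately.

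\textbf{The central gap.} You write ``$\psi$ implies $\hvarphi$, hence any implication of $\hvarphi$ is an implication of $\psi$,'' and then argue that the \NP-case witness for $\hvarphi$ yields an implication $\hvarphi\Rightarrow\genvarphi{k}{l}{p}{q}{r}$. This is false in general: the witness homomorphism $\alpha$ maps $\preq\hvarphi$ into a \emph{$\tl$-tree}, which contains reflexive, symmetric, or $k$-transitive edges not present in any strict tree. The clause $\hvarphi$ alone says nothing about graphs lacking those extra edges, so Proposition~\ref{prop:homomorphism and horn clause implication} does not give you $\hvarphi\Rightarrow\genvarphi{k}{l}{p}{q}{r}$. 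What is actually needed is that $\psi$ \emph{as a whole} forces the properties recorded in $\tl$ (for nodes of sufficient depth and height); only then can the $\tl$-tree witness be replayed inside an arbitrary $\psi$-model. This is precisely Lemma~\ref{lemma:horn conjunction algorithm adds correctly to types-list} in the paper, proved by induction on the additions to $\tl$, and it is the technical heart of the \NP\ correctness argument. Your proposed fix---``use an underlying strict subtree inside the $\tl$-tree'' and adjust parameters via Lemmas~\ref{lemma:varphi k to l implication gives genvarphi implications} and~\ref{lemma:genvarphi k to 0 implications}---does not work, because the homomorphism $\alpha$ may genuinely require the $\tl$-closure edges, and those lemmas only let you trade one $\genvarphi{\cdot}{\cdot}{\cdot}{\cdot}{\cdot}$ for another once you already have one. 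The same lemma is also what makes the other \NP\ output (when $\{\symm,\trans^k\}\subseteq\tl$) reduce to Theorem~\ref{theorem:transitive and symmetric gives np generalization}.

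\textbf{Termination and well-definedness.} You do not address why \algname\ halts. Since $\tl$ can acquire infinitely many $\trans^k$ tokens in principle, termination is not obvious; the paper devotes Lemmas~\ref{lemma:test case 1}--\ref{lemma:test case 5} and Lemma~\ref{lemma:algorithm adds only finitely many transitivity conditions} to a number-theoretic argument (via $\gcd$'s of the $k_i-1$) bounding how many non-redundant transitivity conditions can be added. You also do not verify that, whenever a clause fails on some $\tl$-tree and no \NP-case applies, one of the reflexive/transitive/symmetric cases must hold (Lemma~\ref{lemma:big classification algorithm is well-defined}); your sentence ``the $\tl$-list is built up so that adding a clause of a tree-case to $\tl$ keeps the invariant that $\psi$ is satisfied on every $\tl$-tree'' is not the right invariant---the algorithm adds to $\tl$ exactly when $\psi$ is \emph{not} satisfied on every current $\tl$-tree.
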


A first look at the algorithm in Figure~\ref{figure:classification algorithm} reveals that it is obviously necessary to prove that the choices that the algorithm has to make always can be made: in the relevant situations, at least one of the ``reflexive,'' ``transitive,'' or ``symmetric'' conditions occurs. However, before starting with the proof, we explain the general idea of the algorithm and give an example for a logic which \algname\ proves to have a satisfiability problem in \NP.

\begin{figure}
\begin{algorithmic}[1]
 \STATE{$\tl:=\emptyset$}
 \WHILE{not done}
   \IF{every clause in $\hpsi$ is satisfied on every $\tl$-tree}
      \STATE{$\sat{\K\hpsi}$ is \PSPACE-hard}
   \ENDIF
   \STATE{Let $\hvarphi$ be a clause in $\hpsi$ not satisfied on every $\tl$-tree}
   \IF{$(\hvarphi,\tl)$ satisfies the \NP-case}
          \STATE{$\K\hpsi$ has the polynomial-size model property, and $\sat{\K\hpsi}\in\NP.$}
     \ELSE
       \STATE{$\conc{\hvarphi}=(x,y)$ for $x=y$ or $x,y\in\preq\hvarphi$}
        \IF{$(\hvarphi,\tl)$ satisfies the reflexive case}
          \STATE{$\tl:=\tl\cup\set{\texttt{refl}}$}
        \ELSIF{$(\hvarphi,\tl)$ satisfies the transitive case for $k\ge 2$ }
          \STATE{$\tl:=\tl\cup\set{\texttt{trans}^k}$}
        \ELSIF{$(\hvarphi,\tl)$ satisfies the symmetric case}
          \STATE{$\tl:=\tl\cup\set{\texttt{symm}}$}
        \ENDIF
     \ENDIF
     \IF{for some $k,$ $\set{\texttt{symm},\texttt{trans}^k}\subseteq\tl$}
       \STATE{$\K\hpsi$ has the polynomial-size model property, and $\sat{\K\hpsi}\in\NP.$}
    \ENDIF
  \ENDWHILE
\end{algorithmic}
\caption{The algorithm \algname}
\label{figure:classification algorithm}
\end{figure}

In the variable $\tl,$ the algorithm maintains a list of implications of the formula $\psi$. For example, \algname\ puts $\texttt{refl}$ into $\tl$ if it detects the formula $\psi$ to require a graph satisfying it to be ``near-reflexive'' (meaning, reflexive in all nodes with sufficient height and depth). Similarly, $\texttt{symm}\in\tl$ means that $\hpsi$ requires a graph to be ``near-symmetric,'' and $\texttt{trans}^k\in\tl$ means that $\hpsi$ requires a graph to be ``near-$k$-transitive'' (this will be made precise in Lemma~\ref{lemma:horn conjunction algorithm adds correctly to types-list}).

If at one point \algname\ detects that a clause $\hvarphi$ satisfies one of the \NP-conditions, then this means that the clause $\hvarphi,$ in addition with the requirements kept in $\tl,$ implies a graph-property which leads to the polynomial-size model property. It is well-known that the modal logic over the class of frames which are both transitive and symmetric has a satisfiability problem in \NP. Generalizing this, when \algname\ detects that $\hpsi$ requires ``near-symmetry'' and ``near-$k$-transitivity,'' this also leads to the polynomial-size model property and thus to \NP-membership of the satisfiability problem, applying Theorem~\ref{theorem:transitive and symmetric gives np generalization}.

\begin{wrapfigure}[9]{l}{6.5cm}
\includegraphics[scale=0.75]{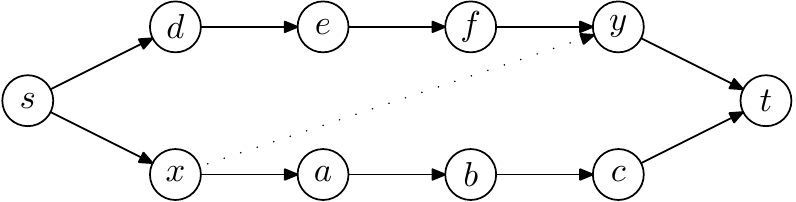}
\caption{Example formula}
\label{fig:applications section example graph 1}
\end{wrapfigure}
As an example, let $\hvarphi$ be the universal Horn clause with prerequisite graph as shown in Figure~\ref{fig:applications section example graph 1}, with $\conc\hvarphi=(x,y),$ and let $\hpsi$ be the Horn formula having $\hvarphi$ as its only conjunct. The algorithm \algname\ starts with $\tl=\emptyset,$ and hence in its first iteration, checks if $\hvarphi$ is satisfied in every strict tree. This is not the case, as Figure~\ref{fig:applications section example graph 2} shows (here, we simply marked each note with the names of the vertices which are preimages of the homomorphism): This is a homomorphic image of $\preq\hvarphi$ as a strict line (which in particular is a strict tree), in which the images of $x$ and $y$ are not connected with an edge. Therefore, $\hvarphi$ is not satisfied in this line, and hence not in every strict tree. However, it is clear that if we want to map $\preq\hvarphi$ into a strict tree, then all the vertices between $s$ and $t$ must be ``pairwise identified,'' just like in Figure~\ref{fig:applications section example graph 2}. 

\begin{wrapfigure}[8]{l}{6.5cm}
\includegraphics[scale=0.75]{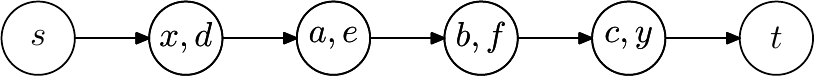}
\caption{Homomorphic image as strict line}
\label{fig:applications section example graph 2}
\end{wrapfigure}
Therefore the transitive case is satisfied, and the figure shows that this is true for $k=3$ (among possibly others). Hence \algname\ adds the element $\trans^3$ to $\tl$. Next it checks if $\hvarphi$ is satisfied in every $\set{\trans^3}$-tree. This again is not the case, and the homomorphic image of $\preq\hvarphi$ as a $\set{\trans^3}$-line in Figure~\ref{fig:applications section example graph 3} shows that $\hvarphi$ satisfies \NP-condition $3$ (we only included those lines added by the $\trans^3$-closure that are required for our function to be a homomorphism). Therefore, the logic $\K\hvarphi$ has the polynomial-size model property, and its satisfiability problem is in \NP.

This example demonstrates that in the run of \algname, a clause $\hvarphi$ can meet different cases depending on the content of the variable $\tl:$ In the situation that $\tl=\emptyset,$ the clause $\hvarphi$ satisfies the transitive case, but when $\tl=\set{\trans^3},$ this is no longer the case.

\begin{figure}
\includegraphics[scale=0.75]{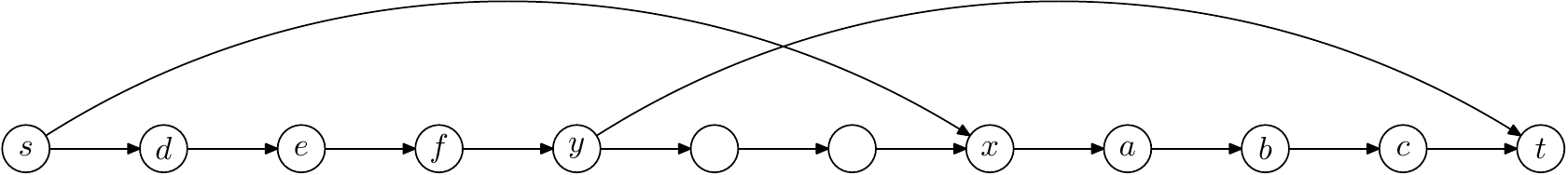}
\caption{Homomorphic image as $\set{\trans^3}$-line}
\label{fig:applications section example graph 3}
\end{figure}

As mentioned above, the first fact that we need to prove about \algname\ is that it is well-defined, that is, the case distinction between the ``reflexive,'' ``transitive'' and ``symmetric'' cases is complete.

\begin{lemma}\label{lemma:big classification algorithm is well-defined}
\algname\ is well-defined: in all relevant situations, at least one of the ``reflexive,'' ``transitive,'' and ``symmetric'' cases occurs.
\end{lemma}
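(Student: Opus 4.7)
The plan is to show that under the standing hypothesis---namely, $\hvarphi$ is a clause not satisfied on every $\tl$-tree and the NP-case fails for $(\hvarphi, \tl)$---at least one of the reflexive, transitive, or symmetric cases applies. The approach is a structured case analysis that peels off the various NP-subcases in turn.

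First, I will fix the syntactic shape of the conclusion. Since $\hvarphi$ is not satisfied on every $\tl$-tree, the set $\tlhom$ contains at least one ``bad'' pair $(\alpha_0, T_0)$, where the putative conclusion edge is missing in $T_0$. Combining $\tlhom \ne \emptyset$ with the failure of NP-case 2 forces $\conc\hvarphi = (x, y)$ with either $x = y$ or $x, y \in \preq\hvarphi$ joined by an undirected path in $\preq\hvarphi$. If $x = y$, the reflexive case applies immediately by definition, so I assume throughout the remainder that $x \ne y$ and both lie in $\preq\hvarphi$.

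In the main case, the failure of NP-case 1 implies that for every bad pair $(\alpha, T)$, the images $\alpha(x)$ and $\alpha(y)$ are connected by a directed path in $T$ (taking the length-$0$ path when they coincide). I then split into three subcases: (i) every bad pair satisfies $\alpha(x) = \alpha(y)$, yielding the reflexive case; (ii) every bad pair has a forward directed path from $\alpha(x)$ to $\alpha(y)$ in $T$ (with the trivial length-$0$ path when $\alpha(x) = \alpha(y)$), yielding the transitive case with $k \ge 2$ taken as the forward length of a fixed witness with $\alpha(x) \ne \alpha(y)$ (necessarily at least $2$, because forward length $1$ would make the conclusion edge already present); (iii) some bad pair $(\alpha_1, T_1)$ has $\alpha_1(x) \ne \alpha_1(y)$ and only a directed backward path from $\alpha_1(y)$ to $\alpha_1(x)$. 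In subcase (iii) I claim the symmetric case holds.

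The main obstacle lies in subcase (iii), where the definition of the symmetric case demands a reverse \emph{edge} $(\alpha(y), \alpha(x))$ in every bad pair, not merely a reverse path. To secure this uniformly, I invoke the failure of NP-cases 3 and 4. A bad pair whose reverse path has length $\ge 2$ can be composed with a natural tree-to-line projection (collapsing sibling branches of the underlying strict tree into a single linear spine containing $\alpha(x)$ and $\alpha(y)$) to yield a homomorphism to the appropriate $\tl$-line, witnessing NP-case 3---a contradiction. Similarly, any coexistence of a one-step-backward bad pair alongside a forward-path bad pair of length $\ge 2$ would expose NP-case 4 after linearizing each. These exclusions force every bad pair in subcase (iii) to carry a direct backward edge, establishing the symmetric case. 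A technical wrinkle is that bad pairs with $\alpha(x) = \alpha(y)$ and no self-loop could in principle obstruct the symmetric condition; but the existence of a backward-only bad pair already precludes $\symm \in \tl$, so any bad pair identifying $x$ and $y$ would yield (via the undirected connectedness established in the first step and acyclicity of $\symm$-free $\tl$-trees) a linearizable witness again contradicting NP-case 3. This rules out the mixed scenario and completes the case analysis.
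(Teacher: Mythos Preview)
Your overall architecture mirrors the paper's proof closely: both arguments hinge on linearizing bad pairs (projecting $\tl$-trees onto $\tl$-lines) and showing that the resulting configurations trigger NP-case~3 or NP-case~4. The paper runs this as a single proof by contradiction (assume all seven cases fail), while you do a direct three-way split; that difference is purely cosmetic.

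There is, however, a genuine gap in your subcase~(iii). You correctly identify the ``technical wrinkle'': to establish the symmetric case you must show that \emph{every} bad pair carries a reverse edge $(\alpha(y),\alpha(x))$, and a bad pair with $\alpha(x)=\alpha(y)$ at an irreflexive vertex would violate this. Your proposed fix---that such a pair ``would yield \dots\ a linearizable witness again contradicting NP-case~3''---does not work. Linearizing a pair with $\alpha(x)=\alpha(y)$ sends both endpoints to the \emph{same} node of the line, and NP-case~3 explicitly requires $\alpha(y)=x_i$ and $\alpha(x)=x_j$ with $i\le j-2$. Neither the undirected connectedness of $x,y$ in $\preq\hvarphi$ nor acyclicity of the tree produces the required separation.

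The paper closes this gap via NP-case~1, not NP-case~3. Its intended reading of ``no directed path connecting $\alpha(x)$ and $\alpha(y)$'' is that the two images are not in an ancestor--descendant relationship; in particular, the degenerate situation $\alpha(x)=\alpha(y)$ at an irreflexive vertex of a $\symm$-free $\tl$-tree \emph{does} witness NP-case~1 (this is exactly how the NP-case~1 branch of the later correctness lemma is argued). By contrast, you explicitly adopt the opposite convention (``taking the length-$0$ path when they coincide''), which makes failure of NP-case~1 too weak to exclude such pairs. The repair is to (a) drop your length-$0$ convention and observe that a bad pair with $\alpha(x)=\alpha(y)$ irreflexive already triggers NP-case~1 when $\refl\notin\tl$, and (b) note that when $\refl\in\tl$ such a pair cannot be bad at all since the self-loop is present. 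With that adjustment your subcase~(iii) goes through and the rest of your argument is sound.
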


\begin{proof}
 Assume that the algorithm is not well-defined, and let $\hpsi$ be an instance for which the algorithm behavior is unspecified. Let $\hvarphi$ be a clause for which none of the \NP-conditions and none of the reflexive, transitive, and symmetric conditions hold. Let $\tl$ be as determined by \algname\ when encountering $\hvarphi.$

Since the algorithm only chooses $\hvarphi$ if $\hvarphi$ is not satisfied on every $\tl$-tree, we know that due to Proposition~\ref{prop:homomorphism and horn clauses}, there is some $(\alpha,T)\in\tl-T^{\textrm{hom}}_{\hvarphi}$. Now assume that $\conc\hvarphi=\emptyset$ or $\conc\hvarphi=(x,y)$ for $x\neq y,$ and $x,y$ do not both appear in $\preq\hvarphi$. Then \NP-condition $2$ is satisfied, and we have a contradiction. Therefore we know that $\hvarphi$ is a universal Horn clause such that $\conc\hvarphi=(x,y)$ for some variables $x,y$ such that $x=y$ or $x,y\in\preq\hvarphi$ Since the reflexive case does not apply, we know that $x\neq y,$ and hence $x,y\in\preq\hvarphi.$

Since $\NP$-condition $1$ does not apply, we know that for every $(\alpha,T)\in\tl-T^{\mathrm{hom}}_{\hvarphi},$ there is a directed path connecting $\alpha(x)$ and $\alpha(y).$

Since the transitive case does not apply, we know that there is some $(\alpha,T)\in\tl-T^{\mathrm{hom}}_{\hvarphi}$ in which there is no path from $\alpha(x)$ to $\alpha(y)$. We therefore know, by the above, that there is a path from $\alpha(y)$ to $\alpha(x)$ in $T$. Hence, $T$ is not symmetric. Since $T$ is a $\tl$-tree, we know that $\texttt{symm}\notin\tl.$

 Since there is a path from $\alpha(y)$ to $\alpha(x)$ in $T$ but not vice versa, we know that $\alpha(x)\neq\alpha(y),$ and since there is a path from $\alpha(y)$ to $\alpha(x)$ in the (possibly reflexive and/or $S$-transitive for some set $S$) tree $T,$ this implies that $\alpha(x)$ is on a lower level in $T$ than $\alpha(y)$. Let $k$ denote the difference in levels between $\alpha(x)$ and $\alpha(y),$ then $k\ge 1$. By mapping $T$ into a $\tl$-line $L_1=(x_0,\dots,x_{n_1})$ with the homomorphism $\beta$ assigning each vertex $u$ the element $x_i,$ where $i$ is the level of $u$ in $T,$ we can, using the homomorphism $\alpha_1:=\beta\circ\alpha,$ map $\preq\hvarphi$ homomorphically into $L_1$ in such a way that $\alpha_1(x)=x_i,$ and $\alpha_1(y)=x_{i-k},$ where $i$ denotes the level of $\alpha(x)$ in $T.$

If $k\ge 2,$ then the line $L_1$ satisfies \NP-condition $3,$ a contradiction. Therefore, we know that $k=1,$ and thus $\alpha_1$ and $L_1$ satisfy \NP-condition $4$. 

Since $\hvarphi$ also does not satisfy the symmetric case, we know that there is some $(\alpha,T)\in\tlhom$ such that there is no edge $(\alpha(x),\alpha(y))$ and no edge $(\alpha(y),\alpha(x))$ in $T$. By $\NP$-condition $1,$ we know that there is a directed path connecting $\alpha(x)$ and $\alpha(y)$ in $T$. Since $\texttt{symm}\notin\tl,$ and $T$ is a \tl-tree, we know that $T$ is not symmetric. If $\texttt{refl}\in\tl,$ and $T$ is a $\tl$-tree, this implies that $T$ is reflexive, and hence $\alpha(x)\neq\alpha(y)$. If $\texttt{refl}\notin\tl,$ then we know that, since $\texttt{symm}$ also is not an element of $\tl,$ that no node in $T$ is connected to itself with a directed path, and therefore we also know that $\alpha(x)\neq\alpha(y)$. Now assume that there is a path from $\alpha(y)$ to $\alpha(x)$ in $T$. Then the shortest of these paths must have length $k\ge 2$ (since there is no edge $(\alpha(y),\alpha(x))$ and therefore we can map $T$ (and therefore $\preq\hvarphi$) homomorphically into a line $L_1$ as in the case above, where the distance of the image of $\alpha(x)$ and $\alpha(y)$ is $k,$ and since $\alpha(y)$ is mapped to a predecessor of $\alpha(x),$ this line satisfies \NP-condition $3,$ a contradiction. Therefore we know that in $T,$ there is a directed path from $\alpha(x)$ to $\alpha(y)$. Since $(\alpha(x),\alpha(y))$ is no edge in $T,$ we know that the shortest of these paths must have length $\ge 2$. Since $T$ is not symmetric, this implies that $\alpha(y)$ must be on a lower level in $T$ than $\alpha(x),$ and the difference in levels is $\ge 2$. Similarly, to the above, we can construct a \tl-line $L_2=(x_0,\dots,x_{n_2})$ and a homomorphism $\beta\colon T\rightarrow L_2,$ such that for $v\in T,$ if $i$ is the level of $v$ in $T,$ then $\beta(v)=x_i$. Let $\alpha_2$ denote the homomorphism $\beta\circ\alpha$. Due to the choice of $k$ and the definition of $\beta,$ we know that $\alpha_2(x)=x_i,$ and $\alpha_2(y)=x_{i+k}$ for some $i,$ and hence $\alpha_2$ and $L_2$ satisfy \NP-condition $4$. Since above, we already constructed $\alpha_1$ and $L_1$ satisfying \NP-condition $4,$ this implies that $\hvarphi$ satisfies \NP-condition $4,$ a contradiction.

Also note that in the transitive case, a $k\ge 2$ can always be chosen: Since by construction of the algorithm, $\hvarphi$ is not satisfied on every \tl-tree, there is some $(\alpha,T)$ such that $(\alpha(x),\alpha(y))$ is not an edge in $T$. Since  the reflexive case does not apply, we can choose $(\alpha,T)$ in such a way that $\alpha(x)\neq\alpha(y)$. Due to the transitive case, there is a path from $\alpha(x)$ to $\alpha(y)$ in $T,$ and hence the shortest of these paths must have length of at least $2,$ hence the difference of levels is at least $2$ (we know that even if $\texttt{symm}\in\tl,$ $\alpha(y)$ must be below $\alpha(x)$ in $T,$ otherwise \NP-condition $3$ would apply).
\end{proof}

Now that we know that \algname\ is well-defined, we show that it always comes to a halt and hence, generates an answer---this is not immediate: Note that in the example formula from Figure~\ref{fig:applications section example graph 1} discussed above, we already saw that it is possible for the algorithm to revisit a clause. Therefore one might consider it possible for the algorithm to repeatedly add the same element to the variables $\tl$ without coming to a halt, or the list $\tl$ to grow infinitely. In order to show that this does not happen, we prove a bound on the number of the elements that can be added into this list during the run of the algorithm.

A first helpful tool to show this is the following Proposition, which says that there are no ``redundant'' transitivity conditions which get added to \tl.

\begin{proposition}\label{prop:algorithm does not add already implied transitivity}
 Let $\trans^k$ be added to the variable $\tl$ by \algname. Then $\tl$ did not imply $k$-transitivity before adding $\trans^k.$
\end{proposition}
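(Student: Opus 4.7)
The plan is to prove the proposition by contradiction, exploiting directly the definition of the ``transitive case'' that triggers the addition of $\trans^k$ to $\tl$. Assume that at the moment \algname\ adds $\trans^k$, the list $\tl$ already implies $k$-transitivity, meaning that every $\tl$-tree is $k$-transitive. By the algorithm, $\trans^k$ is added because some clause $\hvarphi$ (with $\conc{\hvarphi}=(x,y)$) satisfies the transitive case for $k$. Unpacking this definition, I obtain a pair $(\alpha,T)\in\tlhom$ such that $(\alpha(x),\alpha(y))$ is not an edge of $T$, yet $\alpha(y)$ lies exactly $k$ levels below $\alpha(x)$ in $T$.

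The crux is then to observe that $T$, being a $\tl$-tree, is obtained from an underlying strict tree $T_0$ by adding exactly the closure edges forced by $\tl$, and that the ``level'' of a node in $T$ is inherited from $T_0$. The hypothesis that $\alpha(y)$ is $k$ levels below $\alpha(x)$ therefore yields an honest $k$-step directed (downward) path in $T_0$, which persists in $T$ since the closure only adds edges. Hence $T\models\path{\alpha(x)}{k}{\alpha(y)}$. By the standing assumption, $T$ is $k$-transitive, so this path must be closed by a direct edge $(\alpha(x),\alpha(y))$ in $T$, contradicting the choice of $(\alpha,T)$.

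The only delicate point I anticipate is justifying that ``being $k$ levels below in $T$'' really does produce a $k$-step directed path that survives the closure operations. This follows immediately from how $\tl$-trees are defined (closure adds, never removes, edges, and levels refer to $T_0$), so the proof requires no machinery beyond the definitions of the transitive case and of $\tl$-trees themselves.
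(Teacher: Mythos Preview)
Your proposal is correct and follows essentially the same argument as the paper's own proof: assume $\tl$ already implies $k$-transitivity, extract from the definition of the transitive case a pair $(\alpha,T)\in\tlhom$ with $(\alpha(x),\alpha(y))\notin\edges{T}$ and $\alpha(y)$ exactly $k$ levels below $\alpha(x)$, then derive the contradictory edge from $k$-transitivity of the $\tl$-tree $T$. The paper states the existence of the $k$-step path without further comment; your extra paragraph about the underlying strict tree $T_0$ is a reasonable elaboration but not a different idea. One small remark: to conclude that ``$k$ levels below'' gives a $k$-step directed path in $T_0$ you also need that $\alpha(y)$ is a \emph{descendant} of $\alpha(x)$ in $T_0$, which is guaranteed by the other half of the transitive-case definition (for every such $(\alpha,T)$ there is a path from $\alpha(x)$ to $\alpha(y)$); you may want to cite that explicitly rather than relying on the level difference alone.
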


\begin{proof}
 Assume that $\tl$ already implied $\trans^k$. Since $\trans^k$ is added by \algname, there is a clause $\hvarphi$ with conclusion edge $(x,y)$ for $x,y\in\preq\hvarphi$ which satisfies the transitive condition for $k$ at this point of the algorithm's run. Therefore, there is some $(\alpha,T)\in\tlhom$ such that $(\alpha(x),\alpha(y))$ is not an edge in $T,$ and $\alpha(y)$ is exactly $k$ levels below $\alpha(x)$. Since by the assumption the conditions of $\tl$ already imply $k$-transitivity, the $k$-step path from $\alpha(x)$ to $\alpha(y)$ implies that $(\alpha(x),\alpha(y))$ is an edge in $T,$ a contradiction. 
\end{proof}

Now that we know that transitivity conditions which were already implied by the conditions present in \tl\ do not get added to the list, there is a clear strategy how we can prove that \tl\ does not grow infinitely: We first prove just how many transitivity conditions are implied by the conditions in \tl, and then show that from a certain point on, everything that will be added to \tl\ by \algname\ already is implied. In order to prove this, we need some technical results about implications of various forms of $S$-transitivity.

\begin{lemma}\label{lemma:test case 1}
 Let $k_1,k_2\in\mathbb{N}$. Then $\set{k_1,k_2}$-transitivity implies $(k_2+l\cdot (k_1-1))$-transitivity for all $l\ge 0.$
\end{lemma}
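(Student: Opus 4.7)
The plan is a straightforward induction on $l$, exploiting $k_1$-transitivity to ``fold'' paths down by $k_1 - 1$ steps at a time, starting from the base length $k_2$.

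For the base case $l = 0$, the claim $(k_2)$-transitivity is exactly one of the assumptions, so nothing needs to be shown. For the inductive step, assume $G$ is $(k_2 + l(k_1-1))$-transitive, and let $m := k_2 + l(k_1-1)$. Given a path
\[ u = w_0 \to w_1 \to \cdots \to w_{m+k_1-1} = v \]
of length $m + k_1 - 1$ in $G$, the prefix $w_0 \to w_1 \to \cdots \to w_m$ has length $m$, so the induction hypothesis yields a direct edge $(u, w_m)$ in $G$. Combining this single edge with the remaining suffix $w_m \to w_{m+1} \to \cdots \to w_{m+k_1-1}$ of length $k_1 - 1$ produces a path of length exactly $k_1$ from $u$ to $v$, so $k_1$-transitivity delivers the desired edge $(u, v)$.

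The only case requiring a brief sanity check is $k_1 \le 1$: if $k_1 = 1$ then $k_1$-transitivity is vacuous but the conclusion $(k_2 + l \cdot 0) = k_2$-transitivity coincides with the assumption for every $l$, so the statement is trivially true; if $k_1 = 0$ the formula $k_2 + l(k_1 - 1)$ eventually becomes negative and the statement is vacuous. No real obstacle arises, since the induction just iterates the obvious ``edge plus $(k_1-1)$ more steps $=$ $k_1$-step path'' trick; this is essentially the same style of ``folding'' argument used in the proof of Lemma~\ref{lemma:varphi k to l implications}.
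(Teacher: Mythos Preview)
Your proof is correct and follows essentially the same approach as the paper's: induction on $l$, with the inductive step using the induction hypothesis to collapse the first $k_2+l(k_1-1)$ steps to a single edge and then applying $k_1$-transitivity to the resulting $k_1$-step path. The paper does not discuss the degenerate cases $k_1\le 1$, but these are not needed anywhere it invokes the lemma.
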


\begin{proof}
 We show the claim by induction on $l$. For $l=0,$ this holds trivially, since $\set{k_1,k_2}$-transitivity by definition implies $k_2$-transitivity. Now assume that it holds for $l,$ let $G$ be some graph which is $\set{k_1,k_2}$-transitive, and let $u_0,\dots,u_{k_2+(l+1)\cdot (k_1-1)}$ be nodes in $G$ such that $(u_i,u_{i+1})$ is an edge for all relevant $i$. We need to show that $(u_0,u_{k_2+(l+1)(k_1-1)})$ is an edge in $G$. Since by the induction hypothesis, $\set{k_1,k_2}$-transitivity implies $(k_2+l\cdot (k_1-1))$-transitivity, we know that $G$ is $(k_2+l\cdot (k_1-1))$-transitive. Therefore there is an edge $(u_0,u_{k_2+l\cdot(k_1-1)})$ in $G$. By choice of $u_i,$ there is a $(k_1-1)$-step path from $u_{k_2+l\cdot(k_1-1)}$ to $u_{k_2+(l+1)\cdot(k_1-1)}$. Therefore there is a $k_1$-step path from $u_0$ to $u_{k_2+(l+1)\cdot(k_1-1)},$ and since $G$ is $k_1$-transitive, this implies that $(u_0,u_{k_2+(l+1)\cdot(k_1-1)})$ is an edge in $G,$ as required.
\end{proof}

\begin{lemma}\label{lemma:test case 2}
 Let $m,k_1,e_1,d\in\mathbb N$ such that $(k_1-1)=e_1\cdot d$. Then $\set{k_1,m,m+d,\dots,m+(e_1-1)\cdot d}$-transitivity implies $(m+l\cdot d)$-transitivity for all $l\ge 0.$
\end{lemma}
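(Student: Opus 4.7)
The plan is to reduce this lemma to Lemma~\ref{lemma:test case 1} by a division-with-remainder argument on the index $l$. Let $G$ be a graph satisfying $\{k_1, m, m+d, \ldots, m + (e_1-1) d\}$-transitivity, and fix $l \geq 0$. I would write $l = q \cdot e_1 + r$ with $0 \leq r < e_1$ using Euclidean division. Then $m + r d$ is an element of the base set, so $G$ is $(m+rd)$-transitive; and $G$ is also $k_1$-transitive by assumption.

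Now I would apply Lemma~\ref{lemma:test case 1} with the two parameters $k_1$ and $k_2 := m + rd$: that lemma yields $(k_2 + q(k_1 - 1))$-transitivity of $G$ for every $q \geq 0$. Substituting $k_1 - 1 = e_1 \cdot d$ from the hypothesis, the resulting transitivity index becomes
\[
m + r d + q \cdot e_1 \cdot d \;=\; m + (r + q e_1)\, d \;=\; m + l \cdot d,
\]
which is precisely what is required.

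The only step that needs any thought is choosing the decomposition $l = q e_1 + r$ so that $k_2 = m + rd$ actually lies in the base set (which forces $0 \leq r \leq e_1 - 1$); after that the proof is essentially a one-line invocation of the previous lemma. I do not foresee a genuine obstacle, since the previous lemma already packages the inductive work; the present statement just recombines those increments along an arithmetic progression with common difference $d$ rather than $k_1 - 1$, which is legitimate exactly because $k_1 - 1$ is an integer multiple of $d$.
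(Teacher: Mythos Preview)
Your proposal is correct and essentially identical to the paper's own proof: both write $l = q e_1 + r$ with $0 \le r < e_1$, observe that $m + rd$ lies in the base set, and then invoke Lemma~\ref{lemma:test case 1} with $k_2 = m + rd$ to obtain $(m + rd + q(k_1-1))$-transitivity, which equals $(m + ld)$-transitivity via $k_1 - 1 = e_1 d$.
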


\begin{proof}
 Let $l=p\cdot e_1+r$ for some $p,r\in\mathbb{N},$ $r<e_1$. Then $m+l\cdot d=m+(p\cdot e_1+r)\cdot d=m+r\cdot d+pe_1\cdot d=m+r\cdot d+p(k_1-1)$. Since $r<e_1,$ we know that $\set{k_1,m,m+d,\dots,m+(e_1-1)\cdot d}$-transitivity implies $(m+r\cdot d)$-transitivity. By Lemma~\ref{lemma:test case 1}, we know that $\set{k_1,m+r\cdot d}$-transitivity implies $(m+r\cdot d+p(k_1-1))$-transitivity, and since $k_1$-transitivity is obviously implied by the prerequisites, this proves the Lemma.
\end{proof}

\begin{lemma}\label{lemma:test case 4b}
 Let $(k_1-1)=e_1\cdot d,$ and let $n_0,\dots,n_{e_1-1}\in\mathbb{N}$ such that  $n_i\equiv n_j\mod d,$ and for $i\neq j,$ $n_i\not\equiv n_j\mod (k_1-1)$. Then there exists some $m\in\mathbb N$ such that $m\equiv n_i\mod d$ such that $\set{k_1,n_0,\dots,n_{e_1-1}}$-transitivity implies $(m+l\cdot d)$-transitivity for all $l\ge 0.$
\end{lemma}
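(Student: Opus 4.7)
The plan is to combine Lemmas~\ref{lemma:test case 1} and~\ref{lemma:test case 2} via a counting argument on residues. The key observation is that the hypotheses force the $n_i$ to form a complete system of representatives, modulo $k_1-1$, among all residues that are congruent to $n_0$ modulo $d$.

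First, I would observe that since $k_1-1 = e_1 \cdot d$, the set of residue classes modulo $k_1-1$ that are congruent to $n_0$ modulo $d$ has cardinality exactly $e_1$; these residues are $n_0, n_0+d, \dots, n_0+(e_1-1)d$ reduced modulo $k_1-1$. The hypothesis gives $e_1$ numbers $n_0, \dots, n_{e_1-1}$ which are all congruent to $n_0$ modulo $d$, yet pairwise non-congruent modulo $k_1-1$. Hence the $n_i$ realize \emph{all} of those residue classes.

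Next, I would choose $m \equiv n_0 \pmod d$ large enough that $m + j \cdot d \geq \max_i n_i$ for every $j \in \{0,\dots,e_1-1\}$ (for instance $m := n_0 + N(k_1-1)$ for sufficiently large $N$). For each such $j$, the number $m + j \cdot d$ is congruent to $n_0$ modulo $d$, so by the preceding paragraph there is some $i_j$ with $m + j \cdot d \equiv n_{i_j} \pmod{k_1-1}$. By the size choice of $m$, we can write $m + j \cdot d = n_{i_j} + c_j (k_1-1)$ with $c_j \geq 0$. Lemma~\ref{lemma:test case 1} applied with $k_2 = n_{i_j}$ and $l = c_j$ then shows that $\{k_1, n_{i_j}\}$-transitivity already implies $(m + j \cdot d)$-transitivity. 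In particular, $\{k_1, n_0, \dots, n_{e_1-1}\}$-transitivity implies $\{k_1, m, m+d, \dots, m+(e_1-1)d\}$-transitivity.

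Finally, I would apply Lemma~\ref{lemma:test case 2} directly to the set $\{k_1, m, m+d, \dots, m+(e_1-1)d\}$ to conclude $(m + l \cdot d)$-transitivity for every $l \geq 0$. The only subtle point — which I view as the main (mild) obstacle — is arranging the lower bound on $m$ so that the constants $c_j$ in the invocation of Lemma~\ref{lemma:test case 1} are nonnegative; this is easily handled by taking $m$ in the prescribed residue class modulo $d$ but as large as needed.
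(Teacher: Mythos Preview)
Your proposal is correct and follows essentially the same approach as the paper: both arguments use the counting observation that the $n_i$ exhaust the $e_1$ residue classes modulo $k_1-1$ congruent to $n_0$ modulo $d$, invoke Lemma~\ref{lemma:test case 1} to normalize, and finish with Lemma~\ref{lemma:test case 2}. The only cosmetic difference is that the paper first shifts each $n_i$ by a multiple of $k_1-1$ to a common range and then reads off $m$ and the offsets, whereas you fix a large $m$ first and match each $m+jd$ back to some $n_{i_j}$; the content is the same.
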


\begin{proof}
 Let $S:=\set{k_1,n_0,\dots,n_{e_1-1}},$ and let $n_i=a_i\cdot(k_1-1)+n_i',$ where $n_i'<k_1-1$. Due to Lemma~\ref{lemma:test case 1}, $\set{k_1,n_i}$-transitivity implies $(n_i+l\cdot (k_1-1))$-transitivity for all $l\ge 0$. Therefore we can choose $a:=\max\set{a_0,\dots,a_{e_1-1}},$ and we know that $S$-transitivity implies $(a\cdot(k_1-1)+n_i')$-transitivity. Since for all $l,$ $n_i+l\cdot (k_1-1)$ is equivalent to $n_i$ modulo $k_1-1$ and modulo $d$ (since $d$ divides $k_1-1$), we can assume without loss of generality that $n_i=a\cdot(k_1-1)+n'_i$ for all $0\leq i\leq e_1-1$. 

Let $n_i'=t_i\cdot d+r,$ where $r<d$. Such numbers $t_i$ exist, since $n_i\equiv n_j\mod d$ for all $i,j$. Since $n_i'<k_1-1,$ we know that $t_i\cdot d+r<k_1-1=e_1\cdot d,$ and therefore $t_i\leq e_1-1$ for all $0\leq i\leq e_1-1.$

Define $m:=a\cdot(k_1-1)+r$. Note that since $d$ divides $k_1-1,$ this implies that $m\equiv r\mod d$. For the $n_i$ it holds that $n_i=a\cdot(k_1-1)+n_i',$ and again, since $d$ divides $k_1-1,$ we have that $n_i\equiv n_i'\mod d$. Now since $n_i'=t_i\cdot d+r,$ it follows that $n_i'$ is equivalent to $r$ modulo $d,$ and hence $m$ and $n_i$ are equivalent modulo $d$ for all $i$. Now note that

$$n_i-m=a\cdot(k_1-1)+n'_i-a\cdot(k_1-1)-r=n'_i-r=t_i\cdot d+r-r=t_i\cdot d.$$

Therefore, since $S$-transitivity implies $n_i$-transitivity for all $i,$ we know that $S$-transitivity implies $(m+t_i\cdot d)$-transitivity for all $i$. Since $t_i\leq e_1-1,$ and for $i\neq j,$ it also holds that $t_i\neq t_j$ (otherwise it would follow that $n_i'=n_j'$ and hence $n_i=n_j,$ a contradiction), it follows that $\set{t_0,\dots,t_{e_1-1}}=\set{0,\dots,e_1-1}$. We therefore know that for each $i\in\set{0,\dots,e_1-1},$ $S$-transitivity implies $(m+i\cdot d)$-transitivity, and since $k_1\in S,$ the claim follows from Lemma~\ref{lemma:test case 2}.
\end{proof}

The following lemma is the final of our results about implied transitivity conditions. Its technical formulation hides the fact that the statement of the lemma is actually rather natural. As an example, consider the case where $k_1=5,$ and $k_2=7$. Then $\gcd(k_1-1,k_2-1)=2,$ and then the lemma says that there is some odd number $m$ such that for all odd numbers $m+2\cdot l,$ $\set{5,7}$-transitivity implies $(m+2\cdot l)$-transitivity. The key idea is that then, once $\trans^5$ and $\trans^7$ are elements of $\tl,$ we know that due to Proposition~\ref{prop:algorithm does not add already implied transitivity}, \algname\ does not add any more $\trans^k$-conditions to \tl\ anymore for odd $k$s which are greater than or equal to $m$. Therefore there are only finitely many odd $k$s such that \algname\ can add $\trans^k$ from this point on. If \algname\ would add an infinite number of $\trans^k$-conditions, then one of them must therefore be one where $k$ is even. Assume that this already happens with $k_3,$ i.e., that $k_3$ is an even number. Then $k_3-1$ is odd, and therefore the greatest common divisor of $k_1-1,k_2-1,$ and $k_3-1$ is $1$. By the lemma, we therefore know that there is some $m$ such that $\set{5,7,k_3}$-transitivity implies $m'$-transitivity for all $m'\ge m,$ and then \algname\ can only add $\trans^k$-conditions for $k\leq m',$ and this is only a finite number of possibilities. Thus we know that \algname\ must stop adding transitivity conditions at some point.

With the above proof-strategy in mind, it is clear that the statement of the following lemma is crucial in restricting the number of $\trans^k$-conditions added by \algname.

\begin{lemma}\label{lemma:test case 5}
 Let $k_1,\dots,k_n\in\mathbb{N},$ and let $d:=\gcd((k_1-1),\dots,(k_n-1))$. Then there exists some $m\in\mathbb N$ such that $m\equiv 1\mod d$ and $\set{k_1,\dots,k_n}$-transitivity implies $(m+l\cdot d)$-transitivity for all $l\ge 0.$
\end{lemma}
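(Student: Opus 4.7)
The plan is to proceed by induction on $n$, with the inductive step an application of Lemma~\ref{lemma:test case 4b}. For the base case $n=1$, we have $d = k_1 - 1$, and Lemma~\ref{lemma:test case 1} (taking $k_2 := k_1$) shows that $\{k_1\}$-transitivity implies $(k_1 + l(k_1-1))$-transitivity for all $l \ge 0$; taking $m := k_1$ gives $m \equiv 1 \mod d$, so the statement holds.

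For the inductive step, I set $d' := \gcd((k_1-1),\dots,(k_{n-1}-1))$, so that $d = \gcd(d', k_n - 1)$, and use the inductive hypothesis to obtain some $m' \equiv 1 \mod d'$ such that $\{k_1,\dots,k_{n-1}\}$-transitivity implies $(m' + l d')$-transitivity for every $l \ge 0$. Writing $K := k_n$ and $e := (K-1)/d$, so that $(K-1) = e \cdot d$, I define $n_i := m' + i \cdot d'$ for $i = 0, 1, \dots, e-1$; each $n_i$-transitivity is then already implied by $\{k_1,\dots,k_{n-1}\}$-transitivity.

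These $n_i$'s are engineered to satisfy the residue hypotheses of Lemma~\ref{lemma:test case 4b}. Since $d$ divides $d'$, each $n_i \equiv m' \equiv 1 \mod d$, so all the $n_i$'s are congruent modulo $d$. The cyclic subgroup of $\mathbb{Z}/(K-1)\mathbb{Z}$ generated by $d'$ has order $(K-1)/\gcd(K-1, d') = (K-1)/d = e$, so the elements $0, d', 2d', \dots, (e-1)d'$ are pairwise distinct modulo $K - 1$. Consequently, $n_i \not\equiv n_j \mod (K-1)$ whenever $0 \le i < j \le e - 1$.

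Applying Lemma~\ref{lemma:test case 4b} with $k_1 := K$ and the $n_i$'s just constructed yields an $m \in \mathbb{N}$ with $m \equiv n_0 \equiv 1 \mod d$ such that $\{K, n_0,\dots,n_{e-1}\}$-transitivity implies $(m + l d)$-transitivity for all $l \ge 0$. Since $\{k_1,\dots,k_n\}$-transitivity obviously implies $K$-transitivity and, by the inductive hypothesis, each $n_i$-transitivity, the desired conclusion follows. The only genuinely subtle point is the residue-class bookkeeping in the inductive step: the identity $d = \gcd(d', K-1)$ is precisely what makes the progression $m' + i d'$ sweep out exactly $e$ distinct residues modulo $K - 1$, which is what Lemma~\ref{lemma:test case 4b} requires as input.
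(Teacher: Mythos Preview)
Your proof is correct and in fact cleaner than the paper's. Both arguments proceed by induction on $n$ with the same base case, but the inductive steps differ. The paper first proves the case $n=2$ separately by constructing the numbers $n_i:=i\cdot k_2+(k_1-i)$ and feeding them into Lemma~\ref{lemma:test case 4b}; then, for the step from $n$ to $n+1$, it picks a single term $m'+l'd'$ from the arithmetic progression given by the induction hypothesis (choosing $l'$ so that $\gcd(q+l',k_{n+1}-1)=1$) and applies the already-established $n=2$ case to the pair $(m'+l'd',\,k_{n+1})$. You instead use the whole progression $n_i=m'+id'$ directly: the key observation that the additive order of $d'$ in $\mathbb Z/(K-1)\mathbb Z$ equals $(K-1)/\gcd(K-1,d')=(K-1)/d=e$ gives exactly the $e$ distinct residues modulo $K-1$ that Lemma~\ref{lemma:test case 4b} needs, so one invocation of that lemma finishes the step without a separate $n=2$ argument or the auxiliary coprimality choice. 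Your route is shorter and makes the role of $d=\gcd(d',K-1)$ more transparent; the paper's detour through $n=2$ buys nothing extra.
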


\begin{proof}
Let $S=\set{k_1,\dots,k_n}$. We show the claim inductively on $n$. If $n=1,$ then $d=k_1-1,$ and by Lemma~\ref{lemma:test case 1}, $\set{k_1,k_1}$-transitivity implies $(k_1+l\cdot(k_1-1))$-transitivity for all $l\ge 0,$ hence the claim follows with $m=k_1.$

Since we need the case $n=2$ explicitly in the induction, we prove it individually. Hence let $n=2,$ and let $(k_i-1)=e_i\cdot d,$ where $\gcd(e_1,e_2)=1$. 

From Lemma~\ref{lemma:test case 1}, we know that $\set{k_1,k_2}$-transitivity implies $k_1+i\cdot(k_2-1)=i\cdot k_2+k_1-i$-transitivity for all $i\ge 0$. Now note that for all $0\leq i\leq e_1-1:$

$$i\cdot k_2+(k_1-i)=i\cdot(k_2-1)+(k_1-1)+1=i\cdot e_2\cdot d+e_1\cdot d+1=d\cdot (i\cdot e_2+e_1)+1=:n_i.$$

Due to Lemma~\ref{lemma:test case 4b}, it suffices to show that all of these $n_i$ are equivalent to $1$ modulo $d,$ and for $0\leq i<j\leq e_1-1,$ $n_i\not\equiv n_j\mod (k_1-1)$. It is obvious that all $n_i$ are equivalent to $1$ modulo $d$. Now assume that there are $0\leq i<j\leq e_1-1$ such that $n_i\equiv n_j\mod (k_1-1)$. Then the following holds:

$$
\begin{array}{rll}
 d\cdot(j\cdot e_2+e_1) +1 & \equiv d\cdot (i\cdot e_2+e_1) +1 & \mod (k_1-1) \\
 d\cdot(j\cdot e_2+e_1) -d(i\cdot e_2+e_1) & \equiv 0  & \mod (k_1-1) \\
 d\cdot (j\cdot e_2-i\cdot e_2) & \equiv 0  & \mod (k_1-1) \\
 d\cdot e_2\cdot (j-i)& \equiv 0  & \mod (k_1-1) \\
 d\cdot e_2\cdot(j-i) & = t\cdot d\cdot e_1 &\mathtext{ for some }t \\
 e_2\cdot(j-i) & = e_1\cdot t \\
 e_2\cdot(j-i) & \equiv 0 & \mod e_1 \\
 j-i & \equiv 0 & \mod e_1,\mathtext{ since }\gcd(e_1,e_2)=1

\end{array}
$$

This is a contradiction, since due to the above, we know that $0<j-i\leq e_1-1$. This completes the proof for the case $n=2.$

Now inductively assume that the lemma holds for $n\ge 2$. Let $d':=\gcd(k_1-1,\dots,k_n-1)$. Recall that $d=\gcd(k_1-1,\dots,k_{n+1}-1)$. We first show that $d=\gcd(k_{n+1}-1,d'),$ which is a standard fact from number theory. Obviously, $d$ divides both $k_{n+1}$ and $d',$ and hence $d\ \vert\ \gcd(k_{n+1}-1,d')$. On the other hand, let $g$ be a common divisor of $k_{n+1}-1$ and $d'$. Since $g$ divides $d',$ $g$ is also a divisor of $k_1-1,\dots,k_n-1,$ and since $g$ divides $k_{n+1}-1,$ it is a common divisor of all $k_i-1$. Therefore, $g\ \vert\ d$. In particular, this holds for $g=\gcd(k_{n+1}-1,d'),$ and therefore $\gcd(k_{n+1}-1,d')\ \vert\ d$. Since the other direction holds due to the above, we have shown that $d=\gcd(k_{n+1}-1,d').$

Due to the induction hypothesis, we know that there is a natural number $m'$ such that $m'\equiv 1\mod d',$ and for all $l'\ge 0,$ $S$-transitivity implies $(m'+l'\cdot d')$-transitivity. Let $m'=q\cdot d'+1,$ and let $l'$ be a natural number such that $\gcd(q+l',k_{n+1}-1)=1$. Then, by choice of $m',$ we know that $S$-transitivity implies $(m'+l'\cdot d')$-transitivity. Let $e:=\gcd(m'+l'\cdot d'-1,k_{n+1}-1)$. Since $S$-transitivity also implies $k_{n+1}$-transitivity, we know from the case $n=2$ that there exists a natural number $m$ such that $m\equiv 1\mod e,$ and for all $l\ge 0,$ $S$-transitivity implies $(m+l\cdot e)$-transitivity. In order to prove the lemma, it therefore suffices to show that $e=d$. It holds that 
$$e=\gcd(m'+l'\cdot d'-1,k_{n+1}-1)=\gcd(q\cdot d'+1+l'\cdot d'-1,k_{n+1}-1)=\gcd((q+l')\cdot d',k_{n+1}-1).$$

Since $d$ is a divisor of $d'$ and of $k_{n+1}-1,$ it follows that $d\ \vert\ e$. On the other hand, $e$ is a common divisor of $(q+l')\cdot d'$ and $k_{n+1}-1$. Since $\gcd(q+l',k_{n+1}-1)=1,$ we know that $e$ divides $d'$. Since $e$ also divides $k_{n+1}-1,$ this implies that $e$ is a divisor of $\gcd(d',k_{n+1}-1)=d$. Therefore $e\ \vert\ d$ and $d\ \vert\ e,$ and hence we have proven that $e=d,$ as claimed.
\end{proof}

We can now prove that there are only finitely many additions of the form $\trans^k$ during the algorithm's run, using the arguments from the discussion before Lemma~\ref{lemma:test case 5}:

\begin{lemma}\label{lemma:algorithm adds only finitely many transitivity conditions}
 Let $\hpsi$ be a universal Horn formula. Then $\algname,$ on input $\hpsi,$ only adds finitely many conditions of the form $\trans^k$ to $\tl.$
\end{lemma}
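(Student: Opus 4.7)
The plan is to argue by contradiction: assume $\algname$, on input $\hpsi$, adds infinitely many $\trans^{k}$ conditions, and produce an integer $m$ beyond which no further $\trans^{k}$ can be added, yielding a contradiction. Let $\trans^{k_1},\trans^{k_2},\dots$ be the (alleged) infinite sequence of transitivity conditions added to $\tl$, in the order they are added, and write $d_n := \gcd(k_1-1,\dots,k_n-1)$. The sequence $(d_n)_{n\ge 1}$ is a non-increasing sequence of positive integers, hence eventually constant. Fix $N$ such that $d_n = d$ for all $n\ge N$, and let $S := \set{k_1,\dots,k_N}$.

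First I observe that for every $n > N$ we must have $d\ \vert\ (k_n - 1)$: otherwise $\gcd(d,k_n-1) < d$, so $d_n < d$, contradicting stability. Hence all the $k_n$ with $n \ge N$ satisfy $k_n \equiv 1 \pmod d$. Now apply Lemma~\ref{lemma:test case 5} to $k_1,\dots,k_N$: there is some $m\in\mathbb{N}$ with $m\equiv 1\pmod d$ such that $S$-transitivity implies $(m + l\cdot d)$-transitivity for every $l\ge 0$. Since the elements of $\tl$ at the moment $\trans^{k_n}$ is added for $n > N$ include at least $\trans^{k_1},\dots,\trans^{k_N}$, the conditions of $\tl$ at that point already imply $(m + l\cdot d)$-transitivity for every $l\ge 0$.

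Now combine this with Proposition~\ref{prop:algorithm does not add already implied transitivity}, which says that $\algname$ never adds a transitivity condition already implied by $\tl$. If $n > N$ and $k_n \ge m$, then since $k_n \equiv 1 \equiv m \pmod d$, we can write $k_n = m + l\cdot d$ for some $l\ge 0$, so $k_n$-transitivity is already implied by $\tl$ at the point of addition; this contradicts Proposition~\ref{prop:algorithm does not add already implied transitivity}. Therefore every $k_n$ with $n > N$ satisfies $k_n < m$ and $k_n \equiv 1 \pmod d$. But the set $\set{k\in\mathbb{N}\ \vert\ k < m,\ k\equiv 1\pmod d}$ is finite, and no value is added twice (again by Proposition~\ref{prop:algorithm does not add already implied transitivity}, since once $\trans^{k_n}$ is in $\tl$ it implies $k_n$-transitivity trivially). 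This contradicts the assumption that infinitely many $\trans^{k_n}$ are added, completing the proof.

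The main obstacle was already handled by Lemma~\ref{lemma:test case 5}, which provides the arithmetic control needed to show that once the gcd of the $k_i-1$ values stabilizes at $d$, all sufficiently large integers congruent to $1$ modulo $d$ are forced to be implied. The present proof is essentially a packaging of that lemma with Proposition~\ref{prop:algorithm does not add already implied transitivity} together with the simple observation that a non-increasing sequence of positive integers stabilizes.
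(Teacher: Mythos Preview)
Your proof is correct and follows essentially the same approach as the paper: both argue by contradiction, stabilize the sequence $d_n=\gcd(k_1-1,\dots,k_n-1)$ at some value $d$, invoke Lemma~\ref{lemma:test case 5} to obtain $m$, and then use Proposition~\ref{prop:algorithm does not add already implied transitivity} to derive a contradiction. The only cosmetic difference is in the endgame: the paper picks a single $k_n\ge m$ and directly contradicts Proposition~\ref{prop:algorithm does not add already implied transitivity}, while you instead conclude all remaining $k_n$ lie in the finite set $\{k<m:k\equiv 1\pmod d\}$---but this is the same argument phrased two ways.
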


\begin{proof}
 Assume that this is not the case. Due to Proposition~\ref{prop:algorithm does not add already implied transitivity}, we know that each added $\trans^k$-element is not implied by the previous elements, in particular, no element is added twice. Hence there is an infinite sequence $(k_n)_{n\in\mathbb N}$ such that all $\trans^{k_n}$ are added to $\tl,$ they are added in this order, and for $n\neq m,$ we have $k_n\neq k_m.$

For $n\in\mathbb N,$ let $d_n:=\gcd(k_1-1,\dots,k_n-1)$. Then $d_n$ is obviously decreasing, and bounded by $1$. Therefore, the sequence converges, i.e., there is some $d,n_0\in\mathbb N$ such that $d_n=d$ for all $n\ge n_0$. Due to Lemma~\ref{lemma:test case 5}, we know that there is some $m$ such that $\set{k_1,\dots,k_{n_0}}$-transitivity implies $(m+l\cdot d)$-transitivity for all $l\ge 0,$ and $m\equiv 1\mod d$. Since the sequence $(k_n)_{n\in\mathbb{N}}$ is infinite and no element is repeated, there is some $n\ge n_0$ such that $k_n\ge m$. Since $\gcd(k_1-1,\dots,k_{n_0}-1,\dots,k_n-1)=d_n=d,$ we know that $d$ divides $k_n-1$. Therefore, let $k_n=d\cdot p+1$. Since $m\equiv 1\mod d,$ let $m=d\cdot q+1$. Since $k_n\ge m,$ we know that $p\ge q$. Hence it follows that

$$k_n=d\cdot p+1=d(p-q+q)+1=d\cdot(p-q)+d\cdot q+1=d\cdot (p-q)+m.$$

Due to the choice of $m$ and since $p-q\ge 0,$ we therefore know that $\set{k_0,\dots,k_{n_0}}$-transitivity implies $k_n$-transitivity. This is a contradiction to Proposition~\ref{prop:algorithm does not add already implied transitivity}, since $k_n$ is added after $k_0,\dots,k_{n_0}$. Therefore, there are only finitely many elements of the form $\trans^k$ added by the algorithm.
\end{proof}

Since we have now restricted the number of elements $\trans^k$ added to $\tl,$ we can prove that the algorithm halts on any input.

\begin{lemma}
 \algname\ always halts.
\end{lemma}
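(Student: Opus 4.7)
The plan is to argue that in every iteration that does not halt, a strictly new element is added to the variable \tl, and then to bound the total number of possible additions using Lemma~\ref{lemma:algorithm adds only finitely many transitivity conditions}. Since \tl\ grows monotonically and can only contain $\refl$, $\symm$, and finitely many $\trans^k$, this will force termination.

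First I would dispose of the halting branches: in any iteration, either the PSPACE-hard halting condition fires (all clauses satisfied on every $\tl$-tree), or the NP-case halting condition fires, or the trailing $\set{\symm,\trans^k}\subseteq\tl$ check fires. If none of these fire, then by Lemma~\ref{lemma:big classification algorithm is well-defined} exactly one of the reflexive, transitive, or symmetric cases applies to the chosen clause $\hvarphi$, and the algorithm attempts to insert $\refl$, some $\trans^k$, or $\symm$ into $\tl$.

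The main step is to show this insertion always introduces a genuinely new element. I will argue by contradiction in each sub-case using the fact that $\hvarphi$ was chosen precisely because it is not satisfied on every $\tl$-tree, so by Proposition~\ref{prop:homomorphism and horn clauses} there exists $(\alpha,T)\in\tlhom$ with $(\alpha(x),\alpha(y))$ not an edge in $T$. If $\refl\in\tl$ already, then $T$ is reflexive, so $\alpha(x)\neq\alpha(y)$, and in particular $x\neq y$, which contradicts the definition of the reflexive case. If $\symm\in\tl$ already, then $T$ is symmetric, so $(\alpha(y),\alpha(x))$ is also not an edge, contradicting the definition of the symmetric case. If $\trans^k\in\tl$ already, then $T$ is $k$-transitive, so any $k$-step path forces an edge, contradicting the witness required by the transitive case for this particular $k$. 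Hence in each sub-case the inserted element was previously absent from $\tl$.

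Finally, the bound on the number of insertions follows immediately: $\refl$ and $\symm$ can each be added at most once, and Lemma~\ref{lemma:algorithm adds only finitely many transitivity conditions} guarantees that only finitely many elements of the form $\trans^k$ ever get added during the run. Since each non-halting iteration strictly enlarges \tl, and \tl\ can only be enlarged finitely many times, the algorithm must halt after finitely many iterations. The only step I expect to require care is verifying the three ``already in \tl'' contradictions above, since one has to interpret the reflexive, transitive, and symmetric case definitions precisely against the closure properties of $\tl$-trees; everything else is bookkeeping on top of Lemmas~\ref{lemma:big classification algorithm is well-defined} and~\ref{lemma:algorithm adds only finitely many transitivity conditions}.
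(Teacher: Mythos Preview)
Your proposal is correct and follows essentially the same approach as the paper. The paper organizes the argument as ``$\tl$ stabilizes (by Lemma~\ref{lemma:algorithm adds only finitely many transitivity conditions} plus the observation that $\refl,\symm$ occur at most once), then any further non-halting iteration forces a contradiction,'' whereas you phrase it as ``each non-halting iteration strictly enlarges $\tl$, and the number of enlargements is bounded''; the core case analysis showing that $\refl$, $\symm$, or $\trans^k$ cannot already be in $\tl$ when the corresponding case fires is identical in both.
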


\begin{proof}
Assume that it does not halt for an instance $\hpsi$. 
Due to Lemma~\ref{lemma:algorithm adds only finitely many transitivity conditions}, we know that there are only finitely many elements of the form $\trans^k$ which are added to $\tl$ by \algname. Since the only other elements which can be added to $\tl$ are $\refl$ and $\symm,$ this implies that there are only finitely many elements of any type which are added to $\tl$. Now let $\tl$ be as determined by the algorithm. Since the size of $\tl$ is bounded and the variable never shrinks, this is well-defined. 

Since the algorithm does not halt, we know that there is a clause $\hvarphi$ in $\hpsi$ which is not satisfied in every $\tl$-tree, and $\hvarphi$ does not satisfy any of the $4$ \NP-conditions.

 Thus there is some $(\alpha,T)\in\tl-T^{\textrm{hom}}_{\hvarphi}$ such that $\hvarphi$ is not satisfied in $T$. Since we already showed that \algname\ is well-defined, and none of the \NP-cases applies (otherwise the algorithm would come to a halt), we know that one of the reflexive, transitive, or symmetric cases applies. In particular, $\conc\hvarphi=(x,y),$ and $x=y$ or $x,y\in\preq\hvarphi$ and $(\alpha(x),\alpha(y))$ is not an edge in $T$. If one of the cases corresponding to an element not in $\tl$ applies, then this leads to an enlargement of $\tl,$ a contradiction to the choice of $\tl$. Hence one of the cases corresponding to one of the elements already in $\tl$ applies. We make a case distinction:
\begin{description}
\item[Case 1: the reflexive case applies.]{In the case that $x=y,$ the clause is trivially satisfied in the reflexive graph $T$. Assume that $x\neq y$. In this case, $\alpha(x)=\alpha(y)$. But since $\texttt{refl}\in\tl,$ and $(\alpha,T)\in\tl-T^{\textrm{hom}}_{\hvarphi},$ we know that $T$ is a reflexive tree. Hence, the edge $(\alpha(x),\alpha(y))$ exists in $T,$ a contradiction.}
\item[Case 2: the transitive case applies for some $k.$]{In this case, there is a path of length $k$ from $\alpha(x)$ to $\alpha(y)$ in $T$. But since $\texttt{trans}^k\in\tl,$ and $(\alpha,T)\in\tl-T^{\textrm{hom}}_{\hvarphi},$ we know that $T$ is a $k$-transitive tree. Hence, the edge $(\alpha(x),\alpha(y))$ exists in $T,$ a contradiction.}
\item[Case 3: the symmetric case applies.]{In this case, there is an edge from $\alpha(y)$ to $\alpha(x)$ in $T$. But since $\texttt{symm}\in\tl,$ and $(\alpha,T)\in\tl-T^{\textrm{hom}}_{\hvarphi},$ we know that $T$ is a symmetric tree. Hence, the edge $(\alpha(x),\alpha(y))$ exists in $T,$ a contradiction.}
\end{description}

Since we have a contradiction in each case, this completes the proof.
\end{proof}

Now that we know that the algorithm is both well-defined and comes to a halt, it remains to prove its correctness. The case in which the algorithm states \PSPACE-hardness is easily seen to be correct:

\begin{lemma}\label{lemma:horn conjunction algorithm correct on PSPACE}
 If for a universal Horn formula $\hpsi,$ \algname\ states that $\sat{\K\hpsi}$ is \PSPACE-hard, then this is true.
\end{lemma}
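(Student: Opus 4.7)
The plan is to observe that \algname\ reports \PSPACE-hard only through the test at lines 3--5 of Figure~\ref{figure:classification algorithm}: in that situation every clause of $\hpsi$ is satisfied on every $\tl$-tree for the current value of $\tl.$ Since $\hpsi$ is a conjunction of its clauses, it follows at once that $\hpsi$ itself is satisfied on every $\tl$-tree. The proof will then simply invoke Theorem~\ref{theorem:ladner hardness cases for k-transitivity}, which yields \PSPACE-hardness for $\sat{\K\hpsi}$ whenever $\hpsi$ is satisfied on every tree closed under one of six combinations of the properties reflexivity, symmetry, and $S$-transitivity.

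The only point that needs to be checked is that the value of $\tl$ at the moment \algname\ declares \PSPACE-hardness actually falls into one of those six cases. This follows from the final test in each loop iteration (lines 19--21): if $\tl$ ever contains both $\symm$ and some $\trans^k,$ \algname\ terminates with an \NP\ verdict, so at the start of any subsequent iteration at most one of these two kinds of entries occurs in $\tl.$ Hence $\tl$ is either a subset of $\set{\refl,\symm}$ or is contained in $\set{\refl}\cup\set{\trans^k\ \vert\ k\in S}$ for some $S\subseteq\mathbb{N}.$ Each such configuration corresponds to exactly one of the six hypotheses of Theorem~\ref{theorem:ladner hardness cases for k-transitivity}: the strict-tree case when $\tl=\emptyset,$ the reflexive-tree case when $\tl=\set{\refl},$ the symmetric-tree (resp.\ reflexive-and-symmetric-tree) case when $\tl=\set{\symm}$ (resp.\ $\set{\refl,\symm}$), and the $S$-transitive (resp.\ reflexive-and-$S$-transitive) tree case when $\tl=\set{\trans^k\ \vert\ k\in S}$ (resp.\ $\set{\refl}\cup\set{\trans^k\ \vert\ k\in S}$).

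I do not anticipate a substantive obstacle here. The one small bookkeeping item is to confirm that the notion of a ``$\tl$-tree''---defined in Section~\ref{sect:main dichotomy result} as the closure of a strict tree under the properties encoded by $\tl$---coincides with the notions of ``reflexive tree,'' ``symmetric tree,'' and ``$S$-transitive tree'' appearing in Theorem~\ref{theorem:ladner hardness cases for k-transitivity}. This is immediate from the naturality of the closure operator used in the definition of $\tl$-tree. Assembling these pieces, the \PSPACE-hardness of $\sat{\K\hpsi}$ in the relevant case follows directly from Theorem~\ref{theorem:ladner hardness cases for k-transitivity}, completing the proof.
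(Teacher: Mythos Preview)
Your proposal is correct and follows essentially the same approach as the paper: both observe that the \PSPACE\ verdict is issued only when every clause of $\hpsi$ is satisfied on every $\tl$-tree, note that $\tl$ cannot contain both $\symm$ and some $\trans^k$ at that point, enumerate the resulting six possible configurations of $\tl,$ and invoke Theorem~\ref{theorem:ladner hardness cases for k-transitivity}. The only addition you make is the explicit remark that the notion of $\tl$-tree matches the tree classes in that theorem, which the paper leaves implicit.
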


\begin{proof}
 The only possibility for the algorithm to state that the problem is \PSPACE-hard is when the WHILE-loop does not discover any clauses $\hvarphi$ anymore which are not satisfied in every $\tl$-tree, and $\tl$ does not contain both $\texttt{symm}$ and $\texttt{trans}^k$ for any $k\in\mathbb{N}$. Since each clause $\hvarphi$ in $\hpsi$ is satisfied in every $\tl$-tree, this implies that the conjunction $\hpsi$ is also satisfied in each $\tl$-tree. Hence one of the following cases occurs:
\begin{itemize}
 \item $\hpsi$ is satisfied in every strict tree (if $\tl=\emptyset$),
 \item $\hpsi$ is satisfied in every reflexive tree (if $\tl=\set{\texttt{refl}}$),
 \item there is a set $S\subseteq\mathbb{N}$ such that $\hpsi$ is satisfied in every $S$-transitive tree (if $\tl=\set{\texttt{trans}^k\ \vert\ k\in S}$),
 \item $\hpsi$ is satisfied in every symmetric tree (if $\tl=\set{\texttt{symm}}$),
 \item $\hpsi$ is satisfied in every tree which is both reflexive and symmetric (if $\tl=\set{\texttt{refl},\texttt{symm}}$),
 \item there is a set $S\subseteq\mathbb{N}$ such that $\hpsi$ is satisfied in every tree which is both reflexive and $S$-transitive (if $\tl=\set{\texttt{refl},\texttt{trans}^k\ \vert\ k\in S}$).
\end{itemize}

In each of these cases, the hardness result follows directly from Theorem~\ref{theorem:ladner hardness cases for k-transitivity}.
\end{proof}

We are now interested in the \NP-cases. We first show that if \algname\ adds one of the requirements $\texttt{symm},\texttt{refl},$ or $\texttt{trans}^k$ for some $k$ to the list $\tl,$ then the formula $\hpsi$ requires any graph $G$ satisfying $\hpsi$ to have the corresponding property (except for vertices with insufficient depth or height in the graph). More precisely, we show the following Lemma:

 \begin{lemma}\label{lemma:horn conjunction algorithm adds correctly to types-list}
 Let $\hpsi$ be a universal Horn formula, and let $\textit{type}\in\set{\textup{\texttt{refl}},\textup{\texttt{symm}},\textup{\texttt{trans}}^k\ \vert\ k\in\mathbb{N}}$ be added to $\tl$ by \algname\ on input $\hpsi$. Then the following holds:
\begin{itemize}
 \item If $\textit{type}=\textup{\texttt{symm}},$ then $\hpsi$ implies $\genvarphi10ppp$ for some $p\in\mathbb{N},$
 \item If $\textit{type}=\textup{\texttt{refl}},$ then $\hpsi$ implies $\genvarphi00ppp$ for some $p\in\mathbb{N},$
 \item If $\textit{type}=\textup{\texttt{trans}}^k,$ then $\hpsi$ implies $\genvarphi0kppp$ for some $p\in\mathbb{N}.$
\end{itemize}
\end{lemma}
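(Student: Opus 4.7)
The argument proceeds by induction on the order of additions to $\tl$ during the run of \algname\ on $\hpsi$. The inductive hypothesis states that every type currently in $\tl$ is correctly implied by $\hpsi$ via its associated $\genvarphi{k}{l}{p}{q}{r}$-formula; taking the maximum of the $p$-parameters over all such types, there is a constant $P$ such that any graph $G \models \hpsi$ is ``near-$\tl$'' on the set $D$ of vertices of $G$ that have both a $P$-step predecessor and a $P$-step successor. In the inductive step \algname\ adds a new type triggered by some clause $\hvarphi$ of $\hpsi$ that is not satisfied on every $\tl$-tree, and we must show that $\hpsi$ implies the corresponding formula $\genvarphi{1}{0}{p}{p}{p}$, $\genvarphi{0}{0}{p}{p}{p}$, or $\genvarphi{0}{k}{p}{p}{p}$ for a sufficiently large constant $p$.

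Fix a graph $G \models \hpsi$ together with designated target vertices in $G$: a vertex $w$ for the reflexive case, the endpoints $u_0, u_k$ of a directed $k$-step path for the transitive case, or the endpoints of an edge for the symmetric case, all assumed to lie in $D$ (after raising $p$ above $P$). By definition of the triggering case there is a witness pair $(\alpha_0, T_0) \in \tlhom$ with $(\alpha_0(x), \alpha_0(y))$ not an edge of $T_0$, together with additional structural information on the locations of $\alpha_0(x)$ and $\alpha_0(y)$ inside $T_0$: identified in the reflexive case, connected by a directed $k$-step path in the transitive case, joined by a reverse edge in the symmetric case. The plan is to build a graph homomorphism $\beta \colon T_0 \to G$ that sends $\alpha_0(x)$ and $\alpha_0(y)$ to the chosen target vertices. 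Once $\beta$ is constructed, $\beta \circ \alpha_0 \colon \preq{\hvarphi} \cup \{x,y\} \to G$ is a graph homomorphism, and since $\hvarphi$ is a conjunct of $\hpsi$ and hence holds in $G$, Proposition~\ref{prop:homomorphism and horn clauses} yields the edge $(\beta(\alpha_0(x)), \beta(\alpha_0(y)))$ in $G$, which is exactly the conclusion required by the desired formula.

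The key task is therefore the construction of $\beta$. I plan to first embed the underlying strict tree $T$ from which $T_0$ was generated by $\tl$-closure into $G$ as a genuine strict tree by routing its branches through the $p$-step predecessor and $p$-step successor paths that accompany the target vertices, ensuring that the designated nodes of $T$ map to the designated targets of $G$ and that every vertex of $T_0$ lands inside $D$. Any edge of $T_0$ that is not already in $T$ arises from the $\tl$-closure (reflexive loops, symmetric reverses, or $k'$-chords for some $\trans^{k'} \in \tl$), and the inductive hypothesis guarantees that each such closure edge is actually present in $G$ between images lying in $D$. Taking $p$ at least $P$ plus the height of $T_0$ (a constant depending only on $\hvarphi$) keeps the entire image of $T_0$ inside $D$, so $\beta$ is a genuine graph homomorphism into $G$.

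The main obstacle will be managing this embedding when $\tl$ already contains several types, so that $T_0$ carries a rich set of closure edges simultaneously: the embedding must be set up so that every closure edge of $T_0$ maps to an actual edge of $G$, and this requires invoking the inductive hypothesis for each type in $\tl$ at the corresponding pair of images. One must also verify that the recursive definition of $\beta$ remains consistent when $T_0$ identifies vertices under $\alpha_0$---most critically in the reflexive case, where $\alpha_0(x) = \alpha_0(y)$ forces several branches of $T_0$ to be routed through the single target $w$ simultaneously---and that a single constant $p$ can be chosen uniformly across all branches. These consistency checks, which boil down to choosing $p$ large enough and to exploiting that the closure edges in $T_0$ are exactly mirrored by the $\tl$-near properties of $G$ on $D$, are the main technical hurdle.
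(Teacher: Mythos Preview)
Your proposal is correct and follows essentially the same inductive strategy as the paper. The one presentational difference is that the paper does not attempt to map the whole witness tree $T_0$ into $G$ directly; instead it first collapses $T_0$ to a $\tl$-line $L=(x_0,\dots,x_n)$ via the level map $\delta$ (sending each vertex to its depth), and then builds $\beta\colon L\to G$ by choosing a single path $y_0,\dots,y_n$ in $G$ through the target vertices. Your phrase ``routing its branches through the $p$-step predecessor and $p$-step successor paths'' is exactly this line projection, so once you make that precise the two arguments coincide; the explicit factorisation through $L$ in the paper just spares you the consistency checks you flag in your last paragraph (branching in $T_0$ disappears, and the case $\alpha_0(x)=\alpha_0(y)$ becomes a single index equality on $L$).
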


\begin{proof}
 Inductively assume that all values added before $\textit{type}$ (if any) were ``correct'' in the sense that they satisfy the conditions of the lemma. Let the content of the variable $\tl$ directly before adding $\textit{type}$ be denoted with $\textit{prev-types-list}$. Let $\hvarphi$ be the clause in $\hpsi$ for which $\textit{type}$ was added. Then, by the construction of the algorithm, we know that $\hvarphi$ satisfies the case in the algorithm corresponding to $\textit{type},$ it particular, $\conc\hvarphi=(x,y)$ for some variables $x=y$ or $x,y\in\preq\hvarphi$. First assume that $x,y\in\preq\hvarphi.$

 We also know that $\hvarphi$ is not satisfied in every $\textit{prev-types-list}$-tree. Hence, let $(\alpha,T)\in\textit{prev-types-list}-T^{\textrm{hom}}_{\hvarphi}$ such that $(\alpha(x),\alpha(y))$ is not an edge in $T$. This must exist due to Proposition~\ref{prop:homomorphism and horn clauses}. By construction of \algname, in the case that $\textit{type}=\texttt{refl},$ we have that $\alpha(x)=\alpha(y),$ and in the case that $\textit{type}=\texttt{trans}^k,$ we can choose $(\alpha,T)$ in such a way that $\alpha(y)$ is is $k$ levels below $\alpha(x)$ in $T$. Finally, in the case that $\textit{type}=\texttt{symm},$ we know that there is an edge $(\alpha(y),\alpha(x))$ in $T$. Since the reflexive condition does not apply, we can choose $(\alpha,T)$ in such a way that $\alpha(x)\neq\alpha(y),$ and since $\texttt{symm}\notin\textit{prev-types-list},$ this implies that $\alpha(x)$ is in a lower level of $T$ than $\alpha(y)$. 

Since all additions before $\textit{type}$ were correct, we know that all the types in $\textit{prev-types-list}$ have been added correctly, i.e., we can assume that there is some $p$ such that 

\begin{itemize}
 \item If $\texttt{symm}\in\textit{prev-types-list},$ then $\hpsi$ implies $\genvarphi10ppp$ for some $p\in\mathbb{N},$
 \item If $\texttt{refl}\in\textit{prev-types-list},$ then $\hpsi$ implies $\genvarphi00ppp$ for some $p\in\mathbb{N},$
 \item If $\texttt{trans}^k\in\textit{prev-types-list},$ then $\hpsi$ implies $\genvarphi0kppp$ for some $p\in\mathbb{N}.$
\end{itemize}

We can use the same value $p$ for all formulas here by taking the maximum, since clearly, $\genvarphi klpqr$ implies $\genvarphi kl{p'}{q'}{r'}$ if $p'\ge p,$ $q'\ge q,$ and $r'\ge r$. Now let $L=(x_0,\dots,x_n)$ be the homomorphic image of $T$ under the homomorphism $\delta$ as a $\textit{prev-types-list}$-line, where $\delta$ assigns each $v\in T$ the node $x_i,$ with $i$ being the level of $v$ in $T$. Let $\alpha(x)$ be in the $a$th level of $T,$ and let $\alpha(y)$ be in the $n-b$th level of $T$ (where $n$ is the height of $T$). Due to the construction of $\delta,$ we know that $\delta(\alpha(x))=x_a,$ and $\delta(\alpha(y))=x_{n-b}$. First note that:

\begin{itemize}
 \item If $\textit{type}=\texttt{refl},$ then $\alpha(x)=\alpha(y),$ and hence $\delta(\alpha(x))=\delta(\alpha(y))$. It follows that $n=a+b.$
 \item If $\textit{type}=\texttt{trans}^k,$ then by choice of $T,$ $\alpha(y)$ is exactly $k$ levels below $x$ in $T,$ and therefore we know that $a+b+k=n$. 
 \item If $\textit{type}=\texttt{symm},$ then, since $\texttt{symm}\notin\textit{prev-types-list},$ we know that in $L,$ we only have edges between non-decreasing nodes. Since the symmetric case applies, we know that $(\alpha(y),\alpha(x))$ is an edge in $T$. Since $\alpha(x)\neq\alpha(y),$ $\alpha(x)$ is at a lower level than $\alpha(y)$ in $T,$ and it follows that  $a>n-b$. If $a>n-b+2,$ then it follows that the line $L$ satisfies \NP-condition $3,$ a contradiction. Therefore we know that $a=n-b+1.$
 \end{itemize}

We now show that

\begin{center}
$
\begin{array}{lll}
\hpsi\mathtext{ implies } \genvarphi{0}{0}{a+b+p}{a+b+p}{a+b+p} & \mathtext{ if }\textit{type}=\texttt{refl}, \\
\hpsi\mathtext{ implies } \genvarphi{0}{k}{a+b+p}{a+b+p}{a+b+p+k} & \mathtext{ if }\textit{type}=\texttt{trans}^k, \\
\hpsi\mathtext{ implies } \genvarphi{1}{0}{a+b+p}{a+b+p+1}{a+b+p} & \mathtext{ if }\textit{type}=\texttt{symm}. \\
\end{array}
$
\end{center}

For this, let $G$ be a graph such that $G\models\hpsi,$ by induction we know that the nodes of $G$ which have both a $p$-step predecessor and a $p$-step successor satisfy the conditions of $\textit{prev-types-list}$. To prove the claim, let $u,v$ be elements of $G$ which have both an $a+b+p$-step predecessor and an $a+b+p$-step successor, and

\begin{center}
$
\begin{array}{ll}
 u=v&\mathtext{ if }\textit{type}=\texttt{refl}, \\
 \mathtext{There is a }k\mathtext{-step path in }G\mathtext{ from }u\mathtext{ to }v&\mathtext{ if }\textit{type}=\texttt{trans}^k, \\
 \mathtext{There is an edge }(v,u)\mathtext{ in }G&\mathtext{ if }\textit{type}=\texttt{symm}. \\
\end{array}
$
\end{center}

In order to prove the lemma, we need to prove that $(u,v)$ is an edge in $G$. Since $G\models\hpsi,$ and $\hvarphi$ is a clause in $\hpsi,$ it suffices to prove that there is a homomorphism $\gamma\colon\preq\hvarphi\rightarrow G$ such that $\gamma(x)=u$ and $\gamma(y)=v$. Then there is an edge $(u,v)$ in $G$ due to Proposition~\ref{prop:homomorphism and horn clauses}. Since $\delta\circ\alpha\colon\preq\hvarphi\rightarrow L$ is a homomorphism and $\delta\circ\alpha(x)=x_a$ and $\delta\circ\alpha(y)=x_{n-b},$ it suffices to show that there is a homomorphism $\beta\colon L\rightarrow G$ such that $\beta(x_a)=u$ and $\beta(x_{n-b})=v$. Then the homomorphism $\gamma:=\beta\circ\delta\circ\alpha$ satisfies the necessary conditions. Let $w$ be an $a$-step predecessor of $u$ in $G$ such that $w$ has a $p$-step predecessor in $G,$ and let $t$ be a $b$-step successor of $v$ in $G$ which has a $p$-step successor in $G$. Both must exist due to the choice of $u,v$. Additionally, if $\textit{types}=\texttt{symm}$ and therefore $(v,u)$ is an edge in $G,$ choose $w$ to be an $a-1$-step predecessor of $v$ (in which case it is also a $a$-step predecessor of $u$), and in this case also let $t$ to be a $b-1$-step successor of $u$ (in which case it is also a $b$-step successor of $v$).

Now, define $y_0:=w,$ and let $y_1,\dots,y_n$ from $G$ be chosen in such a way that $y_a=u,y_{n-b}=v,$ and $y_n=t,$ and for all relevant $i,$ $(y_i,y_{i+1})$ is an edge in $G$. This is possible since $u$ and $v$ satisfy the conditions corresponding to $\textit{type}:$

\begin{itemize}
 \item If $\textit{type}=\texttt{refl},$ we know that $n-b=a,$ and $u=v$. Hence the nodes $y_i$ can be chosen satisfying the demanded conditions by choosing $y_0,\dots,y_a$ to be the nodes on the $a$-step-path from $w$ to $u=v,$ and $y_{n-b},\dots,y_n$ to be the nodes on the $b$-step path from $u=v$ to $t.$
\item If $\textit{type}=\texttt{trans}^k,$ we know that $n=a+b+k,$ and we can choose $y_0,\dots,y_a$ to be the nodes on the path from $w$ to $u,$ $y_{a},\dots,y_{n-b}$ to denote the $k$-step path from $u$ to $v$ (which exists due to the choice of $u$ and $v$), and $y_{n-b},\dots,y_n$ be the nodes on the $b$-step path from $v$ to $t.$
\item If $\textit{type}=\texttt{symm},$ then by the above we know that $a=n-b+1,$ and we know that there is an edge $(v,u)$ in $G$. We also know that in this case, $w$ is an $a-1$-step predecessor of $v$. Hence we can choose the nodes in the following way:
 Let $y_0,\dots,y_{a-1}=y_{n-b}$ be chosen as the nodes on the $a-1$-step path from $w$ to $v,$ and let $y_{n-b+1}=y_{a},\dots,y_n$ be the nodes on the $b-1$-step path from $u$ to $t$. Since $(v,u)$ is an edge in $G,$ this gives the edge $(y_{n-b},y_a)$ which is required since $n-b+1=a.$
\end{itemize}

We now construct the homomorphism $\beta:$ for each relevant $i,$ let $\beta(x_i):=y_i$. Then by construction, $\beta(x_a)=u$ and $\beta(x_{n-b})=v$. Hence it remains to prove that $\beta$ is a homomorphism. Since $L$ is a \textit{prev-types-list}-line, let $L_{\mathrm{strict}}$ be a strict line such that $L$ is the $\textit{prev-types-list}$-closure of $L_{\mathrm{strict}}$. Since $(y_i,y_{i+1})$ is an edge in $G$ for all relevant $i,$ it follows that $\beta\colon L\rightarrow G$ is a homomorphism. Since every $y_i$ has a $p$-step successor and a $p$-step predecessor in $G,$ we know that the subgraph $\set{y_0,\dots,y_n}$ satisfies the conditions from $\textit{prev-types-list},$ and since $L$ is the $\textit{prev-list-types}$-closure of $L_{\mathrm{strict}},$ this implies that for every edge present in $L,$ the images of the corresponding vertices are also connected with an edge in $G,$ and hence $\beta$ is indeed a homomorphism, finishing the proof of the lemma for the case that $x,y\in\preq\hvarphi.$

Now assume that $x\notin\preq\hvarphi$ or $y\notin\preq\hvarphi,$ and therefore $x=y$. Since there obviously is a homomorphism $\alpha\colon\preq\hvarphi\rightarrow T$ for some $\textit{prev-types-list}$-tree $T,$ the clause $\hvarphi$ forces every node in a graph containing a $\tl$-line of sufficient length to be reflexive. Since due to the induction hypothesis, we know that every model of $\psi$ of sufficient depth contains arbitrary long $\textit{prev-types-list}$-lines, this concludes the proof for the remaining case $x=y.$
\end{proof}

Due to Lemma~\ref{lemma:horn conjunction algorithm adds correctly to types-list}, we know that the list $\tl$ maintained by \algname\ is sensible, and we are now in a position to prove that the \NP-cases claimed by the algorithm are correct as well.

\begin{lemma}
 Let $\hpsi$ be a universal Horn formula. If \algname\ states that $\K\psi$ has the polynomial-size model property and $\sat{\K\hpsi}$ is in \NP, then this is true.
\end{lemma}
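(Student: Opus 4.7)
The plan is to split into the two situations in which \algname\ outputs \NP: either (a) both $\symm$ and $\trans^k$ (for some $k$) have just been placed in $\tl$, or (b) a clause $\hvarphi$ of $\hpsi$ has been identified that, given the current $\tl$, satisfies one of the four \NP-cases. Situation (a) is handled immediately by Lemma~\ref{lemma:horn conjunction algorithm adds correctly to types-list}: that lemma supplies, after taking a common maximum, a single $p$ for which $\hpsi$ implies both $\genvarphi{1}{0}{p}{p}{p}$ and $\genvarphi{0}{k}{p}{p}{p}$; Theorem~\ref{theorem:transitive and symmetric gives np generalization} then yields the polynomial-size model property, so $\sat{\K\hpsi}\in\NP$ by Proposition~\ref{proposition:elementary plus polysize model property is in np}.

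For situation (b) the general strategy is to exploit the $\tl$-properties already secured by Lemma~\ref{lemma:horn conjunction algorithm adds correctly to types-list}: in any sufficiently deep and tall model of $\hpsi$, the "central" region (vertices with both large depth and large height) satisfies all graph conditions in $\tl$, so any $\tl$-tree or $\tl$-line embeds into the central region as a subgraph. Composing such an embedding with the relevant homomorphism $\alpha\colon\preq{\hvarphi}\to T$ (or $\alpha\colon\preq{\hvarphi}\to L$) supplied by the \NP-case and applying Proposition~\ref{prop:homomorphism and horn clauses} shows that $\hpsi$ implies a new universal Horn formula of the form $\genvarphi{k}{l}{p}{q}{r}$ (or a conjunction of two such) to which either Corollary~\ref{corollary:main np result corollary} or Theorem~\ref{theorem:transitive and symmetric gives np generalization} can be applied.

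I would then dispatch the four \NP-cases in turn. In case 1, the absence of a directed path between $\alpha(x)$ and $\alpha(y)$ in the $\tl$-tree $T$ forces them into distinct subtrees with a common ancestor at distances $k\geq 1$ and $l\geq 1$; the derived $\genvarphi{k}{l}{p}{q}{r}$ with $1\leq k,l$ triggers the first bullet of Corollary~\ref{corollary:main np result corollary}. In case 3, the $\tl$-line homomorphism with a conclusion edge running backward by at least two positions yields $\hpsi\Rightarrow\genvarphi{k}{0}{p}{q}{r}$ for some $k\geq 2$, triggering the second bullet of Corollary~\ref{corollary:main np result corollary}. In case 4, the two line-homomorphisms together give $\genvarphi{1}{0}{\cdot}{\cdot}{\cdot}$ and $\genvarphi{0}{k}{\cdot}{\cdot}{\cdot}$ with $k\geq 2$, closing the case via Theorem~\ref{theorem:transitive and symmetric gives np generalization}. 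Case 2 is the main obstacle and requires separate treatment. When $\conc{\hvarphi}=\emptyset$, existence of some $(\alpha,T)\in\tlhom$ combined with Lemma~\ref{lemma:horn conjunction algorithm adds correctly to types-list} implies that any model of $\hpsi$ with a sufficiently large central region would homomorphically embed the forbidden pattern $\preq{\hvarphi}$ and thereby violate $\hvarphi$, so models have bounded depth and height and the polynomial-size bound is immediate. When $\conc{\hvarphi}=(x,y)$ with $x,y$ in distinct (or absent) components of $\preq{\hvarphi}$, the two components can be mapped independently, so the clause forces an edge between essentially any two vertices in the central region that are suitably "deep" and "tall", collapsing the structure to a form covered by Corollary~\ref{corollary:main np result corollary}. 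Carefully tracking the parameters $p,q,r$ through the embeddings so as to remain inside the central region where the $\tl$-conditions hold is the principal bookkeeping challenge in executing this plan.
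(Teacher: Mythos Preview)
Your overall plan matches the paper's, and your treatment of situation~(a) and of \NP-cases~3 and~4 is correct. The genuine gap is in \NP-case~1. You assert that the absence of a directed path between $\alpha(x)$ and $\alpha(y)$ ``forces them into distinct subtrees with a common ancestor at distances $k\geq 1$ and $l\geq 1$'', but this fails when $\alpha(x)=\alpha(y)$. If $\refl\notin\tl$ and $\symm\notin\tl$, the common image can be a single irreflexive node, and then there is no directed path of positive length from that node to itself; yet your distances $k,l$ both collapse to $0$, and the derived formula is only $\genvarphi{0}{0}{\cdot}{\cdot}{\cdot}$ (near-reflexivity), which neither bullet of Corollary~\ref{corollary:main np result corollary} covers. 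This subcase is not vacuous: for the clause $(aRx)\wedge(aRy)\wedge(xRb)\wedge(yRb)\Rightarrow(xRy)$ with $\tl=\emptyset$, any homomorphism of the prerequisite graph into a strict tree must identify $x$ and $y$ (the image of $b$ has a unique parent), so \NP-condition~1 can only be witnessed with $\alpha(x)=\alpha(y)$.

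The paper handles this subcase with a separate, nontrivial construction: it maps $T$ canonically onto a $\tl$-line, adjoins a ``twin'' vertex $y_i$ at the image level (sharing all of $x_i$'s predecessors and successors but not adjacent to $x_i$ itself), and shows that the resulting derived clause $\hvarphi'$ forces the central region of any model of $\hpsi$ to be reflexive, then symmetric, then transitive---hence to satisfy $\hvarphi^{1\rightarrow 1}$, yielding $\genvarphi{1}{1}{p}{p}{p}$ and Corollary~\ref{corollary:main np result corollary}. This same machinery is then reused verbatim for the disconnected-conclusion subcase of \NP-case~2, where the central region becomes a universal (complete) graph. Your phrase ``collapsing the structure to a form covered by Corollary~\ref{corollary:main np result corollary}'' for case~2 is in the right spirit, but without the twin-vertex argument from case~1 you have nothing concrete to invoke there either.
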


\textit{Proof.}
 There are two possibilities for the algorithm to claim the polynomial-size model property, and hence \NP-membership. First let us assume that for some $k\ge 2,$ both $\texttt{symm}$ and $\texttt{trans}^k$ were added to $\tl$. In this case, due to Lemma~\ref{lemma:horn conjunction algorithm adds correctly to types-list}, we know that there is some $p\in\mathbb{N}$ such that $\hpsi$ implies both $\genvarphi{0}{k}ppp$ and $\genvarphi10ppp$. Hence Theorem~\ref{theorem:transitive and symmetric gives np generalization} implies both the polynomial-size model property of the logic $\K\hpsi$ and the \NP-membership of its satisfiability problem.

The second case in which \algname\ claims the \NP-result is if it detects a clause $\hvarphi$ which satisfies one of the conditions $1-4$. We know by Lemma~\ref{lemma:horn conjunction algorithm adds correctly to types-list}, that for each element from $\tl,$ the formula $\hpsi$ implies a formula of the corresponding type. Hence we can assume that there is a natural number $p,$ such that the set of vertices in $G$ which have both a $p$-step predecessor and a $p$-step successor satisfy the conditions from $\tl$. We make a case distinction.

\textbf{$\hvarphi$ satisfies \NP-condition $1.$\ }In this case, we know that $\conc\hvarphi=(x,y)$ for some $x\neq y\in\preq\hvarphi,$ and that there is a pair $(\alpha,T)\in\tl-T^{\mathrm{hom}}_{\hvarphi}$ such that there is no directed path connecting $\alpha(x)$ and $\alpha(y)$ in $T$. Note that we can assume $\texttt{symm}\notin\tl,$ since otherwise every pair of vertices in $T$ would be connected with a directed path. First assume that it is possible to choose $T$ and $\alpha$ in such a way that $\alpha(x)\neq\alpha(y)$. In this case, since $T$ is a tree, we know that $\alpha(x)$ and $\alpha(y)$ have a common predecessor $w$ in $T$. Let $w$ be a common predecessor which is ``minimal,'' i.e., no node in a lower level than $w$ is a common predecessor. 

We construct a $\tl-$tree $T'$ as a homomorphic image of $T$ via the homomorphism $\beta$ as follows: Let $n$ be the height of the tree $T,$ and let $x_0,\dots,x_n$ be a $\tl$-line such that every element in $T$ is mapped on its corresponding level in $L,$ except the nodes on the path from $w$ to $\alpha(x)$ (excluding $w$) and the successors of $\alpha(x)$. Let $s$ be the level of $w$ in $T,$ i.e., let $\alpha(w)=x_s$. Let $w+k$ and $w+l$ be the levels of $\alpha(x)$ and $\alpha(y)$ in $T,$ since there is no directed path connecting $\alpha(x)$ and $\alpha(y),$ it follows that $k,l>0$. Now introduce nodes $y_{s+1},\dots$ such that $(x_s,y_{s+1})$ is an edge, and $(y_i,y_{i+1})$ is an edge for every relevant $i,$ and map the path from $w$ to $\alpha(x)$ and the successors to $\alpha(x)$ to the ``branch'' $y_{s+1},\dots$ (add as many of these nodes as the ``branch'' of $T$ requires). Now close the construction under the $\tl$-condition, and call the tree obtained in this way $T'$. Since $T'$ is the ``canonical homomorphic image'' of $T,$ it is again a \tl-tree, and in $T',$ there is a path of length $k$ from $\beta(w)$ to $\beta(\alpha(x))$ and a path of length $l$ from $\beta(w)$ to $\beta(\alpha(y))$. Let $\hvarphi'$ be the clause with prerequisite graph $T'$ and conclusion edge $(\beta(\alpha(x)),\beta(\alpha(y)))$. Since $\beta\circ\alpha$ is a homomorphism, we know from Proposition~\ref{prop:homomorphism and horn clause implication} that $\hvarphi$ implies $\hvarphi'.$

Intuitively, $\hvarphi'$ is the $\tl$-closure of a clause of the form $\genvarphi klsqn,$ with $q$ chosen according to the height of the tree $T$ and the length of the branch containing $\alpha(x)$ (without loss of generality, we assume that the branch containing $\alpha(x)$ does not contain the deepest node in the tree). We now show that $\hpsi\implies\genvarphi kl{s+p}{q+p}{n+p},$ the complexity result then follows from Corollary~\ref{corollary:main np result corollary}.

Hence, let $G$ be a graph such that $G\models\hpsi,$ and let $u$ and $v$ be vertices in $G$ such that $u$ and $v$ have a predecessor $w',$ and $w'$ has a $s+p$-step predecessor, $u$ has a $q+p-k$-step successor, and $v$ has an $n+p-l$-step successor, and there is a $k$-step path from $w'$ to $u,$ and an $l$-step path from $w'$ to $v$. Then, since these vertices satisfy the conditions of $\tl$ by Lemma~\ref{lemma:horn conjunction algorithm adds correctly to types-list}, we can homomorphically map the prerequisite graph of $T'$ into $G$ via the homomorphism $\gamma$ such that $\gamma(\beta(w))=w',$ $\gamma(\beta(\alpha(x)))=u,$ and $\gamma(\beta(\alpha(x)))=v$. The edges required in order for $\gamma$ to be a homomorphism exist because of the paths of the corresponding lengths connecting $w,$ $u,$ and $v,$ and because all of the relevant nodes in $G$ satisfy the $\tl$-conditions. Hence, by Proposition~\ref{prop:homomorphism and horn clauses}, we know that $(u,v)$ is an edge in $G,$ as required to show.

Therefore, assume that it is not possible to choose $(\alpha,T)$ in such a way that $\alpha(x)\neq\alpha(y),$ i.e., assume that in every $(\alpha,T)\in\tl-T^{\mathrm{hom}}_{\hvarphi},$ $\alpha(x)$ and $\alpha(y)$ are connected with a directed path or are identical. Since $\hvarphi$ satisfies the first \NP-condition, we know that there is a pair $(\alpha,T)\in\tl-T^{\mathrm{hom}}_{\hvarphi}$ such that there is no directed path connecting $\alpha(x)$ and $\alpha(y)$ in $T,$ and hence $\alpha(x)=\alpha(y),$ and this node is irreflexive in $T$. In particular, we know that $\texttt{refl}$ is not an element of $\tl$. We also know that $\symm\notin\tl,$ since otherwise, every node in the connected graph $T$ would be connected to any other with a directed path ($T$ obviously is not the irreflexive singleton, since $\alpha(x)\neq\alpha(y)$). Now define $\beta$ to be $T$'s canonical homomorphic mapping to a $\tl$-line $L=(x_0,\dots,x_n),$ it then follows that $\beta(\alpha(x))=\beta(\alpha(y))=x_i$ for some $i$. Since $x$ and $y$ are different nodes in $\preq\hvarphi,$ we can modify this line as follows: We introduce a new node $y_i$ which is a ``neighbor'' to $x_i,$ i.e., a node which is connected to all predecessors and all successors of $x_i.$

Call this line $L'$. Note that since $\tl$ only contains variations of transitivity, there is no condition requiring that $x_i$ is a reflexive node. Since we connected $y_i$ to all successors and predecessors of $x_i,$ the line $L'$ still satisfies all conditions from $\tl,$ and $x_i$ and $y_i$ are not connected with an edge in $L'$. We now construct a homomorphism $\gamma\colon\preq\hvarphi\rightarrow L',$ by defining $\gamma(z)=\beta(\alpha(z))$ for all $z\neq y,$ and $\gamma(y)=y_i$. Since the involved nodes are irreflexive and $\beta\circ\alpha$ is a homomorphism, and $y_i$ has all edges that $x_i$ has, we know that $\gamma$ is a homomorphism. Let $\hvarphi'$ be the universal Horn clause with prerequisite graph $L',$ and conclusion edge $(x_i,y_i)$. The homomorphism $\gamma$ and Proposition~\ref{prop:homomorphism and horn clause implication} show that $\hvarphi$ implies $\hvarphi'$. 

Note that the clause $\hvarphi'$ requires the following: For every pair of nodes $(x',y')$ in a graph $G$ satisfying the conditions of $\tl$ which have a common predecessor $w'$ and a common successor $z'$ with sufficient height and depth, there is an edge connecting $x'$ and $y'$. Since by Lemma~\ref{lemma:horn conjunction algorithm adds correctly to types-list}, every graph satisfying $\hpsi$ also satisfies the requirements from $\tl$ for nodes with sufficient depth and height, we can apply $\hvarphi'$ to nodes of sufficient depth and height in every such graph.

\begin{wrapfigure}[19]{l}{8cm}
\includegraphics[scale=0.75]{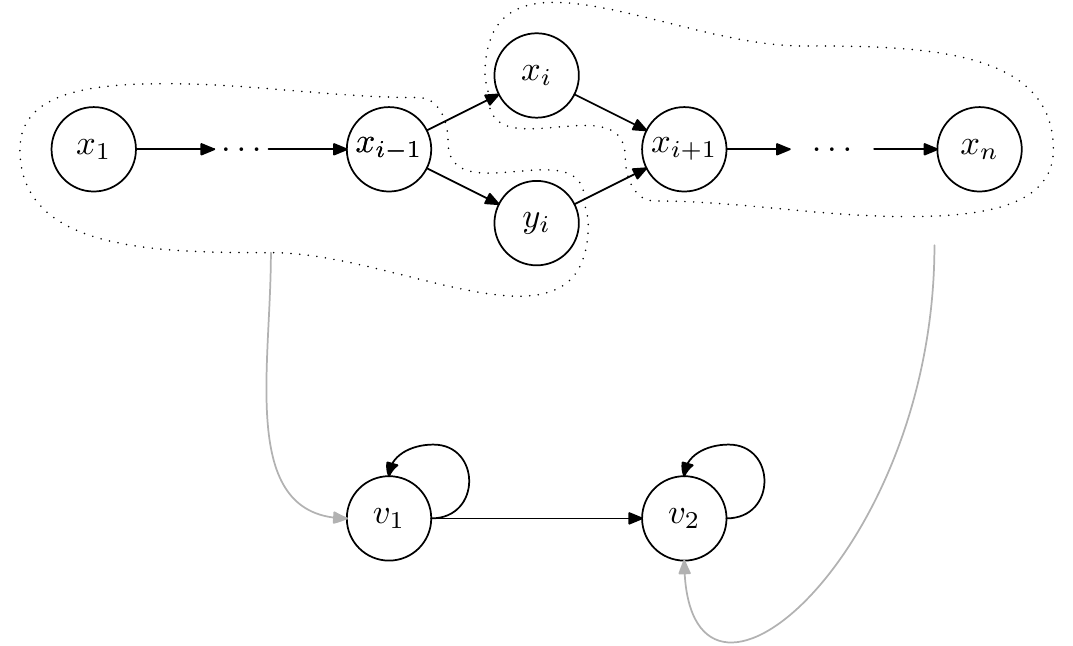}
\caption{The homomorphism for the proof that $C$ is symmetric}
\label{fig:symmetric proof for C}
\end{wrapfigure}
 Let $G$ be a graph satisfying $\psi,$ and let $C$ be the set of nodes in $G$ which have sufficient height and depth to be able to apply the conditions of $\hvarphi'$. We show that $C$ is reflexive, transitive, and symmetric. Note that since $G$ satisfies $\psi,$ and $\hvarphi$ is a clause in $\hpsi$ implying $\hvarphi',$ we know that $G$ satisfies $\hvarphi'$ as well.

We first show that $C$ is reflexive. Let $v$ be a node in $C$. Since $v$ satisfies the conditions of both $x_i$ and $y_i$ in the line $L',$ and since $L'=\preq\hvarphi'$ and $G$ satisfies $\hvarphi',$ we know that there is an edge $(v,v)$ in $G,$ and hence $v$ is reflexive.

For symmetry, let there be nodes $v_1,v_2\in C$ such that $(v_1,v_2)$ is an edge in $G$. By the above, we know that both of these nodes are reflexive, and in particular have unbounded height and depth in $G$. Therefore, we can map the predecessor graph of $\hvarphi',$ i.e., the line $L'$ to these nodes in such a way that for $j<i,$ $x_i$ is mapped to $v_1,$ $y_i$ is mapped to $v_1,$ and all $x_j$ for $j\ge i$ are mapped to $v_2$. Since in $L',$ all edges go from variables with lower indexes to variables with higher indexes, in this way we have constructed a homomorphism from $\preq{\hvarphi'}$ to $C,$ such that the conclusion edge $(x_i,y_i)$ of $\hvarphi'$ is mapped to the pair $(v_2,v_1)$. Hence, by Proposition~\ref{prop:homomorphism and horn clauses}, and since $G$ satisfies $\hvarphi',$ we know that there must be an edge $(v_2,v_1)$ in $C,$ as required.

\begin{wrapfigure}[18]{r}{8cm}
\includegraphics[scale=0.75]{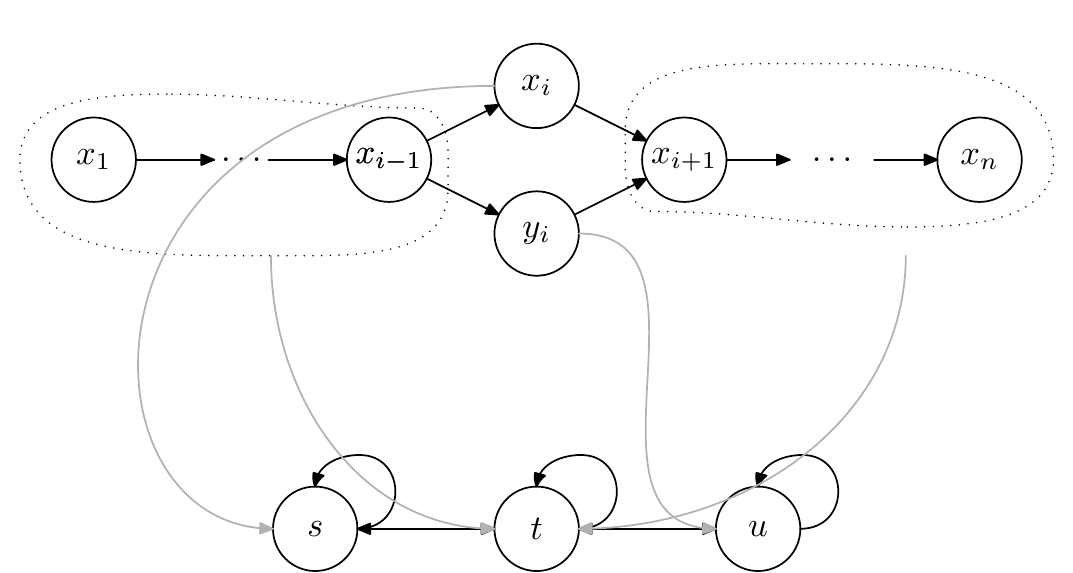}
\caption{The homomorphism for the proof that $C$ is transitive}
\label{fig:transitive proof for C}
\end{wrapfigure}
Finally, we show that $C$ is transitive. For this, assume that $(s,t)$ and $(t,u)$ are edges in $C$. Since we already proved symmetry for $C,$ we know that in this case, $(t,s)$ and $(u,t)$ are also edges. We can again construct a homomorphism mapping all elements of $L'$ to $t,$ except mapping $x_i$ to $s$ and $y_i$ to $u$. Since the nodes $s,t,$ and $u$ are reflexive in $C$ by the above, and there is no edge $(x_i,y_i)$ or $(y_i,x_i)$ in $L,$ this is a homomorphism. Since $G$ satisfies $\varphi',$ Proposition~\ref{prop:homomorphism and horn clauses} shows that $(s,u)$ is an edge in $C,$ as claimed.

In particular, since $C$ is reflexive, symmetric, and transitive, this means that $C\models\hvarphi^{1\rightarrow 1}$. Since $C$ is defined as the nodes which have some sufficient depth and height $p$ in the graph, and $C\models\hvarphi^{1\rightarrow 1},$ it follows that $G\models\genvarphi 11ppp$ for some $p,$ and hence $\psi\implies\genvarphi 11ppp$. Therefore, $\K\psi$ is a logic extending $\K{\genvarphi 11ppp},$ and therefore the complexity result as well as the polynomial-size model property follow from Corollary~\ref{corollary:main np result corollary}.

\textbf{$\hvarphi$ satisfies \NP-condition $2.$\ }{By the prerequisites, since $\hvarphi$ is not satisfied on every $\tl$-tree, we can homomorphically map $\preq\hvarphi$ onto a $\tl$-line. First assume that $\conc\hvarphi$ is empty. Then any graph $G$ satisfying the conditions of $\tl$ which has a line of more than this length does not satisfy the clause $\hvarphi$. Since every graph satisfying $\hpsi$ also satisfies $\hvarphi$ and the conditions of $\tl$ for nodes of sufficient depth and height, every graph satisfying $\hpsi$ of sufficient depth and height does not satisfy $\hvarphi,$ and hence not $\hpsi$. This implies that graphs satisfying $\hpsi$ can only be of depth limited by a constant, and hence Lemma~\ref{lemma:few nodes with few predecessors} immediately implies the polynomial-size model property.

Therefore assume that $\conc\hvarphi=(x,y)$ for variables $x,y,$ and assume that $x,y$ are not connected with an undirected path in $\preq\hvarphi$. Since $x$ and $y$ lie in different connected components of the underlying directed graph consisting of $\preq\hvarphi\cup\conc\hvarphi,$ and $\preq\hvarphi$ can be homomorphically mapped into a $\tl$-line, the ``left side'' of the implication $\hvarphi$ is satisfied by any pair of vertices $(u,v)$ in a graph satisfying the $\tl$-conditions of sufficient depth and height. Therefore, the subgraph $C$ containing all nodes with sufficient height and depth forms a universal subgraph, and the \NP-result follows by the same reasoning as in Case $1.$}

\textbf{$\hvarphi$ satisfies \NP-condition $3.$\ }{In this case, it follows by the same reasoning as in Lemma~\ref{lemma:horn conjunction algorithm adds correctly to types-list} that any graph satisfying $\hpsi$ also needs to satisfy the formula $\genvarphi k0ppp$ for some $p$ and some $k\ge 2$.  Hence the complexity result follows from Corollary~\ref{corollary:main np result corollary}.}

\textbf{$\hvarphi$ satisfies \NP-condition $4.$\ }{Again, with the proof of Lemma~\ref{lemma:horn conjunction algorithm adds correctly to types-list} it can easily be seen that $\hpsi$ implies formulas $\genvarphi 10ppp$ and $\genvarphi 0kppp$ for some $k\ge 2$ and some natural number $p$. Therefore the \NP-result follows from Theorem~\ref{theorem:transitive and symmetric gives np generalization}.}
\hfill$\Box$\bigskip

The preceding Lemmas have established that \algname\ is well-defined, always comes to a halt, and in each case produces the correct result. Hence, we have proven Theorem~\ref{theorem:horn conjunction classification}. This theorem and its proof now yield an interesting Corollary:

\begin{corollary}\label{corollary:horn cunjunction without algorithm}
 Let $\hpsi$ be a universal Horn formula. If one of the following cases applies:
\begin{itemize}
 \item $\hpsi$ is satisfied in every strict tree,
 \item $\hpsi$ is satisfied in every reflexive tree,
 \item $\hpsi$ is satisfied in every $S$-transitive tree for some $S\subseteq\mathbb{N},$
 \item $\hpsi$ is satisfied in every symmetric tree,
 \item $\hpsi$ is satisfied in every symmetric and reflexive tree,
 \item $\hpsi$ is satisfied in every $S$-transitive and reflexive tree for some $S\subseteq\mathbb{N},$
\end{itemize}
then $\sat{\K\hpsi}$ is \PSPACE-hard and $\K\hpsi$ does not have the polynomial-size model property. In all other cases, $\K\hpsi$ has the polynomial-size model property and $\sat{\K\hpsi}\in\NP.$
\end{corollary}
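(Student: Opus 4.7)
The plan is to derive this corollary directly from Theorem~\ref{theorem:horn conjunction classification} by cataloguing the possible terminal states of the algorithm \algname. The key observation is that during any run, the variable $\tl$ only grows and takes subsets of $\set{\refl, \symm, \trans^k \mid k\in\mathbb{N}}$, and in the \PSPACE-hard exit branch (analyzed in Lemma~\ref{lemma:horn conjunction algorithm correct on PSPACE}) $\tl$ cannot simultaneously contain $\symm$ and any $\trans^k$. Enumerating such $\tl$ yields exactly six possibilities: $\emptyset$, $\set{\refl}$, $\set{\trans^k \mid k \in S}$ for some $S\subseteq\mathbb{N}$, $\set{\symm}$, $\set{\refl,\symm}$, and $\set{\refl}\cup\set{\trans^k \mid k \in S}$. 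These correspond bijectively to the six conditions in the statement: strict, reflexive, $S$-transitive, symmetric, reflexive-and-symmetric, and reflexive-and-$S$-transitive trees.

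For the ``hard'' direction, I would assume one of the six tree conditions holds and invoke Theorem~\ref{theorem:ladner hardness cases for k-transitivity}, which directly yields \PSPACE-hardness of $\sat{\K\hpsi}$ in each of the six cases. For the failure of the polynomial-size model property, I would appeal to the structure of Ladner's reduction from \prblemname{QBF} underlying those hardness results: the modal formula $\phi$ built from a quantified Boolean formula with $m$ variables enforces a full binary quantifier tree in any satisfying model, and the variables $p_1,\dots,p_m,q_1,\dots,q_m$ force the $2^m$ leaf branches of that tree to contain worlds with pairwise distinct truth assignments, so any satisfying model must have at least $2^{\Omega(|\phi|)}$ worlds. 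This construction survives unchanged under reflexive, symmetric, and $S$-transitive closures, so it gives the same super-polynomial lower bound on model size in each of the six cases.

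For the converse direction, I would contrapose: if $\hpsi$ falls into none of the six cases, then for every possible value of $\tl$ the algorithm could hold, at least one clause of $\hpsi$ fails on some $\tl$-tree, so \algname\ can never exit via the ``all clauses satisfied on every $\tl$-tree'' branch. Since \algname\ always halts (combining Lemma~\ref{lemma:algorithm adds only finitely many transitivity conditions} with the halting argument at the end of Section~\ref{sect:main dichotomy result}), the run must terminate through one of the two \NP-branches: either detection of a clause matching one of the four \NP-conditions, or simultaneous presence of $\symm$ and some $\trans^k$ in $\tl$. By Theorem~\ref{theorem:horn conjunction classification} both yield the polynomial-size model property and $\sat{\K\hpsi}\in\NP$.

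The main obstacle is really just the bookkeeping in the first paragraph: verifying that the catalogue of terminal $\tl$-values matches the six listed conditions exactly, with nothing missing and nothing redundant. All the genuinely substantial work (correctness of the \NP-verdicts via Corollary~\ref{corollary:main np result corollary} and Theorem~\ref{theorem:transitive and symmetric gives np generalization}, correctness of the \PSPACE-verdicts via Theorem~\ref{theorem:ladner hardness cases for k-transitivity}, termination, well-definedness of the case distinction) is already subsumed in the proof of Theorem~\ref{theorem:horn conjunction classification}, so the corollary itself amounts to a clean repackaging of the algorithm's output as an intrinsic criterion on $\hpsi$ that does not mention \algname.
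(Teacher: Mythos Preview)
Your proposal is correct and follows essentially the same approach as the paper: both derive the corollary from Theorem~\ref{theorem:horn conjunction classification} by observing that the algorithm's \PSPACE-hard exit branch forces $\tl$ into one of the six listed shapes (as catalogued in the proof of Lemma~\ref{lemma:horn conjunction algorithm correct on PSPACE}), invoking Theorem~\ref{theorem:ladner hardness cases for k-transitivity} for hardness, and noting that the \NP-branches always yield the polynomial-size model property. Your write-up is simply more explicit than the paper's very terse proof, in particular spelling out the contrapositive for the \NP\ direction and the exponential-model argument behind the failure of the polynomial-size model property.
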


\begin{proof}
 We know by Theorem~\ref{theorem:horn conjunction classification} that \algname\ correctly determines the complexity of the problem $\sat{\K\hpsi}$. By the algorithm, it is obvious that the only arising cases are \PSPACE-hard and \NP. From the proof of Lemma~\ref{lemma:horn conjunction algorithm correct on PSPACE}, we know that the \PSPACE-cases all satisfy the statement of the corollary, and Theorem~\ref{theorem:ladner hardness cases for k-transitivity} shows that in these cases, we always have \PSPACE-hardness.

Finally note that all of our \NP-proofs also give the polynomial-size model property, and all \PSPACE-hardness proofs also show that this property does not apply.
\end{proof}

\subsection{Tree-like models for Horn logics}\label{sect:trees for horn}

In Section~\ref{sect:main dichotomy result}, we have shown that logics defined by universal Horn formulas have a satisfiability problem which is solvable in \NP, or is \PSPACE-hard. We now show a tree-like model property for these logics, which we will put to use in the next section by concluding \PSPACE-membership for a broad class of these logics. Recall that Corollary~\ref{corollary:every k-satisfiable formula satisfiable in strict tree} stated the tree-like model property for the modal logic $\logicname K:$ Every $\logicname K$-satisfiable formula has a tree-like model which also is a $\logicname K$-model (which is easy, since every model is a $\logicname K$-model). What we want to show is that a logic $\K\hpsi,$ where $\hpsi$ is a universal Horn formula satisfied on every reflexive/symmetric/$S$-transitive tree, has the following property: Every modal formula which has a $\K\hpsi$-model also has a $\K\hpsi$-model which is ``nearly'' the reflexive/symmetric/$S$-transitive closure of a strict tree. In order to prove this result, we first recall from the literature the concept of a bounded morphism, which allows us to prove modal equivalence of models.

\begin{definition}[\cite{blrive01}]
 Let $T$ and $M$ be modal models, and let $f\colon T\rightarrow M$ be a function. Then $f$ is a \emph{bounded morphism} if the following holds:
\begin{enumerate}
 \item[\textit{(i)}] For all $w\in T,$ $w$ and $f(w)$ satisfy the same propositional variables,
 \item[\textit{(ii)}] $f$ is a homomorphism,
 \item[\textit{(iii)}] if $(f(u),v')$ is an edge in $M,$ then there is some $v\in T$ such that $(u,v)$ is an edge in $T$ and $f(v)=v'.$
\end{enumerate}
\end{definition}

The most important feature of bounded morphisms is that they leave the modal properties of the involved models invariant:

\begin{proposition}[Proposition 2.14 from \cite{blrive01}]\label{prop:bounded morphism modal invariance}
Let $T$ and $M$ be modal models, and let $f\colon T\rightarrow M$ be a bounded morphism. Then for each modal formula $\phi$ and for each $w\in T,$ it holds that $T,w\models\phi$ if and only if $M,f(w)\models\phi.$
\end{proposition}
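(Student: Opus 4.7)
The plan is a straightforward induction on the structure of the modal formula $\phi$. Specifically, I would prove by formula induction the following statement: for every modal formula $\phi$ and every world $w \in T$, we have $T, w \models \phi$ iff $M, f(w) \models \phi$. The three clauses of the bounded morphism definition are tailored exactly to the three sorts of inductive cases one encounters (atomic, Boolean, modal), so the proof essentially writes itself once the case analysis is set up.

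For the base case, suppose $\phi$ is a propositional variable $p$. Then by clause \textit{(i)}, $w \in \pi_T(p)$ iff $f(w) \in \pi_M(p)$, which gives the claim immediately. The Boolean cases ($\phi = \phi_1 \wedge \phi_2$, $\phi = \neg\phi_1$, and the derived connectives) are entirely routine, since satisfaction of Boolean combinations at a fixed world depends only on the satisfaction of the immediate subformulas at that same world, and the induction hypothesis transfers these.

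The only case that uses nontrivial structure is $\phi = \Diamond\psi$. For the forward direction, assume $T, w \models \Diamond\psi$, so there is a $v \in T$ with $(w,v)$ an edge in $T$ and $T, v \models \psi$. Clause \textit{(ii)} (that $f$ is a homomorphism) gives that $(f(w), f(v))$ is an edge in $M$, and the induction hypothesis gives $M, f(v) \models \psi$; hence $M, f(w) \models \Diamond\psi$. For the converse, assume $M, f(w) \models \Diamond\psi$, so there is some $v' \in M$ with $(f(w), v')$ an edge and $M, v' \models \psi$. Here clause \textit{(iii)} (the ``back'' condition) is precisely what is needed: it supplies a $v \in T$ with $(w,v)$ an edge in $T$ and $f(v) = v'$. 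The induction hypothesis then yields $T, v \models \psi$, so $T, w \models \Diamond\psi$.

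The only delicate point, and the place to be careful in writing, is exactly this converse direction for $\Diamond$: without the ``back'' clause \textit{(iii)} it would fail, since a homomorphism alone can create successors in $M$ whose preimage contains no successor of $w$ in $T$. That observation also motivates why a plain graph homomorphism does not suffice for modal equivalence and why the bounded morphism definition has three clauses rather than two. Since no other inductive case requires anything beyond standard bookkeeping, the entire argument is essentially this one case, packaged inside a formula induction.
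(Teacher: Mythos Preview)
Your proof is correct and is exactly the standard structural-induction argument for this classical result. The paper does not give its own proof of this proposition; it simply cites it as Proposition~2.14 of \cite{blrive01}, whose proof is precisely the formula induction you describe.
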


With bounded morphisms, we can now prove that our logics have ``tree-like'' models. We first recall a standard result from the literature about the logic \logicname{K}:

\begin{proposition}[Proposition 2.15 from \cite{blrive01}]\label{prop:tree like model exists with bounded morphism}
 Let $M,w\models\phi$ such that $M$ is rooted at $w$. Then there exists a tree-like model $T$ and a surjective bounded morphism $f\colon T\rightarrow M.$
\end{proposition}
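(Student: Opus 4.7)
The plan is to construct $T$ by the classical unraveling of $M$ at $w$. Define the worlds of $T$ to be all finite sequences $\sigma=(w,w_1,\ldots,w_n)$ with $n\geq 0$ such that $(w_i,w_{i+1})$ is an edge in $M$ for every $0\leq i<n$ (writing $w_0:=w$). Put an edge in $T$ from $\sigma=(w,w_1,\ldots,w_n)$ to $\tau$ exactly when $\tau=(w,w_1,\ldots,w_n,w_{n+1})$ for some $w_{n+1}$ with $(w_n,w_{n+1})$ an edge in $M$. For propositional variables, declare $\sigma\in\pi_T(x)$ iff $w_n\in\pi_M(x)$, where $w_n$ denotes the last entry of $\sigma$. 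By construction, $T$ has the single root $(w)$, is acyclic, and each non-root node has a unique predecessor, so $T$ is a strict tree. Define $f\colon T\to M$ by $f(\sigma):=w_n$, the last entry of $\sigma$.

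The next step is to check the three bounded morphism conditions. Condition \textit{(i)} on propositional variables is immediate from the definition of $\pi_T$. For \textit{(ii)}, any edge $(\sigma,\tau)$ in $T$ has the form $\sigma=(w,\ldots,w_n)$ and $\tau=(w,\ldots,w_n,w_{n+1})$ with $(w_n,w_{n+1})$ an edge in $M$, and this is exactly the edge $(f(\sigma),f(\tau))$. For the back condition \textit{(iii)}, suppose $(f(\sigma),v')$ is an edge in $M$ with $\sigma=(w,\ldots,w_n)$, so $(w_n,v')\in R_M$. Then $\tau:=(w,\ldots,w_n,v')$ is a well-defined node of $T$, $(\sigma,\tau)$ is an edge of $T$, and $f(\tau)=v'$.

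Surjectivity of $f$ follows from the rootedness of $M$: for any $v\in M$, since $M$ is rooted at $w$, there is a path $w=w_0\to w_1\to\cdots\to w_n=v$ in $M$, so $\sigma:=(w,w_1,\ldots,w_n)$ is a node of $T$ with $f(\sigma)=v$. Finally, $T$ is a modal model (we could additionally restrict attention to the subtree reachable from $(w)$, which is all of $T$ by construction), and since bounded morphisms preserve modal satisfaction by Proposition on bounded morphism modal invariance, we obtain in particular $T,(w)\models\phi$ from $M,w\models\phi$, confirming that $T$ is an adequate tree-like model. There is no real obstacle here beyond bookkeeping; the rootedness hypothesis is only needed to ensure surjectivity of $f$.
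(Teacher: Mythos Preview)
Your proof is correct and is precisely the standard unraveling construction; the paper does not give its own proof here but cites this result from Blackburn--de Rijke--Venema, where the argument is the same unraveling you describe.
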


We now generalize this result to universal Horn logics, and ``tree-like'' models. We already know from Corollary~\ref{corollary:horn cunjunction without algorithm} that for each universal Horn formula $\hpsi$ for which the logic $\K\hpsi$ has a satisfiability problem which cannot be solved in \NP, the formula $\hpsi$ is satisfied on every tree which additionally is closed under reflexivity, $S$-transitivity and/or symmetry. We now show that a version of the ``converse'' is also true: Not only are all of these trees models for the corresponding logics, but for every modal formula satisfiable in such a logic, we can find a model which is ``almost'' such a tree. The reason for the ``almost'' is that our Horn formulas usually will not imply a property like $S$-transitivity, but only $S$-transitivity for nodes at a certain depth in the graph, as shown in Lemma~\ref{lemma:horn conjunction algorithm adds correctly to types-list}. We will see in Sections~\ref{sect:pspace for non-transitive} and~\ref{sect:applications}, that the characterization of the involved models can be used to obtain \PSPACE\ upper bounds for a wide class of logics.

\begin{theorem}\label{theorem:tree-like models for horn logics}
 \begin{enumerate}
  \item Let $\hpsi$ be a universal Horn formula such that \algname\ returns \PSPACE-hard on input $\hpsi$. Let $\tl$ be as determined by \algname\ on input $\hpsi$. Since the algorithm determines the logic to be \PSPACE-hard, this is well-defined. Then 
for every modal formula $\phi$ which is $\K\hpsi$-satisfiable, there exists a $\K\hpsi$-model $T$ and a world $w\in T$ such that $T,w\models\phi$ and there is a strict tree $T_{\mathrm{strict}}$ such that $T$ and $T_{\mathrm{strict}}$ have the same set of vertices, and 

$$\edges{T_{\mathrm{strict}}}\subseteq\edges{T}\subseteq\edges{\tl(T_{\mathrm{strict}})},$$

where $\tl(T_{\mathrm{strict}})$ denotes the $\tl$-closure of $T_{\mathrm{strict}}.$
\item Let $\hpsi$ be a universal Horn formula such that $\hpsi$ is satisfied on every strict line. Then for every modal formula $\phi$ which is $\K\hpsi$-satisfiable, there exists a $\K\hpsi$-model $T$ and a world $w\in T$ such that $T,w\models\phi,$ and $T$ is $w$-canonical.
\end{enumerate}
\end{theorem}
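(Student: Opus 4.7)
The plan is to treat both parts uniformly by unraveling an arbitrary $\K\hpsi$-model and then thickening the resulting strict tree so that it satisfies $\hpsi$ without destroying the bounded morphism to the original model. Concretely, given $M,w\models\phi$ I would first use Proposition~\ref{prop:rooted models exist} to assume $M$ is rooted at $w,$ and then apply Proposition~\ref{prop:tree like model exists with bounded morphism} to produce a strict tree $T_{\mathrm{strict}}$ rooted at $w$ together with a surjective bounded morphism $f\colon T_{\mathrm{strict}}\rightarrow M$ (identifying the root of the tree with $w$ in the usual way). In each part below, $T$ will have the same vertex set and the same propositional assignment as $T_{\mathrm{strict}}$ and will contain all tree edges, so Proposition~\ref{prop:bounded morphism modal invariance} will yield $T,w\models\phi$ as soon as $f\colon T\rightarrow M$ is verified to be a bounded morphism: the back condition transfers immediately from $T_{\mathrm{strict}}$ because edges are only added, and the homomorphism property will be forced by the definition of $\edges{T}.$

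For part 1, let $\tl$ be the final types-list computed by \algname\ on input $\hpsi.$ The proof of Lemma~\ref{lemma:horn conjunction algorithm correct on PSPACE} shows that in the \PSPACE-hard case $\hpsi$ is satisfied on every $\tl$-tree; in particular $\tl(T_{\mathrm{strict}})\models\hpsi.$ I would then set
$$\edges{T}\ :=\ \edges{\tl(T_{\mathrm{strict}})}\cap\{(u,v):(f(u),f(v))\in\edges{M}\},$$
from which $\edges{T_{\mathrm{strict}}}\subseteq\edges{T}\subseteq\edges{\tl(T_{\mathrm{strict}})}$ is immediate. To see $T\models\hpsi,$ I would take any clause $\hvarphi$ of $\hpsi$ with $\conc\hvarphi=(x,y)$ and any homomorphism $\alpha\colon\preq\hvarphi\rightarrow T.$ Since $\edges{T}\subseteq\edges{\tl(T_{\mathrm{strict}})},$ the map $\alpha$ is also a homomorphism into $\tl(T_{\mathrm{strict}}),$ and Proposition~\ref{prop:homomorphism and horn clauses} gives $(\alpha(x),\alpha(y))\in\edges{\tl(T_{\mathrm{strict}})};$ simultaneously $f\circ\alpha$ is a homomorphism into $M,$ and $M\models\hpsi$ forces $(f(\alpha(x)),f(\alpha(y)))\in\edges{M}.$ Both defining conditions of $\edges{T}$ are met, so the conclusion edge lies in $\edges{T}.$ The case $\conc\hvarphi=\emptyset$ is ruled out by composing any putative $\alpha$ with $f$ and appealing to $M\models\hpsi.$

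For part 2, I would use the same template with a stricter edge set tailored to canonicalness. Writing $\ell(u)$ for the unique distance from $w$ to $u$ in the strict tree $T_{\mathrm{strict}},$ set
$$\edges{T}\ :=\ \{(u,v):\ell(v)=\ell(u)+1\text{ and }(f(u),f(v))\in\edges{M}\}.$$
All tree edges are again included, so $f$ remains a bounded morphism, and $T$ is $w$-canonical because every $T$-path from $w$ advances $\ell$ by exactly one per step. To verify $T\models\hpsi,$ I would observe that $\ell$ defines a homomorphism from $T$ onto a sufficiently long strict line $L;$ given a clause $\hvarphi$ of $\hpsi$ and a homomorphism $\alpha\colon\preq\hvarphi\rightarrow T,$ the composition $\ell\circ\alpha$ is a homomorphism into $L,$ the hypothesis that $\hpsi$ is satisfied on every strict line and Proposition~\ref{prop:homomorphism and horn clauses} then force $\ell(\alpha(y))=\ell(\alpha(x))+1,$ while $M\models\hpsi$ applied to $f\circ\alpha$ gives $(f(\alpha(x)),f(\alpha(y)))\in\edges{M},$ so $(\alpha(x),\alpha(y))\in\edges{T}$ as required. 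The main obstacle in both parts is precisely the verification that $T\models\hpsi;$ the trick that makes it work is to define $\edges{T}$ as the intersection of two conditions, each of which independently forces the conclusion edges through an auxiliary structure---$\tl(T_{\mathrm{strict}})$ or $L$ for the structural condition and $M$ itself for the $f$-lifting condition---that is already known to satisfy $\hpsi.$
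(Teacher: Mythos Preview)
Your argument is correct and is genuinely different from the paper's. The paper proves both parts by an iterative limit construction: starting from $T_0=T_{\mathrm{strict}}$ it repeatedly picks a violated clause whose missing conclusion edge has minimal index in a fixed enumeration of $\vertices{T_0}\times\vertices{T_0}$, adds that edge, and passes to the union $T_\infty$; correctness of $T_\infty\models\hpsi$ is then argued by contradiction using the fairness of the enumeration, and countability of $M$ (via Corollary~\ref{corrollary:universal elementary logics have countable model property}) is invoked so that such an enumeration exists. Your approach bypasses all of this by writing down $\edges{T}$ in closed form as the intersection of two edge sets, each coming from a structure already known to satisfy $\hpsi$ (namely $\tl(T_{\mathrm{strict}})$, respectively a strict line, on one side and $M$ on the other); the verification of $T\models\hpsi$ is then a two-line pushforward argument along the identity and along $f$. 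This is more elementary, needs no countability hypothesis, and makes transparent the underlying reason the construction works: Horn clauses are preserved under this kind of fibered intersection. The paper's construction, on the other hand, produces the \emph{smallest} $T$ with the stated properties (the Horn closure of $T_{\mathrm{strict}}$ inside the $f$-pullback of $M$), which is not needed here but could matter if one later wants a minimality statement. One cosmetic point: in part~2 you should note, as in part~1, that the case $\conc\hvarphi=\emptyset$ is again excluded by composing with $f$ and using $M\models\hpsi$; the same remark disposes of clauses whose conclusion variables lie outside $\preq\hvarphi$, since your two pushforward arguments go through verbatim for homomorphisms $\alpha\colon\preq\hvarphi\cup\{x,y\}\to T$.
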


\begin{proof}
We prove both claims with nearly the same construction, indicating when differences are required. Since $\phi$ is $\K\hpsi$-satisfiable, there is a $\K\hpsi$-model $M$ and a world $w\in M$ such that $M,w\models\phi$. 

For the first claim, let $\tl$ be as determined by \algname\ when started on input $\hpsi$. Since the algorithm does not return $\NP,$ we know that $\tl\subseteq\set{\refl,\symm}$ or $\tl\subseteq\set{\refl,\trans^k\ \vert\ k\in\mathbb N}$. Since the second \NP-condition does not apply, we know that for any clause $\hvarphi$ in $\hpsi$ such that there is a homomorphism $\alpha\colon\preq\hvarphi\rightarrow T$ for some $\tl$-tree $T,$ that $\conc\hvarphi=(x,y)$ for some variables $x,y$ with $x=y$ or $x,y\in\preq\hvarphi.$

For the second claim, we show that for every clause $\hvarphi$ in $\psi,$ if $\hvarphi$ is not satisfied in every $w$-canonical graph, then $\conc\hvarphi\neq\emptyset$. Obviously, the prerequisite graph of such a clause $\hvarphi$ can be mapped into some $w$-canonical graph, and therefore also into its image $L$ as a strict line. Since $\hpsi$ is satisfied on every strict line, so is $\hvarphi,$ and hence $\conc\hvarphi\neq\emptyset,$ since otherwise, the clause would be unsatisfied on $L,$ a contradiction. 

For both claims, due to Proposition~\ref{prop:rooted models exist}, we can assume that $M$ is rooted at $w$. From Proposition~\ref{prop:tree like model exists with bounded morphism}, we know that there is a model $T_0$ which is a strict tree, and a surjective bounded morphism $f\colon T_0\rightarrow M$. Let $T_{\mathrm{strict}}:=T_0$. Let $w_T$ denote the root of $T_0$. The strict tree $T_0$ is trivially $w_T$-canonical.

We define a sequence of models $\left(T_n\right)_{n\in\mathbb N}$ such that for each $n\in\mathbb N,$ it holds that when we are proving the first claim:

\begin{enumerate}
 \item $\vertices{T_n}=\vertices{T_0},$
 \item $\edges{T_{\mathrm{strict}}}\subseteq\edges{T_n}\subseteq\edges{\tl(T_{\mathrm{strict}})},$
 \item $f\colon T_n\rightarrow M$ is a homomorphism.
\end{enumerate}

In the case of the second claim, we exchange the second point with ``$T_n$ is $w_T$-canonical.''

Note that due to Corollary~\ref{corrollary:universal elementary logics have countable model property}, we can assume that the model $M$ is countable. The proof of Proposition~\ref{prop:tree like model exists with bounded morphism} from \cite{blrive01} then constructs a model $T_0$ which is also countable. Hence assume that $T_0$ is a countable model. Now let $(e_n)_{n\in\mathbb N}$ be a surjective enumeration of $\vertices{T_0}\times\vertices{T_0},$ i.e., of all possible edges in the involved trees. The construction of our model is as follows: For $n=0,$ the model $T_0$ from above obviously satisfies the conditions, since every bounded morphism is also a homomorphism. For $n\ge 0,$ we make a case distinction:

\begin{itemize}
 \item If every clause in $\hpsi$ is satisfied in $T_n,$ then let $T_{n+1}:=T_n.$
 \item Otherwise, let $\hvarphi_{n+1}$ be a clause from $\hpsi$ which is not satisfied in $T_n$ such that $\conc{\hvarphi_{n+1}}=(x_{n+1},y_{n+1})$ for variables $x_{n+1},y_{n+1},$ and let $\alpha_{n+1}\colon\preq{\hvarphi_{n+1}}\cup\set{x_{n+1},y_{n+1}}\rightarrow T_n$ be a homomorphism such that $(\alpha_{n+1}(x_{n+1}),\alpha_{n+1}(y_{n+1}))$ is not an edge in $T_n$. This must exist by Proposition~\ref{prop:homomorphism and horn clauses}, since by the above, the case $\conc{\hvarphi_{n+1}}=\emptyset$ cannot occur. Choose $\hvarphi_{n+1},\alpha_{n+1}$ in such a way that the pair $(\alpha_{n+1}(x_{n+1}),\alpha_{n+1}(y_{n+1}))$ has a minimal index in the sequence $(e_n)$. Now let $T_{n+1}$ be defined with vertex set $\vertices{T_n},$ edge set $\edges{T_n}\cup\set{(\alpha_{n+1}(x_{n+1}),\alpha_{n+1}(y_{n+1}))},$ and the same propositional assignments as $T_n.$
\end{itemize}

We show that the construction satisfies the requirements 1-3. The first point, which is the same for both claims, holds by definition. 

For the first claim, we prove that for each $n,$ it holds that $\edges{T_{\mathrm{strict}}}\subseteq\edges{T_n}\subseteq\edges{\tl(T_{\mathrm{strict}})}$. Since by definition, $\edges{T_n}\subseteq\edges{T_{n+1}},$ we know that for all $n,$ it holds that $\edges{T_{\mathrm{strict}}}=\edges{T_0}\subseteq\edges{T_n},$ and we also know that $\edges{T_{\mathrm{strict}}}=\edges{T_0}\subseteq\edges{\tl(T_{\mathrm{strict}})}.$

Hence assume that there is some minimal $n$ such that $\edges{T_n}\nsubseteq\edges{\tl(T_{\mathrm{strict}})}$. Due to the minimality of $n,$ and since the claim holds for $n=0,$ we know that the edge which is not present in $\edges{\tl(T_{\mathrm{strict}})}$ is the edge $(\alpha_n(x_n),\alpha_n(y_n))$. By definition, $\alpha_n\colon\preq{\hvarphi_n}\cup\set{x_{n},y_{n}}\rightarrow T_{n-1}$ is a homomorphism. Since due to minimality of $n,$ we know that $\edges{T_{n-1}}\subseteq\edges{\tl(T_{\mathrm{strict}})},$ this implies that $\alpha_n\colon\preq{\hvarphi_n}\cup\set{x_{n},y_{n}}\rightarrow \tl(T_{\mathrm{strict}})$ is a homomorphism as well. Since $\hvarphi$ is a clause in $\hpsi,$ and $\tl(T_{\mathrm{strict}})\models\hpsi,$ we know that $\tl(T_{\mathrm{strict}})\models\hvarphi,$ and with Proposition~\ref{prop:homomorphism and horn clauses}, we conclude that $(\alpha_n(x_n),\alpha_n(y_n))$ is an edge in $\tl(T_{\mathrm{strict}}),$ a contradiction. Therefore we know that $\edges{T_{\mathrm{strict}}}\subseteq\edges{T_n}\subseteq\edges{\tl(T_{\mathrm{strict}})}$ for all $n,$ and hence we have proven the second point in the case of the first claim.

For the second claim, we need to show that $T_n$ is a $w_T$-canonical graph. For $i\in\mathbb{N},$ let $L_i$ denote the nodes in the $i$-th level of $T_{n-1},$ i.e., the set $\set{v\in T_{n-1}\ \vert \ T_{n-1}\models\path {w_T}iv}$. In order to prove that $T_n$ is $w_T$-canonical, since $T_{n-1}$ is, it suffices to prove that the edge $(\alpha_n(x_n),\alpha_n(y_n))$ in the step from $T_{n-1}$ to $T_{n}$ does not destroy the property of being canonical, i.e., we need to show that $\alpha_n(x_n)\in L_i$ and $\alpha_n(y_n)\in L_{i+1}$ for some $i$. Let $L$ be the homomorphical image of $T_{n-1}$ as a strict line via the homomorphism $\beta$ (since $T_{n-1}$ is $w$-canonical, this exists and is unique). Since $\alpha_n\colon\preq{\hvarphi_n}\cup\set{x_{n},y_{n}}\rightarrow T_{n-1}$ is a homomorphism, we know that $\beta\circ\alpha_n\colon\preq{\hvarphi_n}\cup\set{x_{n},y_{n}}\rightarrow L$ is a homomorphism as well. Since $\hpsi$ is satisfied on every strict line, this is also true for $\hvarphi,$ and hence due to Proposition~\ref{prop:homomorphism and horn clauses}, we know that $(\beta\circ\alpha_n(x_n),\beta\circ\alpha_n(y_n))$ is an edge in $L$. Therefore, $\alpha_n(x_n)$ is exactly one level above $\alpha_n(y_n)$ in $T_{n-1},$ as required.

For both claims, we now show that $f\colon T_n\rightarrow M$ is a homomorphism for all $n$. Again, we prove the fact by induction, and for $n=0,$ this holds due to the choice of $T_0,$ since every bounded morphism is a homomorphism. Therefore, let the claim hold for $n,$ and let $(u,v)$ be an edge in $T_{n+1}$. If $(u,v)$ is not the edge $(\alpha_{n+1}(x_{n+1}),\alpha_{n+1}(y_{n+1})),$ then we know that $(u,v)$ is an edge in $T_n$ as well, and since due to induction hypothesis, we know that $f\colon T_n\rightarrow M$ is a homomorphism, it follows that $(f(u),f(v))$ is an edge in $M$. Therefore assume that $u=\alpha_{n+1}(x_{n+1})$ and $v=\alpha_{n+1}(y_{n+1})$. We need to show that $(f\circ\alpha_{n+1}(x_{n+1}),f\circ\alpha_{n+1}(y_{n+1}))$ is an edge in $M$. By construction, we know that $\alpha_{n+1}\colon\preq{\hvarphi_{n+1}}\cup\set{x_{n+1},y_{n+1}}\rightarrow T_n$ is a homomorphism. Since by induction hypothesis, we know that $f\colon T_n\rightarrow M$ is a homomorphism, it follows that $f\circ\alpha_{n+1}\colon\preq{\hvarphi_{n+1}}\cup\set{x_{n+1},y_{n+1}}\rightarrow M$ is a homomorphism as well. Since $M$ is a $\K\hpsi$-model and $\hvarphi_{n+1}$ is a clause in $\hpsi,$ we know that $M$ satisfies $\hvarphi_{n+1},$ and thus by Proposition~\ref{prop:homomorphism and horn clauses}, we know that $(f\circ\alpha_{n+1}(x_{n+1}),f\circ\alpha_{n+1}(y_{n+1}))$ is an edge in $M,$ as required.

We now construct the desired $\K\hpsi$-model as follows: Define $T_{\infty}$ as having $\vertices{T_{\infty}}=\vertices{T_0}$ and $\edges{T_{\infty}}=\cup_{n\in\mathbb{N}}\edges{T_n}$. We show that $T_{\infty}$ is a $\K\hpsi$-model, and that $f\colon T_{\infty}\rightarrow M$ is a bounded morphism. 

Assume that $T_{\infty}$ is not a $\K\hpsi$-model. By construction, we know that in this case, there is no $n$ such that $T_n=T_{n+1},$ and therefore each $T_{n+1}$ has exactly one additional edge in comparison to $T_n$. Hence we can define a sequence $(f_n)_{n\in\mathbb N}$ such that $f_n=i\in\mathbb N$ iff $\edges{T_{n+1}}=\edges{T_n}\cup\set{e_i}$. Then $f_n$ is a sequence of pairwise different natural numbers. Since $\hpsi$ is not satisfied in $T_{\infty},$ there is a clause $\hvarphi$ from $\hpsi$ which is not satisfied in $T_{\infty}$. For both claims, we therefore know by the above that $\conc\hvarphi=(x,y)$ for variables $x,y$ with $x=y$ or $x,y\in\preq\hvarphi$. By Proposition~\ref{prop:homomorphism and horn clauses}, we know that there is a homomorphism $\alpha\colon\preq\hvarphi\cup\set{x,y}\rightarrow M$ such that $(\alpha(x),\alpha(y))$ is not an edge in $T_{\infty}$. Let $j\in\mathbb N$ such that $(\alpha(x),\alpha(y))=e_j$. Since in the sequence $f_n,$ no number is repeated, there is some natural number $n_0$ such that $f_n>j$ for all $n\ge n_0$. Since $\preq\hvarphi$ is a finite graph, and $\edges{T_n}\subseteq\edges{T_{n+1}},$ and $\edges{T_{\infty}}=\cup_{n\in\mathbb{N}}\edges{T_n},$ we know that there is some $n_1\in\mathbb{N}$ such that $\alpha\colon\preq\hvarphi\cup\set{x,y}\rightarrow T_{n_1}$ is a homomorphism. Then $\alpha\colon\preq\hvarphi\cup\set{x,y}\rightarrow T_n$ is also a homomorphism for all $n\ge n_1,$ since every edge present in $T_{n_1}$ is also present in every $T_n$ for $n\ge n_1$. Let $n:=\max(n_0,n_1)$. Then in the step from $T_n$ to $T_{n+1},$ the edge $e_{f_n}$ was added, and by choice of $n$ we know that $f_n>j$. Since $e_j$ is not an edge in $T_{\infty},$ we know that $e_j$ is also not an edge in $T_n$. Recall that $\alpha_{n+1},\hvarphi_{n+1}$ are chosen in such a way that the edge $(\alpha_{n+1}(x),\alpha_{n+1}(y))$ has minimal index in the sequence $(e_n)_{n\in}\mathbb N,$ this is a contradiction, since the edge $e_j$ is an edge with a smaller index than $e_{f_j},$ and since $\alpha\colon\preq\hvarphi\cup\set{x,y}\rightarrow T_n$ is a homomorphism, $e_j$ satisfies the conditions of the edge $(\alpha_{n+1}(x),\alpha_{n+1}(y))$ in the construction of $T_{n+1}.$

It remains to show that $f\colon T_{\infty}\rightarrow M$ is a bounded morphism. Property $(i)$ holds by construction, since we do not change propositional assignments, and $f\colon T_0\rightarrow M$ is a bounded morphism. We now show that $f$ is a homomorphism. Hence let $(u,v)$ be an edge in $T_{\infty}$. Since $T_{\infty}$ is the union over all $T_n,$ there is some $n\in\mathbb{N}$ such that $(u,v)$ is an edge in $T_n$. Since by the above, $f\colon T_n\rightarrow M$ is a homomorphism, it follows that $(f(u),f(v))$ is an edge in $M,$ as claimed. For property $(iii),$ let $u\in T_{\infty}$ and $v'$ in $M$ such that $(f(u),v')$ is an edge in $M$. Since $f\colon T_0\rightarrow M$ is a bounded morphism, we know that there is some $v\in T_0$ such that $(u,v)$ is an edge in $T_0,$ and $f(v)=v'$. By construction, $(u,v)$ is also an edge in $T_{\infty},$ and therefore $f\colon T_{\infty}\rightarrow M$ is a bounded morphism. By choice of $f,$ and since $\vertices{T_0}=\vertices{T_{\infty}},$ $f$ is also surjective. Hence there is some $w'\in T_{\infty}$ such that $f(w')=w$. Since $M,w\models\phi,$ it follows from Proposition~\ref{prop:bounded morphism modal invariance} that $T_{\infty},w'\models\phi$. By construction, for the first claim it holds that $\edges{T_{\mathrm{strict}}}\subseteq\edges{T_{\infty}}\subseteq\edges{\tl(T_{\mathrm{strict}}}),$ since this holds for the individual $T_n$ and $T_{\infty}$ is the union over the edges of all $T_n,$ and for the second claim, $T_{\infty}$ is obviously $w_T$-canonical: Assume that it is not, then there is a node $x\in T_{\infty}$ and natural numbers $i\neq j$ such that $T_{\infty}\models\path {w_T}ix$ and $T_{\infty}\models\path {w_T}jx$. Since only a finite number of edges is relevant for this path, there exists some $n$ such that $T_n\models\path {w_T}ix$ and $T_n\models\path {w_T}jx$. This is a contradiction, since $T_n$ is $w_T$-canonical.

Note that we can assume that the node $w'$ is the root of $T_{\infty},$ since due to Proposition~\ref{prop:rooted models exist}, we can assume that every node in $T_{\infty}$ can be reached from $w'.$
\end{proof}

\subsection{\PSPACE\ upper complexity bounds}\label{sect:pspace for non-transitive}

In the previous section, we showed that for logics defined by universal Horn formulas, satisfiable formulas are always satisfiable in a tree-like model. Tree-like models are the main argument in many proofs showing \PSPACE-membership for satisfiability problems in modal logic. We now show that these models indeed allow us to construct \PSPACE\ algorithms for a wide class of modal logics.

Note that while the construction in the proof of the following theorem has similarities to the constructions by Ladner in~\cite{lad77} or by Halpern and Moses in~\cite{hamo92}, the focus of our result is different. Many proofs of previous \PSPACE-algorithms also gave proofs of a variant of some tree-like model property. Our proof relies on this property (which we already proved for our logics in Theorem~\ref{theorem:tree-like models for horn logics}), and as a consequence, the verification that the algorithm works correctly with respect to the modal aspect of its task is very easy to verify. The main work of the proof is to prove that the algorithm handles the first-order part of the satisfiability problem correctly, i.e., that the model it constructs is in fact a model satisfying the first-order formula $\hpsi$ defining the logic. Whereas this is easy for standard classes of frames (checking reflexivity, symmetry, transitivity etc is straightforward), in the general case that we cover here this requires most of the work.

The main feature of the logics that we use here is that of \emph{locality}: The proof makes extensive use of the fact that in order to verify that the first-order clauses are satisfied it is sufficient to consider local parts of the model constructed by the algorithm. This is the main reason why we believe that this proof does not easily generalize to cases where a variant of transitivity is among the conditions implied by the first-order formula $\hpsi.$

\begin{theorem}\label{theorem:pspace membership for non-trans logics}
 Let $\hpsi$ be a universal Horn formula such that \algname\ does not add any element of the form $\trans^k$ to \tl\ on input $\hpsi$. Then $\sat{\K\hpsi}\in\PSPACE.$
\end{theorem}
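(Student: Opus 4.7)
The plan is to design a Ladner-style depth-first tableau algorithm that explores a tree-like model in polynomial space, using Theorem~\ref{theorem:tree-like models for horn logics} as the structural starting point and the absence of $\trans^k$ in $\tl$ to keep the construction strictly local. By the hypothesis and the analysis in Corollary~\ref{corollary:horn cunjunction without algorithm}, the list $\tl$ produced by \algname\ is contained in $\set{\refl,\symm}$, and $\hpsi$ is satisfied on every $\tl$-tree. Theorem~\ref{theorem:tree-like models for horn logics}(1) then guarantees that any $\K\hpsi$-satisfiable input $\phi$ is witnessed in a model $T$ whose edge set lies between those of a strict tree $T_{\mathrm{strict}}$ and its $\tl$-closure; in particular $T$ is itself a $\tl$-tree (possibly missing some reflexive or symmetric edges, which by Proposition~\ref{prop:trivial subgraph property} is irrelevant for satisfaction of the universal formula $\hpsi$).

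First I would verify that such $T$ may be chosen of depth at most $\md\phi$ and with branching at most $\card{\subf\phi}$ per node, by combining Proposition~\ref{prop:rooted models exist} with Lemma~\ref{lemma:modal restriction invariance}: each world only needs one $\Diamond\psi$-witness per subformula. The algorithm then works as in~\cite{lad77,hamo92}: beginning at the root with a guessed propositionally-consistent type $t_w \subseteq \subf\phi \cup \set{\neg\chi \mid \chi \in \subf\phi}$ containing $\phi$, recursively check each $\Diamond\psi \in t_w$ by guessing a child type $t_{w'}$ containing $\psi$ and recursing. The recursion depth is bounded by $\md\phi$, at each level we store only the current type (of polynomial size) plus, when $\symm \in \tl$, a pointer to the parent's type so we can treat it as a symmetric successor; when $\refl \in \tl$, every $\Diamond\psi \in t_w$ can additionally be discharged by $\psi \in t_w$ itself, which is checked trivially. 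The total space is therefore $\md\phi$ times a polynomial in $\card\phi$, hence polynomial.

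The main obstacle is proving correctness — specifically, that the model implicitly built along the recursion stack can be taken to be a $\tl$-tree satisfying $\hpsi$, as opposed to some ill-behaved object. This is where the absence of $\trans^k$ is crucial. Soundness (the algorithm only accepts satisfiable formulas) follows directly: an accepting computation exhibits a finite $\tl$-tree model $T$ rooted at $w$ where $T,w \models \phi$ by a routine induction on $\md\phi$; since $T$ is a subgraph of any infinite $\tl$-tree and $\hpsi$ holds on every $\tl$-tree by hypothesis, Proposition~\ref{prop:trivial subgraph property} gives $T \models \hpsi$. The symmetric case needs care in the induction to ensure that when the type $t_{w'}$ of the child is chosen, the parent world $w$ is available as a symmetric witness for any $\Diamond\psi \in t_{w'}$ with $\psi \in t_w$ — this amounts to propagating a single extra bit of information across the recursive call.

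Completeness (every $\K\hpsi$-satisfiable formula is accepted) follows by starting from the tree-like model produced by Theorem~\ref{theorem:tree-like models for horn logics}(1), pruning it via Lemmas~\ref{lemma:few nodes with few predecessors}-\ref{lemma:few nodes with few successors} style bounded-branching restrictions, and reading off the types of each world on an accepting branch. I expect the subtle point to be handling the symmetric case cleanly: if $\symm \in \tl$, the $\tl$-closure makes the parent a successor of the child, so the recursive call must be told which subformulas the parent satisfies, which increases the per-level space by only $O(\card{\subf\phi})$ and keeps the overall bound polynomial. If both $\refl$ and $\symm$ are in $\tl$ (but no transitivity), the same algorithm works — no subtle interaction arises precisely because without any $\trans^k$, homomorphisms from a clause's prerequisite graph cannot reach beyond an immediate neighborhood, so checking $\hpsi$ reduces to checking that the local $\tl$-closure is in place, which is automatic by construction.
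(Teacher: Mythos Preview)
Your proposal has a genuine gap in the completeness argument. You assume the witnessing model can always be taken to be a \emph{full} $\tl$-tree (with every reflexive or symmetric edge present), and your soundness argument builds exactly such a model. But Theorem~\ref{theorem:tree-like models for horn logics} only guarantees a model $T$ with $\edges{T_{\mathrm{strict}}}\subseteq\edges{T}\subseteq\edges{\tl(T_{\mathrm{strict}})}$; some reflexive or symmetric edges may be \emph{absent}, and this can be essential for satisfying $\phi$. Concretely, take $\hpsi=\forall x,y\,(xRy\implies xRx)$. Then \algname\ adds only $\refl$, so the hypothesis of the theorem holds and $\tl=\set{\refl}$. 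The modal formula $\phi=\neg\Diamond\top$ is $\K\hpsi$-satisfiable (a single irreflexive world satisfies $\hpsi$ vacuously), yet $\phi$ fails in every reflexive graph. Your algorithm, which treats every node as reflexive whenever $\refl\in\tl$, would wrongly reject $\phi$. Your appeal to Proposition~\ref{prop:trivial subgraph property} does not help: that proposition concerns vertex-induced subgraphs, not graphs on the same vertex set with fewer edges, and removing edges does not preserve arbitrary universal Horn formulas.

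The paper anticipates exactly this issue (it remarks that ``there may very well be formulas which are not satisfiable on a symmetric and reflexive tree, but are $\K\hpsi$-satisfiable'') and handles it by having the algorithm guess, per node, whether that node is reflexive and whether its incoming edge is symmetric, and then verifying locally that these choices are consistent with $\hpsi$. The key procedure $\verifyhorn{v}$ enumerates homomorphisms from connected components of each $\preq\hvarphi$ into the $S$-neighbourhood of $v$ (where $S$ is the maximum size of such a component) and checks that any edge forced by a clause is actually present. The absence of $\trans^k$ is what makes this local: without transitivity, the homomorphic image of the component containing $\conc\hvarphi$ stays within bounded undirected distance of $v$. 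Two further technicalities you would also need are (i) guessing up front, for each connected component of each $\preq\hvarphi$, whether it embeds somewhere in $T$ (so that clauses with multiple components can be evaluated), and (ii) retaining the $S$-neighbourhoods of all ancestors of the current node to ensure consistency when the depth-first search backtracks and re-enters a region.
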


\textit{Proof.}
 Since $\NP\subseteq\PSPACE$ we can assume that $\sat{\K\hpsi}\notin\NP$. Let $\tl$ be as determined by \algname\ on input $\hpsi$. From the prerequisites, we know that $\tl\subseteq\set{\refl,\symm}$. From Theorem~\ref{theorem:tree-like models for horn logics}, we know that for every $\K\hpsi$-satisfiable modal formula $\phi,$ there is a model $T$ of $\phi$ such that $T$ is an edge-extension of a strict tree $T_{\mathrm{strict}}$, and every edge present in $T$ which is not an edge of $T_{\mathrm{strict}}$ is a reflexive or a symmetric edge.

By the proof of Lemma~\ref{lemma:few nodes with few predecessors}, we can assume that in the tree $T_{\mathrm{strict}},$ every node has at most $\card{\subf\phi}$ successors.

The strategy of the \PSPACE-algorithm is as follows: We nondeterministically guess the model $T$ and verify that it is a model of both $\phi$ and of $\hpsi$ by performing a depth-first-search. It is straightforward to see that in this way, we can verify that the modal formula $\phi$ holds in the model. To also check if $\hpsi$ is satisfied requires a bit more effort: Even though we know that $\hpsi$ only adds reflexive or symmetric edges in addition to those present in $T_{\mathrm{strict}},$ we need to be careful which edges are required and which are not (there may very well be formulas which are not satisfiable on a symmetric and reflexive tree, but are $\K\hpsi$-satisfiable). The main reason why we can perform these tests is that the properties that we work with have a ``local character:'' To check if an edge is required between some nodes $u$ and $v$ in $T,$ we need to know whether there is a clause $\hvarphi$ in $\hpsi$ such that $\hvarphi$ can be homomorphically mapped into $T$ in such a way that the conclusion edge is mapped to the pair $(u,v)$. Since we are working with a tree only extended with reflexive and symmetric edges, we know that a homomorphic image of a connected component of some $\preq\hvarphi$ including $u$ and $v$ contains only vertices which are ``near'' to both $u$ and $v$. Therefore we can verify that these clauses are satisfied by procedures looking only locally at the model $T$. 

There are two main obstacles to this approach: For once, the clause $\hvarphi$ that requires $(u,v)$ to be an edge might very well contain more than one connected component except that one containing the conclusion edge (we can ignore the case where there is no conclusion edge, or the vertices from the conclusion edge are unconnected in $\preq\hvarphi,$ since if such a clause can be applied, i.e., homomorphically mapped to $T,$ then it can also be homomorphically mapped to the $\tl$-closure of $T,$ and hence to a $\tl$-tree, in which case \algname\ reports \NP-membership, which is a contradiction to our assumption $\sat{\K\hpsi}\notin\NP$). The other obstacle is that although we only need to look at vertices in the ``neighborhood'' of the current vertex to check that it has all the right edges coming in and out, we need to ensure that all the vertices that we looked at ``locally'' are consistent, when we revisit a part of the model which is close to a node that we already considered.

The ways to deal with these obstacles is the following: For the first problem, we simply keep a list of connected components of $\preq\hvarphi$-graphs, and at the beginning of the algorithm, guess for each one if it will appear as a homomorphic image in the tree (which of course later we need to verify). For the second problem, we keep more nodes in storage than just the ones in the neighborhood of the one we are currently visiting, but only a polynomial number.

Strictly speaking, the algorithm does not operate on a model, but on an ``annotated model.'' The annotation of a world is the set of subformulas and negated subformulas of the input formula $\phi$ which are true at this world, and are required to be true to ensure that the formula $\phi$ is true at the root-world.

By Proposition~\ref{prop:rooted models exist}, we can assume that the tree $T$ has height of at most $\md\phi$. For a node $v$ in the $i$-th level of $T,$ let $\annot v$ denote the set of subformulas and negated subformulas of $\phi$ which have a modal depth of at most $\md\phi-i$. These are exactly those formulas for which we need to know that they hold at $v$ in order to verify that the input formula $\phi$ holds at the root of $T.$

We now describe the decision procedure, which is a nondeterministic \PSPACE-algorithm. Let $S$ be the cardinality of the largest connected component in any of the graphs $\preq\hvarphi$ for clauses $\hvarphi$ of $\hpsi$. Note that this number only depends on $\hpsi,$ and therefore can be regarded as constant. The algorithm as stated in Figure~\ref{fig:pspace algorithm} does not work in polynomial space, since it guesses and stores the possibly exponentially-sized model $T$. We will first show that the algorithm as stated is correct and then prove how it can be implemented using only polynomial space, by only storing a currently relevant subset of the model $T.$

For the description of the algorithm, we will call a node $v\in T$ \emph{back-symmetric} if there is an edge $(v,u),$ where $u$ is the predecessor of $v$ in $T_{\mathrm{strict}}$. 

\begin{figure}
\begin{algorithmic}
 \STATE{For connected components $C_i$ of all $\preq\hvarphi,$ guess if it appears as homomorphic image in $T$}
 \STATE{Guess the model $T$}
 \STATE{Verify that $\phi\in\annot{0}$}
 \STATE{$\mathsf{current}:=w$}
 \WHILE{$w$ not marked $\mathsf{done}$}
    \STATE{Let $\mathsf{preq}$ be the predecessor of $\mathsf{current}$ (if $\mathsf{current}\neq w$)}
    \STATE{$\verifyhorn{\mathsf{current}}$}
    \IF{There is $\Diamond\chi\in\annot{\mathsf{current}}$ not marked $\mathsf{done}$}
       \IF{$\mathsf{current}$ is reflexive and $\chi\in\annot{\mathsf{current}}$}
         \STATE{Mark $\Diamond\chi$ $\mathsf{done}$ in $\annot{\mathsf{current}}$}
       \ENDIF
       \IF{$\mathsf{current}$ is back-symmetric and $\chi\in\annot{\mathsf{preq}}$}
         \STATE{Mark $\chi$ $\mathsf{done}$}
       \ENDIF
       \STATE{Let $\mathsf{next}$ be next unvisited successor of $\mathsf{current}$}
       \STATE{Verify that $\chi\in\annot{\mathsf{next}}$}
       \STATE{$\mathsf{current}:=\mathsf{next}$}
    \ELSE
       \STATE{$\verifycons{\mathsf{current}}$}
       \IF{$\mathsf{current}$ is reflexive}
          \STATE{Verify that $\annot{\mathsf{current}}$ does not contain $\chi$ and $\neg\Diamond\chi$ for any $\chi$}
       \ENDIF
       \IF{$\mathsf{current}$ is back-symmetric}
          \STATE{Verify that $\annot{\mathsf{current}}$ does not contain $\neg\Diamond\chi$ for $\chi\in\annot{\mathsf{preq}}$}
       \ENDIF
       \STATE{Verify that $\annot{\mathsf{preq}}$ does not contain $\neg\Diamond\chi$ for some $\chi\in\annot{\mathsf{current}}$}
       \STATE{Mark $\mathsf{current}$ as $\mathsf{done}$}
       \STATE{In $\mathsf{preq},$ mark $\Diamond\chi$ done for all $\chi\in\annot{\mathsf{current}}$}
       \STATE{$\mathsf{current}:=\mathsf{preq}$}
    \ENDIF
 \ENDWHILE
 \STATE{Accept}
 \end{algorithmic}
\caption{Algorithm $\algnametwo$}
\label{fig:pspace algorithm}
\end{figure}

When the algorithm guesses the model $T,$ it additionally guesses the set $\annot v$ for every node $v$ in $T,$ and for each node it guesses if it is back-symmetric and if it is reflexive.

The procedure $\verifycons v$ performs the following check: For a node $v$ on the $i$-th level of $T,$ $\annot{i}$ is required to contain all subformulas and negated subformulas of $\phi$ which have modal depth of at most $\md\phi-i,$ and are true at $v$. Hence, $\annot{\mathsf{current}}$ must contain exactly one of $\neg\chi$ or $\chi$ for each relevant $\chi,$ and additionally, if $\chi_1\wedge\chi_2\in\annot{\mathsf{current}},$ then both $\chi_1$ and $\chi_2$ need to be members as well. Similarly, if $\chi_1\vee\chi_2$ is a member, then at least one of them must be an element of $\annot{\mathsf{current}}.$

The procedure $\verifyhorn v$ works as follows: If there is a clause $\hvarphi$ in $\hpsi$ with $\conc\hvarphi=(x,x)$ for $x\notin\preq\hvarphi$ such that all connected components of $\preq\hvarphi$ can be mapped homomorphically into $T,$ then $\verifyhorn v$ ensures that $v$ is reflexive. Note that all other Horn clauses in $\hpsi$ satisfy that $\conc\hvarphi=(x,y)$ for some $x,y\in\preq\hvarphi$. For these clauses, the procedure considers the subgraph $G_v$ consisting of all nodes of $T$ which can be reached from $v$ in at most $S$ undirected steps (note that a node can be reached in at most $S$ steps in $T$ if and only if it can be reached in at most $S$ steps in $T_{\mathrm{strict}}$). For every connected component $C$ of $\preq\hvarphi,$ $\verifyhorn{v}$ tests all functions $\alpha\colon C\rightarrow G_v$. If one of these $\alpha$ is a homomorphism, then $\verifyhorn v$ rejects, if the algorithm guessed in the beginning that $C$ cannot be mapped homomorphically into $T$. If there is one clause $\hvarphi$ in $\hpsi$ such that all connected components of $\preq\hvarphi$ can be mapped into $T$ (according to the list of these possibilities maintained by the algorithm) and $\verifyhorn v$ detected a homomorphism $\alpha\colon C_{\preq\hvarphi}\rightarrow G_v$ (where $C_{\preq\hvarphi}$ is the connected component of $\preq\hvarphi$ containing the nodes from the conclusion edge of $\hvarphi$) for some $\hvarphi$ with $\conc{\hvarphi}=(x,y)$ such that $v\in\set{\alpha(x),\alpha(y)},$ then $\verifyhorn v$ rejects if $(\alpha(x),\alpha(y))$ is not an edge in $G_v.$

We prove that the algorithm is correct. First note that for each connected component $C$ of some $\preq\hvarphi$ for a clause $\hvarphi$ in $\hpsi,$ if there is a homomorphism $\alpha\colon C\rightarrow T,$ then there is a node $v\in T$ such that $\alpha\colon C\rightarrow G_v$ is a homomorphism. This holds because all edges in $T$ are already present in the strict tree $T_{\mathrm{strict}}$ or are symmetric or reflexive edges, and the homomorphic image of $C$ under the homomorphism $\alpha$ is a connected component of $T,$ and the maximal distance of the nodes in this image is $S$ (recall that this is the maximal cardinality of a connected component in any $\preq\hvarphi$). Therefore, we can assume that for each connected component $C$ of some $\preq\hvarphi,$ if it can be homomorphically mapped into $T,$ then \algnametwo\ guessed this correctly in the beginning in every accepting run of the algorithm (an incorrect guess would, due to the observation just made, be detected by $\verifyhorn v$ for some node $v$).

 Now assume that the algorithm accepts. We claim that the model obtained from the annotated model guessed by the algorithm where a variable $x$ is true at a world $v$ if and only if $x\in\annot{v}$ is a model of both the modal formula $\phi$ (at the root-world $w$) and of the Horn formula $\hpsi$. By the checks the algorithm performs, it can easily be verified by induction on the level of the nodes (corresponding to the modal depth of the involved formulas) that for every world $v$ in $T,$ every formula in $\annot{v}$ is satisfied at $v$. Since $\phi\in\annot{w},$ this implies that $T,w\models\phi$. The base case for the induction is clear, since for worlds $v$ in the level $\md\phi,$ $\annot v$ only contains literals, and the algorithm ensures that $\annot v$ is propositionally consistent. Since for each subformula $\chi$ of $\phi$ of relevant modal depth, $\annot v$ contains exactly one of $\chi$ and $\neg\chi,$ the induction hypothesis can be applied in the relevant cases. Note that the algorithm also checks consistency for the cases in which we have reflexive and/or symmetric edges.

It remains to show that $T$ is also a model of the first-order formula $\hpsi$. Assume that this is not the case. Then there exists some clause $\hvarphi$ in $\hpsi$ which is not satisfied in $T$. In particular, this implies that $\preq\hvarphi$ can be homomorphically mapped into $T,$ and since $\edges{T}\subseteq\edges{\tl(T)},$ this implies that $\preq\hvarphi$ can be homomorphically mapped into a \tl-tree. Since we assumed that $\algname$ does not return $\NP$ on input $\hpsi,$ we know that none of the \NP-conditions from \algname\ are satisfied. Since none of the \NP-conditions from \algname\ occur, we know that $\hvarphi$ has conclusion edge $\conc\hvarphi=(x,y)$ for variables $x,y$ where $x=y$ or $x,y\in\preq\hvarphi$. First assume that $x=y$. Since $\preq\hvarphi$ can be mapped homomorphically into $T,$ the procedure $\verifyhorn .$ required every node in $T$ to be reflexive, hence $\hvarphi$ is satisfied in $T$. Not assume that $x,y\in\preq\hvarphi$. Since none of the \NP-conditions apply, we know that $x$ and $y$ are connected with an undirected path in $\preq\hvarphi$. In particular, they lie in the same connected component $C_{\preq\hvarphi}$ of $\preq\hvarphi$. Since $\hvarphi$ is not satisfied in $T,$ this implies by Proposition~\ref{prop:homomorphism and horn clauses} that there are nodes $u,v\in T$ such that $(u,v)$ is not an edge in $T,$ there is no edge $(u,v)$ in $T,$ and there is a homomorphism $\alpha\colon\preq\hvarphi\rightarrow T$ such that $\alpha(x)=u,$ and $\alpha(y)=v$. Due to the above, we know that $\alpha\colon C_{\preq\hvarphi}\rightarrow G_v$ is a homomorphism. Therefore, the homomorphism $\alpha$ was found by $\verifyhorn{u}$ and $\verifyhorn{v}$. Since $\preq\hvarphi$ can be homomorphically mapped into $T,$ we know that every connected component $C$ of $\preq\hvarphi$ can be homomorphically mapped into $T,$ and due to the above, we know that the \algnametwo\ guessed this correctly in an accepting run of the algorithm. Therefore, the procedure $\verifyhorn .$ ensured that $(u,v)$ is an edge in $T,$ a contradiction.

Now assume that $\phi$ is $\K\hpsi$-satisfiable. Due to the remarks at the beginning of the proof, we know that in this case, there exists a $\K\hpsi$-model $T$ such that $T$ is an edge-extension of a strict tree, and $\edges T\subseteq\edges{\tl(T)}$. Therefore the algorithm can guess this model and verify that it satisfies both $\phi$ and $\hpsi.$

It remains to prove that the algorithm can be implemented in nondeterministic polynomial space. The result then follows, since due to a classic result by Savitch~\cite{sav73}, $\NPSPACE=\PSPACE$. In order to implement the algorithm using only polynomial space, the main change needed compared to the version stated in Figure~\ref{fig:pspace algorithm} is how much of the guessed model $T$ is stored in memory at a given time.

The \NPSPACE-implementation does not guess the entire model $T$ at the start of the algorithm, but guesses each node the moment it is first accessed (either by being created explicitly, or by being explored as an $S$-step neighbor of another node by the procedure $\verifyhorn .$). It removes the node from memory at a time when it will not be accessed anymore in the remaining execution of the algorithm.

To be precise, the algorithm at all times keeps in its memory the node $\mathsf{current}$ and all of its predecessors, and all nodes which can be reached from these in at most $S$ steps in the tree $T_{\mathrm{strict}}$. Since in $T,$ every node has at most $\card{\subf{\phi}}$ successors in the next level, and $S$ is a constant, this is a polynomial number of nodes. 

We now need to prove that no necessary information is removed from memory, i.e., that no node is first created, then deleted and then accessed again. Note that from the construction of the algorithm, it is obvious that new nodes are visited in a depth-first order.

Therefore assume that this happens for some node $\mathsf{node}$. Note that any node which gets deleted from memory is not reachable from the root world $w$ in at most $S$ steps, and therefore $\mathsf{node}$ is at some level $i>S$ in the tree $T_{\mathrm{stricŧ}}$. Let $v_1$ be the node for which $\mathsf{node}$ was visited for the first time, i.e., the first node visited such that $\mathsf{node}\in G_{v_1}$ (recall that $G_{v_1}$ is the set of nodes which can be reached from $v_1$ in at most $S$ undirected steps in $T$). Since $\mathsf{node}$ is deleted from memory and required again later, there is some node $v_2$ such that $\mathsf{node}$ cannot be reached from any predecessor of $v_2$ in at most $S$ steps, and a node $v_3$ such that $\mathsf{node}$ can be reached from $v_3$ in at most $S$ steps. Let $a$ be the (uniquely determined) common predecessor of $\mathsf{node}$ and $v_2$ with a maximal level in the tree. Then, since $a$ is a predecessor of $v_2,$ we know that $\mathsf{node}\notin G_a$. Hence, $\mathsf{node}$ is at least $S$ levels below $a$. Since $T_{\mathrm{strict}}$ is a tree, any node $t$ such that $\mathsf{node}\in G_t$ must therefore be a successor of $a$. In particular, $v_3$ is a successor of $a$. This is a contradiction, because \algnametwo\ traverses the tree in depth-first-search, and hence does not leave the sub-tree with root $a$ and re-enters it later.

Therefore we have shown that it is sufficient to keep a polynomial number of nodes in storage, and which nodes to keep can be decided by an easy pattern. Hence it follows that the algorithm can indeed by implemented in nondeterministic polynomial space as required, concluding the proof.
\hfill$\Box$\bigskip

\subsection{Applications}\label{sect:applications}

Theorem~\ref{theorem:horn conjunction classification} and \ref{theorem:pspace membership for non-trans logics} can be used to classify the complexity of a lot of concrete logics, but they also imply more general results, for which we will give two examples. For once, recall that Ladner proved that all normal modal logics \KL\ such that $\logicname{S4}$ (the logic over all transitive and reflexive frames) is an extension of \KL\ give rise to a \PSPACE-hard satisfiability problem. The following corollary shows that this result is optimal in the sense that every universal Horn logic which is a ``proper extension'' of \logicname{S4}\ in the way that they imply the conditions of \logicname{S4}, already gives an \NP-solvable satisfiability problem.

\begin{corollary}\label{corollary:ladner hardness result optimal}
 Let $\hpsi$ be a universal Horn formula such that $\hpsi$ implies $\hvarphi_{\mathrm{refl}}\wedge\hvarphi_{\mathrm{trans}}$. Then either $\K\hpsi=\logicname{S4},$ or $\K\hpsi$ has the polynomial-size model property and $\sat{\K\hpsi}\in\NP.$
\end{corollary}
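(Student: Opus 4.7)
The plan is to invoke Corollary~\ref{corollary:horn cunjunction without algorithm} on $\hpsi$. That dichotomy says that either $\K\hpsi$ has the polynomial-size model property (and hence $\sat{\K\hpsi}\in\NP$), or $\hpsi$ is satisfied on every tree of one of six explicit classes (in which case $\sat{\K\hpsi}$ is \PSPACE-hard). In the first alternative the statement of the corollary is immediate, so the real work lies in showing that under the hypothesis $\hpsi\implies\hvarphi_{\mathrm{refl}}\wedge\hvarphi_{\mathrm{trans}}$ the only \PSPACE\ alternative compatible with the hypothesis is ``$\hpsi$ is satisfied on every reflexive transitive tree,'' and that this property in turn forces $\K\hpsi=\logicname{S4}$.

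I would first eliminate five of the six \PSPACE\ classes. Any class containing an irreflexive tree --- namely ``every strict tree,'' ``every symmetric tree,'' and ``every $S$-transitive tree'' when $0\notin S$ --- cannot have $\hpsi$ satisfied everywhere in it, since $\hpsi$ forces reflexivity. Likewise, ``every reflexive tree'' and ``every symmetric reflexive tree'' contain depth-$2$ trees that fail transitivity, contradicting $\hpsi$ forcing transitivity. This leaves only ``every $S$-transitive reflexive tree satisfies $\hpsi$'' (the variant ``every $S$-transitive tree'' with $0\in S$ reduces to this after absorbing reflexivity into the tree type). A closure argument handles the remaining sub-cases: if $S\subseteq\{0,1\}$ then $S$-transitive reflexive trees coincide with reflexive trees, already excluded; whereas for any $S$ containing some $k\ge 2$, the $S$-transitive reflexive closure of a strict tree is automatically transitive, because a $2$-step path $u\to v\to w$ can be padded with reflexive loops into a $k$-step path, and $k$-transitivity then forces the edge $(u,w)$. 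Consequently the class of trees in question coincides with the class of reflexive transitive trees, and $\hpsi$ is satisfied on every reflexive transitive tree.

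Finally I would deduce $\K\hpsi=\logicname{S4}$. By Proposition~\ref{proposition:extensions of elementary logics and first-order implication} applied to $\hpsi\implies\hvarphi_{\mathrm{refl}}\wedge\hvarphi_{\mathrm{trans}}$, every $\K\hpsi$-satisfiable formula is $\logicname{S4}$-satisfiable. For the converse direction, apply Theorem~\ref{theorem:tree-like models for horn logics}(1) to $\logicname{S4}=\K{\hvarphi_{\mathrm{refl}}\wedge\hvarphi_{\mathrm{trans}}}$, which is \PSPACE-hard by Theorem~\ref{theorem:ladner}: every $\logicname{S4}$-satisfiable formula has an $\logicname{S4}$-model $T$ whose edges sit inside the reflexive transitive closure of an underlying strict tree $T_{\mathrm{strict}}$. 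Because $T$ must itself be reflexive and transitive, $T$ equals that closure, and so is a reflexive transitive tree; by the preceding paragraph $T$ therefore satisfies $\hpsi$ and is a $\K\hpsi$-model. Hence every $\logicname{S4}$-satisfiable formula is also $\K\hpsi$-satisfiable, so $\K\hpsi=\logicname{S4}$. The main technical obstacle is the case analysis of the second paragraph, especially the reflexive-padding argument that collapses every higher $S$-transitivity onto ordinary transitivity; the rest is a routine assembly of earlier results of the paper.
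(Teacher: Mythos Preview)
Your proposal is correct and follows essentially the same approach as the paper: both arguments reduce the \PSPACE\ alternative to ``$\hpsi$ is satisfied on every reflexive transitive tree,'' and then use the tree-like model property (Theorem~\ref{theorem:tree-like models for horn logics}) to conclude $\K\hpsi=\logicname{S4}$. The only organizational difference is that the paper re-runs \algname\ directly---observing that $\refl,\trans^2\in\tl$ and that $k$-transitivity for $k\ge 2$ is implied by $2$-transitivity, so the absence of $\symm$ forces $\tl$-trees to coincide with reflexive transitive trees---whereas you go through the six-case statement of Corollary~\ref{corollary:horn cunjunction without algorithm} and eliminate five cases explicitly; the padding argument you give for ``reflexive plus $k$-transitive implies transitive'' is exactly the content of the paper's remark that $2$-transitivity dominates higher $k$-transitivity in the reflexive setting.
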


\begin{proof}
 By the prerequisites, we know that $\hpsi$ is equivalent to $\hpsi\wedge\hvarphi_{\mathrm{refl}}\wedge\hvarphi_{\mathrm{trans}}$. Hence we can, without loss of generality, assume that $\hvarphi_{\mathrm{refl}}$ and $\hvarphi_{\mathrm{trans}}$ appear as clauses in $\hpsi.$

If every clause in $\hpsi$ is satisfied in every transitive and reflexive tree, then every modal formula $\phi$ which is satisfiable in a transitive and reflexive tree is $\K\hpsi$-satisfiable. Note that a special case of Theorem~\ref{theorem:tree-like models for horn logics} gives the result that every $\logicname{S4}$-satisfiable formula also is satisfiable in a reflexive and transitive tree. Therefore, every $\logicname{S4}$-satisfiable formula is also $\K\hpsi$-satisfiable, and hence, every $\K\hpsi$-validity is also $\logicname{S4}$-valid. Therefore, $\logicname{S4}$ is an extension of $\K\hpsi$. Since by Proposition~\ref{proposition:extensions of elementary logics and first-order implication}, $\K\hpsi$ is an extension of $\logicname{S4}=\K{\hvarphi_{\mathrm{refl}}\wedge\hvarphi_{\mathrm{trans}}},$ this implies that $\K\hpsi=\logicname{S4}.$

Therefore, we can assume that $\hpsi$ is not satisfied in every reflexive and transitive tree. Now let $\tl$ be as determined by \algname\ on input $\hpsi$. Since $\hvarphi_{\mathrm{refl}}$ and $\hvarphi_{\mathrm{trans}}$ are clauses in $\hpsi,$ we know that $\refl$ and $\trans^2$ are elements of $\tl$. If all elements in $\tl$ are of the form $\refl$ or $\trans^k,$ then we know (since $k$-transitivity is implied by $2$-transitivity), since $\hpsi$ is satisfied on every \tl-tree, that $\hpsi$ is satisfied in every reflexive and transitive tree, a contradiction. Therefore, we now that $\symm\in\tl,$ and hence by construction, \algname\ reports \NP-membership. Since by Theorem~\ref{theorem:horn conjunction classification} the output of the algorithm is correct, we know that $\sat{\K\hpsi}\in\NP,$ and $\K\hpsi$ has the polynomial-size model property, as claimed. 
\end{proof}

We further can show a \PSPACE\ upper bound for all universal Horn logics which are extensions of the logic \logicname{T}, and hence, from Theorem~\ref{theorem:horn conjunction classification}, conclude that these are all either solvable in \NP\ (and thus \NP-complete if they are consistent), or \PSPACE-complete.

\begin{corollary}\label{corollary:above reflexive gives pspace}
 Let $\hpsi$ be a universal Horn formula such that $\hpsi$ implies $\hvarphi_{\mathrm{refl}}$. Then $\sat{\K\hpsi}\in\PSPACE.$
\end{corollary}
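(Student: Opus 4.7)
The plan is to reduce the statement to two previously established results via a case analysis on whether $\hpsi$ also implies transitivity. Since $\hpsi\implies\hvarphi_{\mathrm{refl}}$, the formulas $\hpsi$ and $\hpsi\wedge\hvarphi_{\mathrm{refl}}$ are logically equivalent, so without loss of generality we can assume that $\hvarphi_{\mathrm{refl}}$ is a clause of $\hpsi$.

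Case~1: $\hpsi\implies\hvarphi_{\mathrm{trans}}$. Then $\hpsi\equiv\hpsi\wedge\hvarphi_{\mathrm{refl}}\wedge\hvarphi_{\mathrm{trans}}$, and Corollary~\ref{corollary:ladner hardness result optimal} yields two alternatives: either $\K\hpsi=\logicname{S4}$, which is in $\PSPACE$ by Theorem~\ref{theorem:ladner}, or $\sat{\K\hpsi}\in\NP\subseteq\PSPACE$. Either way we are done.

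Case~2: $\hpsi$ does not imply $\hvarphi_{\mathrm{trans}}$. Here I will argue that $\algname$, on input $\hpsi$, never adds an element of the form $\trans^k$ to $\tl$, so that Theorem~\ref{theorem:pspace membership for non-trans logics} directly delivers the $\PSPACE$ upper bound. Suppose for contradiction that some $\trans^k$ is added (with $k\ge 2$, by the remark at the end of the proof of Lemma~\ref{lemma:big classification algorithm is well-defined}). Then by Lemma~\ref{lemma:horn conjunction algorithm adds correctly to types-list}, $\hpsi$ implies $\genvarphi{0}{k}{p}{p}{p}$ for some $p\in\mathbb{N}$. Since $\hpsi\implies\hvarphi_{\mathrm{refl}}$, every node of every $\hpsi$-model is self-looped and therefore automatically has $p$-step predecessors and $p$-step successors (and a $(p-k)$-step successor) via self-loops; the ``large depth/height'' preconditions of $\genvarphi{0}{k}{p}{p}{p}$ are thus satisfied at every node, so $\hpsi$ in fact implies the unconditional $k$-transitivity formula $\hvarphi^{0\rightarrow k}$. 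A reflexive $k$-transitive graph is transitive: given a two-step walk $u\to v\to w$, padding with $k-2$ self-loops at $u$ yields a $k$-step walk from $u$ to $w$, so $k$-transitivity produces the edge $(u,w)$. Therefore $\hpsi\implies\hvarphi_{\mathrm{trans}}$, contradicting the case assumption. Theorem~\ref{theorem:pspace membership for non-trans logics} then gives $\sat{\K\hpsi}\in\PSPACE$.

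The main obstacle is the implication-tracking step in Case~2, namely converting the conditional statement $\genvarphi{0}{k}{p}{p}{p}$ (which a priori only restricts nodes with enough depth and height) into an unconditional $k$-transitivity statement; it is precisely the hypothesis $\hpsi\implies\hvarphi_{\mathrm{refl}}$ that lets the self-loops absorb those preconditions. Everything else then reduces to invoking the already-proven Corollary~\ref{corollary:ladner hardness result optimal}, Theorem~\ref{theorem:pspace membership for non-trans logics}, and Ladner's Theorem~\ref{theorem:ladner}.
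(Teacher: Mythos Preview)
Your proof is correct and follows essentially the same approach as the paper: a case split on whether $\hpsi$ implies transitivity, using Corollary~\ref{corollary:ladner hardness result optimal} in the transitive case and, in the non-transitive case, arguing via Lemma~\ref{lemma:horn conjunction algorithm adds correctly to types-list} that reflexivity collapses any $\genvarphi{0}{k}{p}{p}{p}$ to full $k$-transitivity (and hence transitivity), so that no $\trans^k$ can enter $\tl$ and Theorem~\ref{theorem:pspace membership for non-trans logics} applies. Your write-up is slightly more explicit than the paper's in spelling out the self-loop padding argument, but the structure and key ideas are the same.
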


\begin{proof}
 Assume without loss of generality that \algname\ determines the logic $\K\hpsi$ to have a \PSPACE-hard satisfiability problem, otherwise the theorem holds trivially, since $\NP\subseteq\PSPACE$. If $\hpsi$ implies $\hvarphi_{\mathrm{trans}},$ then the result follows from Corollary~\ref{corollary:ladner hardness result optimal}. Hence assume that this is not the case. Note that in reflexive graphs, $k$-transitivity is equivalent to transitivity. Also note that the conditions requiring a node to have a certain depth or height in a graph are always satisfied in a reflexive graph, because nodes here have infinite depth and height. Therefore, if $\hpsi$ implies a formula of the form $\genvarphi 0kpqr$ for some $2\leq k$ and some $p,q,r\in\mathbb{N},$ then $\hpsi$ also implies $\varphi_{\mathrm{trans}},$ and due to the above, we can assume that this is not the case. Thus, $\tl$ as determined by \algname\ contains no condition of the form $\trans^k$ for any $k\in\mathbb{N}$. The complexity result now follows from Theorem~\ref{theorem:pspace membership for non-trans logics}. 
\end{proof}

In a similar way, we can prove that all universal Horn logics which imply a variant of symmetry give rise to a satisfiability problem in \PSPACE. A noteworthy difference in the prerequisites of Corollary~\ref{corollary:above reflexive gives pspace} and Corollary~\ref{corollary:pspace for extensions of symm} is that the former requires the reflexivity condition to be implied by the formula $\psi,$ while the latter only needs a ``near-symmetry''-condition as detected by \algname.

\begin{corollary}\label{corollary:pspace for extensions of symm}
 Let $\hpsi$ be a universal Horn formula such that $\algname$ adds $\symm$ to $\tl$ on input $\hpsi$. Then $\sat{\K\hpsi}\in\PSPACE$. In particular, any universal Horn logic which is an extension of $\logicname{B}$ has a satisfiability problem solvable in \PSPACE. 
\end{corollary}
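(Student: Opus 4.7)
The plan is to run \algname\ on input $\hpsi$ and split into two cases according to whether a condition of the form $\trans^k$ ever enters $\tl$ during the execution, reducing one case to Theorem~\ref{theorem:pspace membership for non-trans logics} and the other directly to \NP-membership.

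First I would recall that \algname\ tests, at the bottom of every iteration of the WHILE-loop, whether $\set{\symm,\trans^k}\subseteq\tl$ for some $k$, and if so terminates with the verdict \NP. By hypothesis, $\symm$ is added to $\tl$ at some iteration. If any $\trans^k$ is also ever added (either before or after $\symm$), then by the end of the iteration in which the second of the two enters $\tl$ both are present, and the algorithm halts with \NP; Theorem~\ref{theorem:horn conjunction classification} then guarantees $\sat{\K\hpsi}\in\NP\subseteq\PSPACE$. Otherwise, no element of the form $\trans^k$ is ever added to $\tl$, so $\hpsi$ satisfies the hypothesis of Theorem~\ref{theorem:pspace membership for non-trans logics} and we conclude $\sat{\K\hpsi}\in\PSPACE$ directly.

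For the ``in particular'' claim, suppose $\hpsi$ is a universal Horn formula implying $\hvarphi_{\mathrm{symm}}$ (so that, by Proposition~\ref{proposition:extensions of elementary logics and first-order implication}, $\K\hpsi$ is an extension of $\logicname{B}$). As in the proof of Corollary~\ref{corollary:above reflexive gives pspace}, we may assume $\hvarphi_{\mathrm{symm}}$ appears as a conjunct of $\hpsi$. I would then verify that at any stage where $\symm\notin\tl$ the clause $\hvarphi_{\mathrm{symm}}$ is not satisfied on every $\tl$-tree (witnessed by a two-node strict line), and that its single-edge prerequisite graph together with conclusion edge $(y,x)$ falsifies every one of the four \NP-conditions as well as the reflexive and transitive cases; by well-definedness of \algname\ (Lemma~\ref{lemma:big classification algorithm is well-defined}) the symmetric case must therefore apply, so $\symm$ eventually enters $\tl$ and the first part of the corollary delivers the conclusion.

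The main obstacle is merely this routine case-analytic verification in the ``in particular'' part; the principal complexity-theoretic content follows almost immediately from Theorem~\ref{theorem:pspace membership for non-trans logics} together with the structural NP-test already built into \algname.
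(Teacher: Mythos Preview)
Your argument for the first assertion matches the paper's: both observe that once $\symm$ has been added, any $\trans^k$ in $\tl$ triggers the \NP\ verdict, so either the algorithm reports \NP\ or no $\trans^k$ is ever added and Theorem~\ref{theorem:pspace membership for non-trans logics} applies.

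For the ``in particular'' clause (which the paper leaves unproved), your case analysis has a small gap. You claim $\hvarphi_{\mathrm{symm}}$ falsifies \NP-condition~3 at \emph{every} stage with $\symm\notin\tl$, but this fails once some $\trans^k$ with $k\ge 2$ has entered $\tl$: in a $\{\trans^k\}$-line of length at least $k+1$ the edge $(x_1,x_{1+k})$ lets you set $\alpha(x)=x_1$, $\alpha(y)=x_{1+k}$, and since $\conc{\hvarphi_{\mathrm{symm}}}=(y,x)$ this gives $i=1\le j-2=k-1$, so condition~3 holds. This does not endanger the conclusion---the algorithm then halts with \NP---but it breaks your inference ``all \NP-conditions fail, hence symmetric case, hence $\symm$ is added.'' A cleaner route: since \algname\ halts, it returns either \NP\ (done) or \PSPACE-hard; in the latter case every clause of $\hpsi$, in particular $\hvarphi_{\mathrm{symm}}$, is satisfied on every $\tl$-tree, and your two-node witness shows this forces $\symm\in\tl$, whence the first part of the corollary applies. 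Alternatively, just note that at the very first iteration ($\tl=\emptyset$) your verification goes through unchanged, so a run that selects $\hvarphi_{\mathrm{symm}}$ first adds $\symm$ immediately.
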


\begin{proof}
 If $\sat{\K\hpsi}\in\NP,$ the claim trivially holds. Hence, since $\algname$ is correct due to Theorem~\ref{theorem:horn conjunction classification}, we can assume that $\algname$ returns \PSPACE-hard, and $\symm\in\tl,$ where $\tl$ is as determined by \algname. Since \algname\ does not report \NP, we know from its construction that $\trans^k\notin\tl$ for all $k\in\mathbb N$. Hence the complexity result follows from Theorem~\ref{theorem:pspace membership for non-trans logics}.
\end{proof}

\section{Conclusion and Future Research}\label{sect:conclusion}

We analyzed the complexity of modal logics defined by universal Horn formulas, covering many well-known logics. We showed that the non-trivial satisfiability problems for these logics are either \NP-complete or \PSPACE-hard, and gave an easy criterion to recognize these cases. Our results directly imply that (unless $\NP=\PSPACE$) such a logic has a satisfiability problem in \NP\ if and only if it has the polynomial-size model property. We also demonstrated that a wide class of the considered logics has a satisfiability problem solvable in \PSPACE.

Open questions include determining complexity upper bounds for the satisfiability problems for all modal logics defined by universal Horn formulas. We strongly conjecture that all of these are decidable, and consider it possible that all of these problems are in \PSPACE. A successful way to establish upper complexity bounds is the guarded fragment~\cite{anbene98,gr99}. This does not seem to be applicable to our logics, since it cannot be used for transitive logics, and we obtain \PSPACE-upper bounds for all of our logics except those involving a variant of transitivity.

The next major open challenges are generalizing our results to formulas not in the Horn class, and allowing arbitrary quantification. Initial results show that even when considering only universal formulas over the frame language, undecidable logics appear. An interesting enrichment of Horn clauses is to allow the equality relation. Preliminary results indicate that Corollary~\ref{corollary:horn cunjunction without algorithm} holds for this more general case as well. 

\ \\
\textbf{Acknowledgments:} We thank the anonymous referees for many hints and suggestions. The second author thanks Thomas Schneider for helpful discussionshint

\end{document}